\documentclass[12pt,oneside]{amsart}
\usepackage[margin=1in]{geometry}
\usepackage{abstract}
\usepackage{enumitem,verbatim}
\usepackage{multirow}
\usepackage{microtype}
\usepackage[all,cmtip]{xy}
\usepackage{physics}
\usepackage{hyperref}
\usepackage{color}
\hypersetup{colorlinks,linkcolor=blue,citecolor=cyan}
\usepackage{amsthm}
\usepackage[nameinlink]{cleveref}
\usepackage{amsmath,amssymb,stackrel,nicefrac}
\usepackage{amsaddr}
\usepackage{mathtools}
\usepackage[dvipsnames]{xcolor}
\usepackage{graphicx}
\usepackage{tikz, tikz-cd}

\usetikzlibrary{decorations.pathmorphing,shapes}

\newtheorem{lem}{Lemma}
\newtheorem{prop}{Proposition}
\newtheorem{thm}{Theorem}
\newtheorem{conj}{Conjecture}
\theoremstyle{definition}

\newtheorem{eg}{Example}

\newtheorem{qn}{Question}
\newtheorem{rmk}{Remark}
\definecolor{purple}{rgb}{0.7,0,0.7}

\newcommand{\overcross}{
 {\mathchoice
  {\includegraphics[height=1.6ex]{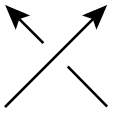}}
  {\includegraphics[height=1.6ex]{overcrossing.png}}
  {\includegraphics[height=1.2ex]{overcrossing.png}}
  {\includegraphics[height=0.9ex]{overcrossing.png}}
 }
}
\newcommand{\undercross}{
 {\mathchoice
  {\includegraphics[height=1.6ex]{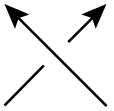}}
  {\includegraphics[height=1.6ex]{undercrossing.png}}
  {\includegraphics[height=1.2ex]{undercrossing.png}}
  {\includegraphics[height=0.9ex]{undercrossing.png}}
 }
}
\newcommand{\nocross}{
 {\mathchoice
  {\includegraphics[height=1.6ex]{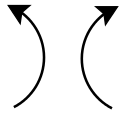}}
  {\includegraphics[height=1.6ex]{no_crossing.png}}
  {\includegraphics[height=1.2ex]{no_crossing.png}}
  {\includegraphics[height=0.9ex]{no_crossing.png}}
 }
}

\newcommand{\R}{{\mathbb{R}}}
\newcommand{\C}{{\mathbb{C}}}
\newcommand{\ind}{\operatorname{index}}
\newcommand{\m}[0]{\mathbb}

\newcommand{\G}[0]{\mathfrak}
\newcommand{\Bf}[0]{\boldsymbol}

\def\CN{\mathcal{N}}
\def\Z{{\mathbb{Z}}}
\def\be{\begin{equation}}
\def\ee{\end{equation}}
\newcommand{\bea}{\begin{eqnarray}}
\newcommand{\eea}{\end{eqnarray}}

\newcommand{\qbin}[2]{\begin{bmatrix}{#1}\\ {#2}\end{bmatrix}}
\setcounter{MaxMatrixCols}{16}

\title{Branches, quivers, and ideals for knot complements}

\date{October 2021}
\author{Tobias Ekholm$^{1,2}$, Angus Gruen$^3$, Sergei Gukov$^3$, Piotr Kucharski$^{3,4,5}$, Sunghyuk Park$^3$, Marko Sto\v{s}ić$^{6,7}$ and Piotr Su{\l}kowski$^{3,4}$}

\address{$^1$Department of Mathematics, Uppsala University, \\ L{\"a}gerhyddsv{\"a}gen 1, 752 37 Uppsala, Sweden}
\address{$^2$Institut Mittag-Leffler, Aurav{\"a}gen 17, 182 60 Djursholm, Sweden}
\address{$^3$Division of Physics, Mathematics and Astronomy, California Institute of Technology, 1200~E.~California Blvd., Pasadena, CA 91125, USA}
\address{$^4$Faculty of Physics, University of Warsaw, ul. Pasteura 5, 02-093 Warsaw, Poland}
\address{$^5$Institute of Physics \& Korteweg-de Vries Institute for Mathematics, \\ University of Amsterdam, Science Park 904 \& 105-107, Amsterdam, The~Netherlands}
\address{$^6$CAMGSD, Department of Mathematics, Instituto Superior T\'ecnico, Av. Rovisco Pais, 1049-001 Lisbon, Portugal}
\address{$^7$Mathematical Institute SANU, Knez Mihajlova 36, 11000 Beograd, Serbia}

\email{tobias.ekholm@math.uu.se}
\email{agruen@caltech.edu}
\email{p.j.kucharski@uva.nl}
\email{spark3@caltech.edu}
\email{psulkows@fuw.edu.pl}
\email{mstosic@isr.ist.utl.pt}

\setcounter{tocdepth}{2}

\begin{document}

\maketitle
\begin{abstract}
We generalize the $F_K$ invariant, i.e. $\widehat Z$ for the complement of a knot $K$ in the 3-sphere, the~knots-quivers correspondence, and $A$-polynomials of knots, and find several interconnections between them. We associate an~$F_K$ invariant to any branch of the~$A$-polynomial of $K$ and we work out explicit expressions for several simple knots. We show that these~$F_K$ invariants can be written in the form of a quiver generating series, in analogy with the knots-quivers correspondence. We discuss various methods to obtain such quiver representations, among others using $R$-matrices. We generalize the~quantum $a$-deformed $A$-polynomial to an~ideal that contains the recursion relation in the~group rank, i.e. in the parameter $a$, and describe its classical limit in terms of the Coulomb branch of a 3d-5d theory. We also provide $t$-deformed versions. Furthermore, we study how the~quiver formulation for closed 3-manifolds obtained by surgery leads to the~superpotential of 3d $\mathcal{N}=2$ theory $T[M_3]$ and to the~data of the~associated modular tensor category $\text{MTC} [M_3]$.
\end{abstract}

\newpage
\tableofcontents
\newpage

\section{Introduction}
The~interface of high energy theoretical physics and knot theory (and, more generally, low dimensional topology) is interesting from several points of view. On the~one hand, it yields exact and non-perturbative results that illustrate important physical concepts; on the~other, it leads to definitions of new invariants of knots (and three-manifolds) or reveals relations between such invariants originally defined in other ways. In recent years, several interesting results along these lines have been found. The following examples are relevant to this paper: a~3d-3d duality that relates three-manifolds to quantum field theories with extended supersymmetry~\cite{DGG}, $\widehat{Z}$ invariants of three-manifolds~\cite{GPV,GPPV},  and the~knots-quivers correspondence~\cite{KRSS1,KRSS2}. Our~aim here is to generalize and unify these concepts.

The~$A$-polynomial of a~knot $K$~\cite{CCGLS,Gar,Guk} $A_K(x,y)$ is an~important starting point for the~study undertaken in this paper. The~zero set of the~$A$-polynomial is an~algebraic curve in $(\C^\ast)^2$ which can be interpreted as the~moduli space of vacua of the~Chern-Simons theory with $SL(2,\mathbb{C})$ gauge group on the~complement of $K$ in $S^3$. It has been known for some time that the~Chern-Simons theory associates a~perturbative series in $\hbar$ (i.e., roughly, the~inverse of the~level $k$) to various branches of the~$A$-polynomial ~\cite{DGLZ}. All such series are annihilated by the~quantum counterpart of the~$A$-polynomial, i.e. an~operator $\widehat{A}_K(\hat x, \hat y,q)$, with $\hat y\hat x = q\hat x\hat y$, and with $q\to 1$ limit equal to the~original $A$-polynomial. This quantum $A$-polynomial also gives the~recursion relations (in colour) for coloured Jones polynomials. Furthermore, quite recently it was shown that the~perturbative series associated to the~abelian branch can be resummed into a~power series $F_K(x,q)$ in variables $q=e^\hbar$ and $x$ with integer coefficients~\cite{GM}. The~series $F_K(x,q)$, which is also referred to simply as the~$F_K$ invariant, can be interpreted as the~$\widehat{Z}$ invariant of the~three-manifold that is the complement of $K$ in $S^3$; $\widehat{Z}$ invariants were originally introduced for closed three-manifolds in~\cite{GPV,GPPV}, and then generalized to the~open version just mentioned.

The~$A$-polynomial and associated invariants admit $a$- and $t$-deformations. Here, the~$a$-deformation amounts to generalizing the~$SL(2,\mathbb{C})$ gauge group of the~Chern-Simons theory to $SL(N,\mathbb{C})$ and then replacing $N$-dependence of various quantities by a~uniform dependence on $a=q^N$, while $t$-deformation makes contact with homological knot invariants. The~$a$-deformed $A$-polynomials introduced in~\cite{AV} are closely related to augmentation polynomials of knot contact homology~\cite{AENV}, and their further $t$-deformation gives rise to super-$A$-polynomials~\cite{FGSA,FGS,FGSS}. Recently, an $SL(N,\mathbb{C})$ generalization and subsequent $a$-deformation of $F_K$ invariants were proposed respectively in~\cite{Park1} and~\cite{EGGKPS}.

In this paper we generalize the~($a$-deformed) $F_K$ invariants to other branches $\alpha$ of $A$-polynomials: we show that $\hbar$ expansions associated to all (not only abelian) branches $\alpha$ of $A$-polynomial can be also resummed into a~series $F^{(\alpha)}_K(x,q)$ with integer coefficients in $q$ and $x$ (and $a$, in the~$a$-deformed case; in the~rest of this section we often suppress $a$ from the~notation for $F_K$-invariants and $A$-polynomials). All these series are also annihilated by the~same quantum $A$-polynomial $\widehat{A}_K(\hat x, \hat y,q)$; in other words, they can be determined by the~same recursion relation encoded in $\widehat{A}_K(\hat x, \hat y,q)$, but with different initial condition. Furthermore, we show that the~initial conditions of $F^{(\alpha)}_K(x,q)$ for various branches of $A$-polynomial are encoded in slopes of the~boundaries of the~Newton polygon associated to the~$A$-polynomial. 

We further show that these different invariants $F^{(\alpha)}_K(x,q)$ can be expressed in the~form of quiver generating series. This statement can be regarded as an~extension of the~knots-quivers correspondence, introduced in~\cite{KRSS1,KRSS2} and further discussed in~\cite{PSS,PS,EKL1,EKL2,EKL3,Kimura:2020qns,Stosic:2017wno,SW2004,Larraguivel:2020sxk,Jankowski:2021flt}. The~first step towards such an~extension, for the~$F_K$ invariants associated to the~abelian branch for $(2,2p+1)$ torus knots, was taken in~\cite{Kuch}. In the~original formulation, the~knots-quivers correspondence is the~statement that a~generating function of coloured HOMFLY-PT polynomials can be written in the~form of a~quiver generating series; among others, this implies that LMOV invariants \cite{OV,LMV} are expressed in terms of integer motivic Donaldson-Thomas invariants of a~quiver~\cite{KS2,Efi,MR,FR}, which proves integrality of the~former invariants. A natural setup for the~original formulation of the~knots-quivers correspondence is the~Ooguri-Vafa configuration~\cite{OV,LMV}, which involves branes on the~conormal of a~knot~$K$ -- for this reason, we refer to quivers that arise in the~original formulation as knot conormal quivers. Analogously, we refer to quivers that encode series $F^{(\alpha)}_K(x,q)$ as knot complement quivers. For both conormals and complements, quiver can be interpreted in enumerative geometry with nodes corresponding to certain basic holomorphic curves and arrows as their boundary linkings. Here we will also discuss the~relation between the~knot conormal and knot complement, as well as their corresponding quivers. We point out that writing $F^{(\alpha)}_K(x,q)$ as a~quiver generating series shows that the~whole information of $F^{(\alpha)}_K(x,q)$ is encoded in a~finite set of data: the~quiver matrix, slopes of boundaries of a~Newton polygon, and a~few other parameters that enter a~quiver change of variables. Thus, the~seemingly complicated, quantum and non-perturbative information about Chern-Simons theory for various branches is actually captured by a~finite number of integer coefficients.

There is also an~interesting interplay between classical branches of the~$A$-polynomial and quivers encoding the~corresponding $F^{(\alpha)}_K(x,q)$: once we assume that a~quiver form of $F^{(\alpha)}_K(x,q)$ exists, we can determine it from the~form of the~classical branch (up to a~finite number of terms that can be fixed from the~knowledge of the first few terms of $F^{(\alpha)}_K(x,q)$). In this sense, the~$F_K$ series associated to various branches and (knot complement) quivers are intimately related. 

We summarize the~above two points in the~following conjecture.

\begin{conj}
Given a~knot $K$, let $y^{(\alpha)}(x,a)$ be a~branch of $y$ near $x=0$ (or $x=\infty$) of the~$a$-deformed $A$-polynomial of $K$, $A_K(x,y,a)$. 
\begin{enumerate}
    \item There exists a~wave function $F_K^{(\alpha)}(x,a,q)$ associated to this branch in a~sense that
    \[
    \langle \hat{y}\rangle := \lim_{q\rightarrow 1}\frac{F_K^{(\alpha)}(q x,a,q)}{F_K^{(\alpha)}(x,a,q)} = y^{(\alpha)}(x,a)
    \]
    and this wave function is annihilated by the~quantum $a$-deformed $A$-polynomial $\widehat{A}_K(\hat x,\hat y,a,q)$ (which is the~same for all branches $y^{(\alpha)}(x,a)$)
    \[
    \widehat{A}_K(\hat x,\hat y,a,q) F_K^{(\alpha)}(x,a,q) = 0. 
    \]
    \item The~wave function $F_K^{(\alpha)}(x,a,q)$ has a~quiver form, in the~sense of knots-quivers correspondence~\cite{KRSS1,KRSS2,Kuch}. 
\end{enumerate}

\end{conj}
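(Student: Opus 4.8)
The plan is to treat the two parts of the conjecture in order, using part (1) to supply the rigidity that part (2) needs.

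For part (1), the starting observation is that $\widehat{A}_K(\hat x,\hat y,a,q)$ is a linear $q$-difference operator in $\hat x$ of finite order, so its solution space over a suitable field of Puiseux series in $x$ is finite-dimensional, and the distinct solutions should be matched with the distinct branches $y^{(\alpha)}(x,a)$ of the classical curve $A_K(x,y,a)=0$. The classical limit $A_K(x,y,a)$ has a Newton polygon in the $(x,y)$-plane whose edge slopes encode the leading powers $y\sim c\,x^{s}$ of the Puiseux branches at $x=0$ or $x=\infty$. First I would run the standard WKB/Frobenius analysis: posit
\[
F_K^{(\alpha)}(x,a,q)=\exp\!\left(\tfrac{1}{\hbar}S_0^{(\alpha)}(x,a)+S_1^{(\alpha)}(x,a)+\hbar\,S_2^{(\alpha)}(x,a)+\cdots\right),\qquad q=e^{\hbar},
\]
substitute into $\widehat{A}_K F=0$, and solve order by order in $\hbar$; the leading equation reproduces $\langle\hat y\rangle=\exp(x\,\partial_x S_0^{(\alpha)})=y^{(\alpha)}$, fixing $S_0$, and each higher order is then uniquely determined, with the Newton-polygon slope $s$ fixing the lowest power of $x$ in the initial condition. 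The substantive claim --- and the first main obstacle --- is that this formal $\hbar$-expansion resums to a series in $q$ and $x$ (and $a$) with \emph{integer} coefficients. I would establish this by combining the resummation argument of \cite{GM} for the abelian branch with the structural fact that the recursion encoded in $\widehat{A}_K$ maps the ring $\Z[a^{\pm 1}][[x]]((q^{1/2}))$ to itself, so that once the initial condition lies there, so does the whole solution; integrality of any fixed coefficient is then a finite check.

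For part (2), assuming $F_K^{(\alpha)}$ exists as such a series, the goal is a symmetric integer matrix $C=(C_{ij})_{1\le i,j\le m}$ and linear data $(\ell,\mathbf{a},\mathbf{c})\in(\Z^m)^3$ with
\[
F_K^{(\alpha)}(x,a,q)=\sum_{\mathbf{d}\ge 0}(-q^{1/2})^{\mathbf{d}^{\mathrm{T}}C\,\mathbf{d}}\;\frac{x^{\,\ell\cdot\mathbf{d}}\,a^{\,\mathbf{a}\cdot\mathbf{d}}\,q^{\,\mathbf{c}\cdot\mathbf{d}}}{(q;q)_{d_1}\cdots(q;q)_{d_m}}.
\]
Here I would reverse-engineer the quiver from the classical branch: a saddle-point analysis of a quiver generating series of this shape shows that its $q\to 1$ limit $y$ satisfies an algebraic ``quiver $A$-polynomial'' relation determined by $C$ and $(\ell,\mathbf{a},\mathbf{c})$; matching this relation against the irreducible polynomial annihilating $y^{(\alpha)}(x,a)$, together with the slope data from the Newton polygon, pins down $C$ and the linear data up to finitely many discrete choices. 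Those are then fixed by matching the first few coefficients in $x$ of $F_K^{(\alpha)}$, and the match is upgraded to all orders using the rigidity from part (1): both sides are annihilated by the single operator $\widehat{A}_K$, so agreement on a sufficient initial segment forces agreement everywhere. In parallel, for the families where $F_K$ admits an $R$-matrix / state-sum presentation (torus knots, twist knots), I would instead read the quiver off the state sum directly, since each crossing contributes a Gaussian factor in the summation variables whose combined exponent assembles into $\mathbf{d}^{\mathrm{T}}C\,\mathbf{d}$.

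The hardest part is genuinely part (2) in full generality, because it contains the original knots-quivers correspondence (the knot conormal quiver) as a special case, and that is itself open for general knots; a complete proof is therefore out of reach with present techniques. Accordingly I would aim to prove part (1) unconditionally by the resummation-plus-rigidity argument above, establish part (2) for the infinite families reachable by $R$-matrices, and verify it to high order for small knots, leaving the general statement as the conjecture stated.
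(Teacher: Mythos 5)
Your overall strategy coincides with the paper's: for part (1) you use exactly the mechanism of Section \ref{sec:FK different branches} (left-edges of the Newton polygon of $A_K$ supply the quadratic prefactor and initial condition, and the recursion $\hat{A}_K F=0$ then generates the series), and for part (2) you propose the same two routes the paper uses, namely reconstructing $(C,\Bf{a},\Bf{l})$ from the classical branch by matching the coefficients $y_1,y_2,\dots$ of $y^{(\alpha)}(x,a)$ (Section \ref{sec:FK invariants and quivers}) and reading the quiver off $R$-matrix state sums for positive/homogeneous braid knots (Section \ref{sec:Quiver from R-matrices}, where this is actually proved for $\mathfrak{sl}_2$). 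Like the paper, you leave the general statement as a conjecture, so the comparison really turns on the one step you claim to do \emph{unconditionally}.

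That claim — that part (1) follows from ``resummation plus rigidity'' — has a genuine gap. Solving the recursion order by order requires dividing, at each order in $x$, by the coefficient coming from the chosen edge, so each $f_j$ is a priori only a \emph{rational} function of $q$; your observation that the recursion preserves $\mathbb{Z}[a^{\pm1}][[x]]((q^{1/2}))$ does not apply, because the leading coefficient is not a unit in that ring. Integrality of the $q$-expansion holds only if the denominators are always products of factors of the form $(1-q^k)$, and that is precisely the content of the paper's Conjecture \ref{conj:quantum_initial_condition}, which is asserted as a theorem only for edges of $y$-height $n_y=1$. Moreover, for $n_y>1$ the solutions are Puiseux series, and for degenerate edges ($n_x$, $n_y$ not coprime) the initial condition does \emph{not} determine the solution uniquely — there are genuine multiplicities, as happens for the slope-$\infty$ edge of $\mathbf{5}_2$ — so ``each higher order is then uniquely determined'' fails there. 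A smaller point on part (2): the quiver series after the change of variables is not a priori annihilated by $\hat{A}_K$ itself, only by the operator obtained by eliminating the quiver variables from the quantum quiver $A$-polynomial ideal; to run your finite-initial-segment rigidity argument you need to pass to a common left multiple of the two operators, which is a per-knot computation rather than a uniform step.
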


In view of importance of the~above extension of the~knots-quivers correspondence, we develop several techniques to determine knot complement quivers corresponding to various branches of $A$-polynomial. One straightforward approach amounts to determining explicitly the~quiver generating series from the~first several terms in the $F_K$ series, taking advantage of the~recursion relation encoded in quantum $A$-polynomial (applied to relevant initial terms). Another approach that we propose is based on a~redefinition of knot conormal quivers that involves a~change of framing. Furthermore, taking advantage of the~results of~\cite{Park2,Park3}, we show that $F_K$ in quiver form can be reconstructed from $R$-matrices, as well as inverted Habiro series; in some specific examples we show that this approach works also for the $SL(N,\mathbb{C})$ case and then  generalized to include $a$-dependence. In general, it is not guaranteed that $F_K$ for a~given branch can be determined by all these approaches; however, combining them, we can determine $F_K$ for various branches of $A$-polynomial for many knots. 

We also extend the~$A$-polynomial. As mentioned above,  we consider $F_K$ invariants for various branches of $A$-polynomial, which gives the algebraic curve $A_K(x,y)=0$ with $x,y\in\mathbb{C}^*$. The~quantum $A$-polynomial $\widehat{A}_K(\hat x, \hat y)$ with $\hat y \hat x = q \hat x \hat y$ gives the~recursion relations for the~series $F^{(\alpha)}_K(x,q)$. In the~$a$-deformed case, there are two variables: $x$ and $a$, and degenerating the~brane system of the~knot complement to a~multiple of the~unknot conormal, we see that the~variables $x$ and $xa$ play almost identical roles. It is then natural to try to promote also the~variable $a$ to an~operator $\hat a$. This is indeed possible, and after introducing a~dual variable $b$, we find another algebraic curve $\{B_K(a,b)=0\}$, which is the zero set of what we call the~$B$-polynomial. Also the~$B$-polynomial has a~quantum counterpart $\widehat{B}_K(\hat a,\hat b)$, with $\hat{b}\hat{a}=q\hat{a}\hat{b}$, which again annihilates $F^{(\alpha)}_K(x,a,q)$. Thus, similarly to the~quantum $A$-polynomial, the~quantum $B$-polynomial captures information about $F^{(\alpha)}_K(x,a,q)$ that can be extracted recursively. Note also that $a$ is the~variable associated to the~closed holomorphic central $\mathbb{CP}^1$ in the~resolved conifold and the~$B$-polynomial indeed encodes information about closed string contributions (which are not detected by the~original $A$-polynomial).  

Note that, in case of $SL(N,\mathbb{C})$ theory, the authors of~\cite{GS} introduced an~algebraic curve and an~associated polynomial called the~$C$-polynomial, as well as an~operator that shifts $N$ and the~corresponding recursion relation in $N$ for coloured polynomials implemented by the~quantum $C$-polynomial. The~dependence on $N$ was subsequently generalized to $a$-dependence in~\cite{MM21}. Our $B$-polynomial is similar to, but different from the~$C$-polynomial (the definition of the $B$-polynomial involves the full coloured HOMFLY-PT polynomial, while that of the $C$-polynomial involves coefficients of a cyclotomic expansion), so our claims are accordingly independent of those in~\cite{GS,MM21}.

Physically, $x$ represents an~open string modulus that can be identified with a~Coulomb branch parameter of a~3d $\mathcal{N}=2$ theory associated to a~Lagrangian brane and $a$ is the~K\"ahler parameter of the~resolved conifold, which can be identified with a~Coulomb branch parameter of a~5d gauge theory.  Therefore, the~dependence on both $a$ and $x$, as well as their conjugate variables $b$ and $y$, should arise in the~description of a~coupled 3d-5d system, which means that the~$A$- and $B$-polynomials are not independent but should be viewed as specializations of a~higher dimensional variety associated to what we call the~$AB$-ideal. Likewise, the~quantum $A$- and $B$-polynomials arise as specializations of a~$D$-module that we call the~quantum $AB$-ideal, and which quantizes the~classical $AB$-ideal. This is summarized in the~following conjecture.

\begin{conj} \label{conj2}
Let us endow $(\mathbb{C}^*)^4$ with the~holomorphic symplectic form
\[
\Omega := d\log x \wedge d\log  y + d\log a \wedge d\log b,\quad x,y,a,b\in\mathbb{C}^*.
\]
For every knot $K$, there is a~holomorphic Lagrangian subvariety $\Gamma_K \subset (\mathbb{C}^*)^4$ with the~following properties:
\begin{enumerate}
    \item This holomorphic Lagrangian is preserved under the~Weyl symmetry
\[
x \mapsto a^{-1}x^{-1},\quad y \mapsto y^{-1},\quad a\mapsto a,\quad b\mapsto y^{-1}b.
\]
    \item The~projection of $\Gamma_K$ on $(\mathbb{C}^*)^3_{x,y,a}$ is the~zero set of the~$a$-deformed $A$-polynomial of $K$.
    \item Moreover, if $\hat{x}, \hat{y}, \hat{a}, \hat{b}$ are operators such that
\[
\hat{y}\hat{x} = q\hat{x}\hat{y},\quad \hat{b}\hat{a} = q\hat{a}\hat{b},
\]
and all the~other pairs commute, then the~ideal defining $\Gamma_K$ can be quantized to a~left ideal $\hat{\Gamma}_K\subset \mathbb{C}[\hat{x}^{\pm 1},\hat{y}^{\pm 1},\hat{a}^{\pm 1},\hat{b}^{\pm 1}]$ that annihilates $F_K(x,a,q)$. 
\end{enumerate}
\end{conj}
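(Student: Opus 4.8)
\emph{Proof proposal.} The plan is to run the construction backwards: take $\hat\Gamma_K$ to be, \emph{by definition}, the full left annihilator of the wave function inside the quantum four-torus,
\[
\hat\Gamma_K \;:=\; \bigl\{\, \hat P \in \mathbb{C}[\hat x^{\pm 1},\hat y^{\pm 1},\hat a^{\pm 1},\hat b^{\pm 1}] \;:\; \hat P\,F_K(x,a,q) = 0 \,\bigr\},
\]
which is a left ideal for free, and then set $\Gamma_K$ to be its characteristic (semiclassical) variety, i.e. the zero locus in $(\mathbb{C}^*)^4$ of the ideal of $q\to 1$ principal symbols. By construction $\hat\Gamma_K$ contains $\widehat A_K(\hat x,\hat y,\hat a,q)$ (legitimate since $\hat a$ is central relative to $\hat x,\hat y$) and the quantum $B$-polynomial $\widehat B_K$, both of which annihilate $F_K$ by the facts recalled earlier, so $\hat\Gamma_K\neq 0$. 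With this choice, part~(3) of Conjecture~\ref{conj2} is essentially tautological, and the content of the statement becomes: (i) $\hat\Gamma_K$ is large enough that $\Gamma_K$ is Lagrangian and the $q\to 1$ limit of $\hat\Gamma_K$ really does cut out $\Gamma_K$; (ii) $\Gamma_K$ projects onto the $a$-deformed $A$-polynomial locus; (iii) $\Gamma_K$ is invariant under the stated Weyl symmetry.

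For (i) the key input I would establish is that $F_K(x,a,q)$ is $q$-holonomic as a function of the \emph{two} variables $x$ and $a$. I would deduce this from the explicit presentations of $F_K^{(\alpha)}$ developed in the body of the paper --- the inverted Habiro / $R$-matrix form and the expansion $F_K^{(\alpha)}(x,a,q)=\sum_n f_n(a,q)\,x^n$ whose coefficients are ratios of products of $q$-Pochhammer symbols in $a$ (hence individually $q$-holonomic in $a$) and obey a recursion in $n$ governed by $\widehat A_K$ --- together with the closure properties of $q$-holonomic functions in several variables. Granting holonomicity, $\hat\Gamma_K$ defines a holonomic module over the quantum torus, so its characteristic variety has middle dimension $2$; by Gabber's involutivity theorem the characteristic variety of any finitely generated module over a filtered quantization is coisotropic, and a coisotropic surface in a symplectic $4$-fold is Lagrangian. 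One must additionally check that $\widehat A_K$ and $\widehat B_K$ are ``transverse enough'' that the submodule they already generate is holonomic, so that $\Gamma_K$ cannot collapse to something lower-dimensional, and that $\hat\Gamma_K$ is saturated in the sense needed for its classical limit to be the reduced ideal of $\Gamma_K$.

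For (ii): since $\widehat A_K\in\hat\Gamma_K$, its symbol $A_K(x,y,a)$ vanishes on $\Gamma_K$, so the image of $\Gamma_K\to(\mathbb{C}^*)^3_{x,y,a}$ lies in $\{A_K=0\}$. For the reverse inclusion I would use the companion wave-function property (part~(1) of Conjecture~\ref{conj2} / Conjecture~1): for each branch $\alpha$ one has $\langle\hat y\rangle = y^{(\alpha)}(x,a)$ and, dually, $\langle\hat b\rangle = b^{(\alpha)}(x,a)$, so $\Gamma_K$ contains the graph $\{(x,a,y^{(\alpha)}(x,a),b^{(\alpha)}(x,a))\}$; letting $\alpha$ range over all branches near $x=0$ and $x=\infty$, the projections of these graphs sweep out a dense subset of every component of $\{A_K=0\}$, and since $\Gamma_K$ is already $2$-dimensional this forces the (reduced) image of the projection to equal $\{A_K=0\}$.

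For (iii): one first checks by a direct computation with logarithmic forms that $x\mapsto a^{-1}x^{-1},\ y\mapsto y^{-1},\ a\mapsto a,\ b\mapsto y^{-1}b$ preserves $\Omega$ (the cross terms $d\log a\wedge d\log y$ cancel), so it is a symplectomorphism and \emph{can} preserve a Lagrangian. To show it \emph{does} preserve $\Gamma_K$, I would establish a functional equation for the wave function of the form $F_K(a^{-1}x^{-1},a,q) = U(x,a,q)\,F_K(x,a,q)$ with $U$ a monomial unit, lifting the known Weyl symmetry of the $a$-deformed $A$-polynomial (equivalently the $x\mapsto 1/(ax)$ symmetry of normalized coloured HOMFLY-PT); conjugation by the operator implementing $b\mapsto y^{-1}b$ is then exactly what carries $\hat\Gamma_K$ to itself, and passing to symbols gives invariance of $\Gamma_K$. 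The main obstacle, I expect, is precisely the two-variable $q$-holonomicity of $F_K$: there is no longer a single one-variable recursion, so one genuinely needs the multivariable theory and must control the joint annihilator in both the $\hat x$- and $\hat a$-directions. Secondary difficulties are the usual gap between the naive $q\to 1$ limit of an annihilator ideal and the classical ideal (saturation/radical), the need to handle all branches $\alpha$ --- and, if one wants it, the $t$-deformation --- uniformly, and verifying the $b$-refined Weyl functional equation, which is not visible from the $A$-polynomial alone.
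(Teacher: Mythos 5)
First, a point of calibration: the statement you are proving is stated in the paper as a \emph{conjecture}, and the paper never proves it in general. What the paper actually does is (a) construct $\Gamma_K$ \emph{classically}, starting from the $a$-deformed $A$-polynomial and solving the Lagrangian condition $b=\exp\bigl(\int \tfrac{\partial \log y}{\partial \log a}\,d\log x\bigr)$, so that property (2) holds by construction and the Weyl transformation $b\mapsto y^{-1}b$ is proved as a short proposition in the variables $u_1=ax$, $u_2=x$; (b) define the quantum $AB$-ideal exactly as you do, namely as the full left annihilator of $F_K$ in the quantum torus; and (c) verify quantizability only in examples (unknot, trefoil, $(2,2p+1)$ torus knots), leaving the general statement -- in particular whether generators of the classical ideal lift to generators of the quantum one -- as a separate conjecture. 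So your quantum-first route (define $\hat\Gamma_K$ as the annihilator, take $\Gamma_K$ to be its characteristic variety, invoke involutivity) is genuinely different from the paper's classical-first route, and if it worked it would buy more: existence of the Lagrangian and its quantization simultaneously. But it does not close the argument, for the following reasons.

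The most serious gap is in your step (ii). If $\hat\Gamma_K$ is the annihilator of the abelian-branch wave function $F_K(x,a,q)$ (which is what part (3) of the conjecture refers to), there is no reason that every element of $\hat\Gamma_K$ also annihilates the non-abelian $F_K^{(\alpha)}$. Your argument that ``$\Gamma_K$ contains the graph of $(y^{(\alpha)},b^{(\alpha)})$ for every branch $\alpha$'' silently assumes $\hat\Gamma_K\subset \operatorname{Ann}(F_K^{(\alpha)})$ for all $\alpha$, which is false in general: the annihilator of a single solution is typically strictly larger than the common annihilator of all solutions, and its characteristic variety can correspondingly be strictly smaller than $\{A_K=0\}$ -- potentially just the closure of the abelian graph. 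To make the projection surjective you would need to define $\hat\Gamma_K$ as the intersection of annihilators over all branches (or argue that $\operatorname{Ann}(F_K^{(\mathrm{ab})})$ is already the full $AB$-ideal, which is precisely the hard content). The paper avoids this entirely by building $\Gamma_K$ from $\{A_K=0\}$ and only then asking for quantization.

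Two further points. Your Weyl-symmetry step posits a functional equation $F_K(a^{-1}x^{-1},a,q)=U\cdot F_K(x,a,q)$, but the Weyl symmetry of $F_K$ in this paper relates the \emph{positive} expansion around $x=0$ to the \emph{negative} expansion around $x=\infty$; these are different series, not the same function evaluated at two points, so the identity as written does not make sense without specifying an analytic continuation or working at the level of the balanced expansion. The classical computation (that the map preserves $\Omega$ and sends $b$ to $y^{-1}b$) is the part the paper actually proves, and it does not require any such functional equation. Finally, your appeal to two-variable $q$-holonomicity plus Gabber involutivity is a reasonable skeleton, but as you acknowledge it requires (a) establishing joint holonomicity in $(x,a)$, (b) a saturation statement identifying the naive $q\to1$ limit of $\hat\Gamma_K$ with the reduced ideal of its zero locus, and (c) the $q$-difference analogue of involutivity; none of these is supplied, and the paper offers no help here since it treats quantizability itself as conjectural.
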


It is natural to extend the~above conjecture to the refined case. First, in analogy with (coloured) superpolynomials, we consider a~$t$-deformation of $F_K$~\cite{EGGKPS}. Then we introduce invariants $F^{(\alpha)}_K(x,a,q,t)$ for various branches. We can observe that in various expressions the~variable $t$ is mixed with $x$ or $a$, again corresponding to a~geometric degeneration to a~multiple of the~unknot conormal,  in a~way which suggests that there is a~natural conjugate variable $u$ of $t$. We conjecture that for a~given knot there exists a~holomorphic Lagrangian subvariety in $(\mathbb{C}^*)^6$ that captures the~semiclassical properties of the~$t$-deformation of $F_K$. This variety generalizes a~Lagrangian variety in $(\mathbb{C}^*)^4$ from Conjecture \ref{conj2} and its quantization is a~$D$-module in variables  $(\hat x, \hat y, \hat a,\hat b,\hat t,\hat u)$  that annihilates $F_K(x,a,q,t)$.
These statements are summarized the~following conjecture.

\begin{conj}\label{conj:refholLag}
Let us endow $(\mathbb{C}^*)^6$ with the~holomorphic symplectic form
\[
\Omega := d\log x \wedge d\log  y + d\log a \wedge d\log b + d\log t \wedge d\log u,\quad x,y,a,b,t,u\in\mathbb{C}^*.
\]
For every knot $K$, there is a~holomorphic Lagrangian subvariety $\Gamma_K \subset (\mathbb{C}^*)^6$ with the~following properties:
\begin{enumerate}
    \item This holomorphic Lagrangian is preserved under the~Weyl symmetry
    \[
    x\mapsto (-t)^{3}a^{-1}x^{-1},\quad y \mapsto t^s y^{-1},\quad a\mapsto a,\quad b\mapsto (-t)^{\frac{s}{2}}y^{-1}b,\quad t\mapsto t,\quad u \mapsto x^{-s}y^{-3}a^{-\frac{s}{2}}u,
    \]
    where $s$ is a~version of $s$-invariant of the~knot $K$. 
    \item The~projection of $\Gamma_K$ on $(\mathbb{C}^*)^4_{x,y,a,t}$ is the~zero set of the~super-$A$-polynomial of $K$. 
    \item Moreover, if $\hat{x}, \hat{y}, \hat{a}, \hat{b}, \hat{t}, \hat{u}$ are operators such that
\[
\hat{y}\hat{x} = q\hat{x}\hat{y},\quad \hat{b}\hat{a} = q\hat{a}\hat{b},\quad \hat{u}\hat{t} = q\hat{t}\hat{u},
\]
and all the~other pairs commute, then the~ideal defining $\Gamma_K$ can be quantized to a~left ideal $\hat{\Gamma}_K\subset \mathbb{C}[\hat{x}^{\pm 1},\hat{y}^{\pm 1},\hat{a}^{\pm 1},\hat{b}^{\pm 1},\hat{t}^{\pm 1},\hat{u}^{\pm 1}]$ that annihilates $F_K(x,a,q,t)$. 
\end{enumerate}
\end{conj}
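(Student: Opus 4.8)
The plan is to deduce Conjecture~\ref{conj:refholLag} from the refined enhancement of the constructions behind Conjecture~\ref{conj2}, treating the homological parameter $t$ on the same footing as the open modulus $x$ and the K\"ahler parameter $a$. The starting point is the $t$-deformed invariant $F_K^{(\alpha)}(x,a,q,t)$ of~\cite{EGGKPS} together with its conjectural quiver form (the refined avatar of part~(2) of the first conjecture): one writes $F_K^{(\alpha)}$ as a quiver generating series $\sum_{\boldsymbol{d}}(-q^{1/2})^{\boldsymbol{d}\cdot C\,\boldsymbol{d}}\prod_i \frac{z_i^{d_i}}{(q;q)_{d_i}}$ in which each node variable $z_i$ is a monomial in $x$, $a$, $t$ (times a sign and a half-integer power of $q^{1/2}$), read off from the slopes of the Newton polygon of the super-$A$-polynomial and a change of variables. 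The symmetry one observes in explicit examples between $x$, $xa$, and the $t$-twisted node monomials --- the geometric ``degeneration to a multiple of the unknot conormal'' --- is precisely what forces a conjugate variable $u$ to $t$: the Weyl element in part~(1) is the composite of that degeneration with the reflection already present in Conjecture~\ref{conj2}, and the $s$-invariant-dependent exponents are read off from the framing and boundary-linking data of the quiver.

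Next I would build $\Gamma_K$ as the characteristic variety ($q\to 1$ limit) of the $D$-module $\hat{\Gamma}_K$ annihilating $F_K(x,a,q,t)$. The generators of $\hat{\Gamma}_K$ are produced exactly as the quantum $A$- and $B$-polynomials were: from the quiver generating series one extracts $q$-difference equations in each of $x$, $a$, $t$ by creative telescoping and elimination (as in~\cite{Kuch,EGGKPS}), with $\hat{x},\hat{y},\hat{a},\hat{b},\hat{t},\hat{u}$ acting as the corresponding multiplication and shift operators. Passing to symbols yields the polynomial equations cutting out $\Gamma_K$. Part~(2) is then tautological: eliminating $\hat{b}$ and $\hat{u}$ from $\hat{\Gamma}_K$ returns the quantum super-$A$-polynomial $\widehat{A}_K(\hat{x},\hat{y},\hat{a},\hat{t},q)$ (respectively its classical limit), because the $q$-difference equation in $x$ was, by construction, the super-$A$-polynomial recursion.

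The Lagrangian statement in part~(1) amounts to showing $\dim_{\mathbb{C}}\Gamma_K=3$ and $\Omega|_{\Gamma_K}=0$. Both follow once $\hat{\Gamma}_K$ is known to be holonomic of the expected rank, which in turn follows from the quiver picture: the large-colour asymptotics of the quiver series are governed by the critical points of the quiver ``superpotential'' $\widetilde{W}(\boldsymbol{z})=\tfrac12\sum_{i,j}C_{ij}\log z_i \log z_j+(\text{linear})$, whose graph, in the symplectic coordinates attached to the node variables, is Lagrangian; $\Gamma_K$ is the image of this graph after integrating out the quiver nodes, and symplectic reduction preserves the Lagrangian condition. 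Invariance under the Weyl symmetry is checked either directly --- the prescribed change of variables permutes the node monomials and hence fixes the critical-point equations --- or a posteriori, as a manifest symmetry of a chosen generating set of $\hat{\Gamma}_K$. Physically this is the statement that $\Gamma_K$ is the Coulomb branch of the coupled 3d--5d system with the mass parameter dual to $t$ promoted to a further modulus, which gives an independent consistency check.

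The main obstacle is that every step rests on the still-conjectural refined quiver form, so an unconditional proof would have to either establish that form (out of reach in general) or construct $\hat{\Gamma}_K$ independently --- e.g.\ from refined Chern--Simons or from knot contact homology with the closed-string sector retained. Even granting the quiver form, two points are delicate and should be verified against the worked examples (unknot, trefoil, figure-eight) before the general argument: pinning down the exact $s$-dependent exponents in the Weyl symmetry, and showing that the elimination producing $\Gamma_K$ introduces no spurious components, so that $\Gamma_K$ has exactly the expected dimension.
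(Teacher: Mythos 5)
Your proposal follows essentially the same route the paper takes: start from the (still conjectural) refined quiver form of $F_K^{(\alpha)}(x,a,q,t)$, obtain $\hat{\Gamma}_K$ from the quantum quiver $A$-polynomials by non-commutative elimination, recover $\Gamma_K$ as the classical limit computed via Gr\"obner-basis elimination, deduce the Lagrangian property from the fact that $\Gamma_K$ is the graph of the differential of the effective twisted superpotential obtained by integrating out the quiver nodes, and verify the Weyl symmetry (and the absence of spurious components) empirically in examples. The paper likewise only supports this statement as a conjecture with example-based evidence, so your honest flagging of the conditional nature of the argument matches its status in the text.
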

\noindent We confirm the~above conjectures in a~number of nontrivial examples throughout the~paper.

The~rest of the~paper is organized as follows. Section \ref{sec-pre} summarizes basic information on knot invariants, brane systems, $\hat{Z}$ and $F_K$ invariants, and the~knots-quivers correspondence. In section \ref{sec:FK different branches} we introduce $F_K$ invariants for various branches of the $A$-polynomial and provide various examples of such objects. In section \ref{sec:FK invariants and quivers} we show that $F_K$ invariants can be written in the~form of quiver generating series, discuss some properties of such series, identify corresponding quivers in various examples, and explain how they can be reconstructed from the~$A$-polynomial and its branches. Section \ref{sec:Quiver from R-matrices} presents how to reconstruct quivers using $R$-matrices or inverted Habiro series. In section \ref{sec:Bpoly} we discuss the~quantized Coulomb branch from a more general 3d-5d physical perspective, introduce the $B$-polynomial, and show that both the $A$- and $B$-polynomials are specializations of a~higher dimensional complex Lagrangian variety, which has a~quantum counterpart that we call the~quantum $AB$-ideal. Section \ref{sec-3-logsft} contains a~discussion of other properties of various objects introduced earlier: consistency with constraints imposed by anomalies, subtle properties of vacua of supersymmetric theories corresponding to $F_K$, surgeries and relation to closed 3-manifolds, as well as relations to modular tensor categories and logarithmic vertex operator algebras. We conclude with a~brief summary of possible future directions of research in section \ref{sec-future}.

\section{Preliminaries}  \label{sec-pre}

In this section we present background material on knot polynomials, brane configurations and enumerative geometry, $\widehat Z$ and $F_K$ invariants, as well as the~knots-quivers correspondence.

\subsection{Knot polynomials}

\subsubsection{HOMFLY-PT polynomials}\label{sec:HOMFLY}
Let $K\subset S^{3}$ be a~knot. Its \emph{HOMFLY-PT polynomial} $P(K;a,q)$  is a~topological invariant~\cite{HOMFLY,PT} which can be calculated via the~skein relation:
\begin{equation*}
    a^{1/2}P(\overcross)-a^{-1/2}P(\undercross)=(q^{1/2}-q^{-1/2})P(\nocross)
\end{equation*}
with a~normalisation condition $P(0_{1};a,q)=1$. This is called the~{\it reduced} normalisation and corresponds to dividing by the~full natural HOMFLY-PT polynomial for the~unknot (here denoted by bar):
\begin{equation*}
    P(K;a,q)=\frac{\bar{P}(K;a,q)}{\bar{P}(0_{1};a,q)},
    \qquad
    \bar{P}(0_{1};a,q)=\frac{a^{1/2}-a^{-1/2}}{q^{1/2}-q^{-1/2}}.
\end{equation*}
For $a=q^2$, $P(K;a,q)$ reduces to the~Jones polynomial $J(K;q)$~\cite{Jon85}, whereas the~substitution $a=q^N$ leads to the~$\mathfrak{sl}_N$ Jones polynomial $J^{\mathfrak{sl}_N}(K;q)$~\cite{RT90}.

More generally, the~\emph{coloured HOMFLY-PT polynomials} $P_{R}(K;a,q)$ are similar polynomial knot invariants defined as the~expectation value of the~knot viewed as a~Wilson line in the~Chern-Simons gauge theory on $S^{3}$~\cite{witten1989}, which depends also on a~representation $R$ of $\mathfrak{sl}_N$. In this setting, the~original HOMFLY-PT corresponds to the~defining representation. The~substitutions $a=q^2$ and $a=q^N$ lead to the~coloured Jones polynomial $J_{R}(K;q)$ and coloured $\mathfrak{sl}_N$ Jones polynomials $J^{\mathfrak{sl}_N}_{R}(K;q)$ respectively. We will be interested mainly in the~HOMFLY-PT polynomials coloured by the~totally symmetric representations $R=S^{r}$ with $r$~boxes in one row of the~Young diagram. In order to simplify the~notation, we will denote them by $P_{r}(K;a,q)$ and call them simply the
HOMFLY-PT polynomials.

There is also a~$t$-deformation of the~HOMFLY-PT polynomials~\cite{DGR,GS1}. The~\emph{superpolynomial} $\mathcal{P}_r(K,a,q,t)$ is defined as the~Poincar\'e polynomial of the~triply-graded homology that categorifies the~HOMFLY-PT polynomial:
\begin{equation}
\begin{split}
    P_{r}(K;a,q)&=\sum_{i,j,k}(-1)^{k}a^{i}q^{j}\dim \mathcal{H}^{S^{r}}_{i,j,k}(K),\\   
   \mathcal{P}_{r}(K;a,q,t)&=\sum_{i,j,k}a^{i}q^{j}t^{k}\dim \mathcal{H}^{S^{r}}_{i,j,k}(K).
\end{split}   \label{Pr}
\end{equation}
From above equations one can immediately see that the~superpolynomial reduces to HOMFLY-PT for $t=-1$:
\[
    \mathcal{P}_{r}(K;a,q,-1)=P_{r}(K;a,q).
\]

\subsubsection{\texorpdfstring{$A$}{A}-polynomials}

The~\emph{$A$-polynomial} is a~knot invariant coming from the~character variety of the~complement of a~given knot $K$ in $S^3$~\cite{CCGLS}. It takes the~form of an~algebraic curve $A_K(x,y)=0$, for $x,y\in\mathbb{C}^*$. According to the~volume conjecture, it also captures the~asymptotics of the~coloured Jones polynomials $J_r(K;q)$ for large colours $r$. The~quantisation of the~$A$-polynomial encodes information about all colours, not only large ones. Namely, it gives the~recurrence relations satisfied by $J_r(K;q)$, which can be written in the~form
 \begin{equation*}
     \hat{A}_K(\hat x,\hat y) J_{*}(K;q) = 0,
 \end{equation*}
where $\hat x$ and $\hat y$ act by
\begin{equation*}
    \hat{x} J_r = q^r J_r, \quad \quad \hat{y}J_r = J_{r+1},
\end{equation*}
and satisfy the~relation $\hat{y}\hat{x}=q\hat{x}\hat{y}$. The~above conjecture was proposed independently in the~context of quantisation of the~Chern-Simons theory~\cite{Guk} and in parallel mathematics developments~\cite{Gar}. The~operator $\hat{A}_K(\hat x,\hat y)$ is referred to as the~\emph{quantum $A$-polynomial}; in the~classical limit $q=1$ it reduces to the~$A$-polynomial.

The~above conjectures were generalized to coloured HOMFLY-PT polynomials~\cite{AV} and coloured superpolynomials~\cite{FGSA,FGS}, which we briefly introduced in \eqref{Pr}. In these cases the~objects mentioned in the~previous paragraph become $a$- and $t$-dependent. In particular, the~asymptotics of coloured superpolynomials $\mathcal{P}_{r}(K;a,q,t)$ for large $r$ is captured by an~algebraic curve called the~\emph{super-$A$-polynomial}, defined by the~equation $A_K(x,y,a,t)=0$. For $t=-1$ it reduces to $a$-deformed $A$-polynomial, and upon setting in addition $a=1$, we obtain the~original $A$-polynomial as a~factor. For brevity, all these objects are often referred to as~$A$-polynomials. The~quantisation of the~super-$A$-polynomial gives rise to quantum super-$A$-polynomial $\hat{A}_K(\hat x,\hat y,a,q,t)$, which is a~$q$-difference operator that encodes the~recurrence relations for the~coloured superpolynomials:
 \begin{equation*}
     \hat{A}_K(\hat x,\hat y,a,q,t) \mathcal{P}_{*}(K;a,q,t) = 0.
 \end{equation*}
 A~universal framework that enables us to determine a~quantum $A$-polynomial from an~underlying classical curve $A(x,y)=0$ was proposed in~\cite{GS2} (irrespective of extra parameters these curves depend on, and also beyond examples related to knots).


\subsection{Brane configurations and enumerative geometry}\label{sec:Brane constructions}

\subsubsection{Large-\texorpdfstring{$N$}{N} transition}
\label{sec:Large N transition}
The~physical background of this work can be represented by the~system of $N$ fivebranes supported on $\mathbb{R}^2 \times S^1 \times Y$, where $Y$ is embedded (as the~zero-section) inside the~Calabi-Yau space $T^* Y$ and $\mathbb{R}^2 \times S^1 \subset \mathbb{R}^4 \times S^1$:
\begin{equation}
\begin{split}\text{spacetime}:\quad & \mathbb{R}^{4}\times S^{1}\times T^* Y\\
 & \cup\phantom{\ \times S^{1}\times\ \ }\cup\\
N~\text{M5-branes}:\quad & \mathbb{R}^{2}\times S^{1}\times Y.
\end{split}
\label{MgeneralY}
\end{equation}
Finding the~large-$N$ limit of this system for general $Y$ is highly nontrivial (see~\cite[sec.7]{GPV} and~\cite[Remark 2.4]{ES}). However, when $Y$ is a~knot complement $M_{K}:=S^{3}\backslash K$, there is an~equivalent description for which the~study of large-$N$ behaviour can be reduced to the~celebrated ``large-$N$ transition''~\cite{GV,OV}. 

We consider first a~description without transition. From the~viewpoint of 3d-3d correspondence, $N$ fivebranes on $Y = M_{K}$ produce a~4d $\mathcal{N}=4$ theory -- which is a~close cousin of 4d $\mathcal{N}=4$ super-Yang-Mills but is {\it not} 4d $\mathcal{N}=4$ SYM -- on a~half-space $\mathbb{R}^3 \times \mathbb{R}_+$ coupled to 3d $\mathcal{N}=2$ theory $T[M_K]$ on the~boundary.
Indeed, near the~boundary $T^2 =\Lambda_K= \partial M_K$, the~compactification of $N$ fivebranes produces a~4d $\mathcal{N}=4$ theory which has moduli space of vacua $\text{Sym}^N (\mathbb{C}^2 \times \mathbb{C}^*)$~\cite{CGPS}.
(The~moduli space of vacua in 4d $\mathcal{N}=4$ SYM is $\text{Sym}^N (\mathbb{C}^3)$.)
The~$SU(N)$ gauge symmetry of this theory appears as a~global symmetry of the~3d boundary theory $T[M_K]$. In particular, the~variables $x_i \in \mathbb{C}^*$ 
are complexified fugacities for this global (``flavour'') symmetry.
For $G=SU(2)$, the~moduli space of vacua of the~knot complement theory $T[M_K]$ gives precisely the~$A$-polynomial of $K$. Similarly, $G_{\mathbb{C}}$ character varieties of $M_{K}$ are realised as spaces of vacua in $T[M_{K},SU(N)]$ with $G=SU(N)$~\cite{FGS,FGSS}.

We next give another equivalent description of the~physical system \eqref{MgeneralY} with $Y = M_{K}$, where the~large-$N$ behaviour is easier to analyse:
\begin{equation}
\begin{split}\text{spacetime}:\quad & \mathbb{R}^{4}\times S^{1}\times T^* S^3\\
 & \cup\phantom{\ \times S^{1}\times\ \ }\cup\\
N~\text{M5-branes}:\quad & \mathbb{R}^{2}\times S^{1}\times S^3 \\
\rho~\text{M5$'$-branes}:\quad & \mathbb{R}^{2}\times S^{1}\times L_K.
\end{split}
\label{Mdeformed}
\end{equation}
This brane configuration is basically a~variant of \eqref{MgeneralY} with $Y=S^3$ and $\rho$ extra M5-branes supported on $\mathbb{R}^2 \times S^1 \times L_K$, where $L_K \subset T^* S^3$ is the~conormal bundle of the~knot $K \subset S^3$. There is, however, a~crucial difference between fivebranes on $S^3$ and $L_K$. Since the~latter are non-compact in two directions orthogonal to $K$, they carry no dynamical degrees of freedom away from $K$. One can path integrate those degrees of freedom along $K$, which effectively removes $K$ from $S^3$ and puts the~corresponding boundary conditions on the~boundary $T^2 = \partial M_K$. The~resulting system is precisely \eqref{MgeneralY} with $Y = M_{K}$. Equivalently, one can use the~topological invariance along $S^3$ to move the~tubular neighbourhood of $K \subset S^3$ to ``infinity''. This creates a~long neck isomorphic to $\mathbb{R} \times T^2$, as in the~above discussion. Either way, we end up with a~system of $N$ fivebranes on the~knot complement and no extra branes on $L_K$, so that the~choice of $GL(\rho,\mathbb{C})$ flat connection on $L_K$ is now encoded in the~boundary condition for $SL(N,\mathbb{C})$ connection\footnote{To be more precise, it is a~$GL(N,\mathbb{C})$ connection, but the~dynamics of the~$GL(1,\mathbb{C})$ sector is different from that of the~$SL(N,\mathbb{C})$ sector and can be decoupled.} on $T^2 = \partial M_K$. In particular, the~latter has at most $\rho$ nontrivial parameters $x_i \in \mathbb{C}^*$, $i=1,\ldots, \rho$.

In this paper we consider the~simplest case of $\rho=1$. Then we can use the~geometric transition of~\cite{GV}, upon which there is one brane on $L_K$ and $N$ fivebranes on the~zero-section of $T^* S^3$ disappear. The~Calabi-Yau space $T^* S^3$ undergoes a~topology changing transition to a~new Calabi-Yau space $X$, the~so-called ``resolved conifold'', which is the~total space of $\mathcal{O} (-1) \oplus \mathcal{O} (-1) \to \mathbb{CP}^1$, and only the~Ooguri-Vafa fivebranes supported on the~conormal bundle $L_K$ remain:
\begin{equation}
\begin{split}\text{spacetime}:\quad & \mathbb{R}^{4}\times S^{1}\times X\\
 & \cup\phantom{\ \times S^{1}\times\ \ }\cup\\
\rho~\text{M5$'$-branes}:\quad & \mathbb{R}^{2}\times S^{1}\times L_{K}.
\end{split}
\label{Mresolved}
\end{equation}
Note that on the~resolved conifold side, i.e. after the~geometric transition, $\log a =\text{Vol} (\mathbb{CP}^1) + i \int B = N\hbar$ is the~complexified K\"ahler parameter which enters the~generating function of enumerative invariants. 

To summarize, a~system of $N$ fivebranes on a~knot complement \eqref{MgeneralY} is equivalent to a~brane configuration \eqref{Mresolved}, with a~suitable map that relates the~boundary conditions in the~two cases. There is another system, closely related to \eqref{Mresolved}, that one can obtain from \eqref{Mdeformed} by first reconnecting $\rho$ branes on $L_K$ with $\rho$ branes on $S^3$. This give $\rho$ branes on $M_K$ (that go off to infinity just like $L_K$ does) plus $N-\rho$ branes on $S^3$. Assuming that $\rho \sim O(1)$ as $N \to \infty$ (e.g. $\rho=1$ in the~context of this paper), after the~geometric transition we end up with a~system like \eqref{Mresolved}, except $L_K$ is replaced by $M_K$ and $\text{Vol} (\mathbb{CP}^1) + i \int B = (N-\rho) \hbar$.
Both of these systems on the~resolved side compute the~HOMFLY-PT polynomials of $K$ coloured by Young diagrams with at most $\rho$ rows.

\subsubsection{Twisted superpotential}

The~leading, genus-0 contribution to the~generating function of enumerative invariants is the~twisted superpotential. It can be computed either on the~resolved side of the~transition, where $a$ is a~K\"ahler parameter, or on the~original (``deformed'') side, for a~family of theories labelled by $N$. Either way, one finds that the~twisted superpotential is given by the~double-scaling limit that combines large-colour and semiclassical limits of the~HOMFLY-PT polynomials~\cite{FGS,FGSS}:
\begin{equation}\label{eq:W from P}
P_{r}(K;a,q)\stackrel[\hbar\rightarrow0]{r\rightarrow\infty}{\longrightarrow}\int\prod_{i}\frac{dz_{i}}{z_{i}}\exp\left(\frac{1}{\hbar}\widetilde{\mathcal{W}}_{T[M_{K}]}(z_{i},x,a)+O(\hbar^{0})\right),
\end{equation}
with $x=q^{r}$ kept fixed. We can read off the~structure of $T[M_{K}]$
from the~terms in $\widetilde{\mathcal{W}}_{T[M_{K}]}(z_{i},x,a)$:
\begin{equation}
\begin{split}\textrm{Li}_{2}\ensuremath{\left(a^{n_{Q}}x^{n_{M}}z_{i}^{n_{z_i}}\right)}\qquad & \longleftrightarrow\qquad\text{(chiral field)}\,,\\
\frac{\kappa_{ij}}{2}\log\zeta_{i}\cdot\log\zeta_{j}\qquad & \longleftrightarrow\qquad\text{(Chern-Simons coupling)}\,.
\end{split}
\label{eq:Li2 and logs dictionary}
\end{equation}
Each dilogarithm is interpreted as the~one-loop contribution of a~chiral superfield with charges $(n_{Q},n_{M},n_{z_i})$ under the~global symmetries $U(1)_{Q}$ (arising from the~internal 2-cycle in $X$) and~$U(1)_{M}$ (corresponding to the~non-dynamical gauge field on the~M5-brane), as well as the~gauge group $U(1)\times\ldots\times U(1)$. Quadratic-logarithmic terms are identified with Chern-Simons couplings among the~various $U(1)$ symmetries, with $\zeta_{i}$ denoting the~respective fugacities.

We can integrate out the~dynamical fields (whose VEVs are given by $\log z_{i}$) using the~saddle point approximation to obtain the~effective twisted superpotential:
\begin{equation}
\label{eq:Weff}
\widetilde{\mathcal{W}}^{\textrm{eff}}_{T[M_{K}]}(x,a)=\frac{\partial\widetilde{\mathcal{W}}_{T[M_{K}]}(z_{i},x,a)}{\partial\log z_{i}}.
\end{equation}
Then after introducing the~dual variable $y$ (the~effective Fayet-Iliopoulos parameter), we arrive at the~$A$-polynomial:
\begin{equation}
\label{eq:AfromW}
\log y=\frac{\partial\widetilde{\mathcal{W}}^{\textrm{eff}}_{T[M_{K}]}(x,a)}{\partial\log x}\qquad\Leftrightarrow\qquad A_K(x,y,a)=0.
\end{equation}

\subsubsection{Curve counts}
We recall a~geometric picture underlying the~quiver description of the~invariant $F_K$. Recall the~curve counting interpretation of $F_K$ from~\cite{EGGKPS}. We start from the~deformed conifold $T^{\ast}S^{3}$ and the~knot complement $M_K$, with the~Legendrian conormal $\Lambda_{K}\subset ST^{\ast} S^{3}$ as ideal boundary. In case $K$ is fibered, we shift $M_{K}$ off of $S^3$ along a~closed $1$-form $\beta$ that generates $H^{1}(M_{K})=\R$. We take this form to agree with the~form $d\mu$ that is dual to the~meridian circle on the~boundary of a~tubular neighbourhood of the~knot. If $K$ is not fibered, such a~1-form necessarily has zeroes and the~shift leaves intersection points between $M_K$ and $S^3$ that can be normalized to locally appear as cotangent fibers.    

We want to count (generalized) holomorphic curves with boundary on $M_{K}$. There are two ways to do this for fibered knots: either we consider $M_{K}$ as a~Lagrangian submanifold in the~resolved conifold $X$ or we can use a~sufficiently SFT-stretched almost complex structure on $T^{\ast}S^{3}$ for which all curves leave a~neighbourhood of the~zero section, see~\cite[Section~2.5]{ES}. The~resulting counts (and in fact the~curves) are the~same. In the~non-fibered case, the~second approach still works: after stretching $M_{K}$ intersects a~neighbourhood of $S^{3}$ in $T^{\ast} S^{3}$ in a~finite collection of cotangent fibers. Then possible curves in the~inside region (near~$S^3$) have boundaries on these fibers and positive punctures at Reeb chords corresponding to geodesics connecting them. The~dimension of such a~curve is 
\[ 
\dim = \sum_{j}(\ind(\gamma_{j})+1)\ge 2, 
\]  
where the~sum runs over positive punctures of the~curve, $\gamma_{j}$ is the~Reeb chord at the~puncture, and $\ind(\gamma_{j})$ is the~Morse index of the~corresponding geodesic.
It follows that no such curve can appear after stretching, since the~outside part would then have negative index. This means that there is a~curve count also for $M_{K}$. As we will discuss below, although this curve count is well defined and invariant, when intersections between $S^{3}$ and $M_{K}$ cannot be removed, it is only one point in a~space of curve counts that also takes into account certain punctured curves. The~present discussion applies to the~more involved curve counts of punctured curves as well, although the~geometric description is less direct.   

In this setting the~quiver picture arises from a~description of a~the~Lagrangian $M_K$ as deformed by a~collection of basic holomorphic disks. Here the~disks are embedded and deform the~standard cotangent neighbourhood of $M_K$ (where there are no holomorphic curves) to the~cotangent bundle with neighbourhoods of these basic disks attached. Such neighbourhoods support holomorphic curves that come from branched covers of the~basic disks with constant curves attached and the~total count of generalized holomorphic curves is determined by the~linking data of the~basic disk boundaries. It is given by the~quiver partition function, where the~nodes correspond to the~basic disks and the~arrows to the~linking information. Here the~data of a~node is the~homological degree of its boundary and the~number of wraps around the~central $\mathbb{CP}^1$. We point out that nodes of boundary degree zero play a~special role for non-fibered knots.

\subsection{\texorpdfstring{$\hat{Z}$}{Z} and \texorpdfstring{$F_K$}{FK} invariants}\label{sec:Zhat and FK}

\subsubsection{\texorpdfstring{$\mathfrak{sl}_2$}{sl(2)} and \texorpdfstring{$\mathfrak{sl}_N$}{\mathfrak{sl}_N} invariants}

In their study of $3$d $\mathcal{N}=2$ theories $T[Y]$ for 3-manifolds $Y$, Gukov-Putrov-Vafa~\cite{GPV} and Gukov-Pei-Putrov-Vafa~\cite{GPPV} conjectured the~existence of the~$3$-manifold invariants $\widehat{Z}(Y)$ (also known as ``homological blocks'' or ``GPPV invariants'') valued in $q$-series with integer coefficients.
These $q$-series invariants exhibit peculiar modular properties, the~exploration of which was initiated in~\cite{BMM1,CCFGH,BMM2,CFS}.

More recently, Gukov-Manolescu~\cite{GM}
introduced a~version of $\widehat{Z}$ for knot complements, which they called $F_K$. If $K\subset S^3$ is a~knot, then $F_K=\widehat{Z}(S^3\setminus K)$.
The~motivation was to study $\widehat{Z}$ more systematically using Dehn surgery. 
Recall that the~Melvin-Morton-Rozansky expansion~\cite{MM,BNG,Roz1,Roz2} (also known as ``loop expansion'' or ``large colour expansion'') of the~coloured Jones polynomials is the~asymptotic expansion near $\hbar \rightarrow 0$ while keeping $x=q^r=e^{r\hbar}$ fixed:  
\begin{equation*}
    J_r(K;q=e^\hbar) = \sum_{j\geq 0}\frac{p_j(x)}{\Delta_K(x)^{2j+1}}\frac{\hbar^j}{j!},\quad\quad p_j(x)\in \mathbb{Z}[x,x^{-1}],\quad p_0 = 1.
\end{equation*}
Here $\Delta_K(x)$ is the~Alexander polynomial of $K$. 
The~main conjecture of~\cite{GM} was then the~following. 
\begin{conj}\label{conjGukovManolescu}
    For every knot $K\subset S^3$, there exists a~two-variable series 
    \begin{equation}\label{eq:GMconj}
    F_K(x,q) = \frac{1}{2}\sum_{\substack{m\geq 1\\m\text{ odd}}} f_m(q) (x^{m/2}-x^{-m/2}),\quad\quad f_m(q)\in \mathbb{Z}[q^{-1},q]]
    \end{equation}
    such that its asymptotic expansion agrees with the~Melvin-Morton-Rozansky expansion of the~coloured Jones polynomials\footnote{\cite{GM}  uses the~unreduced normalisation. In the~reduced normalisation, used in the~major part of this paper, \eqref{eq:GM-Melvin-Morton} reads $F_K(x,q=e^{\hbar}) = \sum_{j\geq 0}\frac{p_j(x)}{\Delta_K(x)^{2j+1}}\frac{\hbar^j}{j!}.$}: 
    \begin{equation}\label{eq:GM-Melvin-Morton}
        F_K(x,q=e^{\hbar}) = (x^{1/2}-x^{-1/2})\sum_{j\geq 0}\frac{p_j(x)}{\Delta_K(x)^{2j+1}}\frac{\hbar^j}{j!}.
    \end{equation}
    Moreover, this series is annihilated by the~quantum $A$-polynomial: 
    \begin{equation*}
        \hat{A}_K(\hat{x},\hat{y},q)F_K(x,q) = 0.
    \end{equation*}
\end{conj}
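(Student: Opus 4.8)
The plan is to construct $F_K(x,q)$ explicitly as a resummation of the coloured Jones data, verify the three assertions in turn, and isolate where the real difficulty lies. The natural starting point is Habiro's cyclotomic expansion: for every knot there are Laurent polynomials $C_k(q)\in\mathbb{Z}[q^{\pm 1}]$ with
\[
J_r(K;q)=\sum_{k\ge 0}C_k(q)\prod_{i=1}^{k}\bigl(q^{r}+q^{-r}-q^{i}-q^{-i}\bigr),
\]
for the reduced coloured Jones polynomial, the sum converging $q$-adically for each fixed integer $r$. Writing $x=q^{r}$ and using $q^{r}+q^{-r}-q^{i}-q^{-i}=x^{-1}(x-q^{i})(x-q^{-i})$, one is led to the candidate
\[
F_K(x,q):=\bigl(x^{1/2}-x^{-1/2}\bigr)\sum_{k\ge 0}C_k(q)\,x^{-k}\prod_{i=1}^{k}(1-q^{i}x)(1-q^{-i}x),
\]
re-expanded as a series in $x^{1/2}$ around a chosen point (the ``inverted Habiro series''; the choice of expansion point corresponds to singling out the abelian branch of the $A$-polynomial).

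The first and hardest task is to make this resummation well defined and to read off the shape \eqref{eq:GMconj}. Each cyclotomic factor is invariant under $x\mapsto x^{-1}$, so the prefactor $x^{1/2}-x^{-1/2}$ forces the antisymmetry of $F_K$, and each summand is a Laurent polynomial in $x^{1/2}$ with integer $q$-coefficients; the genuine content is the claim that, for each odd $m$, only finitely many $k$ contribute to the coefficient of $x^{m/2}$ modulo any fixed power of $q$, so that $f_m(q)$ is a well-defined element of $\mathbb{Z}[q^{-1},q]]$. Establishing this stabilization — equivalently, controlling the $q$-adic valuation of the coefficients of $x^{-k}\prod_{i=1}^{k}(1-q^{i}x)(1-q^{-i}x)$ as $k\to\infty$ after re-expansion around the correct branch — is the technical core of the conjecture, and is the step I expect to be the main obstacle in full generality.

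Granting the construction, the Melvin--Morton--Rozansky asymptotics are almost automatic: substituting $q=e^{\hbar}$ and $x=e^{r\hbar}$ with $x$ held fixed, the $k$-th cyclotomic term contributes at order $\hbar^{\ge k}$, so the resummation commutes term by term with the $\hbar$-expansion; the $q\to 1$ resummation of $\sum_k C_k(1)\bigl((1-x)(1-x^{-1})\bigr)^{k}$ reproduces the powers of $\Delta_K(x)$ in the denominators of \eqref{eq:GM-Melvin-Morton}, and the higher $p_j(x)$ arise from the $q$-Taylor corrections to the cyclotomic factors — the only bookkeeping is to carry the prefactor and the reduced-versus-unreduced normalisation consistently.

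Finally, for annihilation by $\widehat{A}_K$: by construction $F_K$ interpolates $(x^{1/2}-x^{-1/2})J_r(K;q)$ at $x=q^{r}$, and the sequence $(J_r)_r$ is killed by $\widehat{A}_K$ with $\widehat{x}J_r=q^{r}J_r$ and $\widehat{y}J_r=J_{r+1}$; acting on $F_K(x,q)$ with $\widehat{x}=(x\,\cdot)$ and $\widehat{y}=(x\mapsto qx)$ then yields a series of the same type as \eqref{eq:GMconj} that vanishes at every $x=q^{r}$, hence vanishes identically. This last step relies on (a) the uniform $q$-adic control already needed above, and (b) the statement that $\widehat{A}_K$ really is the recursion operator for the coloured Jones polynomials, i.e.\ the AJ conjecture, which is known for torus and two-bridge knots but open in general. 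For these reasons the realistic route — and the one effectively pursued in Sections \ref{sec:FK different branches} and \ref{sec:Quiver from R-matrices} — is knot by knot: produce the $C_k(q)$ explicitly (or an $R$-matrix / inverted-Habiro presentation), carry out the resummation, check \eqref{eq:GMconj}, and verify the recursion directly. An alternative is to \emph{define} $F_K$ by solving $\widehat{A}_K F_K=0$ recursively in $x$ with initial data fixed by the Alexander polynomial, which makes both (a) and the $\widehat{A}_K$-annihilation automatic at the cost of turning integrality of the recursive solution into the new main obstacle.
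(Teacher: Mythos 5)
The statement you are addressing is not proved in the paper at all: it is Conjecture~\ref{conjGukovManolescu}, quoted verbatim from Gukov--Manolescu as background, and it remains open in general. The paper only verifies instances of it (and of its branch-dependent generalizations) knot by knot, via the recursion on $\hat{A}_K$, quiver forms, $R$-matrix state sums, and the inverted Habiro series. So the only question is whether your proposal actually closes the conjecture, and it does not; to your credit you flag the two load-bearing gaps yourself (stabilization of the resummation, and the AJ conjecture), but the first of these is worse than you suggest.

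The concrete problem is that your candidate resummation demonstrably fails to converge coefficient-wise. Take $K=\mathbf{4}_1$, where $C_k(q)=1$ for all $k$. The $k$-th summand $x^{-k}\prod_{i=1}^{k}(1-q^{i}x)(1-q^{-i}x)$ is a Laurent polynomial in $x$ of degrees $-k$ through $k$, so \emph{every} $k\ge |m|$ contributes to the coefficient of $x^{m}$; moreover, choosing all the $(-q^{-i}x)$ factors produces the monomial $(-1)^{k}q^{-k(k+1)/2}$ in the coefficient of $x^{0}$, and choosing all the $(-q^{i}x)$ factors produces $(-1)^{k}q^{+k(k+1)/2}$. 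Hence the coefficient of each fixed power of $x$ receives infinitely many contributions with $q$-exponents unbounded in both directions, and the sum is not an element of $\mathbb{Z}[q^{-1},q]]$ in either the $q$-adic or the $q^{-1}$-adic topology. The stabilization you hope to ``establish'' is therefore false for this candidate; one must organize the resummation differently, which is exactly what the inverted Habiro series of \cite{Park3} does (it sums $-\sum_{m<0}a_{m}(K;q)/\prod_{0\le j\le -m-1}(x+x^{-1}-q^{j}-q^{-j})$ over \emph{negative} indices with the cyclotomic products in the \emph{denominator}, after extending $a_m$ by the $C$-polynomial recursion -- a genuinely different object from a re-expansion of the positive Habiro sum). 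Your MMR step has a related defect: at $q=1$ each cyclotomic factor degenerates to $-x^{-1}(1-x)^{2}$, so the identification with $\sum_j p_j(x)\Delta_K(x)^{-2j-1}\hbar^{j}/j!$ is an identity of expansions around $x=1$, whereas \eqref{eq:GMconj} demands a series in (half-integer) powers of $x$ around $x=0$ and $x=\infty$; passing between these two regimes is precisely the resummation content of the conjecture and cannot be dismissed as bookkeeping. Your fallback -- define $F_K$ by solving $\hat{A}_K F_K=0$ with suitable initial data -- is indeed what Section~\ref{sec:FK different branches} does (Conjecture~\ref{conj:quantum_initial_condition} and its remark make the $n_y=1$ case a theorem modulo integrality), but as you note this trades the problem for the unproved integrality of $f_j(q)$ and still presupposes the AJ conjecture, so it does not constitute a proof either.
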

Let us stress that while the~same form of quantum $A$-polynomial $\hat{A}_K(\hat x,\hat y)$ arises in the~analysis of coloured Jones polynomial and $F_K$ invariants, there is a~subtle but important difference between these two situations, which has to do with the~initial conditions that need to be imposed.

Conjecture \ref{conjGukovManolescu} concerns $\mathfrak{g}=\mathfrak{sl}_2$. An~extension 
to arbitrary $\mathfrak{g}$ was studied in~\cite{Park1}. In particular, the~existence of an~$\mathfrak{sl}_N$ generalisation of $F_K$, which we denote by~$F_K^{\mathfrak{sl}_N}$, was conjectured and it was observed that its specialisation to symmetric representations, $F_K^{\mathfrak{sl}_N,sym}$, is annihilated by the~corresponding quantum $A$-polynomial: 
\begin{equation}
    \hat{A}_K(\hat{x},\hat{y},a=q^N,q)F_K^{\mathfrak{sl}_N,sym}(x,q) = 0.
\end{equation}

\subsubsection{\texorpdfstring{$a$}{a}-deformed \texorpdfstring{$F_K$}{FK} invariants}

The~analysis of~\cite{EGGKPS} showed that $F_K^{\mathfrak{sl}_N,sym}(x,q)$ for all $N$ can be captured by $a$-deformed invariants $F_{K}(x,a,q)$:
\begin{equation}
F_{K}(x,a=q^{N},q)=F_K^{\mathfrak{sl}_N,sym}(x,q).
\end{equation}
They are annihilated by the~quantum $a$-deformed $A$-polynomials:
\begin{equation}
\hat{A}_K(\hat{x},\hat{y},a,q)F_{K}(x,a,q)=0,
\end{equation}
where the~operators act as follows:
\begin{equation}
\hat{x}F_{K}(x,a,q)=x F_{K}(x,a,q),\qquad\hat{y}F_{K}(x,a,q)=F_{K}(q x,a,q).
\end{equation}
Moreover, the~asymptotic expansion of $F_{K}(x,a,q)$ should agree with that of HOMFLY-PT polynomials. That is 
\begin{equation}\label{eq:asymptotics}
    \log F_K(e^{r\hbar},a,e^\hbar) = \log P_r(K;a,e^{\hbar})
\end{equation}
as $\hbar$-series. In certain cases (including all $(2,2p+1)$~torus knots) lifting this relation to the~resummed level leads to a~simple substitution:
\begin{equation}
F_{K}(x,a,q)=\left.P_{r}(K;a,q)\right|_{q^{r}=x}\label{eq:connection between F_K and HOMFLY-PT},
\end{equation}
but this fails for many simple knots like $4_1$.
From the~physical point of view,  $F_K^{\mathfrak{sl}_N,sym}(x,q)$ and
$F_{K}(x,a,q)$ encode BPS spectra on two sides of the~large-$N$ transition discussed in Section~\ref{sec:Large N transition}.

\subsubsection{Normalisation and Conventions} \label{sec: conventions}

Before continuing, we make a~few remarks on different conventions involving $F_K$ and $J^{\mathfrak{sl}_N}_r$.

\begin{itemize}
    \item In~\cite{GM} and Conjecture \ref{conjGukovManolescu}, $F_K$ is presented in the~\textit{balanced} expansion which involves a~summation over both positive and negative powers of $x$ with Weyl symmetry being manifest. However, for our purposes it will be more natural to work with the~\textit{positive} expansion, as in~\cite{EGGKPS}. This means that we express $F_K$ as a~power series in $x$ expanded around $0$. There is also a~closely related \textit{negative} expansion coming from expanding around $x = \infty$ or by applying Weyl symmetry to the~positive expansion. The~balanced expansion can be rederived by averaging the~positive and negative expansions.
    
    \item When working with quiver forms (Sections \ref{sec:KQ correspondence} and \ref{sec:FK invariants and quivers}), we often treat $F_K$ as an~integer power series starting with $1$, see e.g. Equation \eqref{eq:GM-Q correspondence}. We stress that this is only correct up to an~overall prefactor
    \begin{equation} \label{eq: Full Prefac FK}
       \exp(\frac{p(\log x,\log a)}{\hbar})
    \end{equation}
    where $p$ is a~polynomial of degree at most 2. These prefactors are important for some properties of $F_K$ and can be derived from the~$A$- and $B$-polynomials as in Section \ref{subsec: Computing FK on different branches}.
    
    \item In the~literature there are a~collection of different normalisations in which $F_K$ and $J^{\mathfrak{sl}_N}_r$ are presented. The~three possibilities correspond to the~different values which can be assigned to the~unknot invariant.
        \begin{itemize}
            \item The~\textit{reduced} normalisation corresponds to normalizing away the~unknot,
            \begin{equation*}
           J_r^{\mathfrak{sl}_N}(\mathbf{0}_1, q) = 1 = F_{\mathbf{0}_1}(x, a, q).
            \end{equation*}
            This is the~convention most present in the~literature on HOMFLY-PT, superpolynomials, and $A$-polynomials, e.g.~\cite{DGR,FGSA,FGS,FGSS,NRZS, EGGKPS}.
            \item The~\textit{unreduced} normalisation corresponds to normalizing away the~denominator of the~full unknot factor,
            \begin{equation*}
                J_r^{\mathfrak{sl}_N, unreduced}(\mathbf{0}_1, q) = \frac{(xq; q)_{\infty}}{(xa; q)_{\infty}} = F_{\mathbf{0}_1}^{unreduced}(x, a, q).
            \end{equation*}
            This convention is common in the~growing literature on $F_K$ invariants, e.g.~\cite{GM,Park1,Park2,GHNPPS, EGGKPS}, particularly when studying unknot invariants or working with the~balanced expansion of $F_K$.
            \item The~\textit{fully unreduced} normalisation corresponds to leaving the~full unknot factor intact,
            \begin{equation*}
                J_r^{\mathfrak{sl}_N, fully\,unreduced}(\mathbf{0}_1, q) = e^{\frac{-\log(x)\log(a)}{2\hbar}}x^{\frac{1}{2}}\frac{(a; q)_{\infty}(xq; q)_{\infty}}{(xa; q)_{\infty}(q; q)_{\infty}} = F_{\mathbf{0}_1}^{fully\,unreduced}(x, a, q).
            \end{equation*}
            This normalisation is natural in the~context of enumerative invariants and can be found in~\cite{OV,AENV,EN,EKL2,EKL1,ES,DE}. In the~literature this normalisation is usually called just ``unreduced", but since we join different perspectives, we distinguish it from the~one discussed in the~previous point.
        \end{itemize}
    We primarily work in the~reduced normalisation and each use of the~other normalisation (e.g. in the~analysis of the~unknot) is clearly stated.
\end{itemize}

\subsection{Knots-quivers correspondence}\label{sec:KQ correspondence}

\subsubsection{Quivers and their representations}

A~quiver $Q$ is an\,oriented graph, i.e.~a~pair $(Q_{0},Q_{1})$
where $Q_{0}$ is a~finite set of vertices and $Q_{1}$~is a~finite
set of arrows between them. We number the~vertices by $1,2,...,m=|Q_{0}|$.
An~adjacency matrix of~$Q$ is the~$m\times m$ integer matrix
with entries $C_{ij}$ equal to the~number of arrows from $i$ to~$j$.
If $C_{ij}=C_{ji}$, we call the~quiver symmetric.

A~quiver representation with a~dimension vector $\boldsymbol{d}=(d_{1},...,d_{m})$
is an~assignment of a~vector space of dimension $d_{i}$ to the~node
$i\in Q_{0}$ and a~linear map $\gamma_{ij}:\mathbb{C}^{d_{i}}\rightarrow\mathbb{C}^{d_{j}}$
to each arrow from vertex $i$ to vertex $j$. Quiver representation
theory studies moduli spaces of quiver representations. While explicit expressions for invariants describing those spaces are difficult to find in general, they are quite well understood in the~case of
symmetric quivers~\cite{KS1,KS2,Efi,MR,FR}. Important
information about the~moduli space of representations of a~symmetric
quiver is encoded in the~\emph{motivic generating series} defined as
\begin{equation}\label{eq:motivic generating series}
P_{Q}(\boldsymbol{x},q)=\sum_{\boldsymbol{d}\geq0}(-q^{1/2})^{\boldsymbol{d}\cdot\boldsymbol{C}\cdot\boldsymbol{d}}\frac{\boldsymbol{x^{d}}}{(q;q)_{\boldsymbol{d}}}=\sum_{d_{1},\ldots,d_{m}\geq0}(-q^{1/2})^{\sum_{i,j}C_{ij}d_{i}d_{j}}\prod_{i=1}^{m}\frac{x_{i}^{d_{i}}}{(q;q)_{d_{i}}},
\end{equation}
where the\,denominator is the~$q$-Pochhammer symbol:
\begin{equation*}
(z;q)_{n}=\prod_{k=0}^{n-1}(1-zq^{k}).
\end{equation*}
When $z=q$, we will simplify the~notation and write $(q)_n$ and $(q)_{\Bf{d}}$ instead of $(q;q)_n$, and $(q;q)_{\Bf{d}}=\prod_i (q;q)_i$ respectively.

Let us define the~\emph{plethystic exponential} of $f=\sum_{n}a_{n}t^{n}$, $a_{0}=0$ in the~following way:
\begin{equation*}
\mathrm{Exp}\bigl(f\bigr)(t)=  
\exp\left(\sum_{k}\tfrac{1}{k}f(t^{k})\right) =\prod_{n}(1-t^{n})^{a_{n}}.
\end{equation*}
Then we can write
\begin{equation}
\begin{split}
P_{Q}(\boldsymbol{x},q)&=\textrm{Exp}\left(\frac{\Omega(\boldsymbol{x},q)}{1-q}\right),\label{eq:P_Q=Exp}\\
\Omega(\boldsymbol{x},q)&=\sum_{\boldsymbol{d},s}\Omega_{\boldsymbol{d},s}\boldsymbol{x}^{\boldsymbol{d}}q^{s/2}=\sum_{\boldsymbol{d},s}\Omega_{(d_{1},...,d_{m}),s}\left(\prod_{i}x_{i}^{d_{i}}\right)q^{s/2},    
\end{split}
\end{equation}
where $\Omega_{\boldsymbol{d},s}$ are motivic Donaldson-Thomas (DT) invariants ~\cite{KS1,KS2}. The~DT invariants have two geometric interpretations, either as the~intersection homology Betti numbers of the~moduli space of all semi-simple representations
of~$Q$ of dimension vector~$\boldsymbol{d}$, or as the~Chow-Betti numbers of the~moduli space of all simple representations of~$Q$ of dimension vector~$\boldsymbol{d}$; see~\cite{MR,FR}.~\cite{Efi} provides a~proof of integrality of DT invariants for
the~symmetric quivers.

\subsubsection{Knots-quivers correspondence for knot conormals\label{subsec:Knots-quivers-correspondence}}
In the~context of the~knots-quivers correspondence, we combine $P_{r}(K;a,q)$
into the~HOMFLY-PT generating series:
\begin{equation*}
P_{K}(y,a,q)=\sum_{r=0}^{\infty}\frac{y^{-r}}{(q)_r}P_{r}(K;a,q).
\end{equation*}
Using this expression we can encode the~Labastida-Mari\~{n}o-Ooguri-Vafa (LMOV) invariants~\cite{OV,LM,LMV} in the~following way:
\begin{equation}\label{P^K=Exp}
P_{K}(y,a,q)=\mathrm{Exp}\left(\frac{N(y,a,q)}{1-q}\right),
\qquad
N(y,a,q)=\sum_{r,i,j}N_{r,i,j}y^{-r}a^{i/2}q^{j/2}.
\end{equation}
According to the~LMOV conjecture~\cite{OV,LM,LMV},  $N_{r,i,j}$ are integer numbers counting BPS states in the~effective 3d $\mathcal N=2$ theories described in Section~\ref{sec:Brane constructions}.

The~knots-quivers correspondence for the~knot conormals~\cite{KRSS1,KRSS2}
is an~assignment of a~symmetric quiver~$Q$ (with adjacency matrix $C$), vector $\Bf{n}=(n_1,\ldots, n_m)$ with integer entries, and vectors $\Bf{a}=(a_1,\ldots, a_m)$, $\Bf{l}=(l_1,\ldots, l_m)$ with half-integer entries to a~given knot~$K$ in such a~way that
\begin{equation}\label{eq:KQ correspondence knot conormal}
P_{K}(y,a,q)
=\sum_{\boldsymbol{d}\geq0}(-q^{1/2})^{\boldsymbol{d}\cdot\boldsymbol{C}\cdot\boldsymbol{d}}\frac{y^{\Bf{n}\cdot \Bf{d}}a^{\Bf{a}\cdot \Bf{d}}q^{\Bf{l}\cdot \Bf{d}}}{(q)_{\boldsymbol{d}}}
=\left.P_{Q}(\boldsymbol{x},q)\right|_{x_{i}=y^{n_{i}}a^{a_{i}}q^{l_{i}}}.
\end{equation}
The~possibility of such assignment was proven for all 2-bridge knots in~\cite{Stosic:2017wno} and for all arborescent knots in~\cite{SW2004}. Some exotic cases with $n_i<-1$ (the~simplest examples are $\mathbf{9}_{42}$ and $\mathbf{10}_{132}$) require a~generalisation of the~correspondence, for more details see~\cite{EKL3}.

Equation \eqref{eq:KQ correspondence knot conormal} can be rewritten as
\begin{equation}\label{eq:KQDT}
N(y,a,q)=\left.\Omega(\boldsymbol{x},q)\right|_{x_{i}=y^{n_i} a^{a_{i}}q^{l_{i}}}\,,
\end{equation}
which ties the~knots-quivers correspondence with LMOV conjecture
using the~fact that DT invariants of symmetric quivers are integer.

\subsubsection{Knots-quivers correspondence for knot complements}\label{sec:KQ for knot complements}

The~knots-quivers correspondence can be generalized to knot complements~\cite{Kuch}. Then it is an~assignment of a~symmetric quiver~$Q$, an~integer $n_{i}$, and half-integers, $a_{i}$, $l_{i}$, $\ensuremath{i\in Q_{0}}$ to a~given knot complement $M_{K}=S^{3}\backslash K$ in such a~way that
\begin{equation}
F_{K}(x,a,q)
=\sum_{\boldsymbol{d}\geq0}(-q^{1/2})^{\boldsymbol{d}\cdot\boldsymbol{C}\cdot\boldsymbol{d}}\frac{x^{\Bf{n}\cdot \Bf{d}}a^{\Bf{a}\cdot \Bf{d}}q^{\Bf{l}\cdot \Bf{d}}}{(q)_{\boldsymbol{d}}}
=\left.P_{Q}(\boldsymbol{x},q)\right|_{x_{i}=x^{n_{i}}a^{a_{i}}q^{l_{i}}}.\label{eq:GM-Q correspondence}
\end{equation}

Due to the~limitations of available data, the~possibility of such assignment was proven so far only for $(2,2p+1)$ torus knots. Sections~\ref{sec:FK invariants and quivers} and \ref{sec:Quiver from R-matrices} provide evidence for \eqref{eq:GM-Q correspondence} for many new cases of knot complements (however, exotic cases like $\mathbf{9}_{42}$ and $\mathbf{10}_{132}$ are still out of range).

Using \eqref{eq:P_Q=Exp}, one can define analogues of LMOV invariants for knot complements~\cite{Kuch}:
\begin{equation*}
N(x,a,q)=\left.\Omega(\boldsymbol{x},q)\right|_{x_{i}=x^{n_{i}}a^{a_{i}}q^{l_{i}}},
\end{equation*}
which can be encoded in the~$F_K$ invariants in analogy to \eqref{P^K=Exp}:
\begin{equation}
F_{K}(x,a,q)  =\textrm{Exp}\left(\frac{N(x,a,q)}{1-q}\right).
\label{eq:F_K LMOV invariants}
\end{equation}

The~integrality of DT, LMOV, and analogous invariants has a~natural interpretation in terms of counting BPS states in effective 3d $\mathcal{N}=2$ theories. $N(x,a,q)$ and $\Omega(\boldsymbol{x},q)$ count BPS states in theories $T[M_K]$ and  $T[Q]$ respectively. Note that basing on Section~\ref{sec:Zhat and FK}, we can rewrite \eqref{eq:W from P} using $F_K$ invariant:
\begin{equation}
F_{K}(x,a,q)\underset{\hbar\rightarrow0}{\longrightarrow}\int\prod_{i}\frac{dz_{i}}{z_{i}}\exp\left(\frac{1}{\hbar}\widetilde{\mathcal{W}}_{T[M_{K}]}(z_{i},x,a)+\mathcal{O}(\hbar^{0})\right).\label{eq:Semiclassical limit F_K}
\end{equation}
Similarly, the~structure of $T[Q]$ is encoded
in the~semiclassical limit of the~motivic generating series~\cite{EKL1}:
\begin{equation}
\begin{split} & P_{Q}(\boldsymbol{x},q)\stackrel[\hbar\rightarrow0]{q^{d_{i}}=y_{i}}{\longrightarrow}\int\prod_{i}\frac{dy_{i}}{y_{i}}\exp\ensuremath{\left[\frac{1}{\hbar}\widetilde{\mathcal{W}}_{T[Q]}(\boldsymbol{x},\boldsymbol{y})+\mathcal{O}(\hbar^{0})\right]},\\
 & \widetilde{\mathcal{W}}_{T[Q]}(\boldsymbol{x},\boldsymbol{y})=\sum_{i}\textrm{Li}_{2}(y_{i})+\log\left(\ensuremath{(-1)^{C_{ii}}x_{i}}\right)\,\log y_{i}+\sum_{i,j}\frac{C_{ij}}{2}\log y_{i}\,\log y_{j}.
\end{split}
\label{eq:Semiclassical limit P_Q}
\end{equation}
Using the~dictionary \eqref{eq:Li2 and logs dictionary},
we can interpret the~elements of \eqref{eq:Semiclassical limit P_Q}
in the~following way:
\begin{itemize}
\item The~integral $\int\prod_{i}\frac{dy_{i}}{y_{i}}$ corresponds to having
the~gauge group $U(1)^{(1)}\times\dots\times U(1)^{(m)}$,
\item $\textrm{Li}_{2}(y_{i})$ represents the~chiral field with charge
$1$ under $U(1)^{(i)}$,
\item $\frac{C_{ij}}{2}\log y_{i}\,\log y_{j}$ corresponds to the~gauge
Chern-Simons couplings, $\kappa_{ij}^{\textrm{eff}}=C_{ij}$,
\item $\log\left(\ensuremath{(-1)^{C_{ii}}x_{i}}\right)\,\log y_{i}$ represents
the~Chern-Simons coupling between a~gauge symmetry and its dual topological
symmetry (the~Fayet-Iliopoulos coupling).
\end{itemize}
The~saddle point of the~twisted superpotential encodes the~moduli
space of vacua of $T[Q]$ and defines the~quiver $A$-polynomials
\cite{EKL1,EKL2,PSS,PS}:
\begin{equation} \label{eq:Classical Quiver A Polynomial}
\frac{\partial\widetilde{\mathcal{W}}_{T[Q]}(\boldsymbol{x},\boldsymbol{y})}{\partial\log y_{i}}=0\qquad\Leftrightarrow\qquad A_{i}(\boldsymbol{x},\boldsymbol{y})=1-y_{i}-x_{i}(-y_{i})^{C_{ii}}\prod_{j\neq i}y_{j}^{C_{ij}}=0.
\end{equation}
$A_{i}(\boldsymbol{x},\boldsymbol{y})$ is a~classical limit of the
quantum quiver $A$-polynomial, which annihilates the~motivic generating
series:
\begin{equation*}
\begin{split}\hat{A}_{i}(\hat{\boldsymbol{x}},\hat{\boldsymbol{y}},q)P_{Q}(\boldsymbol{x},q) & =0,\\
\hat{x}_{i}P_{Q}(x_{1},\ldots,x_{i},\ldots,x_{m},q) & =x_{i}P_{Q}(x_{1},\ldots,x_{i},\ldots,x_{m},q),\\
\hat{y}_{i}P_{Q}(x_{1},\ldots,x_{i},\ldots,x_{m},q) & =P_{Q}(x_{1},\ldots,qx_{i},\ldots,x_{m},q).
\end{split}
\end{equation*}
The~general formula for the~quantum quiver $A$-polynomial corresponding
to the~quiver with adjacency matrix $C$ is given by~\cite{EKL2}
\begin{equation} \label{eq:QQP}
\hat{A}_{i}(\hat{\boldsymbol{x}},\hat{\boldsymbol{y}},q)=1-\hat{y}_{i}-\hat{x}_{i}(-q^{1/2}\hat{y}_{i})^{C_{ii}}\prod_{j\neq i}\hat{y}_{j}^{C_{ij}},
\end{equation}
and we can see that
\begin{equation*}
A_{i}(\boldsymbol{x},\boldsymbol{y})=\underset{q\rightarrow1}{\lim}\hat{A}_{i}(\hat{\boldsymbol{x}},\hat{\boldsymbol{y}},q).
\end{equation*}

\subsubsection{Quiver equivalences}

Since the formulation of the~knots-quivers correspondence it has been clear that it is not a~bijection and more than one symmetric quiver can correspond to the~same knot~\cite{KRSS2} -- such quivers are called \emph{equivalent}. Later, the~study of geometric and physical interpretations~\cite{EKL2,EKL1} lead to the~formulation of quiver transformations that preserve the~motivic generating function~\eqref{eq:motivic generating series}: unlinking, linking, and removing (or adding) a~redundant pair of nodes. 

\begin{itemize}
\item The~unlinking of nodes $a,b\in Q_{0}$ is defined as a~transformation of $Q$ leading to a~new quiver $Q'$ such that there is a~new node $n$: $Q'_{0}=Q_{0}\cup n$, $x'_n=q^{-1} x_a x_b$ ($x'_i=x_i$ for all $i\in Q_{0}$), and the~number of arrows of the~new quiver is given by
\begin{align*}
\label{eq:unlinking arrows}
C'_{ab} & =C_{ab}-1, & C'_{nn} & =C_{aa}+2C_{ab}+C_{bb}-1,\\
C'_{in} & =C_{ai}+C_{bi}-\delta_{ai}-\delta_{bi}, & C'_{ij} & =C_{ij}\quad\textrm{for all other cases,} \nonumber
\end{align*}
where $\delta_{ij}$ is the~Kronecker delta. 
\item The~linking of nodes $a,b\in Q_{0}$ is defined as a~transformation of $Q$ leading to a~new quiver $Q'$ such that there is a~new node $n$: $Q'_{0}=Q_{0}\cup n$, $x'_n= x_a x_b$ ($x'_i=x_i$ for all $i\in Q_{0}$), and the~number of arrows of the~new quiver is given by
\begin{align*}
C'_{ab} & =C_{ab}+1, & C'_{nn} & =C_{aa}+2C_{ab}+C_{bb},\\
C'_{in} & =C_{ai}+C_{bi}, & C'_{ij} & =C_{ij}\quad\textrm{for all other cases.} \nonumber
\end{align*}
\item The~pair of nodes $a,b\in Q_{0}$ is redundant if $x_a=q x_b$, $C_{aa}=C_{ab}=C_{bb}-1$, and $C_{ai}=C_{bi}$ for all $i\in Q_{0}\backslash \{a,b\}$.
\end{itemize}

A general analysis of equivalent quivers has been conducted recently in~\cite{Jankowski:2021flt}. In this work the unlinking was used to study equivalences of symmetric quivers $Q$ and $Q'$ such that $Q'_0=Q_0$ and $x'_i=x_i$ for all $i\in Q_0$; in consequence it was proved that if $Q$ and $Q'$ are related by a~sequence of disjoint transpositions, each exchanging non-diagonal elements 
    \begin{equation*}
         C_{ab}\leftrightarrow C_{cd}, \qquad C_{ba}\leftrightarrow C_{dc},
    \end{equation*}
for some pairwise different $a,b,c,d,\in  Q_{0}$, such that
    \begin{equation*}
         \lambda_{a}\lambda_{b} = \lambda_{c}\lambda_{d}
    \end{equation*}
and 
    \begin{equation*}
        C_{ab} = C_{cd}-1,\qquad\quad
        C_{ai}+C_{bi}=C_{ci}+C_{di}-\delta_{ci}-\delta_{di},\quad \forall i\in Q_{0},
    \end{equation*}
    or
    \begin{equation*}
        C_{cd} = C_{ab}-1,\qquad\quad
        C_{ci}+C_{di}=C_{ai}+C_{bi}-\delta_{ai}-\delta_{bi},\quad \forall i\in Q_{0},
    \end{equation*}
then $Q$ and $Q'$ are equivalent. Moreover, these conditions were conjectured to be necessary for $Q$ and $Q'$ to correspond to the~same knot.

It turns out that transformations presented above can be successfully applied to quivers corresponding to knot complements, which will be useful in Section~\ref{sec:FK invariants and quivers}.

\section{\texorpdfstring{$F_K$}{FK} for various branches}\label{sec:FK different branches}

The~perturbative invariants of complex Chern-Simons theory are extensively studied in~\cite{DGLZ}. While the~method of that paper allowed the~computation of perturbative invariants up to any order, for many years they were not widely used. Recently, in~\cite{GM}, it was found that they can be nicely packaged into a~two-variable series associated to each knot complement. More precisely, the~authors of~\cite{GM} conjecture that the~\emph{abelian branch}\footnote{At finite $N$, abelian branch denotes the~solution $y^{(\text{ab})}=1$ of the~classical $A$-polynomial.} perturbative invariants can be resummed nicely into a~power series in $q$ and $x$ with integer coefficients:
\[
F_K^{(\text{ab})}(x,q) \stackrel{q=e^\hbar}{=\joinrel=\joinrel=} e^{\frac{1}{\hbar}(S_0^{(\text{ab})}(x)+S_1^{(\text{ab})}(x)\hbar + S_2^{(\text{ab})}(x)\hbar^2 + \cdots)}.
\]
The~relation between the~perturbative invariants $S_j^{(\text{ab})}(x)$ and the~two-variable series  $F_K^{(\text{ab})}(x,q)$ can be thought of as the~relation between the~Gromov-Witten invariants and the~Donaldson-Thomas (or BPS) invariants, and in this sense $F_K(x,q)$ is the~non-perturbative complex Chern-Simons partition function associated to the~abelian branch. See also~\cite{EGGKPS} for an~account of this story from the~point of view of topological strings. In order to simplify notation, we will drop the~superscript $^{(\text{ab})}$ and understand that $F_K(x,q)$ corresponds to the abelian branch.

It should be noted that the~method of~\cite{DGLZ} applies to all branches, so we have perturbative invariants associated to each branch $y^{(\alpha)}(x)$. This immediately brings the~following question. 
\begin{qn}
    Are there $F_K^{(\alpha)}(x,q)$ for other branches $y^{(\alpha)}(x)$? That is, can we resum $e^{\frac{1}{\hbar}(S_0^{(\alpha)}(x)+S_1^{(\alpha)}(x)\hbar + S_2^{(\alpha)}(x)\hbar^2 + \cdots)}$ into a~two-variable series with integrality?
\end{qn}
Intriguingly, through various examples we find that the~answer to this question seems to be positive. There are indeed $F_K$ for other branches that can serve as the~non-perturbative Chern-Simons partition functions for the~knot complements.\footnote{Physically, $F_K^{(\alpha)}(x,q)$ is a~``half-index'' of a~2d/3d combined system~\cite{GGP1} for the~3d theory $T[S^3 \setminus K]$ with a~2d $(0,2)$ boundary condition labeled by $\alpha$ and a~discrete flux, whose fugacity is $x$, cf.~\cite{GPV,GM}.}
To explain how to get the~series $F_K^{(\alpha)}(x,q)$ for various branches, in the~following subsection we take a~closer look at the~Newton polygon of the~$A$-polynomial. 

\subsection{The~edges and branches of the~\texorpdfstring{$A$}{A}-polynomial}

The~Newton polygon of the~$A$-polynomial contains a~wealth of information about the~knot. For instance, as discussed in~\cite{CCGLS}, the~slope of each edge of the~Newton polygon equals the~boundary slope of an~incompressible surface of the~knot complement. For our purposes, the important aspect is a~correspondence between the~edges of the~Newton polygon and the~branches of the~$A$-polynomial, which we explain below. 
\begin{figure}[ht]
    \centering
    \includegraphics[scale=0.5]{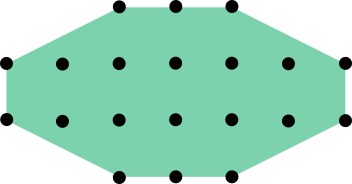}
    \caption{The~Newton polygon for $A_{\mathbf{4}_1}$}
    \label{fig:Apoly_Newton_polygon_figure_eight}
\end{figure}

Solving the~classical $A$-polynomial equation $A_K(x,y)=0$ for $y$, we get different branches $y^{(\alpha)}(x)$. We are interested in the~behaviour of $y^{(\alpha)}(x)$ near $x=0$ (or $x=\infty$, via Weyl symmetry). Consider the~Newton polygon of $A_K$. As an~example, the~Newton polygon for the~figure-eight knot is depicted in Figure \ref{fig:Apoly_Newton_polygon_figure_eight}. The~horizontal direction represents the~$x$-degree and the~vertical direction represents the~$y$-degree. When $x$ is close to $0$, the~dominant terms are the~vertices of the~Newton polygon that are left-most among the~vertices on the~same horizontal line. Let us call such vertices ``left-vertices''. The~equation $A_K(x,y^{(\alpha)}(x))=0$ requires that near $x = 0$ we should asymptotically have $y^{(\alpha)}(x)\sim x^{-\frac{n_x}{n_y}}$ for some slope~$\frac{n_y}{n_x}$ of an~edge spanned by two of the~left-vertices. Let us call such an~edge a~``left-edge''. Moreover, for any given slope of a~left-edge, we can construct a~classical solution $y^{(\alpha)}(x)$ using the~asymptotic dictated by the~slope. 
Therefore, we have just shown the~following proposition. 
\begin{prop}\label{prop:classical_asymptotics}
For any choice of branch $\alpha$, there is a~left-edge $e_\alpha$ of the~Newton polygon with slope $\frac{n_y}{n_x}$, such that
\begin{equation}\label{eq:classicalinitcond}
\lim_{x\rightarrow 0}y^{(\alpha)}(x)x^{\frac{n_x}{n_y}} = C    
\end{equation}
for some non-zero constant $C$. Moreover, the~map $E:\alpha \mapsto e_\alpha$ is a~surjective map onto the~set of left-edges. 
\end{prop}
While the~number of branches $y^{(\alpha)}$ is the~same as the~total $y$-degree (which is the~height of the~Newton polygon), the~number of left-edges is at most the~height of the~Newton polygon. Therefore, $E$ is not injective in general. 
\begin{prop}
For any left-edge $e$, the~number of pre-images of $E$ is $n_y$, the~$y$-height of the~edge $e$. 
\end{prop}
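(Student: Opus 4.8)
The plan is to analyze the local behavior of the solutions $y^{(\alpha)}(x)$ near $x=0$ using the classical Newton--Puiseux algorithm applied to the left-edges of the Newton polygon. Fix a left-edge $e$ of slope $\frac{n_y}{n_x}$ with $y$-height $n_y$; we may assume $\gcd(n_x,n_y)$ is arranged so that the edge corresponds to a primitive direction after suitable grouping, but in general an edge of $y$-height $n_y$ will contribute exactly $n_y$ branches counted with multiplicity, and we must show this. First I would isolate the \emph{edge polynomial}: writing $A_K(x,y) = \sum_{(i,j)} c_{ij} x^i y^j$, the terms whose exponent lattice points lie on the edge $e$ assemble (after factoring out a monomial $x^{i_0} y^{j_0}$) into a polynomial of the form $x^{i_0} y^{j_0} \, p_e(x^{n_y}/y^{n_x})$ or, more symmetrically, into a quasi-homogeneous polynomial $Q_e(x,y)$ of the edge's weighted degree, of degree $n_y$ in the relevant rescaled variable. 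Substituting the ansatz $y = C x^{-n_x/n_y}(1 + o(1))$ into $A_K(x,y)=0$ and extracting the dominant weighted-homogeneous part shows that the leading constant $C$ must be a root of the one-variable edge polynomial $p_e$, which has degree exactly $n_y$ (the difference of the extreme $y$-exponents along $e$).

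The key steps, in order: (1) Pass to the Newton--Puiseux setting by allowing fractional powers, i.e. work over the field of Puiseux series $\mathbb{C}\{\{x\}\}$; the total number of branches $y^{(\alpha)}(x)$ (Puiseux solutions) equals $\deg_y A_K$, which is the height of the polygon, and this is the disjoint union over \emph{all} edges of the lower-left boundary of the number of solutions with each prescribed dominant slope. (2) For a fixed left-edge $e$, show that the number of Puiseux solutions whose leading exponent is exactly $-n_x/n_y$ (the slope of $e$) equals the $y$-height $n_y$ of $e$: this follows from the standard fact that the multiplicity of the edge — i.e. the number of solutions with that Newton slope, counted with multiplicity — is the lattice length of the edge in the $y$-direction, which is precisely $n_y$. (3) Observe that the map $E$ sends a branch $\alpha$ to the left-edge whose slope records the dominant asymptotic exponent of $y^{(\alpha)}$; by construction from Proposition~\ref{prop:classical_asymptotics} every branch has such a well-defined dominant slope (here I use that the relevant vertices are the left-vertices, so the $x\to 0$ asymptotics are governed by the left boundary), and two branches map to the same edge iff they share that dominant exponent. (4) Combining (2) and (3): the fiber $E^{-1}(e)$ is exactly the set of branches with dominant exponent $-n_x/n_y$ arising from $e$, which has cardinality $n_y$.

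One subtlety to address carefully is the distinction between counting Puiseux branches with multiplicity versus counting genuinely distinct branches $y^{(\alpha)}$, and the possibility that two non-adjacent left-vertices span a line through which an intermediate left-vertex also passes — in the latter case one should take $e$ to be the maximal such edge and $n_y$ its full $y$-extent. I would also need to handle the edge case where the $A$-polynomial has repeated factors or where $C$ could a priori be forced to a value making lower-order terms degenerate; this is controlled because the edge polynomial $p_e$ is a nonzero polynomial of degree $n_y$ with nonzero constant term (the endpoints of the edge are actual vertices, so their coefficients are nonzero), hence has exactly $n_y$ roots with multiplicity in $\mathbb{C}^*$, and the Newton--Puiseux iteration continues from each to produce a full branch.

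\textbf{Main obstacle.} I expect the principal difficulty to be making precise and rigorous the claim in step (2) that the number of branches attached to a left-edge equals its $y$-height $n_y$ \emph{as an honest count of the branches $y^{(\alpha)}$} (not merely with multiplicity), together with verifying that distinct roots of the edge polynomial lead to distinct branches and that a multiple root of $p_e$ still contributes its full multiplicity of branches after further Puiseux refinement. This is exactly the content of the Newton--Puiseux theorem organized by slopes, but one must be careful that the $A$-polynomial is reduced (or treat multiplicities consistently on both sides), and that ``branch $\alpha$'' in the paper's sense matches ``Puiseux solution over $x=0$.'' Everything else — identifying the edge polynomial, extracting leading terms, invoking surjectivity from the previous proposition — is routine.
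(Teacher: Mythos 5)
Your proposal is correct and follows essentially the same route as the paper: the paper's (informal) justification is precisely the Newton--Puiseux count organized by slopes, arguing that the initial condition \eqref{eq:classicalinitcond} represents $n_y$ distinct leading coefficients (differing by $n_y$-th roots of unity in the non-degenerate case), each of which continues to a unique classical solution, with the degenerate case contributing genuine multiplicities. Your version is simply a more rigorous packaging of that same argument via the edge polynomial, and your flagged subtlety about degenerate edges is exactly the case the paper defers to the $\mathbf{5}_2$ example.
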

Take any left-edge $e$ with $x$-width $n_x$ and $y$-height $n_y$. Our convention is such that $n_y>0$ but $n_x$ can be negative. When $n_y=1$, it is easy to see that there is a~unique solution with the~initial condition \eqref{eq:classicalinitcond}. 
When $n_y>0$ but $n_x$ is coprime with $n_y$ so that the~edge is non-degenerate (the~endpoints are the~only vertices on this edge), then \eqref{eq:classicalinitcond} represents $n_y$ different initial conditions which differ by multiplication by an~$n_y$-th root of unity. Each of these initial conditions gives a~unique classical solution, and therefore there are $n_y$ number of branches associated to the~edge. 
When $n_x$ is not coprime with $n_y$, the~edge is degenerate (can be broken into smaller edges), and there are some genuine multiplicities associated to the~initial condition \eqref{eq:classicalinitcond}. We will see them in an~example of the~edge of slope $\infty$ for the~knot~$\mathbf{5}_2$. 

Our discussion so far can be summarized in the~following diagram: 
\[
\boxed{
\text{Left-edge }e\text{ of the~Newton polygon}\leftrightarrow n_y\text{ number of branches}
}
\]

\begin{rmk}
If we look at the~behaviour of $y^{(\alpha)}(x)$ near $x=\infty$, their asymptotics are determined by the~slopes of the~right-edge of the~Newton polygon. Due to Weyl symmetry, the~Newton polygon is symmetric under half-rotation, so we basically get the~same set of information. 
\end{rmk}
\begin{rmk}
We can similarly solve $A_K(x,y)=0$ for $x$. With the~role of $x$ and $y$ switched, everything we described in this section holds. 
\end{rmk}
\begin{rmk}
In terms of tropical geometry, what we delineated above can be described using Gr\"obner fans (See e.g.~\cite{Stu}). A choice of an~edge of the~$A$-polynomial corresponds to a~choice of an~equivalence class of weight vectors that induce the~same initial ideal. When the~ideal is neither principal nor homogeneous, there may be no polytope whose normal fan is the~Gr\"obner fan. Therefore, in such cases it is better to think in terms of the~ideal and its Gr\"obner fans. 
\end{rmk}
\begin{rmk}
When the~$y$-height $n_y$ of $e_\alpha$ is $1$, $y^{(\alpha)}(x)$ is a~power series in $x$. In general, however, when $n_y > 1$, $y^{(\alpha)}(x)$ is a~Puiseux series in $x$. 
\end{rmk}

\subsection{\texorpdfstring{$F_K$}{FK} from the~edges}
In the~previous section, we have seen the~correspondence between the~set of classical solutions and the~edges of the~Newton polygon. In this section, we will study the~quantum version of this correspondence. We are interested in solving the~$q$-difference equation
\[\hat{A}_K F_K = 0,\]
where $F_K$ is expanded near $x=0$.\footnote{By the~two remarks in the~previous subsection, we can do the~same for the~expansion near $x=\infty$ or with $y$ instead of $x$. } For each left-edge, there are some natural initial conditions for the~recursion that we can put, which in the~semiclassical limit become the~classical initial conditions that we studied in the~previous section. 
\begin{conj}\label{conj:quantum_initial_condition}
For each left-edge $e$ with slope $\frac{n_y}{n_x}$, there is a~solution to the~$q$-difference equation $\hat{A}_K F_K(x,q)$ of the~form
\[F_K^{(\alpha)}(x,q) = e^{\frac{1}{\hbar}(-\frac{1}{2}\frac{n_x}{n_y}(\log x)^2 + \log C \log x)}(1+\sum_{j\geq 1} f_j(q) x^{\frac{j}{d}}),\]
where $d = \frac{n_y}{m}$ in case $e$ is broken into $m$ non-degenerate edges, $C$ is a~monomial in $q$ determined by the~coefficients of the~vertices of $e$, and $f_j(q)$ are rational functions in $q$, which can be expanded into $q$-series with integer coefficients. 
\end{conj}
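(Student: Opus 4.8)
\textbf{Proof proposal for Conjecture \ref{conj:quantum_initial_condition}.}

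The plan is to mimic, at the quantum level, the edge-by-edge analysis of Proposition \ref{prop:classical_asymptotics}, using the Newton polygon of $\widehat{A}_K$ to read off admissible leading behaviours of solutions to $\widehat{A}_K F = 0$. First I would write the quantum $A$-polynomial as $\widehat{A}_K(\hat x,\hat y,q) = \sum_{(i,j)} c_{ij}(q)\, \hat x^{i} \hat y^{j}$, and substitute an ansatz of the form $F(x,q) = \exp\!\left(\tfrac{1}{\hbar}\big(-\tfrac12 \tfrac{n_x}{n_y}(\log x)^2 + \log C\,\log x\big)\right) g(x,q)$ with $g$ a (possibly fractional-power) series normalized to $g(0,q)=1$. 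Because $\hat y$ acts as $x\mapsto qx$, the prefactor $\exp(-\tfrac{1}{2\hbar}\tfrac{n_x}{n_y}(\log x)^2)$ conjugates $\hat y$ into an operator that multiplies by a power of $x$ (up to a $q$-monomial): concretely $\hat y \mapsto (\text{const})\cdot x^{-n_x/n_y}\hat y$ acting on $g$. The effect is exactly that, after this conjugation, each monomial $\hat x^i\hat y^j$ contributes a power $x^{i - j n_x/n_y}$ to leading order, so the \emph{minimal} such power over the support of $\widehat{A}_K$ is achieved precisely on the lattice points lying on the chosen left-edge $e$ (with the normal direction $(n_x,n_y)$ being the one that selects $e$). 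The classical-limit computation of Proposition \ref{prop:classical_asymptotics} / \eqref{eq:classicalinitcond} already tells us that the leading coefficient equation is the edge polynomial set to zero; matching the $q\to1$ limit fixes $C$ as the stated $q$-monomial (the ratio of the corner coefficients of $e$, with a $q^{1/2}$-type correction coming from the Weyl ordering in $\widehat{A}_K$), and when $e$ is degenerate into $m$ non-degenerate pieces the edge polynomial factors through the variable $x^{1/d}$ with $d=n_y/m$, which is why the series $g$ naturally lives in $x^{1/d}$.

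The core of the argument is then a formal-series existence-and-uniqueness step. After conjugating and clearing the overall power of $x$, the equation $\widehat{A}_K F = 0$ becomes a $q$-difference equation for $g(x,q) = 1 + \sum_{j\ge1} f_j(q)\, x^{j/d}$ of the form $\sum_{k\ge 0} P_k(q, x^{1/d}\text{-shifts}) = 0$, where the $k=0$ term is the edge contribution. Collecting the coefficient of $x^{(j+j_0)/d}$ for successive $j$ gives a triangular recursion: the coefficient of $f_j(q)$ is $E(q^{j/d'})$ where $E$ is the edge polynomial (in one variable) and the shift comes from the $\hat y$-action, and the right-hand side is a $\mathbb{Z}[q^{\pm1/2}]$-polynomial in $f_1,\dots,f_{j-1}$. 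Provided this ``indicial'' factor $E(q^{j/d'})$ is nonzero for all $j\ge1$ — which holds because $E$ is, after removing the root at $x^{1/d}=$ leading value that we have already used, a nonzero polynomial and its remaining roots are, generically, not of the form $q^{j/d'}$ at a generic value of $q$ (or can be avoided by choosing the correct branch among the $n_y$ roots of unity, exactly as in the classical case) — the recursion determines each $f_j(q)$ uniquely as a rational function of $q$. This yields $n_y$ solutions per left-edge, one for each choice of $n_y$-th (or $m$-th, in the degenerate case) root of the edge equation, matching the count in the proposition preceding the conjecture.

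The remaining, and I expect hardest, part is integrality: showing $f_j(q) \in \mathbb{Z}[q^{\pm1}]$ (after expansion as a $q$-series) rather than merely $\mathbb{Q}(q)$. The recursion a priori only gives rationality, with denominators coming from $E(q^{j/d'})$ and from the leading coefficient $c_{i_0 j_0}(q)$ of $\widehat{A}_K$. For the abelian branch and $(2,2p+1)$ torus knots this is known (Conjecture \ref{conjGukovManolescu} and the results cited around \eqref{eq:GM-Q correspondence}); in general I would argue integrality either (i) by exhibiting the quiver form of Conjecture 1(2)/\eqref{eq:GM-Q correspondence}, since a convergent plethystic-exponential/quiver expression with integer $C_{ij}$ forces the $q$-series coefficients into $\mathbb{Z}[q^{\pm1}]$ via \eqref{eq:F_K LMOV invariants}, or (ii) by a direct $\hbar$-adic argument: the perturbative invariants $S_j^{(\alpha)}(x)$ exist for every branch by \cite{DGLZ}, and the passage from the $\hbar$-expansion $e^{\frac1\hbar\sum S_j \hbar^j}$ to the $q$-series is the same Gromov--Witten/Donaldson--Thomas-type resummation that produces integers in the abelian case, so one reduces integrality to the statement that the associated open BPS/LMOV-type invariants are integers. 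Since a branch-uniform proof of that last statement is not available, I expect the honest scope of a full proof here to be: \emph{existence, uniqueness, and the precise form of the leading prefactor and the index $d$} are provable by the triangular-recursion argument above, while \emph{integrality} is established only in the cases where the quiver form is known, and is otherwise supported by the examples computed in Sections \ref{sec:FK invariants and quivers} and \ref{sec:Quiver from R-matrices}.
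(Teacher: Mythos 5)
Your proposal matches the paper's own treatment: the statement is left as a conjecture, with the remark immediately following it noting that the $n_y=1$ case is a theorem precisely by the term-by-term recursive solution you describe (following the abelian-branch computation of \cite{EGGKPS}), and with the choice of $n_y$-th roots of unity accounting for the multiple solutions on a non-degenerate edge of height $n_y>1$. Your honest scoping is also the paper's: existence, uniqueness, the prefactor, and the index $d$ follow from the triangular recursion, while integrality of the $f_j(q)$ is the genuinely conjectural content, verified only in examples and in cases where a quiver form is available.
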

\begin{rmk}
When $n_y=1$, this conjecture is a~theorem. This is because we can recursively solve for $F_K^{(\alpha)}$ in a~unique way, analogously to the~way it was done for the~abelian branch in~\cite{EGGKPS}. 
\end{rmk}
\begin{rmk}
In the~$a$-deformed setting, the~exponential prefactor will be of the~form 
\[
\exp(\frac{p(\log x,\log a)}{\hbar}),
\]
where $p$ is a~polynomial of degree at most 2. 
\end{rmk}
Along the~arguments of the~previous section, we see that 
\begin{itemize}
    \item if $n_y = 1$, then there is a unique such solution, 
    \item if the~edge is non-degenerate but $n_y > 1$, then all the~$n_y$ solutions are uniquely determined,
    \item and if the~edge is degenerate, then there are multiple solutions (the~number of solutions is the~same as the~number of branches associated to the~edge).
\end{itemize}
In this way, the~solutions to the~set of initial conditions determined by the~left-edges span the~whole $\deg_y A_K$-dimensional space of wave functions. We claim that these solutions are exactly the~$F_K^{(\alpha)}$ for various branches $\alpha$ we mentioned in the~beginning of this section (possibly up to an~overall factor that is independent of $x$). We can formulate this in the~form of the~following conjecture. 
\begin{conj}\label{conj:FK_other_branches}
Given a~knot $K$, for every branch $y^{(\alpha)}(x)$ of the~$A$-polynomial, there is a~function $F_K^{(\alpha)}(x,q)$ that is the~non-perturbative partition function of the~complex Chern-Simons theory in the~following sense:
\begin{enumerate}
    \item $\hat{A}_K F_K^{(\alpha)}=0$ with the~initial conditions as in Conjecture \ref{conj:quantum_initial_condition}.
    \item It is associated to the~branch $y^{(\alpha)}$ in the~sense that
    \[
    \lim_{q\rightarrow 1}\frac{F_K^{(\alpha)}(qx,q)}{F_K^{(\alpha)}(x,q)} = y^{(\alpha)}(x).
    \]
    \item It agrees with the~perturbative invariant of~\cite{DGLZ} if we set $q=e^{\hbar}$. 
\end{enumerate} 
\end{conj}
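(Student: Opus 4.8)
The plan is to prove the three clauses of Conjecture~\ref{conj:FK_other_branches} in sequence, bootstrapping from the $n_y=1$ case (already a theorem by the remark after Conjecture~\ref{conj:quantum_initial_condition}) to an arbitrary left-edge. \textbf{Clause (1), existence of the recursive solution.} Fix a left-edge $e$ of slope $n_y/n_x$ and write $F_K^{(\alpha)}(x,q)=e^{\frac{1}{\hbar}\left(-\frac{1}{2}\frac{n_x}{n_y}(\log x)^2+\log C\,\log x\right)}G(x,q)$ with $G=1+\sum_{j\ge 1}f_j(q)\,x^{j/d}$. Conjugating $\hat A_K$ by the quadratic-exponential prefactor produces a new $q$-difference operator $\hat A_K'$ annihilating $G$; the prefactor is tuned to cancel exactly the leading Newton-polygon behaviour along $e$, so that extracting the coefficient of $x^{k/d}$ in $\hat A_K'G=0$ solves for $f_k(q)$ in terms of $f_{<k}(q)$ and the coefficients of $\hat A_K$. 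For $d=1$ (i.e. $n_y=1$) this determines all $f_k$ uniquely and rationality in $q$ is immediate; for a non-degenerate edge with $n_y>1$ the leading fractional-power term has $n_y$ admissible values differing by $n_y$-th roots of unity, and each yields a unique recursion; the degenerate case requires tracking the Jordan structure of the ``indicial'' operator built from the sub-edges, and this is where the genuine multiplicities (illustrated by the slope-$\infty$ edge of $\mathbf{5}_2$) appear.

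\textbf{Clause (2), matching the classical branch.} Let $\Phi(x)$ denote the full genus-$0$ potential, i.e. the $\hbar^{-1}$ coefficient of $\log F_K^{(\alpha)}(x,e^\hbar)$, so that $\Phi(x)=-\frac{1}{2}\frac{n_x}{n_y}(\log x)^2+\log C\,\log x+(\text{contributions from }\log G)$. Then $F_K^{(\alpha)}(qx,q)/F_K^{(\alpha)}(x,q)\to \exp\!\left(\partial_{\log x}\Phi\right)$ as $q=e^\hbar\to 1$, because $\log x\mapsto \log x+\hbar$. Taking the $q\to 1$ limit of $\hat A_K F_K^{(\alpha)}=0$ gives $A_K\!\left(x,\exp(\partial_{\log x}\Phi)\right)=0$, so $y:=\lim_{q\to1}F_K^{(\alpha)}(qx,q)/F_K^{(\alpha)}(x,q)$ is a root of $A_K(x,\cdot)$. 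To identify it with $y^{(\alpha)}$ one reads off the leading asymptotic $\partial_{\log x}\Phi=-\frac{n_x}{n_y}\log x+\log C+o(1)$, hence $y\sim C\,x^{-n_x/n_y}$ near $x=0$; by Proposition~\ref{prop:classical_asymptotics} and the correspondence between branches and leading data this pins down $\alpha$ (including the root-of-unity choice in the non-degenerate case and the sub-edge selection in the degenerate case).

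\textbf{Clause (3), agreement with the \cite{DGLZ} perturbative invariant.} The state-integral construction of \cite{DGLZ} attaches to each branch $y^{(\alpha)}$ a formal series $Z^{(\alpha)}(x;\hbar)=\exp\!\left(\frac{1}{\hbar}\sum_{j\ge 0}S_j^{(\alpha)}(x)\hbar^{j}\right)$, and the quantum $A$-polynomial annihilates every $Z^{(\alpha)}$ (the quantization statement is branch-independent). A formal WKB solution of $\hat A_K Z=0$ is uniquely determined by its leading term $S_0^{(\alpha)}$: at each order in $\hbar$ the transport equation is first-order in the unknown $S_j$ once $S_0$ is fixed. Since $F_K^{(\alpha)}$ re-expanded in $\hbar$ and $Z^{(\alpha)}$ solve the same equation with the same $S_0$ by Clause (2), they agree term by term.

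\textbf{Main obstacle.} The genuinely hard input is the integrality assertion inside Clause (1): that each $f_j(q)$, a priori only a rational function of $q$, admits a power-series expansion with integer coefficients. For the abelian branch this is already the (open) Gukov--Manolescu conjecture; for the other branches one expects an analogous BPS/DT-type interpretation via the knot complement quiver discussed in the introduction, but making this rigorous --- by constructing the relevant moduli spaces, or by a direct $q$-holonomic argument controlling the denominators produced by the recursion --- is the crux. A secondary difficulty is the degenerate-edge bookkeeping, where the naive recursion fails to close at the orders where sub-edges collide and one must instead solve a small linear system whose solvability is not automatic.
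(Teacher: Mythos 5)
The statement you are proving is stated in the paper as a conjecture, not a theorem: the authors verify it only in examples (trefoil, figure-eight, $\mathbf{5}_2$, and the knots in the appendix) and prove only the fragment that when the $y$-height of the left-edge is $1$ the recursion of Conjecture~\ref{conj:quantum_initial_condition} has a unique formal solution with coefficients that are rational functions of $q$. Your Clause (1) reproduces exactly that argument (conjugate by the prefactor so the two leading Newton-polygon terms cancel, then solve order by order in $x$), and your Clause (2) is the standard WKB/leading-asymptotics identification the paper uses to match a solution to a branch via Proposition~\ref{prop:classical_asymptotics}. To that extent your plan is consistent with what the paper actually establishes, and you correctly isolate the integrality of the $f_j(q)$ as the open crux --- that part is honest and accurate.

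There are, however, two places where you present as routine something that is genuinely not. First, in Clause (3) you argue that $F_K^{(\alpha)}$ and the DGLZ series agree because ``a formal WKB solution of $\hat A_K Z=0$ is uniquely determined by its leading term $S_0$.'' This elides the real difficulty: $F_K^{(\alpha)}$ is constructed as a series in $x$ whose coefficients are ($q$-expansions of) rational functions of $q$, whereas the DGLZ invariant is a series in $\hbar$ whose coefficients are functions of $x$ (with denominators a power of $\Delta_K(x)$ in the abelian case). Passing from one to the other requires resumming, at each fixed order in $\hbar$, an infinite series in $x$, and showing that this resummation produces the transport-equation hierarchy. That the two objects ``solve the same equation with the same $S_0$'' does not by itself show the $q$-series even \emph{has} an $\hbar$-expansion of WKB type; this is precisely the Melvin--Morton--Rozansky-type statement that makes even the abelian case a conjecture. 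Also, the transport equations determine each $S_j$ only up to an additive constant, so at best you get agreement up to an $x$-independent prefactor at every order --- the paper is careful to include that caveat and you should be too. Second, in the degenerate-edge case you acknowledge that the linear system at the colliding orders ``is not automatic,'' but the conjecture requires producing the full $n_y$-dimensional space of solutions there; without an argument that the indicial system is nonsingular (or a prescription, like the inverted Habiro series the paper uses for $\mathbf{5}_2$, for selecting the extra solutions), Clause (1) is not established even at the level of formal rational-coefficient series. So the proposal is a reasonable roadmap, but it is not a proof, and the paper does not contain one either.
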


\begin{rmk}
There is always an~abelian branch associated to the~vertical edge whose corresponding initial condition gives the~usual $F_K^{(\text{ab})}=F_K$, as studied in~\cite{GM,EGGKPS}. 
\end{rmk}
\begin{rmk}
It is straightforward to generalize everything we discussed in this section to $\mathfrak{sl}_N$ and the~$a$-deformed setup. As we will see later in Section \ref{sec:Bpoly}, we can even consider the~branches of $b$, a~variable that is the~conjugate of $a$. In that context, we consider solutions to $q$-difference equations with respect to the~variable $a$. We will see that the~branches of $b$ are canonically in one-to-one correspondence with the~branches of $y$. 
\end{rmk}

\subsection{Examples}
\subsubsection{Trefoil}
The~quantum $A$-polynomial for the~right-handed trefoil knot is given by
\[
\hat{A}_{\mathbf{3}_1^r}(\hat{x},\hat{y},a,q) = a_0^{\mathbf{3}_1^r}(\hat{x},a,q) + a_1^{\mathbf{3}_1^r}(\hat{x},a,q)\hat{y} + a_2^{\mathbf{3}_1^r}(\hat{x},a,q)\hat{y}^2,
\]
with
\begin{align*}
    a_0^{\mathbf{3}_1^r}(x,a,q) &= -aq(1-x)(1-qax^2),\\
    a_1^{\mathbf{3}_1^r}(x,a,q) &= (1-a x^2)(a^2x^2-q^3ax^2 -qax(1+x-ax(1-x))+q^2(1+a^2x^4)),\\
    a_2^{\mathbf{3}_1^r}(x,a,q) &= qa^2x^3(1-ax)(q-a x^2).
\end{align*}
In the~classical limit, after modding out by the~factor $(1-a x^2)$, the~Newton polygon is illustrated in Figure \ref{fig:NewtPolyA31}. 
\begin{figure}[ht]
    \centering
    \includegraphics[scale=0.5]{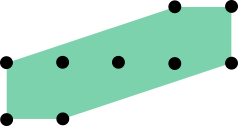}
    \caption{The~Newton polygon of $A_{\mathbf{3}_1^r}$}
    \label{fig:NewtPolyA31}
\end{figure}
While the~abelian branch $F_{\mathbf{3}_1^r}$ is discussed in detail in~\cite{EGGKPS}, let us briefly review how to compute it. 
First, observe that near $x=0$
\[
a_0^{\mathbf{3}_1^r}(x,a,q) = -qa + O(x^1),\qquad
(q^{-1}a) a_1^{\mathbf{3}_1^r}(x,a,q) = qa + O(x^1),\qquad
a_2^{\mathbf{3}_1^r}(x,a,q) = O(x^3).
\]
The~first two non-vanishing $O(x^0)$ terms correspond exactly to the~vertical left-edge of the~Newton polygon. Note that we multiplied the~coefficients by powers of $q^{-1}a$ to make the~sum of the~two $O(x^0)$ terms vanish. This means our initial condition for solving the~recursion is such that
\[
F_{\mathbf{3}_1^r}(x,a,q) = e^{\frac{\log x \log a}{\log q}}x^{-1}\qty(1 + O(x^1) ).
\]
We can recursively solve the~subsequent terms and get
\[
F_{\mathbf{3}_1^r}(x,a,q) = e^{\frac{\log x \log a}{\log q}}x^{-1}\qty(1 + \frac{q-a}{1-q}x +\frac{q^2+(-2q-q^2+q^3)a+(1+q-q^2)a^2}{(1-q)(1-q^2)}x^2 +\cdots),
\]
up to an overall factor independent of $x$. 

For the~non-abelian branch of slope $\frac{1}{3}$, we need to consider the~coefficients of the~quantum $A$-polynomial near $y^{-1}x^3 = 0$. After multiplying appropriate factors, we can make the~sum of the~terms on this left-edge vanish:
\begin{align*}
a_0^{\mathbf{3}_1^r}(x,a,q) &= O(x^0),\\
q^{-\frac{3}{2}1^2}(-q^{\frac{9}{2}}a^{-2}x^{-3}) a_1^{\mathbf{3}_1^r}(x,a,q) &= -q^5a^{-2}x^{-3} + O(x^{-2}),\\
q^{-\frac{3}{2}2^2}(-q^{\frac{9}{2}}a^{-2}x^{-3})^2 a_2^{\mathbf{3}_1^r}(x,a,q) &= q^5a^{-2}x^{-3} + O(x^{-2}).
\end{align*}
The~extra factors we had to multiply by mean that the~initial condition for solving the~recursion is such that
\[
F_{\mathbf{3}_1^r}^{(\frac{1}{3})}(x,a,q) = e^{\frac{-\frac{3}{2}(\log x)^2+\log x \log (-a^{-2})}{\log q}}x^{\frac{9}{2}}\qty(1 + O(x^1)).
\]
We can recursively solve the~subsequent terms and get
\[
F_{\mathbf{3}_1^r}^{(\frac{1}{3})}(x,a,q) = e^{\frac{-\frac{3}{2}(\log x)^2+\log x \log (-a^{-2})}{\log q}}x^{\frac{9}{2}}\qty(1 + \frac{a}{q}x + \frac{a(q-a)}{q(1-q)}x^2 + \cdots),
\]
up to an overall factor independent of $x$. 

\subsubsection{Figure-eight}
The~Newton polygon for the~$A$-polynomial of the~figure-eight knot is shown in Figure \ref{fig:Apoly_Newton_polygon_figure_eight}. Since all the~left-edges are non-degenerate of height 1, everything can be solved term by term in a~unique way, just like for the~trefoil knot. Following the~same procedure, we get the~following series for the~abelian branch~\cite{EGGKPS}: 
\begin{align*}
F_{\mathbf{4}_1}(x,a,q) &= e^{\frac{\log x\log a}{\log q}}x^{-1}\left(1 + \frac{3(q-a)}{1-q}x\right.\\ 
&\qquad\left. + \frac{(q+6q^2+2q^3)-(1+8q+8q^2+q^3)a+(2+6q+q^2)a^2}{(1-q)(1-q^2)}x^2 +\cdots\right),
\end{align*}
up to an overall factor independent of $x$. 

For the~non-abelian branch of slope $-\frac{1}{2}$, we get
\begin{align*}
F_{\mathbf{4}_1}^{(-\frac{1}{2})}(x,a,q) &= e^{\frac{(\log x)^2+\log x \log a}{\log q}}x^{-1}\left(1 + \frac{q-2q^2}{1-q} x \right.\\
&\qquad\left.+\frac{(q^2-2q^3-2q^4+3q^5+q^6)-q(1-q)(1-q^2)a}{(1-q)(1-q^2)}x^2 +\cdots\right),
\end{align*}
up to an overall factor independent of $x$. 

The~other non-abelian branch (with slope $\frac{1}{2}$) is conjugate to this one, and the~corresponding $F_K$ can be obtained easily from this one by inverting $q$, $a$ and $x$ and then using the~Weyl symmetry of~\cite{EGGKPS}.

\subsubsection{\texorpdfstring{$\mathbf{5}_2$}{52} knot}\label{subsubsec:52knot}
The~Newton polygon of $A_{\mathbf{5}_2}$ is given in Figure \ref{fig:NewtPolyA52}. We see that there are 4 branches in total. Two of them have slope $\infty$, one has slope $-\frac{1}{2}$, and the~last one has slope $-\frac{1}{5}$. 
\begin{figure}[ht]
    \centering
    \includegraphics[scale=0.5]{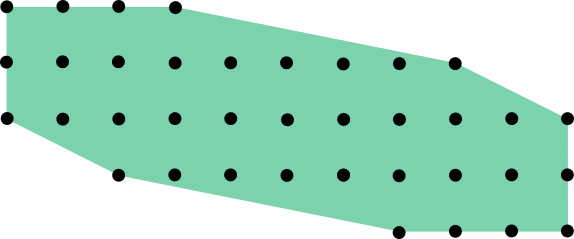}
    \caption{The~Newton polygon of $A_{\mathbf{5}_2}$}
    \label{fig:NewtPolyA52}
\end{figure}
The~branches of slope $-\frac{1}{2}$ and $-\frac{1}{5}$ are non-degenerate with height $1$, and it is straightforward to compute the~corresponding $F_K$ invariants by solving the~quantum $A$-polynomial recursion. We will focus on the~branches of slope $\infty$. 

One of the~branches of slope $\infty$ is the~abelian one, the~other is non-abelian. Let us call them $(\text{ab})$ and $(\infty)$ respectively. Here we explain how to compute $F_{\mathbf{5}_2}^{(\alpha)}$ for the~two branches, with $\mathfrak{sl}_N$ gauge algebra, using \emph{inverted Habiro series} of~\cite{Park3}. In the~next section we will provide an~alternative way to obtain $F_{\mathbf{5}_2}^{(\text{ab})}(x,a,q)$ from the~quiver. 

About 20 years ago, Habiro showed~\cite{Hab, Hab2} that the~coloured Jones polynomials can be decomposed in the~following way. 
\[
J_n(K;q) = \sum_{m\geq 0} a_m(K;q)\prod_{1\leq j\leq m}(x+x^{-1} - q^j-q^{-j}) \bigg\vert_{x = q^n}.
\]
Here $J_n(K;q)$ denotes the~coloured Jones polynomial of $K$ coloured by the~$n$-dimensional irreducible representation of $\mathfrak{sl}_2$, and $\{a_m(K;q)\}_{m\geq 0}$ is a~sequence of Laurent polynomials in $q$ with integer coefficients. 
The~analogue of Habiro series for coloured HOMFLY-PT polynomials $P_r(K;a,q)$ coloured by symmetric representations was studied in~\cite{IMMM, MMM}, and it is given by:
\[
P_r(K;a,q) = \sum_{m\geq 0} a_m(K;a,q) \qbin{r}{m} \prod_{1\leq j\leq m}(a^{\frac{1}{2}}q^{\frac{r+j-1}{2}}-a^{-\frac{1}{2}}q^{-\frac{r+j-1}{2}})(a^{\frac{1}{2}}q^{\frac{j-2}{2}}-a^{-\frac{1}{2}}q^{-\frac{j-2}{2}}).
\]
The~sequences of cyclotomic coefficients $\{a_m(K;q)\}$ and $\{a_m(K;a,q)\}$ are known to be $q$-holonomic~\cite{GS, MM21}, and the~corresponding $q$-difference equation is known as the~\emph{quantum $C$-polynomial}. 

The~idea of inverted Habiro series~\cite{Park3} is to extend the~sequence $\{a_m(K;a,q)\}_{m\geq 0}$ to negative values of $m$. This can be often done by solving the~quantum $C$-polynomial recursion in the~negative direction. In particular, in~\cite{Park3} it was illustrated that $a_m(\mathbf{5}_2;q)$ for $m<0$ can be computed by recursively solving the~$q$-difference equations and that
\[
F_{\mathbf{5}_2}(x,q)=F_{\mathbf{5}_2}^{\mathfrak{sl}_2,(\text{ab})}(x,q) = -\sum_{m<0}\frac{a_m(\mathbf{5}_2;q)}{\prod_{0\leq j\leq -m-1}(x+x^{-1}-q^j-q^{-j})}.
\]
Using the~$a$-deformed quantum $C$-polynomial for $\mathbf{5}_2$ knot given in~\cite{MM21}, we can do the~same for $\mathfrak{sl}_N$. For instance, the~first few coefficients in the~case of $\mathfrak{sl}_3$ are given by
\begin{align*}
a_{-1}^{\mathfrak{sl}_3,(\text{ab})}(\mathbf{5}_2;q) &= -q^{-2}+1-q^3+q^7-q^{12}+q^{18}-\cdots,\\
a_{-2}^{\mathfrak{sl}_3,(\text{ab})}(\mathbf{5}_2;q) &= q^{-3}-q^{-2}-q^{-1}+1+q+q^2-q^3-q^4-q^5-q^6+q^7+\cdots,\\
a_{-3}^{\mathfrak{sl}_3,(\text{ab})}(\mathbf{5}_2;q) &= -1+q^2+q^3+q^4-q^5-q^6-2q^7-q^8+\cdots,
\end{align*}
and so on. Combined with the~following formula
\[
F_{K}^{\mathfrak{sl}_N,(\text{ab})}(x,q) = -\sum_{n\geq 0}a_{-n-1}^{\mathfrak{sl}_N,(\text{ab})}(K;q)\frac{\frac{[-n][-n+1]\cdots[-n+N-3]}{[n]!}}{\prod_{0\leq j\leq n}(x^{\frac{1}{2}}q^{\frac{j}{2}}-x^{-\frac{1}{2}}q^{-\frac{j}{2}})(x^{\frac{1}{2}}q^{\frac{(N-2)-j}{2}}-x^{-\frac{1}{2}}q^{\frac{j-(N-2)}{2}})},
\]
this allows us to compute $F_{\mathbf{5}_2}^{\mathfrak{sl}_N,(\text{ab})}(x,q)$ for any $N$. 

Computation of $F_{\mathbf{5}_2}^{\mathfrak{sl}_N,(\infty)}(x,q)$ is similar. For this, we need to compute the~inverted Habiro coefficients for the~non-abelian branch $(\infty)$. This can be done by setting the~initial condition to be 
\[
a_m^{\mathfrak{sl}_N,(\infty)}(\mathbf{5}_2;q) = 0 \text{ for all } m \geq 0.
\]
We find that, up to an~overall factor,
\[
\sum_{n\geq 0} a_{-n-1}^{\mathfrak{sl}_N,(\infty)}(\mathbf{5}_2;q)E^{n+1} = E + \frac{1+q^3 a^{-1}}{1-q} E^2 + \frac{1 + q^4(1+q)a^{-1} + q^8 a^{-2}}{(1-q)(1-q^2)} E^3 + \cdots \bigg\vert_{a = q^N},
\]
where $E$ is a~formal variable. This implies, for instance, that up to an~overall factor,
\begin{align*}
F_{\mathbf{5}_2}^{\mathfrak{sl}_2,(\infty)}(x,q) &= x + \frac{3-q}{1-q}x^2 + \frac{q^{-1}+7-q-4q^2+q^3}{(1-q)(1-q^2)} x^3 +\cdots,\\
F_{\mathbf{5}_2}^{\mathfrak{sl}_3,(\infty)}(x,q) &= \frac{2}{1-q}x^2 + \frac{5+q-2q^2}{(1-q)^2}x^3 +\frac{q^{-1}+10+7q-3q^2-8q^3-q^4+2q^5}{(1-q)^2(1-q^2)}x^4 +\cdots\\
F_{\mathbf{5}_2}^{\mathfrak{sl}_4,(\infty)}(x,q) &= \frac{3+q}{(1-q)(1-q^2)}x^3 + \frac{7+3q+2q^2-3q^3-q^4}{(1-q)^2(1-q^2)}x^4 + \cdots.
\end{align*}
By computing the~expectation value of the~$\hat{y}$-operator (i.e. $\lim_{q\rightarrow 1}\frac{F_K(q x,q)}{F_K(x,q)}$), it is easy to verify that these solutions are indeed associated to the~non-abelian branch $(\infty)$. 

\begin{rmk}
If we set 
\[
f_0(b,q) = \sum_{N\geq 2}f_0^{\mathfrak{sl}_N}(q) (q)_{N-2}b^{-N},
\]
\[
f_1(b,q) = \sum_{N\geq 2}f_1^{\mathfrak{sl}_N}(q) (1-q)(q)_{N-2}b^{-N},
\]
where $b$ is a~formal variable, and $f_0^{\mathfrak{sl}_N}$ and $f_1^{\mathfrak{sl}_N}$ denote the~first and second coefficients of $F_{\mathbf{5}_2}^{\mathfrak{sl}_N,(\infty)}$ so that $f_0^{\mathfrak{sl}_2}(q) = 1$, $f_0^{\mathfrak{sl}_3}(q) = \frac{2}{1-q}$, and so on, then we find experimentally that they satisfy the~following recurrence relations:
\[
(\hat{b}^2 - 2\hat{b} + 1 - q^{-1}\hat{a}) f_0(b,q) = 1,
\]
\[
(1-q)(\hat{b}-1)f_1(b,q) +(2 -2\hat{b} +q^{-1}(\hat{b}-2)\hat{a})f_0(b,q) = b^{-1}.
\]
Here, $\hat{b}$ and $\hat{a}$ are linear operators characterized by
\[
\hat{b}b^{-N} = b^{-N+1}, \quad \hat{a}b^{-N} = q^N b^{-N},\quad \hat{a}\hat{b} = q^{-1}\hat{b}\hat{a}.
\]
Since the~first two coefficients of $F_{\mathbf{5}_2}^{\mathfrak{sl}_N}$ completely determine the~whole series via recursion, the~above recurrence relations allow one to compute $F_{\mathbf{5}_2}^{\mathfrak{sl}_N,(\infty)}$ up to any desired order. 
\end{rmk}

\section{\texorpdfstring{$F_K$}{FK} invariants and quivers}\label{sec:FK invariants and quivers}

In the~previous section we have constructed $F_K$ invariants for various branches using the~quantum $A$-polynomials. In this section we will focus on the~relation between these newly constructed $F_K$ invariants and quivers.

\subsection{From knot conormal quivers to knot complement quivers}\label{sec:41}

We start by studying how we can obtain $F_K$ invariants for some branches from the~original quivers of~\cite{KRSS1,KRSS2} corresponding to knot conormals. Then we use it to construct quivers corresponding to some knot complements, generalizing~\cite{Kuch}. Finally, we show that a~slight modification of this construction leads to simpler quivers corresponding to the~same $F_K$ invariant. 

The~computation of $F_K$ for abelian branches of left-handed $(2,2p+1)$ torus knots in~\cite{EGGKPS} relies on the~fact that there exists a~simple Fourier transform between coloured HOMFLY-PT polynomials $P_r(a,q)$ and $F_K(x,a,q)$, which is essentially a~substitution $x=q^r$. However, this does not work in general and $F_K$
cannot be obtained directly from the~knowledge of $P_r$. 

One way to deal with this problem is to consider a~knot $K$ with a~suitable framing~$f$, so that after replacing $q^r=x$, the~corresponding coloured HOMFLY-PT polynomials $P_r(a,q)$ become a~power series in $x$. In order to find the~correct framing and to compute the~corresponding power series, the~conormal quivers become crucially important. First of all, the~framing will be the~absolute value of the~minimal entry of the~conormal quiver matrix. Moreover, the~power series in $x$ can be quickly determined and will be given in the~quiver form. Finally, the~obtained power series will be equal to $F_K$ invariant for the~branch corresponding to the~ smallest slope of the~knot $K$. Therefore, for each knot, $F_K$ for one branch can be obtained by this procedure. In particular, this branch will be abelian only when the~corresponding framing, i.e. the~smallest entry of the~conormal quiver matrix, is equal to zero. In all other cases (like figure-eight knot, right-handed trefoil, etc.), the~procedure that we outline below will produce $F_K$ corresponding to a~certain non-abelian branch.

Now let us pass to details of the~connection between conormal quivers, $F_K$ invariants and their quiver forms. Let $K$ be a~knot with a~corresponding quiver $Q$ with $m$ vertices and the~adjacency matrix $C_{ij}$. Also let $\Bf a$ and $\Bf l$ be the~vectors corresponding to the~linear terms and essentially to uncoloured polynomial. Then
\begin{equation*}
    P_r(K;a,q)=\sum_{d_1+\ldots+d_k=r}(-1)^{\sum C_{ii} d_i} a^{\sum a_i d_i}q^{\sum l_i d_i}q^{\frac{1}{2}\sum_{i,j=1}^m C_{ij}d_id_j} \frac{(q)_r}{\prod_{i=1}^k(q)_{d_i}}.
\end{equation*}
Now suppose that 
$$-C_{\textrm{min}}\le C_{ij} \le C_\textrm{max},  \quad i,j=1,\ldots,m,$$
where $C_{\textrm{min}},C_\textrm{max}\ge 0$ (must be since $C_{kk}=0$, for some $k$), and permute rows and columns of~$C$ such that $C_{11}=C_{\textrm{min}}$ and $C_{mm}=C_\textrm{max}$. Note that in the~cases where conormal quivers have been computed, the~largest and smallest entries are on the~diagonal.

Then for knots which allow for a~simple Fourier transform  between $P_r$ and $F_K$ (that boils down to substitution $x=q^r$), we can obtain $F_K$ invariant for $C_{\textrm{min}}$-framed knot $K$ by multiplying each $P_r$ by $q^{\frac{1}{2}C_\textrm{min}(r^2-r)}$:
\begin{eqnarray}
F_{K^{f=C_{\textrm{min}}}}(x,a,q)&=&(-1)^{rC_\textrm{min}} a^{r a_1}q^{r q_1}\sum_{d_2,\cdots,d_m}(-1)^{\sum_{i\ge 2} (C_{ii}+C_\textrm{min}) d_i}a^{\sum_{i\ge 2} (a_i-a_1) d_i} \nonumber
\\
&&\times\,\, q^{\sum_{i\ge 2} (l_i-l_1)d_i} x^{\sum_{i\ge 2}(C_{1i}+C_\textrm{min})d_i}q^{\frac{1}{2}\sum_{i,j\ge 2} (C_{ij}-C_{i1}-C_{1j}+C_{11})d_id_j} 
\label{eq:F_K from knot conormal Q}
\\
&&\times\,\, \frac{(x;q^{-1})_{d_2+\cdots+d_k}}{\prod_{i=2}^k(q)_{d_i}}. \nonumber
\end{eqnarray}
For the~mirror image of $K$, the~quiver and the~change of variables are given by
\[
    C_{m(K)}=-C_K+I_{m\times m}-J_{m\times m},\quad \Bf{a}_{m(K)}=-\Bf{a}_K,\quad \Bf{l}_{m(K)}=-\Bf{l}_K,
\]
where  $J_{m\times m}$ is the~$m\times m$ matrix with all entries equal to $1$. Then the~diagonal entries of $C_{m(K)}$  are bigger than $-C_\textrm{max}$ and smaller than $C_\textrm{min}$, with
$(C_{m(K)})_{11}=C_\textrm{min}$ and $(C_{m(K)})_{mm}=-C_\textrm{max}$. 
In consequence, we can apply the~above procedure for the~$f=C_\textrm{max}$ framing of knot~$m(K)$:
\begin{eqnarray}
F_{(m(K))^{f=C_\textrm{max}}}(x,a,q)&=&\sum_{d_1,\cdots,d_{m-1}}(-1)^{\sum_{i}(C_{ii}+C_\textrm{max}) d_i}a^{\sum_{i} (a_m-a_i) d_i}\nonumber
\\
&&\times\,\, q^{\sum_{i} (l_m-l_i)d_i} x^{\sum_{i}(C_\textrm{max}-C_{im}-1)d_i}q^{-\frac{1}{2}\sum_{i,j} C_{ij}d_id_j}
\label{eq:F_mirrorK from knot conormal Q}
\\
&&\times\,\, q^{\sum_{i} C_{im}d_i\sum_id_i}q^{-\frac{1}{2}C_\textrm{max}(\sum_{i}d_i)^2}q^{-\sum_{i<j}d_id_j}\frac{(x;q^{-1})_{d_1+\cdots+d_{m-1}}}{\prod_{i=1}^{m-1}(q)_{d_i}}.\nonumber
\end{eqnarray}

Equations (\ref{eq:F_K from knot conormal Q}) and (\ref{eq:F_mirrorK from knot conormal Q}) are very close to the~quiver form. The~simplest way of reaching it is an~application of Lemma 4.5 from~\cite{KRSS2}:
\begin{align}
   \frac{(x;q^{-1})_{d_1+\ldots+d_n}}{(q)_{d_{1}}\cdots(q)_{d_{n}}}&=
    \frac{(x \,q^{1-\sum_i d_i};q)_{d_1+\ldots+d_n}}{(q)_{d_{1}}\cdots(q)_{d_{n}}} \nonumber\\
    &=\quad \sum\limits_{\alpha_{1}+\beta_{1}=d_{1}} \cdots \sum\limits_{\alpha_{n}+\beta_{n}=d_{n}} (-q^{1/2})^{\beta_{1}^2+\ldots+\beta_{n}^2+2\sum_{i=1}^{n-1} \beta_{i+1}(d_{1}+\ldots + d_{i})} \label{eq:qpoch-sum-general}  \\
    &\quad\quad\quad\quad\quad \times \frac{\big(x \, q^{1/2-\sum_i \alpha_i-\sum_i \beta_i}\big)^{\beta_{1}+\cdots+\beta_n}}{(q)_{\alpha_{1}}(q)_{\beta_{1}}\cdots (q)_{\alpha_{n}}(q)_{\beta_{n}}}.
    \nonumber
\end{align}
In the~next section we will show in examples that this expansion leads to expressions for knot complement quivers found in~\cite{Kuch}.
On the~other hand, we can use the~following formula:
\begin{equation}\label{eq:new expansion}
(x;q^{-1})_d=(x q^{1-d};q)_d=\frac{(x q^{1-d};q)_{\infty}}{(x q;q)_{\infty}}=\sum_{i,j} (-1)^i x^{i+j} q^{i+j} q^{-di} q^{\frac{1}{2}(i^2-i)}\frac{1}{(q)_i (q)_j}.
\end{equation}
Since it effectively adds two nodes instead of doubling them, resulting quivers are expected to be simpler. In the~next section we will see in examples that it is indeed the~case. We will also see that the~transition between two kinds of quivers can be interpreted in terms of linking and removing a~redundant pair of nodes, defined in Section~\ref{sec:KQ correspondence} (for details see~\cite{EKL2}).

In addition to the~reasoning presented above, quivers corresponding to knot complements can be often obtained directly in the~simpler form by matching quiver adjacency matrix and the~change of variables against order by order expansion of $F_K$ invariants for various branches (which can be obtained from $A$-polynomials, as we saw in Section~\ref{sec:FK different branches}). Many results presented in the~next section were derived in this way.

\subsection{Examples}\label{sec:Examples}

In this section we illustrate the~considerations presented above on the~example of the~figure-eight and trefoil knots, taking into account $F_K$ and $F_{m(K)}$ for various branches. Moreover, the~application of methods from Section \ref{sec:41} to the~results of~\cite{Kuch} enables us to conjecture the~simple quiver form for general $(2,2p+1)$ torus. Analogous results for all knots with 5 or 6 crossings, as well as (3,4) torus knot, are presented in Appendix \ref{sec:Quivers appendix}.

\subsubsection{Figure-eight}

First we shall obtain the~quiver for the~slope $-\frac{1}{2}$ non-abelian branch of the~figure-eight knot, following the~discussion from Section \ref{sec:41}. We start from the~expression for the~coloured HOMFLY-PT polynomial for $\mathbf{4}_1$ obtained in~\cite{KRSS2}:
\begin{equation}P_r(\mathbf{4}_1;a,q)=\sum_{\tilde{d}_1+\cdots+\tilde{d}_5=r}(-1)^{\tilde{d}_3+\tilde{d}_4} a^{\tilde{d}_2-\tilde{d}_5}q^{-\tilde{d}_2-\frac{1}{2}\tilde{d}_3+\frac{1}{2}\tilde{d}_4+\tilde{d}_5}q^{\frac{1}{2}\sum_{i,j=1}^5 C_{ij}\tilde{d}_i\tilde{d}_j}
\frac{{(q;q)_r}}{\prod_{i=1}^5(q;q)_{\tilde{d}_i}}\label{1},
\end{equation}
where 
\[
C=\left(\begin{array}{ccccc}0&0&-1&0&-1\\
0&2&0&1&-1\\
-1&0&-1&0&-2\\
0&1&0&1&-1\\
-1&-1&-2&-1&-2\end{array}\right).
\]
We need to add framing $f=2$, i.e. to multiply by $q^{r(r-1)}$. Since 
\[
C+2\left(\begin{array}{ccc}1&\cdots&1\\
\vdots&&\vdots\\
1&\cdots&1\end{array}\right)
\]
has $0$ at the~bottom right corner and all entries in the~last row are non-negative, we shall replace $\tilde{d}_5=r-\tilde{d}_1-\tilde{d}_2-\tilde{d}_3-\tilde{d}_4$ in (\ref{1}), which leads to
\begin{align*}
    q^{r(r-1)}P_r(\mathbf{4}_1;a,q)&=\sum_{\tilde{d}_1+\cdots+\tilde{d}_4\le r}  (-1)^{\tilde{d}_3+\tilde{d}_4} a^{-r}a^{\tilde{d}_1+2\tilde{d}_2+\tilde{d}_3+\tilde{d}_4}q^{-\tilde{d}_1-2\tilde{d}_2-\frac{3}{2}\tilde{d}_3-\frac{1}{2}\tilde{d}_4} \frac{(q^{r};q^{-1})_{\tilde{d}_1+\tilde{d}_2+\tilde{d}_3+\tilde{d}_4}}{\prod_{i=1}^4(q;q)_{\tilde{d}_i}}\nonumber
    \\
    &\qquad\qquad\times  q^{\frac{1}{2}{\sum_{i,j=1}^4C_{ij}\tilde{d}_i\tilde{d}_j}-(\tilde{d}_1+\tilde{d}_2+2\tilde{d}_3+\tilde{d}_4)(r-\tilde{d}_1-\tilde{d}_2-\tilde{d}_3-\tilde{d}_4)-(r-\tilde{d}_1-\tilde{d}_2-\tilde{d}_3-\tilde{d}_4)^2+r^2 }
    \\
    &=a^{-r}\sum_{\tilde{d}_1,\tilde{d}_2,\tilde{d}_3,\tilde{d}_4\le r} (-1)^{\tilde{d}_3+\tilde{d}_4} a^{\tilde{d}_1+2\tilde{d}_2+\tilde{d}_3+\tilde{d}_4}q^{-\tilde{d}_1-2\tilde{d}_2-\frac{3}{2}\tilde{d}_3-\frac{1}{2}\tilde{d}_4}q^{r(\tilde{d}_1+\tilde{d}_2+\tilde{d}_4)}
    \\
    &\qquad\qquad\times q^{\frac{1}{2}\sum_{i,j=1}^4\tilde{C}_{ij}\tilde{d}_i\tilde{d}_j}\frac{(q^{r};q^{-1})_{\tilde{d}_1+\tilde{d}_2+\tilde{d}_3+\tilde{d}_4}}{\prod_{i=1}^4(q;q)_{\tilde{d}_i}}
\end{align*}
Performing the~substitution $q^{r}\rightarrow x$, we obtain -- up to an~overall prefactor -- the~following expression:
\begin{align*}
    F_{4_1}^{(-\frac{1}{2})}(x,a,q)&=\sum_{\tilde{d}_1,\tilde{d}_2,\tilde{d}_3,\tilde{d}_4\ge 0}(-1)^{\tilde{d}_3+\tilde{d}_4} a^{\tilde{d}_1+2\tilde{d}_2+\tilde{d}_3+\tilde{d}_4}q^{-\tilde{d}_1-2\tilde{d}_2-{\frac{3\tilde{d}_3+\tilde{d}_4}{2}}}q^{{\frac{1}{2}}{\sum_{i,j=1}^4\tilde{C}_{ij}\tilde{d}_i\tilde{d}_j}}
    \\
    &\qquad\qquad\times x^{\tilde{d}_1+\tilde{d}_2+\tilde{d}_4}{\frac{(x;q^{-1})_{\tilde{d}_1+\tilde{d}_2+\tilde{d}_3+\tilde{d}_4}}{\prod_{i=1}^4(q)_{\tilde{d}_i}}},
\end{align*}
where 
$$\tilde{C}=\left(\begin{array}{cccc}0&0&0&0\\
0&2&1&1\\
0&1&1&1\\
0&1&1&1\end{array}\right).$$
In order to obtain $F_K$ in a~quiver form, we need to expand the~$q$-Pochhammer $(x;q^{-1})_{\tilde{d}_1+\tilde{d}_2+\tilde{d}_3+\tilde{d}_4}$. We can do that in either of two ways described in Section \ref{sec:41}.

If we use (\ref{eq:qpoch-sum-general}), we get
\begin{align*}
   \frac{(x;q^{-1})_{\tilde{d}_1+\tilde{d}_2+\tilde{d}_3+\tilde{d}_4}}{\prod_{i=1}^4(q)_{\tilde{d}_{i}}}&= \sum\limits_{\alpha_{1}+\beta_{1}=\tilde{d}_{1}} \cdots \sum\limits_{\alpha_{4}+\beta_{4}=\tilde{d}_{4}} (-q^{1/2})^{\beta_{1}^2+\ldots+\beta_{4}^2+2\sum_{i=1}^{3} \beta_{i+1}(\tilde{d}_{1}+\ldots + \tilde{d}_{i})} \\
   &\qquad\qquad\times\frac{\big(x \, q^{1/2-\sum_i \alpha_i-\sum_i \beta_i}\big)^{\beta_{1}+\cdots+\beta_4}}{(q)_{\alpha_{1}}(q)_{\beta_{1}}\cdots (q)_{\alpha_{4}}(q)_{\beta_{4}}}.
\end{align*}
In such a~way we have eight summation variables $\alpha_1,\ldots,\alpha_4,\beta_1,\ldots,\beta_4$, satisfying $\tilde{d}_i=\alpha_i+\beta_i$, $i=1,\ldots,4$. Focusing on the~powers of $q$ that are quadratic in these summation variables, it can be seen from the~formula above that the~corresponding matrix $C$ is of the~form
$$
C=\left(\begin{array}{c|c}
\tilde{C} & \tilde{C}-I\\ 
\hline
\tilde{C}-I & \tilde{C}-J
\end{array}
\right),
$$
where $I$ is the~identity matrix, and $J$ matrix of all ones.

Therefore, with summation variables $(d_1,\ldots,d_8)=(\alpha_1,\ldots,\alpha_4,\beta_1,\ldots,\beta_4)$, we get
\be
\label{eq:F_4_1-f2}
    F_{\mathbf{4}_1}^{(-\frac{1}{2})}(x,a,q) = \sum_{d_1,\cdots,d_8 \geq 0}(-q^{\frac{1}{2}})^{\sum_{ i,j=1}^8 C_{ij}d_id_j} \prod_{i=1}^{8}\frac{x_i^{d_i}}{(q)_{d_i}},
\ee
with
\begin{equation}
C=\left(\begin{array}{cccccccc}
0 & 0 & 0 & 0 & -1 & 0 & 0 & 0\\
0 & 2 & 1 & 1 & 0 & 1 & 1 &1 \\
0 & 1 & 1 & 1 & 0 & 1 & 0 & 1\\
0 & 1 & 1 & 1 & 0 & 1 & 1 & 0\\
-1 & 0 & 0 & 0 & -1 & -1 & -1 & -1\\
0 & 1 & 1 & 1 & - 1 & 1 & 0 & 0\\
0 & 1 & 0 & 1 & -1 & 0 & 0 & 0\\
0 & 1 & 1 & 0 & -1 & 0 & 0 & 0
\end{array}\right),\quad\begin{pmatrix}x_1 \\ x_2 \\ x_3 \\ x_4 \\ x_5 \\ x_6 \\ x_7 \\ x_8 \end{pmatrix}=\left(\begin{array}{c}
xaq^{-1}\\
xa^2q^{-2}\\
aq^{-\frac{3}{2}}\\
xaq^{-\frac{1}{2}}\\
x^2aq^{-\frac{1}{2}}\\
x^2a^2q^{-\frac{3}{2}}\\
xaq^{-1}\\
x^2a
\end{array}\right).\label{eq:4_1-first expansion}
\end{equation}
Alternatively, if we use the~expansion (\ref{eq:new expansion}), we get
$$(x;q^{-1})_{d_1+\ldots+d_4}=\sum_{i,j\ge 0} (-1)^i x^{i+j} q^{\frac{1}{2}i+j} q^{-i (d_1+d_2+d_3+d_4)} q^{\frac{1}{2}i^2}\frac{1}{(q)_i(q)_j}.  $$
Hence, in this case we get an~analogue of \eqref{eq:F_4_1-f2} with six summation variables $(d_1,\ldots,d_6)=(\tilde d_1,\ldots,\tilde d_4,i,j)$ and with the~quiver matrix $C$ and vector $\Bf{x}$ given by
\begin{equation}
C=\left(\begin{array}{cccccc}
0 & 0 & 0 & 0 & -1 & 0 \\
0 & 2 & 1 & 1 & -1 & 0  \\
0 & 1 & 1 & 1 & -1 & 0 \\
0 & 1 & 1 & 1 & -1 & 0 \\
-1 & -1 & -1 & -1 & 1 & 0 \\
0 & 0 & 0 & 0 & 0 & 0\end{array}\right),\quad\begin{pmatrix}x_1 \\ x_2 \\ x_3 \\ x_4 \\ x_5 \\ x_6  \end{pmatrix}=\left(\begin{array}{c}
xaq^{-1}\\
xa^2q^{-2}\\
aq^{-\frac{3}{2}}\\
xaq^{-\frac{1}{2}}\\
xq^{\frac{1}{2}}\\
xq
\end{array}\right).\label{eq:4_1-new expansion}
\end{equation}

Let us move to the~abelian branch. In that case we cannot apply the~reasoning from Section~\ref{sec:41} since some entries of conormal quiver are negative. In particular, we have $C_{\textrm{min}}=-2$, so framing 2 would be needed, as we saw in the~paragraph above. However, for the~abelian branch we can use a~direct approach, matching quiver adjacency matrix and the~change of variables against order by order expansion of $F_{\mathbf{4}_1}$. This leads to
\begin{equation}\label{piotrfig8ab}
F_{\mathbf{4}_1}(x,a,q) = \sum_{d_1,\cdots,d_6 \geq 0}(-q^{\frac{1}{2}})^{\sum_{ i,j=1}^6 C_{ij}d_id_j} \prod_{i=1}^{6}\frac{x_i^{d_i}}{(q)_{d_i}}
\end{equation}
with
\[
\begin{pmatrix}x_1 \\ x_2 \\ x_3 \\ x_4 \\ x_5 \\ x_6 \end{pmatrix} = \begin{pmatrix} q x \\  q x \\ q x \\  q^{-\frac{1}{2}}a x \\ q^{-\frac{1}{2}}a x \\  q^{-\frac{1}{2}}a x \end{pmatrix},
\]
and $C$ given by any of the~following matrices:
\begin{equation}\label{eq:fig8quiver}
\begin{pmatrix} 0 & 0 & 0 & 0 & 0 & 0 \\ 0 & 0 & -1 & -1 & 0 & 0 \\ 0 & -1 & 0 & 0 & 1 & 0 \\ 0 & -1 & 0 & 1 & 1 & 0 \\ 0 & 0 & 1 & 1 & 1 & 0 \\ 0 & 0 & 0 & 0 & 0 & 1 \end{pmatrix},\;
\begin{pmatrix} 0 & 0 & 0 & 1 & 0 & 0 \\ 0 & 0 & -1 & -1 & 0 & 0 \\ 0 & -1 & 0 & 0 & 0 & 0 \\ 1 & -1 & 0 & 1 & 1 & 0 \\ 0 & 0 & 0 & 1 & 1 & 0 \\ 0 & 0 & 0 & 0 & 0 & 1 \end{pmatrix},\;
\begin{pmatrix} 0 & 0 & 0 & 0 & 0 & 0\\ 0 & 0 & -1 & 0 & 0 & 0 \\ 0 & -1 & 0 & 0 & 1 & -1 \\ 0 & 0 & 0 & 1 & 1 & 0 \\ 0 & 0 & 1 & 1 & 1 & 0 \\ 0 & 0 & -1 & 0 & 0 & 1 \end{pmatrix}.
\end{equation}
Note that all these matrices are equivalent in the~sense of~\cite{Jankowski:2021flt}.

Analogous approach applied to the~non-abelian branch with slope $-\frac{1}{2}$ leads to
\begin{equation}\label{piotrfig8Nab}
F_{\mathbf{4}_1}^{(-\frac{1}{2})}(x,a,q) = \sum_{d_1,\cdots,d_5 \geq 0}(-q^{\frac{1}{2}})^{\sum_{1\leq i,j\leq 5}C_{ij}d_id_j} \prod_{i=1}^{5}\frac{x_i^{d_i}}{(q)_{d_i}}
\end{equation}
with
\begin{equation}\label{piotrfig8Nab-C}
C = \begin{pmatrix} 0 & 1 & 0 & 0 & 0 \\ 1 & 0 & 1 & 0 & 0 \\ 0 & 1 & 1 & 1 & 0 \\ 0 & 0 & 1 & 1 & 0 \\ 0 & 0 & 0 & 0 & 1 \end{pmatrix},\quad \begin{pmatrix}x_1 \\ x_2 \\ x_3 \\ x_4 \\ x_5 \end{pmatrix} = \begin{pmatrix} q x \\  a x \\ q^{\frac{3}{2}} x \\  q^{\frac{3}{2}} x \\  q^{-\frac{1}{2}}a x \end{pmatrix}.
\end{equation}
We expect this quiver along with the quivers \eqref{eq:4_1-first expansion} and \eqref{eq:4_1-new expansion} are all equivalent (up to a factor independent of $x$).

\subsubsection{Trefoil}

In the~case of the~abelian branch of the~left-handed trefoil, we can use the~formula given in~\cite{EGGKPS}:
\be\label{eq:trefoil unexpanded}
F_{\mathbf{3}_1}(x,a,q)=\sum_{k\geq 0} x^k q^k \frac{(x;q^{-1})_k(aq^{-1};q)_k}{(q)_k}.
\ee
Expanding $(aq^{-1};q)_k$ using Lemma 4.5 from~\cite{KRSS2}, we get an~expression analogous to Equation (\ref{eq:F_K from knot conormal Q}):
\[
F_{\mathbf{3}_1}(x,a,q)=\sum_{\tilde{d}_1,\tilde{d}_2\geq 0}(-1)^{\tilde{d}_1}q^{\tilde{d}_2}x^{\tilde{d}_1+\tilde{d}_2}a^{\tilde{d}_1}q^{\frac{1}{2}(\tilde{d}_1^2-\tilde{d}_1)}\frac{(x;q^{-1})_{\tilde{d}_1+\tilde{d}_2}}{(q)_{\tilde{d}_1}(q)_{\tilde{d}_2}}.
\]
As explained in Section \ref{sec:41}, we can expand the~last fraction in two ways. Simpler expansion (\ref{eq:new expansion}) leads to
\be
\label{eq:F_3_1}
    F_{\mathbf{3}_1}(x,a,q) = \sum_{d_1,\cdots,d_4 \geq 0}(-q^{\frac{1}{2}})^{\sum_{ i,j=1}^4 C_{ij}d_id_j} \prod_{i=1}^{4}\frac{x_i^{d_i}}{(q)_{d_i}},
\ee
where
\begin{equation}
C=\left(\begin{array}{cccc}
0 & 0 & 0 & -1\\
0 & 1 & 0 & -1\\
0 & 0 & 0 & 0\\
-1 & -1 & 0 & 1
\end{array}\right),\quad\begin{pmatrix}x_1 \\ x_2 \\ x_3 \\ x_4 \end{pmatrix}=\left(\begin{array}{c}
xq\\
xaq^{-1/2}\\
xq\\
xq^{1/2}
\end{array}\right).\label{eq:new quiver and KQ change of vars}
\end{equation}
On the~other hand, the~expansion (\ref{eq:qpoch-sum-general}) gives \eqref{eq:F_3_1} with the~quiver
\begin{equation}
C=\left(\begin{array}{cccc}
0 & 0 & -1 & -1\\
0 & 1 & 0 & 0\\
-1 & 0 & -1 & -1\\
-1 & 0 & -1 & 0
\end{array}\right),\quad\begin{pmatrix}x_1 \\ x_2 \\ x_3 \\ x_4 \end{pmatrix}=\left(\begin{array}{c}
xq\\
xaq^{-1/2}\\
x^{2}q^{3/2}\\
x^{2}a
\end{array}\right),\label{eq:old quiver and change of vars}
\end{equation}
which reproduces the~result of~\cite{Kuch}.

We can connect \eqref{eq:new quiver and KQ change of vars} with \eqref{eq:old quiver and change of vars} directly, using the~procedures of linking and removing a~redundant pair of nodes, defined in Section \ref{sec:KQ correspondence} (for details see~\cite{EKL2}).
We start from \eqref{eq:new quiver and KQ change of vars} and link
nodes number 2 and 4, which leads to the~quiver
\begin{equation*}
C=\left(\begin{array}{ccccc}
0 & 0 & 0 & -1 & -1\\
0 & 1 & 0 & 0 & 0\\
0 & 0 & 0 & 0 & 0\\
-1 & 0 & 0 & 1 & 0\\
-1 & 0 & 0 & 0 & 0
\end{array}\right),\quad\begin{pmatrix}x_1 \\ x_2 \\ x_3 \\ x_4 \\ x_5 \end{pmatrix}=\left(\begin{array}{c}
xq\\
xaq^{-1/2}\\
xq\\
xq^{1/2}\\
x^{2}a
\end{array}\right).
\end{equation*}
Then linking nodes number 1 and 4 gives
\begin{equation*}
C=\left(\begin{array}{cccccc}
0 & 0 & 0 & 0 & -1 & -1\\
0 & 1 & 0 & 0 & 0 & 0\\
0 & 0 & 0 & 0 & 0 & 0\\
0 & 0 & 0 & 1 & 0 & 0\\
-1 & 0 & 0 & 0 & 0 & -1\\
-1 & 0 & 0 & 0 & -1 & -1
\end{array}\right),\quad\begin{pmatrix}x_1 \\ x_2 \\ x_3 \\ x_4 \\ x_5 \\ x_6 \end{pmatrix}=\left(\begin{array}{c}
xq\\
xaq^{-1/2}\\
xq\\
xq^{1/2}\\
x^{2}a\\
x^{2}q^{3/2}
\end{array}\right).
\end{equation*}
Now we can notice that nodes number 3 and 4 form a~redundant pair.
Removing it leads to
\begin{equation*}
C=\left(\begin{array}{cccc}
0 & 0 & -1 & -1\\
0 & 1 & 0 & 0\\
-1 & 0 & 0 & -1\\
-1 & 0 & -1 & -1
\end{array}\right),\quad\begin{pmatrix}x_1 \\ x_2 \\ x_3 \\ x_4\end{pmatrix}=\left(\begin{array}{c}
xq\\
xaq^{-1/2}\\
x^{2}a\\
x^{2}q^{3/2}
\end{array}\right),
\end{equation*}
which -- after exchanging the~last two nodes -- is equal to quiver
\eqref{eq:old quiver and change of vars}.

For the~mirror image of the~left-handed trefoil (i.e. the~right-handed trefoil) in framing $f=3$, the~prescription from Section \ref{sec:41} leads to the~following formula for the~non-abelian branch with slope $\frac{1}{3}$:
\begin{equation}\label{tref}
F_{{\mathbf{3}^r_1}}^{(\frac{1}{3})}(x,a,q)=\sum_{\tilde{d}_1,\tilde{d}_2 \ge 0} (-1)^{\tilde{d}_1+\tilde{d}_2} a^{\tilde{d}_1+\tilde{d}_2} q^{\frac{1}{2} \tilde{d}_1^2 -\frac{1}{2} \tilde{d}_1 + \frac{1}{2} \tilde{d}_2^2 - \frac{3}{2} \tilde{d}_2} x^{\tilde{d}_1} \frac{ (x;q^{-1})_{\tilde{d}_1+\tilde{d}_2}}{(q)_{\tilde{d}_1}(q)_{\tilde{d}_2}}.
\end{equation}
Then we rewrite the~$q$-Pochhammer $(x;q^{-1})_{\tilde{d}_1+\tilde{d}_2}$ using expansion (\ref{eq:new expansion}):
\begin{equation}\label{pom31}
F_{\mathbf{3}_1^r}^{(\frac{1}{3})}(x,a,q)=\sum_{\tilde{d}_1,\tilde{d}_2,i,j} (-1)^{\tilde{d}_1+\tilde{d}_2+i} a^{\tilde{d}_1+\tilde{d}_2} q^{\frac{1}{2} (\tilde{d}_1^2 - \tilde{d}_1) + \frac{1}{2} (\tilde{d}_2^2 - \tilde{d}_2)} \frac{x^{\tilde{d}_1+i+j} q^{i+j-\tilde{d}_2} q^{-i\tilde{d}_1-i\tilde{d}_2} q^{\frac{1}{2}(i^2-i)}}{(q)_{\tilde{d}_1}(q)_{\tilde{d}_2}(q)_i (q)_j}.
\end{equation}
We can sum over index $\tilde{d}_2$ that does not appear in the~power of $x$:
\be\label{eq:qPochexpansion1}
\sum_{\tilde{d}_2} (-1)^{\tilde{d}_2} a^{\tilde{d}_2}q^{-\tilde{d}_2}q^{-i \tilde{d}_2}q^{\frac{1}{2}(\tilde{d}_2^2-\tilde{d}_2)}\frac{1}{(q)_{\tilde{d}_2}}=(aq^{-1-i};q)_{\infty}=(aq^{-1};q)_{\infty} (aq^{-1-i};q)_i
\ee
and use the~formula
\be\label{eq:qPochexpansion2}
\frac{(aq^{-1-i};q)_i}{(q)_i}=\sum_{\alpha+\beta=i}(-1)^{\alpha}a^{\alpha}q^{-\alpha-i\alpha}q^{\frac{1}{2}(\alpha^2-\alpha)}\frac{1}{(q)_{\alpha}(q)_{\beta}}.
\ee
Substituting (\ref{eq:qPochexpansion1}-\ref{eq:qPochexpansion2}) back in (\ref{pom31}), we obtain
\[
F_{\mathbf{3}_1^r}^{(\frac{1}{3})}(x,a,q)=(aq^{-1};q)_{\infty} \sum_{\tilde{d}_1,\alpha,\beta,j}
(-1)^{\tilde{d}_1+\beta}a^{\tilde{d}_1+\alpha}q^{j+\beta-\alpha}\frac{q^{\frac{1}{2}(\tilde{d}_1^2-\tilde{d}_1)+\frac{1}{2}(\beta^2-\beta)-\tilde{d}_1\alpha-\tilde{d}_1\beta}x^{\tilde{d}_1+\alpha+\beta+j}}{(q)_{\tilde{d}_1}(q)_{\alpha}(q)_{\beta}(q)_j}.
\]
Finally, after dividing by the~overall infinite $q$-Pochhammer independent of $x$, we get the~quiver form
\begin{equation}\label{piotr411}
F_{\mathbf{3}_1^r}^{(\frac{1}{3})}(x,a,q) = \sum_{d_1,d_2,d_3,d_4 \geq 0}(-q^{\frac{1}{2}})^{\sum_{1\leq i,j\leq 4}C_{ij}d_id_j} \prod_{i=1}^{4}\frac{x_i^{d_i}}{(q)_{d_i}}
\end{equation}
with
\begin{equation}\label{piotr411-C}
C = \begin{pmatrix} 0 & 0 & 0 & 0 \\ 0 & 0 & -1 & 0 \\ 0 & -1 & 1 & -1 \\ 0 & 0 & -1 & 1 \end{pmatrix},\quad \begin{pmatrix}x_1 \\ x_2 \\ x_3 \\ x_4\end{pmatrix} = \begin{pmatrix}x q  \\ x a q^{-1} \\ x a q^{-\frac{1}{2}} \\ x q^{\frac{1}{2}} \end{pmatrix}.
\end{equation}

In the~case of the~right-handed trefoil and abelian branch, we cannot apply the~reasoning from Section \ref{sec:41} because it does not correspond to framing $f=C_\textrm{min}$ nor $f=C_\textrm{max}$. However, similarly to the~abelian branch of figure-eight knot, we can match quiver adjacency matrix and the~change of variables against order by order expansion of $F_{\mathbf{3}_1^r}$, which leads to 
\begin{equation}\label{piotr48}
F_{\mathbf{3}_1^r}(x,a,q) = \sum_{d_1,d_2,d_3,d_4 \geq 0}(-q^{\frac{1}{2}})^{\sum_{1\leq i,j\leq 4}C_{ij}d_id_j} \prod_{i=1}^{4}\frac{x_i^{d_i}}{(q)_{d_i}}
\end{equation}
with
\begin{equation}\label{piotr48-C}
C = \begin{pmatrix}0 & 1 & 0 & 0 \\ 1 & 0 & 1 & 0 \\ 0 & 1 & 1 & 0\\ 0 & 0 & 0 & 1 \end{pmatrix},\quad \begin{pmatrix}x_1 \\ x_2 \\ x_3 \\ x_4\end{pmatrix} = \begin{pmatrix} x q  \\ x a~\\ x a q^{-\frac{1}{2}} \\ x a q^{-\frac{1}{2}} \end{pmatrix}.
\end{equation}

\subsubsection{\texorpdfstring{$(2,2p+1)$}{(2,2p+1)} torus knots}

In~\cite{Kuch} it was shown that there is a~recursion relating quivers corresponding to $(2, 2p + 1)$ torus knot complements. Those quivers are obtained by expanding $q$-Pochhammer $(x;q^{-1})_d$ in the~general formula from~\cite{EGGKPS} via (\ref{eq:qpoch-sum-general}). Using expansion~(\ref{eq:new expansion}), we can obtain the~corresponding quivers in an~even simpler form. 
    
Let us start our analysis from a~slightly rearranged form of \eqref{eq:new quiver and KQ change of vars}:
	\begin{equation*}
		F_{\mathbf{3}_1}(x, a, q) 
		= \sum_{d_1, \cdots, d_4 \geq 0} (-q^{1/2})^{\Bf{d}\cdot C \cdot\Bf{d}} \frac{x^{\Bf{n}\cdot \Bf{d}} a^{\Bf{a}\cdot \Bf{d}}q^{\Bf{q}\cdot \Bf{d}-\frac{1}{2}\sum_{i} C_{ii}d_i} }{(q)_{\Bf{d}}},
	\end{equation*}
	where $q_i-\frac{1}{2}C_{ii}=l_i$ and the~vectors $\Bf{n}, \Bf{a}, \Bf{q}$ and the~matrix $C$ are given by
	\[
	C  = \begin{pmatrix}
			1 & 0 & -1 & 0
			\\ 0 & 0 & -1 & 0
			\\ -1 & -1 & 1 & 0
			\\ 0 & 0 & 0 & 0
		\end{pmatrix},
	\qquad
	\begin{array}{l}
	    \Bf{n}   = (1, 1, 1, 1),  \\
	    \Bf{a}   = (1, 0, 0, 0),  \\
	    \Bf{q}   = (0, 1, 1, 1) = 1 - \Bf{a}.
	\end{array}	
	\]
The~key idea is to keep in mind that the~upper-left $2 \times 2$ matrix comes from $q^{\frac{1}{2}d_1^2}$ in \eqref{eq:trefoil unexpanded}, whereas the~last two rows and columns originate from the~expansion of the~$q$-Pochhammer $(x,q^{-1})_{d_1 + d_2}$. For more complicated torus knots, the~upper-left part will grow following the~quadratic $q$ powers in the~analogue of \eqref{eq:trefoil unexpanded}, whereas the~last two rows and columns remain unchanged. Let us see it in the~example of $\mathbf5_1$ and $\mathbf7_1$ knots.

For the~$\mathbf5_1$ knot we can derive
		\begin{align}
			F_{\mathbf{5}_1}(x, a, q) & = \sum_{d_1, \cdots, d_6 \geq 0}  \left(-q^{1/2}\right)^{\Bf{d}\cdot C \cdot\Bf{d}} \frac{x^{\Bf{n}\cdot \Bf{d}} a^{\Bf{a}\cdot \Bf{d}} q^{\Bf{q}\cdot \Bf{d}-\frac{1}{2}\sum_{i} C_{ii}d_i}}{(q)_{\Bf{d}}},\nonumber
			\\ C  & = \begin{pmatrix}
				1 & 0 & 0 & 0 & -1 & 0
				\\ 0 & 0 & -1 & -1 & -1 & 0
				\\ 0 & -1 & -1 & -2 & -1 & 0
				\\ 0 & -1 & -2 & -2 & -1 & 0
				\\ -1 & -1 & -1 & -1 & 1 & 0
				\\ 0 & 0 & 0 & 0 & 0 & 0	\end{pmatrix},\label{eq:51quiver} 
			 \qquad
			 \begin{array}{l}
			      \Bf{n}   = (1, 1, 3, 3, 1, 1),  \\
			      \Bf{a}   = (1, 0, 1, 0, 0, 0), \\
			      \Bf{q}   = (0, 1, 0, 1, 1, 1) = 1 - \Bf{a},
			 \end{array}
		\end{align}
whereas for $\mathbf7_1$ we have
		\begin{align*}
			F_{\mathbf7_1}(x, a, q) & = \sum_{d_1, \cdots, d_8 \geq 0}  \left(-q^{1/2}\right)^{\Bf{d}\cdot C \cdot\Bf{d}} \frac{x^{\Bf{n}\cdot \Bf{d}} a^{\Bf{a}\cdot \Bf{d}} q^{\Bf{q}\cdot \Bf{d}-\frac{1}{2}\sum_{i} C_{ii}d_i}}{(q)_{\Bf{d}}}
			\\ C  & = \begin{pmatrix}
				1 & 0 & 0 & 0 & 0 & 0 & -1 & 0
				\\ 0 & 0 & -1 & -1 & -1 & -1 & -1 & 0
				\\ 0 & -1 & -1 & -2 &  -2 & -2 & -1 & 0
				\\ 0 & -1 & -2 & -2 & -3 & -3 & -1 & 0
				\\ 0 & -1 & -2 & -3 & -3 & -4 & -1 & 0
				\\ 0 & -1 & -2 & -3 & -4 & -4 & -1 & 0
				\\ -1 & -1 & -1 & -1 & -1 & -1 & 1 & 0
				\\ 0 & 0 & 0 & 0 & 0 & 0 & 0 & 0
			\end{pmatrix},
			\qquad
			 \begin{array}{l}
			      \Bf{n}   = (1, 1, 3, 3, 5, 5, 1, 1),  \\
			      \Bf{a}   = (1, 0, 1, 0, 1, 0, 0, 0), \\
			      \Bf{q}   = (0, 1, 0, 1, 0, 1, 1, 1) \\
			      \phantom{\Bf{q}}= 1 - \Bf{a}.
			 \end{array}
		\end{align*}	
Using the~general formula from~\cite{EGGKPS}, we can show that the~pattern continues and
		\begin{align*}
			F_{T_{2,2p+1}}(x, a, q) & = \sum_{d_1, \cdots, d_{2p + 2} \geq 0} \left(-q^{1/2}\right)^{\Bf{d}\cdot C \cdot\Bf{d}} \frac{x^{\Bf{n}\cdot \Bf{d}} a^{\Bf{a}\cdot \Bf{d}} q^{\Bf{q}\cdot \Bf{d}-\frac{1}{2}\sum_{i} C_{ii}d_i}}{(q)_{\Bf{d}}}
			\\ C & = \begin{pmatrix}
				\Bf{I}_{2p} - \Bf{D} & \Bf{-1} & \Bf{0}
				\\ \Bf{-1} & 1 & 0
				\\ \Bf{0} & 0 & 0
			\end{pmatrix},
			\qquad
			 \begin{array}{l}
			      \Bf{n}   = (1, 1, 3, 3, \cdots, 2p - 1, 2p - 1, 1, 1),  \\
			      \Bf{a}   = (1, 0, \cdots, 1, 0, 0, 0), \\
			      \Bf{q}   = (0, 1, \cdots, 0, 1, 1, 1) = 1 - \Bf{a},
			 \end{array}
		\end{align*}
		where $\Bf{-1}, \Bf{0}$ denote constant vectors of appropriate size, $\Bf{I}_{2p}$ is the~identity matrix and $\Bf{D}$ is the~matrix $\Bf{D}_{i j} = \min(i, j) - 1$ with $1 \leq i, j \leq 2p$. Note that we always have a~totally disconnected node which we can remove and replace with a~$q$-Pochhammer prefactor.


\subsection{\texorpdfstring{$F_K$}{FK} invariants and quivers from classical \texorpdfstring{$A$}{A}-polynomials and branches}

In the~previous sections we have made two interesting discoveries. First, we associated $F_K$ invariants to various branches of $A$-polynomials. Second, we found that these $F_K$ invariants can be encoded in quiver generating series, analogously as in the~knots-quivers correspondence~\cite{KRSS1,KRSS2}. Let us now stress that these two ideas are intimately related. On one hand, the~branches $y=y(x)$ of $A$-polynomials can be obviously determined from the~quiver form (or any other form) of $F_K$ invariants. On the~other hand, more interestingly, it turns out that just the~existence of the~quiver form imposes strong constraints, which enable to determine an~underlying quiver simply from the~finite number of terms in the~expansion $y=y(x)$ for a~given branch. In this section we illustrate these relations; note that similar reconstruction methods were discussed in~\cite{PSS,Banerjee:2020dqq}. It would also be desirable to understand better which property of $A$-polynomials asserts the~existence of the~quiver form of the~corresponding $F_K$ invariants; presumably, it might be related to K-theoretic conditions that $A$-polynomials must meet too~\cite{Guk,GS2}.

To proceed, let us again write down the~quiver generating function \eqref{eq:motivic generating series}:
\be
P_Q(x_1,x_2,\ldots,x_m,q) =\sum_{d_1,...,d_m=0}^{\infty}  
(-q^{1/2})^{ \sum_{i,j=1}^m C_{ij}d_i d_j} 
\frac{ x_1^{d_1}\cdots x_m^{d_m}} { (q)_{d_1}\cdots (q)_{d_m} } .       \label{PC}
\ee
To such a~series we can associate a~classical expansion 
\[
y(x_1,\ldots,x_m) = \lim_{q\to 1} \frac{P_Q(qx_1,qx_2,\ldots,qx_m,q)}{P_Q(x_1,x_2,\ldots,x_m,q)} \equiv \sum_{k_1,\ldots,k_m} b_{k_1,\ldots,k_m} x_1^{k_1}\cdots x_m^{k_m}.
\]
In~\cite{PSS} general expressions for coefficients $b_{k_1,\ldots,k_m}$ (as well as for numerical Donaldson-Thomas invariants that arise from factorization of the~above series) in terms of elements of the~quiver matrix $C$ have been found. Currently we are interested in quiver generating series $F(x,a,q)$ and corresponding $y(x)=\sum_i y_i x^i$, which depend on a~single generating parameter~$x$, and arise from specialization $x_i=(-1)^{\#}a^{\#} q^{\#} x^{\#}$ of the~above expressions. In principle, coefficients $y_i$ can be obtained from $b_{k_1,\ldots,k_m}$, however such expressions are quite complicated. It is therefore more convenient to determine coefficients $y_i$ directly, having specialized $x_i$ first. 

Therefore, for the~purpose of our discussion and brevity, consider the~specialization such that each $x_i$ is proportional to a~single power of $x$ (and also fix $t=-1$), so that $x_i = a^{a_i} q^{l_i} x$. In this case $F(x,a,q) = P_Q(\{x_i=a^{a_i} q^{l_i} x\})$ and we are interested in the~corresponding classical series:
\be
y(x) = \lim_{q\to 1} \frac{F(qx,a,q)}{F(x,a,q)} \equiv \sum_{i=0}^{\infty} y_i x^i.   \label{yx-def}
\ee
Note that linear powers of $q$ in $F(x,a,q)$ (that arise from specialization of $x_i$) do not play a~role at the~classical level, so the~coefficients $y_i$ depend only on entries of $C$ and the~vector~$\Bf{a}$. Taking the~above limit explicitly we find coefficients $y_i$:
\begin{equation}
\begin{split}   \label{yx-coeffs}
y_1 &= -\sum_{i=1}^m (-1)^{C_{ii}} a^{a_i}, \\
y_2 & = \sum_{i=1}^m C_{ii} a^{2a_i} + \sum_{i<j} (-1)^{C_{ii} + C_{jj}} a^{a_i + a_j} (1+2C_{ij}), \\
y_3 & = -\frac12 \Big( \sum_{i=1}^m (-1)^{C_{ii}}  a^{3a_i} C_{ii}(3C_{ii}-1)
+ \sum_{i\neq j} (-1)^{C_{jj}} a^{2a_i + a_j}(2C_{ii} + C_{ij})(3C_{ij}+1) \\
  + & \sum_{i<j<k} (-1)^{C_{ii}+C_{jj}+C_{kk}} a^{a_i + a_j + a_k} \big( 2+4C_{jk} + C_{ik}(4+6C_{jk})+ C_{ij}(4+6C_{ik} + 6C_{jk}) \big)
\Big). 
\end{split}
\end{equation}
For a~quiver of size $m=2$, denoting
\[
C_{1,1}=\alpha,\qquad C_{1,2}=C_{2,1}=\beta,\qquad C_{2,2} = \gamma,
\]
and keeping parameters $a_1,a_2$ and $q_1,q_2$, let us also write down one more coefficient explicitly:
\begin{align}
\begin{split}   \label{yx-coeffs-m2}
y_1^{m=2} &= -(-1)^{\alpha} a^{a_1}-(-1)^{\gamma} a^{a_2}, \\
y_2^{m=2} & = \alpha a^{2a_1} + \gamma a^{2a_2} + (-1)^{\alpha+\gamma} a^{a_1 + a_2} (1+2\beta), \\
y_3^{m=2} & = -\frac12 \Big(   
 (-1)^{\alpha}  a^{3a_1} \alpha(3\alpha-1) + (-1)^{\gamma}  a^{3a_2} \gamma(3\gamma-1)  \\
& \qquad  + (-1)^{\gamma} a^{2a_1 + a_2}(2\alpha+ \beta)(3\beta+1) +
(-1)^{\alpha} a^{2a_2 + a_1}(2\gamma+ \beta)(3\beta+1)  \Big), \\
y_4^{m=2} & = -\frac16\Big( 2a^{4a_1} \alpha(2\alpha-1)(4\alpha-1) + 2a^{4a_2} \gamma(2\gamma-1)(4\gamma-1)  \\
&\qquad + (-1)^{\alpha+\gamma} a^{3a_1+a_2} (3\alpha+\beta)(3\alpha+\beta-1)(4\beta+1)  \\
&\qquad + (-1)^{\alpha+\gamma} a^{3a_2+a_1} (3\gamma+\beta)(3\gamma+\beta-1)(4\beta+1)  \\
&\qquad + 6a^{2a_1+2a_2} (\alpha+\beta)(\beta+\gamma)(4\beta+1) \Big).
\end{split}
\end{align}

Clearly, once we know the~quiver matrix $C$ and powers $a_i$, we can immediately determine coefficients $y_i$ in the~classical series $y(x)$. However, we can also consider the~opposite perspective, and reconstruct the~quiver matrix $C$ and coefficients $a_i$, by comparing explicit form of (a finite number of) coefficients $y_i$ with the~above formulas. In particular, this is the~situation we have to deal with when considering various branches of $A$-polynomial. Each branch is associated to one solution $y=y(x)$ of the~($a$-deformed) $A$-polynomial equation $A(x,y,a)=0$. If we assume that for each branch there exists a~$q$-series $F(x,a,q)$ that has a~quiver representation, then (for each branch) we can determine the~corresponding $C$ and~$a_i$. In addition, parameters $q_i$ can be determined by comparison with a~few first terms of the~quantum series (\ref{PC}), if only we can determine them by other means. 


\subsubsection{Examples: trefoil and figure-eight}

Let us illustrate the~above relations on the~example of trefoil and figure-eight knot (still, for simplicity, restricting to $t=-1$ case). For the~abelian branch of right-handed trefoil, the~generating function (\ref{piotr48}) is determined by the~quiver matrix~$C$ of size $m=4$, given  in (\ref{piotr48-C}), and $\Bf{a}=(a_1,a_2,a_3,a_4)=(0,1,1,1)$, so that (also taking appropriate $l_i$ into account):
\[
F(x,a,q) = 1+x\frac{ (a-q)}{q-1} + 
x^2\frac{ \left(a^2 \left(-q^2+q+1\right)+a q \left(q^2-q-2\right)+q^2\right)}{(q-1)^2 (q+1)}+\ldots
\]
The~classical $y(x)$ can be determined either by solving the~$A$-polynomial equation $A(x,y)=0$, or by taking the~limit (\ref{yx-def}), or by substituting $C$ and $a_i$ to the~formulae (\ref{yx-coeffs}). All these methods yield the~same result:
\[
y(x) = 1 + (a-1) x + (a - a^2) x^2 - 2 (a - 3 a^2 + 2 a^3) x^3 +\ldots
\]
Alternatively, starting from this result (obtained e.g. by solving $A(x,y)=0$), and comparing with coefficients in (\ref{yx-coeffs}), we can reconstruct $C$ and $a_i$.

Similarly, for the~non-abelian branch of right-handed trefoil, the~generating series (\ref{piotr411}) is determined by the~quiver matrix (\ref{piotr411-C}), $\Bf{a}=(0,1,1,0)$ (and appropriate $\Bf{q}$), so that
\be  \label{Fxaq-31-r-nab}
F(x,a,q) = 1 + x\frac{a }{q}+x^2\frac{a  (a-q)}{(q-1) q} +\ldots
\ee
Again we can determine the~classical $y(x)$ for this branch either by solving the~$A$-polynomial equation $A(x,y)=0$, or by taking the~limit (\ref{yx-def}), or by substituting $C$ and $a_i$ to the~formulae (\ref{yx-coeffs}). All these methods yield the~same result:
\[
y(x) = 1 + (-2 a + 2 a^2) x^2 + (-3 a^2 + 3 a^3) x^3 +\ldots
\]
Alternatively, starting from this result (obtained e.g. by solving $A(x,y)=0$), and comparing with coefficients in (\ref{yx-coeffs}), we can reconstruct $C$ and $a_i$.


Analogously, the~first (left-most) quiver matrix in (\ref{eq:fig8quiver}), together with $\Bf{a}=(0,0,0,1,1,1)$ (and appropriate $\Bf{l}$), determine the~series (\ref{piotrfig8ab}) for the~abelian branch of figure-eight knot:
\[
F(x,a,q) = 1 + x\frac{3  (a-q)}{q-1}+x^2\frac{ a^2 q^2+6 a^2 q+2 a^2-a q^3-8 a q^2-8 a q-a+2 q^3+6 q^2+q}{(q-1)^2 (q+1)}+\ldots
\]
Substituting $C$ and $a_i$ to (\ref{yx-coeffs}) we also get
\[
y(x) = 1 + 3 (-1 + a) x + (1 - 9 a + 8 a^2) x^2 + (-1 + 6 a~- 27 a^2 + 22 a^3) x^3 +\ldots
\]
or we can reconstruct $C$ and $a_i$ by matching this classical series with coefficients in (\ref{yx-coeffs}).

Finally, for the~non-abelian branch (of slope $-1/2$) of figure-eight knot, the~matrix (\ref{piotrfig8Nab-C}) and $\Bf{a}=(0,1,0,0,1)$ (and appropriate $\Bf{l}$) determine the~series (\ref{piotrfig8Nab})
\[
F(x,a,q) = 1 +x\frac{2 q^2-q}{q-1} +
x^2 \frac{ -a q^4+a q^3+a q^2-a q+q^6+3 q^5-2 q^4-2 q^3+q^2 }{(q-1)^2 (q+1)}+\ldots
\]
and the~corresponding classical series reads
\[
y(x) = 1 + x + 3 x^2 + (9 - 3 a) x^3 +\ldots
\]


\subsubsection{Extremal invariants and the~reconstruction of non-abelian $F_{\mathbf{3}_1^r}$}

Let us now illustrate in a~detailed yet simple example how to determine $F_K(x,a,q)$ from the~classical function~$y(x)$. Among others, interesting and sufficiently simple examples can be constructed in the~extremal limit~\cite{GKS}, defined as follows. For a~given $F_K(x,a,q)=\sum_{n=0}^{\infty} f_n(a,q)x^n$ we take into account only the~term with the~lowest or the~highest power of $a$ in each $f_n(a,q)$, and ignore all other terms in $f_n(a,q)$. More specifically, we call such extremal invariants respectively as minimal or maximal. If $F_K(x,a,q)$ is written in a~quiver form (\ref{PC}), then the~parameters $a_i$ play a~crucial role in determining its extremal version. The~corresponding extremal $A$-polynomial (whose solution encodes the~extremal $y(x)$), can be also defined by taking an~appropriate limit of the~full $A$-polynomial~\cite{GKS}. 

For concreteness, consider the~non-abelian branch for the~right-handed trefoil, whose $F_K$~series is given in (\ref{piotr411}) and (\ref{Fxaq-31-r-nab}). Its quiver matrix (\ref{piotr411-C}) has size $m=4$ and $\Bf{a}=(0,1,1,0)$. The~maximal invariants in this case are encoded in the~submatrix of (\ref{piotr411-C}) given by $C_{ij}$ for $i,j=2,3$. Indeed, since $a_2=a_3=1>a_1=a_4=0$, the~highest powers of $a$ for a~fixed $x$ will arise from contributions from $x_2$ and $x_3$. We thus end up with a~non-trivial example with a~quiver matrix of size $m=2$, which still captures some interesting properties of trefoil knot. Let us relabel indices $(2,3)$ into $(1,2)$, so that the~extremal quiver generating series in this case takes form
\begin{align}
\begin{split}
F_{\mathbf{3}_1^r}(x,a,q) &=\sum_{d_1,d_2=0}^{\infty}  
(-q^{1/2})^{(d_1\ d_2) {\alpha \ \beta \choose \beta \ \gamma} {d_1 \choose d_2}} 
\frac{ a^{a_1 d_1+a_2 d_2} q^{q_1d_1+q_2d_2} x^{d_1+d_2}} { (q)_{d_1} (q)_{d_2} }   =\\
& = 1 + x\frac{a }{q} + x^2\frac{a^2}{(q-1) q} +x^3 \frac{a^3 \left(q^2+1\right)}{(q-1) q^4} + x^4 \frac{a^4 \left(q^5+q^2-1\right) }{(q-1)^2 q^6 (q+1)} + \ldots,
      \label{PC-31-m2}
\end{split}
\end{align}
with ${\alpha \ \beta \choose \beta \ \gamma} = {0 \ -1 \choose -1 \ 1}$, $a_1=a_2=1$, and $(q_1,q_2)=(-1,0)$. The~classical extremal series can be determined either from a~limit of $A$-polynomial or as a~limit
\be
y(x) = \lim_{q\to 1} \frac{F_{\mathbf{3}_1^r}(qx,a,q)}{F_{\mathbf{3}_1^r}(x,a,q)} = 1 + 2 a^2 x^2 + 3 a^3 x^3 + 3 a^4 x^4 + \ldots \label{yi-31r-ex}
\ee

Suppose now that we know only the~above classical series (e.g. from the~analysis of classical $A$-polynomial). We reconstruct (\ref{PC-31-m2}) as follows. First, consider coefficients of  (\ref{yi-31r-ex}):
\[
y_1 = 0, \quad y_2  = 2a^2, \quad y_3  = 3a^3, \quad y_4 = 3a^4, 
\]
and compare them to the~general form (\ref{yx-coeffs-m2}). We observe that:
\begin{itemize}
\item We have $y_1 = 0 \equiv -(-1)^{\alpha} a^{a_1}-(-1)^{\gamma} a^{a_2}$, which means that $\alpha$ and $\gamma$ must be of the~opposite parity and $a_1=a_2$ (so that the~two terms indeed cancel).
\item If we substitute $a_1=a_2$ into the~equation for $y_2$, we get 
\[
y_2=2a^2 \equiv a^{2a_1}(\alpha+\gamma + (-1)^{\alpha+\gamma} (1+2\beta) ) = a^{2a_1}(\alpha+\gamma -1-2\beta ),
\]
where we took into account that $(-1)^{\alpha+\gamma}=-1$ (because $\alpha$ and $\gamma$ have the~opposite parity); we thus conclude that $a_1=a_2=1$ and $\alpha+\gamma-1-2\beta=2$.
\item As $\alpha$ and $\gamma$ have the~opposite parity, assume that $\alpha$ is even and $\gamma$ is odd (which can be justified a~posteriori); then the~expression for the~third coefficient in (\ref{yx-coeffs-m2}) simplifies to
\[
y_3 = 3a^3 \equiv -\frac32 a^3 (\alpha-\gamma)(\alpha+\gamma-1-2\beta) = 3a^3(\gamma-\alpha),   
\]
where in the~last step we used $\alpha+\gamma-1-2\beta=2$ determined above; this last equation implies that $\gamma=\alpha+1$, and using this and $\alpha+\gamma-1-2\beta=2$ we also get $\beta=\alpha-1$.
\item To summarize, so far we have found that $a_1=a_2=1$, $\beta=\alpha-1$, $\gamma=\alpha+1$, and $(-1)^{\alpha+\gamma}=-1$; substituting all this into (\ref{yx-coeffs-m2}), the~formula for $y_4$ reduces to
\[
y_4=3a^4\equiv (8\alpha+3)a^4,
\]
which therefore implies that $\alpha=0$.
\item Ultimately, we have found that
\be
\alpha=0,\quad \beta=\alpha-1=-1,\quad \gamma=\alpha+1 = 1, \quad a_1=a_2=1,    \label{31-ex-C-ai}
\ee
which fixes the~quiver matrix and parameters $a_i$ in (the~first line of) (\ref{PC-31-m2}).
\end{itemize}

So far we have taken advantage of the~classical information. In addition, the~quantum information that fully determines (\ref{PC-31-m2}) is captured by $l_i$, which we can deduce by plugging the~result (\ref{31-ex-C-ai}) into the~first line of (\ref{PC-31-m2}) and comparing initial coefficients with the~known result (assuming we can deduce it independently). In our example it turns out that the~first non-trivial term in the~expansion of (\ref{PC-31-m2}), i.e. $x\frac{a}{q}$, fully fixes $l_i$. Indeed, plugging (\ref{31-ex-C-ai}) into the~first line of (\ref{PC-31-m2}) we get
\[
F_{\mathbf{3}_1^r}(x,a,q) = 1 + ax \frac{q^{l_2}-q^{l_1}}{q-1} + \ldots \equiv 1 + x\frac{a}{q}+\ldots
\]
The~above comparison fixes $l_1=-1$ and $l_2=0$ uniquely. 

To sum up, from the~analysis of $y_2,y_3,y_4$ coefficients in the~classical function $y(x)$, and then the~first non-trivial term in $F_{\mathbf{3}_1^r}(x,a,q)$, we find the~data
\[
{\alpha \ \beta \choose \beta \ \gamma} = {0 \ -1 \choose -1 \ \ \ 1}, \quad a_1=a_2=1, \quad (l_1,l_2)=(-1,0),
\]
which fully determines the~quiver generating series (\ref{PC-31-m2}).


\subsection{Analogues of \texorpdfstring{$d$}{d}-invariants and stabilization}

    When we specialize to $\mathfrak{sl}_2$, $F_K$ has the~form (ignoring prefactors)
    \[
        \sum_{n = 0}^{\infty} f_n(q) x^n.
    \]
    Understanding the~lowest power of $q$ in $f_n (q)$ -- whose $n^2$ coefficient is denoted by $4c$ in~\cite{GM}\footnote{$c = \frac{1}{24}$ for $\mathbf{3}_1^r$, $c = -\frac{1}{24}$ for $\mathbf{3}_1$, and $c = - \frac{1}{16}$ for ${\bf 4_1}$.} -- has several useful applications:

    \begin{itemize}
        \item It tells us when we can apply the~surgery formula in its simplest form~\cite{GM}.
        \item It is an~analogue of $\Delta_a$ which appears as a~prefactor for homological blocks~\cite{GPV, GPPV, GM}:
        \begin{equation*}
            \widehat{Z}(Y; q) \in q^{\Delta_a} \m{Z}[[q]],
        \end{equation*}
        where $a$ is a~choice of spin${}^c$ structure on $Y$.
        \item It tells us by which power of $q$ we should normalize $f_n (q)$ so that the~$q$-expansion starts with $1 + (\ldots)$. This, in turn, is a~key step in exploring stabilization of $f_n (q)$ as $n \to \infty$.
    \end{itemize}
    It turns out that in certain cases it is possible to understand both the~lowest power and stabilisation questions directly from a~quiver. To provide a~concrete example of this, we initially focus on the~reduced $F_K$ invariant for the~figure eight knot.

    We use the~quiver from Equation \eqref{piotrfig8ab}, choosing the~first of the~$3$ options for the~$C$ matrix and setting $t = -1, a =q^2$.
	The~first step is to determine the~growth of the~lowest power of $q$ which we call\footnote{This is closely related to $c$, but $c$ only looks at the~quadratic growth.} $\Delta$. For large $n$, the~$q$ coefficient is dominated by the~quadratic term so we want to solve the~linear programming problem:\footnote{$\Bf{d}$ represents $\frac{\Bf{d}}{n}$ from the~quiver form.}
	\begin{align} \label{eq:linear programing problem}
		& \text{minimise } \Bf{d}\cdot C \cdot \Bf{d} \\ & \text{subject to } 0 \leq \Bf{d}
		\text{ and } \Bf{n}\cdot \Bf{d} = 1. \nonumber
	\end{align}
	Some solution spaces for simple knots are shown in Figure \ref{fig:correction terms structure}.

	\begin{figure}[h]
	    \centering
	    \includegraphics[scale=0.4]{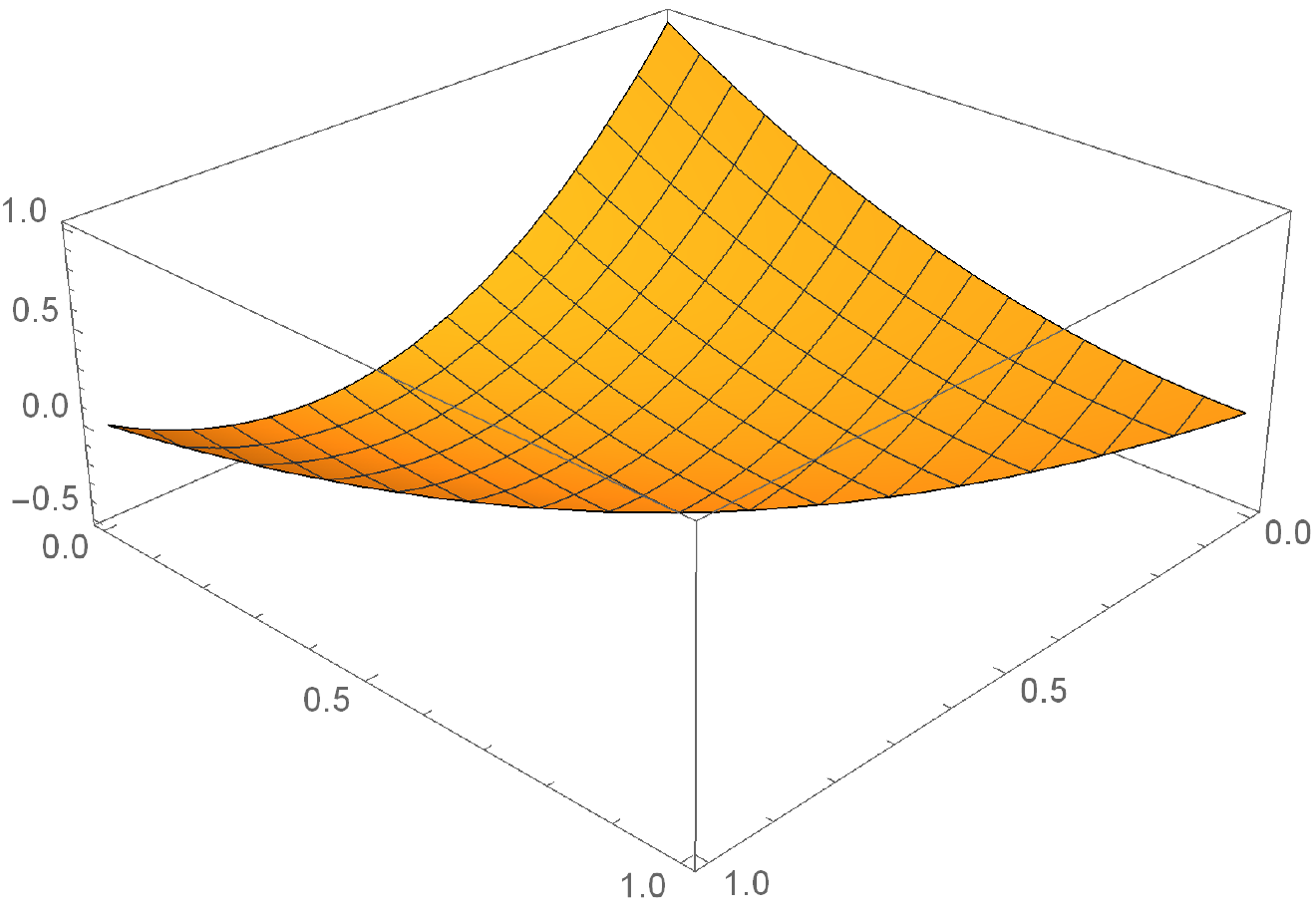}
	    \includegraphics[scale=0.4]{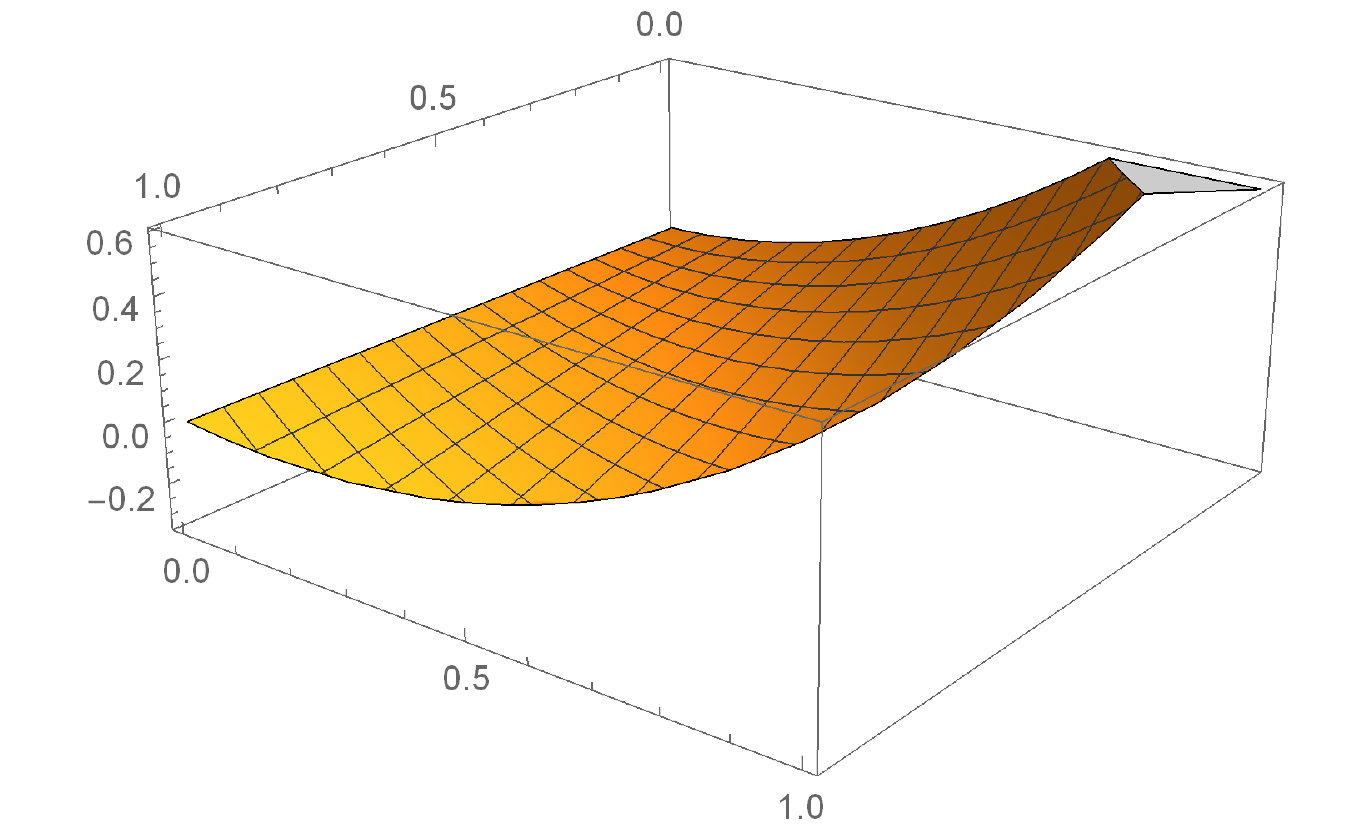}
	    \caption{The~value of the~quadratic form in \eqref{eq:linear programing problem}.\\
	    Left: the~reduced $4_1$ knot with $(d_2, d_3, d_4) \in ([0, 1], [0, 1-d_2], 1 - d_2 - d_3)$.\\
	    Right: the~unreduced $3_1^{r}$ with $(d_1, d_2, d_3) \in ([0, 1], [0, 1-d_1], 1 - d_1 - d_2)$.}
	    \label{fig:correction terms structure}
	\end{figure}
		
	We can often solve this by hand, making use of the~observation that if we can find a~pair $(i, j)$ such that $n_i = n_j$ and $C_{ik} \leq C_{jk}$ for all $k$, then $d_j = 0$. The~idea is simply that if such a~pair exists, it is always advantageous to move weight from $d_j$ to $d_i$.
	
	In the~case of the~$4_1$ knot only $d_2$ and $d_3$ survive and there is a~unique solution
	\begin{equation*}
	    \Bf{d} = \left(0, \frac{1}{2}, \frac{1}{2}, 0, 0, 0
	    \right).
	\end{equation*}
	This splits us into two cases depending on the~parity of $n$. Let us start with the~even parity case, $n = 2i$. Then for large $i$, the~minimal $q$ power of $f_n(q)$ comes from the~vector $\Bf{d} = (0, i, i, 0, 0, 0)$ and is given by
	\[
		\Delta_{even} = \frac{1}{2}\Bf{d}\cdot C \cdot \Bf{d} + \Bf{l}\cdot \Bf{d} = 2i - i^2.
	\]
	Note that the~quadratic term $-i^2$ matches the~result from~\cite{GM} once one accounts for the~differences in notation. Next we want to determine the~even stable series. The~key question we need to answer is for what other vectors $\Bf{d}'$ satisfying $\Bf{n}\cdot (\Bf{d} - \Bf{d}') = 0$ will $\frac{1}{2}\Bf{d}'\cdot C \cdot \Bf{d}' + \Bf{l}\cdot \Bf{d}'$ stay close to $c_{even}$ as $i \to \infty$. This is solved by looking at the~double derivative
	\[
		\frac{d}{d i} \left(\frac{d}{d \Bf{d}} \big(\frac{1}{2}\Bf{d}\cdot C \cdot \Bf{d} + \Bf{l}\cdot \Bf{d}\big)\mid_{\Bf{d} = (0, i, i, 0, 0, 0)}\right) = (0, -1, -1, -1, 1, 0).
	\]
	This means that as long as the~sum $d_2 + d_3 + d_4$ stays constant, $\frac{1}{2}\Bf{d}'\cdot C \cdot \Bf{d}' + \Bf{l}\cdot \Bf{d}'$ will not diverge as we increase $i$. Hence we consider vectors of the~form $(0, i - l, i - m + l, m, 0, 0)$ with $m \geq 0$ and $|l|, |m| << i$. The~quiver term corresponding to a~vector of this form is
	\[
			q^{2i - i^2} \frac{(-1)^m q^{l^2 + \frac{1}{2}(m^2 + m)}}{(q)_{i - l}(q)_{i - m + l}(q)_m} x^i.
	\]
	Taking the~limit as $i$ goes to infinity and summing over $m$ and $l$ gives
	\begin{align*}
		Stable_{even}(q) & = \frac{1}{(q)_\infty^2} \sum_{m = 0}^{\infty} \frac{(-1)^mq^{\frac{1}{2}(m^2 + m)}}{(q)_m} \sum_{l = -\infty}^{\infty} q^{l^2}
		\\ & = \frac{1 + 2q + 2q^4 + 2q^9 + \ldots}{(q)_{\infty}}
		\\ & = 1 + 3 q + 4 q^2 + 7 q^3 + 13 q^4 + 19 q^5 + 29 q^6 + 43 q^7 + 
 62 q^8 + 90 q^9 + \ldots
	\end{align*}
	This can be recognized as the~ratio of Ramanujan theta functions $\psi (q) / f(-q)$ or, equivalently, as a~$q$-series expansion of $q^{1/24} \frac{\eta(q^2)^5}{\eta(q)^3 \eta(q^4)^2}$.
  
	For the~odd case $j = 2i + 1$, the~process is identical. The~optimal vector is either\footnote{In general, there is usually a~difference between between different symmetry breaking choices. See \eqref{eq: trefoil deltas}} $(0, i + 1, i, 0, 0, 0)$ or $(0, i, i + 1, 0, 0, 0)$, which both lead to $c_{odd} = 1 + i - i^2$ and give stable series
	\begin{align*}
		Stable_{odd}(q) & = \frac{1}{(q)_\infty^2} \sum_{m = 0}^{\infty} \frac{(-1)^mq^{\frac{1}{2}(m^2 + m)}}{(q)_m} \sum_{l = -\infty}^{\infty} q^{l^2 - l}
		\\ & = 2\frac{1 + q^2 + q^6 + q^{10} + \cdots}{(q)_{\infty}}
		\\ & = 2(1 + q + 3 q^2 + 4 q^3 + 7 q^4 + 10 q^5 + 17 q^6 + 23 q^7 + 35 q^8 + 48 q^9 + \cdots).
	\end{align*}
    This series can be recognized as a~ratio of Ramanujan theta functions, $\psi (q^2) / f(-q)$ or, equivalently, as a~$q$-series expansion of $q^{-5/24} \frac{\eta(q^4)^2}{\eta(q) \eta(q^2)}$. It also has a~nice enumerative interpretation, as the~generating function of the~number of partitions of $n$ in which each odd part can occur any number of times but each even part is of two kinds and each kind can occur at most once:
    \[
    \frac{1}{\left(q;q^2\right)_{\infty} \left(q^2;q^4\right)_{\infty}^2}
    \; = \;
    \prod_{k=1}^{\infty} \left(1+q^k\right) \left(1+q^{2 k}\right)^2.
    \]
    
    Finally, observe that as both $\Delta_{even}$ and $\Delta_{odd}$ are quadratic in $i$, passing to the~unreduced case -- which, up to an~overall prefactor, corresponds to multiplying by $(1 - q x)$ -- will have no effect on the~stable series and simply apply a~small shift to the~minimal $q$ power. The~reason we make this comment is that in some cases, most notably that of $(2, 2p + 1)$ torus knots, passing to the~unreduced case leads to a~massive simplification in coefficients. For example for the~right-handed trefoil
    \begin{align*}
        F_{\mathbf{3}_1^r}(x,q) & = 1 + \frac{x}{q} + \frac{1 - q}{q^2}x^2 + \frac{1 - q - q^2}{q^3}x^3 + \frac{1 - q - q^2}{q^4}x^4 + \frac{1 - q - q^2 + q^5}{q^4}x^5 + \ldots \\
        F_{\mathbf{3}_1^r}^{unreduced}(x,q) & = 1 - \frac{x^2}{q} - \frac{x^3}{q} + x^5 + q x^6  - q^4x^8 + \ldots
    \end{align*}
    We will return to this example later.
    
    For most purposes, understanding $\Delta$ and the~stable series for the~unreduced case is more useful so we will focus on that case going forward. One immediate issue this presents is that the~quivers we have constructed so far have been all for the~reduced case, as it is not immediately apparent how to convert between an~unreduced and reduced quiver form. The~key observation is that instead of passing to the~full quiver, we can work with an intermediate expressin where
	\begin{equation*}
			F_K^{reduced}(x,q) = \sum_{k, \ldots} (\ldots) (x, q^{-1})_k,
	\end{equation*}
	and then the~unreduced invariant is simply
	\begin{equation*}
			F_K^{unreduced}(x,q) = \sum_{k, \ldots} (\ldots) (qx, q^{-1})_{k + 1},
	\end{equation*}
	and from here we can apply \eqref{eq:new expansion}. Observe that when we apply this method, the~$C$ matrix of the~reduced and unreduced quivers are identical and only the~linear terms change.
	
	Let us study the~left-handed trefoil next. Taking the~$a\to q^2$ limit of the~formula in~\cite{EGGKPS}, we find that the~unreduced $\mathfrak{sl}_2$ invariant is
		\begin{equation}
			F_{3_1}^{unreduced}(x, q)  = \sum_{k = 0}^{\infty} (x q)^k (qx, q^{-1})_{k + 1} = \sum_{i, j, k = 0}^{\infty} \frac{(-1)^i x^{i + j + k} q^{i + 2j + k - ki + \frac{1}{2}(i^2 - i)}}{(q)_i(q)_j}.		\label{eq:unreduced tefoil}
		\end{equation}
		While this is not technically a~quiver form, as there is no $(q)_k$ in the~denominator, this is good enough for our purposes. It can be made into a~quiver form by inserting $1$ and using the~general identity
		\[
			\sum_{\alpha + \beta = k} (-1)^i \frac{q^{\frac{\alpha^2 + \alpha}{2}}}{(q)_{\alpha}(q)_{\beta}} = 1,
		\]
		which gives the~unreduced version of the~quiver in $\eqref{eq:F_3_1}$. Since this does not change any result, we will continue with the~pseudo-quiver form of \eqref{eq:unreduced tefoil}, which can be written as
		\[
		F_{\mathbf{3}_1}^{unreduced}(x, q)  = \sum_{\Bf{d}} \frac{(-q^{\frac{1}{2}})^{\Bf{d}\cdot C \cdot \Bf{d}}q^{\Bf{l}\cdot \Bf{d}} x^{\Bf{n}\cdot \Bf{d}}}{(q)_{d_2}(q)_{d_3}},
		\qquad
		C  = \begin{pmatrix} 0 & -1 & 0 \\ -1 & 1 & 0 \\ 0 & 0 & 0 \end{pmatrix},
		\qquad
		\begin{array}{l}
		  \Bf{n}  = ( 1 , 1 , 1 ),  \\
		  \Bf{l}  = ( 1, \frac{1}{2} , 2 ). 
		\end{array}
		\]
		From here we analyse in an~identical manner. Minimizing the~quadratic term we find that there is a~unique solution $(\frac{2}{3}, \frac{1}{3}, 0)$, so we split the~analysis of $x^j$ into three cases depending on $j \mod 3$ with the~following $\Delta$ values:
		\begin{equation} \label{eq: trefoil deltas}
			\begin{cases}
				j = 3i, \quad \quad \ \ \Bf{d} = (2i, i, 0), & \Delta = \frac{5}{2}i - \frac{3}{2}i^2,
				\\ j = 3i + 1, \quad \Bf{d} = (2i + 1, i, 0),  & \Delta = 1 + \frac{3}{2}i - \frac{3}{2}i^2,
				\\ j = 3i + 2, \quad \Bf{d} = (2i + 1, i + 1, 0), & \Delta = 1 + \frac{1}{2}i - \frac{3}{2}i^2.
			\end{cases}
		\end{equation}
		Note that the~quadratic coefficient $-\frac{3}{2}$ agrees with $c = -\frac{1}{24}$ in~\cite{GM} after aligning notations. Next we compute
		\[
			\frac{d}{d i} \left(\frac{d}{d \Bf{d}} \big(\frac{1}{2}\Bf{d}\cdot C \cdot \Bf{d} + \Bf{l}\cdot \Bf{d}\big)\mid_{\Bf{d} = (2i, i)}\right) = (-1, -1, 0),
		\]
		and so the~three stable series are
		\begin{equation*}
			\begin{cases}
				j = 3i, \quad \quad \ \ \Bf{d} = (2i, i, 0), & \frac{(-1)^{i}}{(q)_{\infty}} \sum_{k = -\infty}^{\infty} (-1)^k q^{\frac{3k^2}{2} + \frac{k}{2}} = (-1)^{i},
				\\ j = 3i + 1, \quad \Bf{d} = (2i + 1, i, 0),  & \frac{(-1)^{i}}{(q)_{\infty}} \sum_{k = -\infty}^{\infty} (-1)^k q^{\frac{3k^2}{2} + \frac{3k}{2}} = 0,
				\\ j = 3i + 2, \quad \Bf{d} = (2i + 1, i + 1, 0), & \frac{(-1)^{i + 1}}{(q)_{\infty}} \sum_{k = -\infty}^{\infty} (-1)^k q^{\frac{3k^2}{2} - \frac{k}{2}} = (-1)^{i + 1}.
			\end{cases}
		\end{equation*}
	The~summation in the~middle is $0$ as $k^2 + k = (-1 - k)^2 + (-1 - k)$. Note that this does not mean that the~coefficients of $x^{3i + 1}$ are $0$. All we can say is that the~minimal $q$ power of $x^{3i + 1}$ grows faster than $-\frac{3}{2}i^2$. Skipping over this for now, we find the~expected structure for large $i$ both in terms of $q$ powers and the~$\pm 1$ stable series.

		It is worth noting that this procedure is highly sensitive to the~initial choice of quiver. For an~example of what can go wrong if one does not start with a~``good'' quiver form, let us study the~right-handed trefoil. We can generate a~quiver form using \eqref{eq:unreduced tefoil}, sending $x \to x^{-1}$ and $q \to q^{-1}$, and then using Weyl symmetry to send $x^{-1} \to q^2x$. From this we get
		\[
			F_{3_1^r}^{unreduced}(x, q) = \sum_{k = 0}^{\infty} (x q)^k (qx, q)_{k + 1} = \sum_{i, j, k = 0}^{\infty} \frac{(-1)^i x^{i + j + k} q^{i + 2j + k + kj + \frac{1}{2}(i^2 - i)}}{(q)_i(q)_j},
		\]
		so the~pseudo-quiver data (with $d_1 = k,\, d_2 = j,\, d_3 = i$) reads
		\[
		F_{\mathbf{3}_1}^{unreduced}(x, q)  = \sum_{\Bf{d}} \frac{(-q^{\frac{1}{2}})^{\Bf{d}\cdot C \cdot \Bf{d}}q^{\Bf{l}\cdot \Bf{d}} x^{\Bf{n}\cdot \Bf{d}}}{(q)_{d_2}(q)_{d_3}},
		\qquad
		C  = \begin{pmatrix} 0 & 1 & 0 \\ 1 & 0 & 0 \\ 0 & 0 & 1 \end{pmatrix},
		\qquad
		\begin{array}{l}
		  \Bf{n}  = ( 1 , 1 , 1 ),  \\
		  \Bf{l}  = ( 1 , 2 , \frac{1}{2} ).
		\end{array}
		\]
		In this case, there are two minima for the~quadratic term, $\Bf{d} = (1, 0, 0)$ and $\Bf{d} = (0, 1, 0)$. We differentiate these by looking at the~linear terms. As $\Bf{l}_1 < \Bf{l}_2$, we conclude that the~the~minimal $q$ power comes from $\Bf{d} = (1, 0, 0)$. Computing the~derivative
		\[
			\frac{d}{d i} \left(\frac{d}{d \Bf{d}} \big(\frac{1}{2}\Bf{d}\cdot C \cdot \Bf{d} + \Bf{l}\cdot \Bf{d}\big)\mid_{\Bf{d} = (i, 0, 0)}\right) = (0, 1, 0),
		\]
		we see that the~stable series will involve summing over $(i - k, 0, k)$ for $k \geq 0$. Substituting this in, we find that the~corresponding quiver term is
		\[
			(-1)^kq^{i + \frac{k^2}{2} - \frac{k}{2}},
		\]
		therefore the~stable series reads
		\[
			\sum_{k = 0}^{\infty} \frac{(-1)^kq^{\frac{k^2}{2} - \frac{k}{2}}}{(q)_k} = (1, q)_{\infty} = 0.
		\]
	    This is to be expected, as for large $i$ it is known that the~$q$ power grows quadratically. It is easy to check that expanding around the~second smallest minimum, $\Bf{d} = (0, 1, 0)$, the~stable series are again $0$. Hence we cannot get an~accurate picture of either the~stable series or large growth in $q$ from this quiver.
	    
	    It is an~interesting question how to find ``good'' quiver forms in general.

\section{Quivers from \texorpdfstring{$R$}{R}-matrices and inverted Habiro series}
\label{sec:Quiver from R-matrices}
In this section we explain how to use $R$-matrices to get quiver expressions of $F_K$ for $\mathfrak{sl}_2$ (or even symmetrically coloured $\mathfrak{sl}_N$) for a~large class of knots. We also explain how to get quiver expressions from the~inverted Habiro series in case of the~trefoil knot and the~figure-eight knot. In a~few explicit examples, we observe that these quiver expressions can be lifted to the~$a$-deformed version. This provides a~yet another method to obtain the~knot complement quivers for the~abelian branch. 

\subsection{Quivers from \texorpdfstring{$R$}{R}-matrices: \texorpdfstring{$\G{sl}_2$}{sl2} case} \label{sec: sl2}
The~\emph{large-colour $R$-matrix} studied in~\cite{Park2, Park3} is the~large-colour limit of the~$R$-matrix for the~$n$-coloured Jones polynomials. Explicitly, they are given by\footnote{Here, $\qbin{n}{k}_q := \frac{[n]_q!}{[k]_q![n-k]_q!}$ where $[n]_q := \frac{1-q^n}{1-q}$.}
\begin{align}
\check{R}(x)_{i,j}^{i',j'} &= \delta_{i+j,i'+j'} q^{\frac{j+j'+1}{2}}x^{-\frac{j+j'+1}{2}} q^{jj'} \qbin{i}{j'}_q \prod_{1\leq l\leq i-j'}(1-q^{j+l}x^{-1}), \label{eq: R matrix sl2}\\
\check{R}^{-1}(x)_{i,j}^{i',j'} &= \delta_{i+j,i'+j'} q^{-\frac{i+i'+1}{2}}x^{\frac{i+i'+1}{2}}q^{-ii'}\qbin{j}{i'}_{q^{-1}} \prod_{1\leq l\leq j-i'}(1-q^{-i-l}x).
\nonumber
\end{align}
The~$R$-matrix and its inverse correspond to the~positive and the~negative crossing as in the~figure below. 
\begin{figure}[ht]
    \centering
    \includegraphics[scale=0.6]{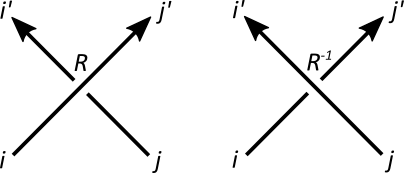}
    \caption{Positive and negative crossings}
    \label{fig:RandRinv}
\end{figure}
Throughout this section, as in Figure \ref{fig:RandRinv}, we use the~indices $i,j,i',j'$ to denote the~segment on the~bottom left, bottom right, top left, and the~top right of the~crossing, respectively. 

Since we can expand $q$-Pochhammer symbols using the~$q$-binomial theorem, it is easy to see that the~$R$-matrix can be written as a~sum of (monomial times) $q$-multinomial coefficients. This observation is very useful in finding a~quiver expression for $F_K$. 
In fact, we have the~following theorem:
\begin{thm} \label{thm: Pos Braid Knot}
For any positive braid knot $K$, there is an~algorithm to produce a~quiver form of $F_K(x,q)$ from the~$R$-matrix state sum. 
\end{thm}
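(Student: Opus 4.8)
The plan is to make the $R$-matrix state sum completely explicit and then recognize it as a quiver generating series term by term. First I would recall that for a positive braid word $\beta$ whose closure is $K$, the reduced $F_K(x,q)$ (in the large-colour limit) is computed by contracting copies of the large-colour $R$-matrix \eqref{eq: R matrix sl2} along the braid, one factor per crossing, with the strand labels running over non-negative integers and a single open strand carrying the variable $x$. Because $K$ is a positive braid knot, only $\check R$ (not $\check R^{-1}$) appears, so every factor has the shape
\[
q^{\frac{j+j'+1}{2}}x^{-\frac{j+j'+1}{2}} q^{jj'} \qbin{i}{j'}_q \prod_{1\leq l\leq i-j'}(1-q^{j+l}x^{-1}),
\]
together with the Kronecker delta $\delta_{i+j,i'+j'}$ enforcing conservation of the total label at each crossing.

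Next I would expand each of the two ``non-polynomial'' ingredients in every $R$-matrix factor into manifestly quiver-type pieces. The $q$-binomial $\qbin{i}{j'}_q = \frac{(q)_i}{(q)_{j'}(q)_{i-j'}}$ is split by introducing a new summation index (the complementary part $i-j'$), and the finite product $\prod_{l=1}^{i-j'}(1-q^{j+l}x^{-1})$ is a truncated $q$-Pochhammer $(q^{j+1}x^{-1};q)_{i-j'}$, which I expand using the $q$-binomial theorem / Lemma 4.5 of \cite{KRSS2} (equation \eqref{eq:qpoch-sum-general}) or the simpler expansion \eqref{eq:new expansion} from Section~\ref{sec:41}, introducing one or two further non-negative summation indices per crossing. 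After this expansion, the contribution of each crossing is a monomial in $x$ and $q$ times $q^{(\text{quadratic in the new indices})}$ divided by a product of $(q)_{d_i}$'s. The key point is that the delta functions at the crossings let one eliminate redundant labels, so the total state sum becomes a sum over a finite tuple of non-negative integers $\boldsymbol d = (d_1,\dots,d_m)$ of the form $(-q^{1/2})^{\boldsymbol d\cdot C\cdot\boldsymbol d}\prod_i x_i^{d_i}/(q)_{d_i}$ with $x_i = x^{n_i}q^{l_i}$ — which is exactly a quiver generating series \eqref{PC}. Reading off $C$ from the coefficients of the quadratic $q$-powers, and $n_i,l_i$ from the linear data, gives the algorithm.

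The main obstacle I expect is bookkeeping of the quadratic form and proving that it is \emph{symmetric} (so that $C$ is a genuine adjacency matrix of a symmetric quiver) and \emph{integer}. The cross-terms $q^{jj'}$, the cross-terms $2\beta_{i+1}(d_1+\dots+d_i)$ coming from \eqref{eq:qpoch-sum-general}, and the $-ki$-type terms in \eqref{eq:new expansion} all contribute off-diagonal entries, and one must check after eliminating the constrained labels (via the $\delta_{i+j,i'+j'}$) that the resulting bilinear form can be symmetrized at the cost of only linear corrections in $q$ — which is automatic since $q^{ab}=q^{\frac12 ab}q^{\frac12 ba}$, but tracking it carefully across all crossings and all auxiliary expansions is where the real work lies. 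A secondary subtlety is organizing the induction on crossings: I would set up the contraction so that after processing the first $c$ crossings one has a partial state sum already in ``quiver-with-a-few-open-indices'' form, and show the next crossing only adds finitely many new nodes and modifies the matrix by the explicit local rules above; a clean way to phrase this is as a composition of the elementary building blocks, each an identity of the type \eqref{eq:qpoch-sum-general}–\eqref{eq:new expansion}, so that the final $C$ is assembled blockwise. Convergence/well-definedness of the $x$-adic sum is inherited from the fact that each crossing of a positive braid raises the $x$-degree, as used already for $(2,2p+1)$ torus knots in Section~\ref{sec:Examples}.

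Finally I would remark that the same argument works verbatim for the symmetrically-coloured $\mathfrak{sl}_N$ invariant (hence, conjecturally, for the $a$-deformation) because the large-colour $R$-matrix has an $a$-deformed analogue of the same shape — a monomial times an $a$-deformed $q$-binomial times an $a$-deformed truncated $q$-Pochhammer — so the expansion steps \eqref{eq:qpoch-sum-general} and \eqref{eq:new expansion} apply with $x$ replaced in places by $xa$, producing extra nodes with $a_i\neq 0$. This is what underlies the statement, later in the paper, that in explicit examples the $\mathfrak{sl}_2$ quiver lifts to the $a$-deformed one.
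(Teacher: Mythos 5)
Your overall strategy is the same as the paper's: write out the positive-braid state sum, expand each $R$-matrix factor via the $q$-binomial theorem and Lemma~4.5 of~\cite{KRSS2}, and read off a symmetric quiver from the resulting quadratic $q$-powers. The symmetrization and convergence worries you flag are indeed the routine part: symmetry is automatic from $q^{ab}=q^{\frac12 ab}q^{\frac12 ba}$, and convergence follows because the $x$-degree is a strictly negative linear form in the free summation variables.

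The concrete gap is in your treatment of the $q$-binomial coefficients $\qbin{i}{j'}_q$. Writing $\qbin{i}{j'}_q=\frac{(q)_i}{(q)_{j'}(q)_{i-j'}}$ and ``introducing the complementary part'' as a new index still leaves the numerator $(q)_i$, where $i$ is itself a sum of labels coming from \emph{other} crossings; a single such factor is not of quiver form, and Lemma~4.5 does not apply to it directly. The paper's proof resolves this with a specific global choice of free parameters that your proposal does not identify: partition the `$j'$'-segments into chains running along each diagonal over-strand, and take as summation variables the successive differences $m_{s_k}=n_{s_k}-n_{s_{k-1}}\ge 0$ of the weights along each chain. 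With this choice the product of all the binomials $\qbin{i}{j'}_q$ along one over-strand telescopes to a single ratio $\frac{(q^{\,i-m_{s_1}-\cdots-m_{s_l}+1})_{m_{s_1}+\cdots+m_{s_l}}}{(q)_{m_{s_1}}\cdots(q)_{m_{s_l}}}$, which is exactly the shape Lemma~4.5 handles; the same parametrization simultaneously gives non-negative summation ranges, shows every internal weight is an $\mathbb{N}$-linear combination of the $m_s$, and yields convergence. Your proposed crossing-by-crossing induction could in principle substitute for this, but as written you have not verified that the local step closes up (i.e.\ that the leftover $(q)_i$ numerators from one crossing are absorbed by the next), and that is precisely where the telescoping along over-strands is doing the work in the paper.
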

\begin{proof}
We assume that the~reader is familiar with the~$R$-matrix state sum described in~\cite{Park2}. We set the~weight of the~open strand to be $0$. Let $c$ be the~number of crossings. The~number of internal segments is $\frac{4c-2}{2} = 2c-1$. The~number of conditions `$i+j=i'+j'$' is $c$, but only $c-1$ of them are independent. Therefore, there are (at most) $c$ free parameters in choosing the~weights of a~state. Our first step is to find a~nice set of free parameters. 

Let us focus on the~segments that correspond to `$j'$' for some crossing (i.e. the~top right segment of a~crossing). Let $X$ be the~set of such segments. It is in one-to-one correspondence with the~set of crossings. Obviously, $|X|=c$ and $X$ is naturally partitioned into those that are connected to form a~diagonal over-strand (chain of segments) from bottom left to top right. Let us write $X = \sqcup_{\alpha \in I} X_\alpha$ with some index set $I$. In each group $\alpha$, let us say $X_{\alpha} = \{s_1^{\alpha},s_2^{\alpha},\ldots,s_{l_\alpha}^{\alpha}\}$ where $s_1^{\alpha}$ is the~top right segment, $s_2^{\alpha}$ is the~one right below it, etc., and $l_\alpha$ is the~length of the~chain of segments in $X_{\alpha}$; That is, $s_2^{\alpha}$ is `$i$' of a~crossing for which $s_1^{\alpha}$ is `$j'$', and so on. 
Let $n_{s_i^{\alpha}}$ denote the~weight associated to the~segment $s_i^{\alpha}$. (By our definition, each $n_{s_i^{\alpha}}$ is `$j'$' for some crossing.) Clearly $0\leq n_{s_1^{\alpha}} \leq n_{s_2^{\alpha}} \leq \ldots \leq n_{s_{l_\alpha}^{\alpha}}$. The~state sum can be expressed as a~summation over $n_{s}$ for all $s\in X$ (or more precisely, for all $s$ except those whose has to be $0$ because of the~weight $0$ on the~open strand). This is because all the~right-most segments are in $X$, and from there, column by column, all the~other weights are uniquely fixed. Let $m_{s_k^{\alpha}} = n_{s_k^{\alpha}} - n_{s_{k-1}^{\alpha}}$ for $k=2,\ldots,l_{\alpha}$ and $m_{s_1^{\alpha}} = n_{s_1^{\alpha}}$. (See Figure \ref{fig:quivers_from_Rmatrices_proof1} for an~illustration.)
\begin{figure}[ht]
    \centering
    \includegraphics[scale = 0.7]{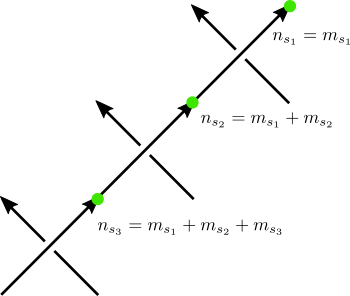}
    \caption{The~preferred parameters $m_{s_k}$ for $\alpha\in I$}
    \label{fig:quivers_from_Rmatrices_proof1}
\end{figure}
With this change of variables, the~state sum can now be expressed as a~summation over $m_s$ with $m_s \geq 0$. It is easy to see that all the~internal weights which are either `$j$' or `$j'$' for some crossing, are $\mathbb{N}$-linear combinations of $\{m_s\}_{s\in X}$. In particular, in the~state sum the~degree of $x$ is bounded above by $-\sum_{s\in X} a_s m_s$ for some $a_s >0$, which in particular implies the~convergence of the~state sum. 

With this description of the~state sum at hand, it is easy to turn it into a~quiver form. Recall that
\begin{align*}
\check{R}(x)_{i,j}^{i',j'} &= \delta_{i+j,i'+j'} q^{\frac{j+j'+1}{2}}x^{-\frac{j+j'+1}{2}} q^{jj'} \qbin{i}{j'}_q\sum_{k=0}^{i-j'}q^{\frac{k(k+1)}{2}}\qbin{i-j'}{k}_q(-q^jx^{-1})^k\\
&= \delta_{i+j,i'+j'} q^{\frac{j+j'+1}{2}}x^{-\frac{j+j'+1}{2}} q^{jj'} \qbin{i}{j'}_q\sum_{k\geq 0}(-1)^k q^{\frac{k(k+1)}{2}+jk}x^{-k}\frac{(q^{i-j'-k+1})_{k}}{(q)_k}.
\end{align*}
The~$q$-binomial coefficient involving $k$ is already in a~good form where we can use Lemma~4.5 of~\cite{KRSS2}, so let us focus on the~other $q$-binomial coefficients, $\qbin{i}{j'}_q$. For each $\alpha\in I$, the~product of the~$q$-binomial coefficients looks like
\[
\qbin{m_{s_1^{\alpha}}+m_{s_2^{\alpha}}}{m_{s_1^{\alpha}}}_q\qbin{m_{s_1^{\alpha}}+m_{s_2^{\alpha}}+m_{s_3^{\alpha}}}{m_{s_1^{\alpha}}+m_{s_2^{\alpha}}}_q \cdots \qbin{i}{m_{s_1^{\alpha}}+\cdots+m_{s_{l_\alpha}^{\alpha}}}_q,
\]
where $i$ denotes the~`$i$' of the~crossing whose `$j'$' is $n_{s_{l_\alpha}^\alpha}$. This product of $q$-binomials can be simplified as
\[
\frac{(q^{i-m_{s_1}-\cdots-m_{s_{l}}+1})_{m_{s_1}+\cdots+m_{s_{l}}}}{(q)_{m_{s_1}}\cdots (q)_{m_{s_{l}}}},
\]
where we have dropped $\alpha$ for simplicity of notation. In this form, the~application of the~Lemma~4.5 finishes the~task.
All in all, we get a~quiver form of $F_K(x,q)$ with at most $4c$ nodes. In practice, it will be smaller than that, because there can be many segments whose weight has to be $0$ because of the~weight $0$ on the~open strand. 
\end{proof}

This theorem can be extended to homogeneous braid knots as well, using the~inverted state sum~\cite{Park3}. 

\begin{thm} \label{thm: Homogenous Braid Knot}
For any homogeneous braid knot $K$, there is an~algorithm to produce a~quiver form of $F_K(x,q)$ from the~inverted state sum. 
\end{thm}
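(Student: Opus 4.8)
The plan is to reduce the homogeneous braid case to the positive braid case treated in Theorem~\ref{thm: Pos Braid Knot} by carefully combining the $R$-matrix state sum with the inverted state sum of~\cite{Park3}. Recall that a braid is \emph{homogeneous} if each generator $\sigma_i$ that appears does so with a fixed sign; thus the braid word partitions into ``positive blocks'' (where only $\check{R}$ appears) and ``negative blocks'' (where only $\check{R}^{-1}$ appears). First I would set up the state sum exactly as in the proof of Theorem~\ref{thm: Pos Braid Knot}: assign the weight $0$ to the open strand, label internal segments, and impose the conservation conditions $i+j=i'+j'$ at each crossing. The key structural observation is that at a positive crossing the natural free parameter is the top-right index $j'$ (the over-strand index propagating up-right), whereas at a negative crossing, inspection of $\check{R}^{-1}(x)_{i,j}^{i',j'}$ shows the natural free parameter is the top-left index $i'$ (the over-strand now propagates up-left). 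So I would partition the set of over-strand segments into chains exactly as before, but now a chain can have ``positive portions'' threading through positive crossings and ``negative portions'' threading through negative crossings.

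Next I would introduce the difference variables $m_s \ge 0$ along each chain, generalizing the construction in Figure~\ref{fig:quivers_from_Rmatrices_proof1}, so that the monotonicity $0 \le n_{s_1} \le n_{s_2} \le \cdots$ holds segment-by-segment within each homogeneous portion; at a sign change along a chain one simply starts a new monotone sub-chain, introducing a new non-negative increment there. With this choice, every internal weight is again an $\mathbb{N}$-linear combination of the $m_s$'s, and the state sum becomes a sum over $m_s \ge 0$. Then, just as in Theorem~\ref{thm: Pos Braid Knot}, each $\check{R}$ contributes a product of $q$-binomials that telescopes along the chain into a ratio of the form $\dfrac{(q^{i - m_{s_1} - \cdots - m_{s_l} + 1})_{m_{s_1}+\cdots+m_{s_l}}}{(q)_{m_{s_1}}\cdots(q)_{m_{s_l}}}$ plus a single extra $q$-binomial in a new variable $k$ from the $q$-binomial theorem expansion of the $\prod_l(1-q^{j+l}x^{-1})$ factor. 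For $\check{R}^{-1}$ one gets an entirely parallel expansion: $\qbin{j}{i'}_{q^{-1}}$ telescopes along the negative portion of the chain, and $\prod_{1\le l\le j-i'}(1-q^{-i-l}x)$ expands into a geometric-type sum contributing one more variable. In all cases the Pochhammer ratios are brought to quiver form by Lemma~4.5 of~\cite{KRSS2} (our equation~\eqref{eq:qpoch-sum-general}). Collecting all summation variables, one reads off a symmetric matrix $C$ from the quadratic $q$-exponents, a vector $\Bf{n}$ from the $x$-exponents, and a vector $\Bf{l}$ from the linear $q$-exponents, giving the desired quiver form of $F_K(x,q)$.

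The step I expect to be the main obstacle is the convergence and well-definedness of the \emph{inverted} contributions. The inverted state sum of~\cite{Park3} resums the negative crossings by extending the relevant binomial/Habiro-type coefficients to negative values of the summation index, so one must check two things: that the $x$-degree is still bounded above by $-\sum_s a_s m_s$ with all $a_s > 0$ (so the total sum converges $x$-adically), and that the resulting coefficients are honest elements of $\mathbb{Z}[[q^{\pm 1}]]$ after normalization — i.e., that the negative-direction recursion terminates order-by-order in $x$. In the purely positive case this was automatic because each $\check{R}$-factor strictly lowered the $x$-degree; for negative crossings $\check{R}^{-1}$ raises the $x$-degree by a positive-definite amount, so the sign bookkeeping of which variables push $x^{\pm}$ must be done globally across the whole braid word, using homogeneity to guarantee that the positive and negative contributions cannot conspire to make the net $x$-grading bounded in the wrong direction. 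I would handle this by tracking, for each $m_s$, the total signed $x$-exponent it carries through the diagram and showing homogeneity forces it to have a definite sign; this is the analogue of the estimate ``degree of $x$ is bounded above by $-\sum_{s\in X} a_s m_s$'' in the proof of Theorem~\ref{thm: Pos Braid Knot}, and it is where the homogeneity hypothesis is genuinely used. Once that estimate is in hand, the rest is the same bookkeeping — expand every $q$-binomial, apply Lemma~4.5, and record the quiver data — so the algorithm is constructive with at most $O(c)$ nodes, $c$ the number of crossings.
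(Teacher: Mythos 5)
Your overall architecture matches the paper's: partition the over-strand segments into diagonal chains, introduce non-negative increment variables $m_s$, bound the $x$-grading to get convergence, telescope the $q$-binomials along each chain, and finish with Lemma~4.5 of~\cite{KRSS2}. But there is a genuine gap at exactly the step you flag as the main obstacle, and your proposed fix does not work. You keep all segment weights non-negative and hope that ``homogeneity forces the signed $x$-exponent to have a definite sign.'' It does not: with non-negative weights, $\check{R}^{-1}(x)_{i,j}^{i',j'}$ carries $x^{+\frac{i+i'+1}{2}}$ while $\check{R}$ carries $x^{-\frac{j+j'+1}{2}}$, so positive and negative crossings push the $x$-grading in opposite directions and the state sum is not a convergent power series in either $x$ or $x^{-1}$. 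The whole point of the \emph{inverted} state sum of~\cite{Park3} is to invert the summation range at negative crossings: the weights along a negative-crossing chain are taken to be \emph{negative} integers, $0>n_{s_1^{\beta}}\geq\cdots\geq n_{s_{l_\beta}^{\beta}}$ (with the convention $n_{s_0^{\beta}}=-1$ so that $m_{s_k^{\beta}}=-(n_{s_k^{\beta}}-n_{s_{k-1}^{\beta}})\geq 0$). With this inversion the total $x^{-1}$-degree is $\sum_{\alpha}\sum_k n_{s_k^{\alpha}}-\sum_{\beta}\sum_k n_{s_k^{\beta}}$, which is a non-negative combination of the $m_s$, and absolute convergence follows. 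Your device of ``starting a new monotone sub-chain at a sign change'' also misreads the geometry: the over-strand at a positive crossing runs up-right and at a negative crossing runs up-left, so each diagonal chain is automatically single-signed and no chain ever changes sign; the two families of chains are indexed separately ($I$ and $J$ in the paper).

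Once the weights at negative crossings are allowed to be negative, a second piece of work appears that your proposal omits: the $R$-matrix entries must be analyzed in the mixed-sign regimes ($j'\geq 0>i$ for positive crossings and $i'\geq 0>j$ for negative ones), where the finite product $\prod_l(1-q^{j+l}x^{-1})$ is replaced by an inverse product $1/\prod_l(1-q^{j-l}x^{-1})$ whose expansion produces $q$-binomials with negative upper entries. These are brought back to the telescoping positive form via $\qbin{-1-n}{k}_q=(-1)^k q^{-nk-\frac{k(k+1)}{2}}\qbin{n+k}{n}_q$, after which Lemma~4.5 applies as in the positive case. Without the inversion of the summation range and this additional case analysis, the algorithm you describe does not terminate order-by-order in $x$, so the proof as written is incomplete.
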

\begin{proof}
We proceed similarly to the~proof of the~previous theorem. Given a~homogeneous braid, we focus on the~segments that correspond to `$j'$' for some positive crossing and the~segments that correspond to `$j$' for some negative crossing. Let $X$ be the~set of such segments. Again, $X$ is naturally partitioned into those that are connected to form a~diagonal over-strand. Let us write $X=\bigsqcup_{\alpha\in I}X_\alpha \sqcup \bigsqcup_{\beta\in J}X_\beta$, where $I$ is an~index set for the~diagonal over-strands from bottom left to top right, and $J$ is an~index set for the~diagonal over-strands from bottom right to top left. If we use the~same notation for the~weights as before, with $s_1$ denoting the~right-most segment, $s_2$ the~second right-most segment and so on,
\[
0\leq n_{s_1^\alpha} \leq \cdots \leq n_{s_{l_\alpha}^\alpha}\quad\text{for any }\alpha\in I
\quad
\text{and}
\quad
0> n_{s_1^\beta} \geq \cdots \geq n_{s_{l_\beta}^\beta} \quad\text{for any }\beta\in J.
\]
Dividing the~braid along vertical lines, we see by a~simple `conservation of weight' argument that the~overall $x^{-1}$-degree of a~given configuration is (ignoring the~constant part)
\[
\sum_{\alpha\in I}\sum_{1\leq k\leq l_\alpha}n_{s_k^\alpha} - \sum_{\beta\in J}\sum_{1\leq k\leq l_\beta}n_{s_k^\beta}.
\]
This shows first that the~inverted state sum converges absolutely, and also that we have a~set of elementary generators
\[
m_{s_k^\alpha} = n_{s_k^\alpha} - n_{s_{k-1}^\alpha},\quad \alpha\in I
\quad
\text{and}
\quad
m_{s_k^\beta} = -(n_{s_k^\beta} - n_{s_{k-1}^\beta}),\quad \beta\in J,
\]
where, by convention, we set $n_{s_0^\alpha} = 0$ and $n_{s_0^\beta} = -1$ (see Figure \ref{fig:quivers_from_Rmatrices_proof2} for an~illustration).
\begin{figure}[ht]
    \centering
    \includegraphics[scale = 0.7]{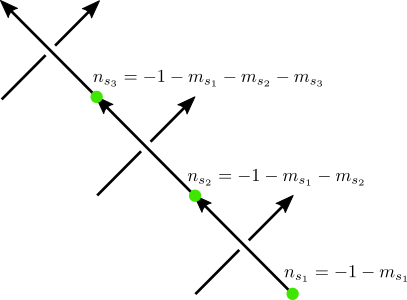}
    \caption{The~preferred parameters $m_{s_k}$ for $\beta\in J$}
    \label{fig:quivers_from_Rmatrices_proof2}
\end{figure}
Now recall that
\begin{align*}
\check{R}(x)_{i,j}^{i',j'} &= \begin{cases}\delta_{i+j,i'+j'} q^{\frac{j+j'+1}{2}}x^{-\frac{j+j'+1}{2}} q^{jj'} \qbin{i}{i-j'}_q \prod_{1\leq l\leq i-j'}(1-q^{j+l}x^{-1}) &
\begin{array}{l}
     \text{if } i\geq j'\geq 0  \\
     \text{or } 0>i\geq j', 
\end{array}
\\ \delta_{i+j,i'+j'} q^{\frac{j+j'+1}{2}}x^{-\frac{j+j'+1}{2}} q^{jj'} \qbin{i}{j'}_q \frac{1}{\prod_{0\leq l\leq j'-i-1}(1-q^{j-l}x^{-1})} &\text{if }j'\geq 0>i, \\ 0 &\text{otherwise},\end{cases}
\end{align*}
and
\begin{align*}
\check{R}^{-1}(x)_{i,j}^{i',j'} &= \begin{cases}\delta_{i+j,i'+j'} q^{-\frac{i+i'+1}{2}}x^{\frac{i+i'+1}{2}} q^{-ii'} \qbin{j}{j-i'}_{q^{-1}} \prod_{1\leq l\leq j-i'}(1-q^{-i-l}x) &
\begin{array}{l}
     \text{if } j\geq i'\geq 0  \\
     \text{or } 0>j\geq i', 
\end{array}
\\ \delta_{i+j,i'+j'} q^{-\frac{i+i'+1}{2}}x^{\frac{i+i'+1}{2}} q^{-ii'} \qbin{j}{i'}_{q^{-1}} \frac{1}{\prod_{0\leq l\leq i'-j-1}(1-q^{-i+l}x)} &\text{if }i'\geq 0>j, \\ 0 &\text{otherwise.}\end{cases}
\end{align*}
We have already analyzed positive crossings with $i\geq j'\geq 0$, so let us take a~look at $j'\geq 0 > i$. In that case
\begin{align*}
\check{R}(x)_{i,j}^{i',j'} &= \delta_{i+j,i'+j'} q^{\frac{j+j'+1}{2}}x^{-\frac{j+j'+1}{2}} q^{jj'} \qbin{i}{j'}_q \sum_{k\geq 0}\qbin{j'-i+k-1}{k}_{q^{-1}}(q^j x^{-1})^k\\
&= \delta_{i+j,i'+j'} q^{\frac{j+j'+1}{2}}x^{-\frac{j+j'+1}{2}} q^{jj'} \qbin{i}{j'}_q \sum_{k\geq 0}q^{(i'+1)k}x^{-k}\frac{(q^{j'-i})_k}{(q)_k}.
\end{align*}
The~$q$-binomial coefficient involving $k$ is already in a~good form where we can use Lemma 4.5. The~other $q$-binomial coefficient $\qbin{i}{j'}_q$ can be handled in the~same way we did in the~proof of the~previous theorem. 

Now that we are done with positive crossings, let us take a~look at negative crossings. When $0>j\geq i'$, we have
\begin{align*}
&\check{R}^{-1}(x)_{i,j}^{i',j'} = \delta_{i+j,i'+j'} q^{-\frac{i+i'+1}{2}}x^{\frac{i+i'+1}{2}} q^{-i'j'} \qbin{j}{j-i'}_{q} \sum_{k=0}^{j-i'}q^{-\frac{k(k+1)}{2}}\qbin{j-i'}{k}_{q^{-1}}(-q^{-i}x)^k\\
&\qquad = \delta_{i+j,i'+j'} q^{-\frac{i+i'+1}{2}}x^{\frac{i+i'+1}{2}} q^{-i'j'} \qbin{j}{j-i'}_{q} \sum_{k=0}^{j-i'}q^{-\frac{(j-i'-k)(j-i'-k+1)}{2}}\qbin{j-i'}{k}_{q^{-1}}(-q^{-i}x)^{j-i'-k}\\
&\qquad= \delta_{i+j,i'+j'} q^{-\frac{i+i'+1}{2}}x^{\frac{j+j'+1}{2}} q^{-i'j'} \qbin{j}{j-i'}_{q} \sum_{k\geq 0}(-1)^{j-i'-k}q^{-\frac{(j-i'-k)(i+j'+k+1)}{2}}x^{-k}\frac{(q^{j-i'-k+1})_k}{(q)_k}.
\end{align*}
The~$q$-binomial coefficient involving $k$ is already in a~good form, so we can focus on the~other one, $\qbin{j}{j-i'}_q$. For each $\beta\in J$, the~product of the~$q$-binomial coefficients look like
\[
\qbin{-1-m_{s_1^\beta}}{m_{s_2^\beta}}_q\qbin{-1-m_{s_1^\beta}-m_{s_2^\beta}}{m_{s_3^\beta}}_q\cdots \qbin{-1-m_{s_1^\beta}-\cdots - m_{s_{l_\beta}^\beta}}{i'}_q,
\]
where $i'$ denotes the~`$i'$' of the~crossing whose `$j$' is $n_{s_{l_\beta}^\beta}$. 
Using the~fact that 
\[
\qbin{-1-n}{k}_q = (-1)^k q^{-nk-\frac{k(k+1)}{2}}\qbin{n+k}{n}_q,
\]
we see that the~product of $q$-binomials can be simplified as (up to an~overall sign and $q$~power)
\[
\qbin{m_{s_1}+m_{s_2}}{m_{s_1}}_q \qbin{m_{s_1}+m_{s_2}+m_{s_3}}{m_{s_1}+m_{s_2}}_q\cdots \qbin{m_{s_1}+\cdots+m_{s_l}+i'}{m_{s_1}+\cdots+m_{s_l}}\\
= \frac{(q^{i'+1})_{m_{s_1}+\cdots+m_{s_l}}}{(q)_{m_{s_1}}\cdots (q)_{m_{s_l}}},
\]
where we have dropped $\beta$ for simplicity of notation. In this form we can apply Lemma 4.5. 

The~last case we need to consider is the~negative crossing with $i'\geq 0 > j$, for which we have
\begin{align*}
& \check{R}^{-1}(x)_{i,j}^{i',j'} = \delta_{i+j,i'+j'} q^{-\frac{i+i'+1}{2}}x^{\frac{i+i'+1}{2}} q^{-i'j'} \qbin{j}{i'}_{q} \frac{(-1)^{i'-j}q^{\frac{(i'-j)(i+j'+1)}{2}}x^{-(i'-j)}}{\prod_{0\leq l\leq i'-j-1}(1-q^{i-l}x^{-1})}\\
&\quad= \delta_{i+j,i'+j'} q^{-\frac{i+i'+1}{2}}x^{\frac{j+j'+1}{2}} q^{-i'j'} \qbin{j}{i'}_{q} \sum_{k\geq 0}(-1)^{i'-j}q^{\frac{(i'-j)(i+j'+1)}{2}} \qbin{i'-j+k-1}{k}_{q^{-1}}(-q^ix^{-1})^k \\
&\quad= \delta_{i+j,i'+j'} q^{-\frac{i+i'+1}{2}}x^{\frac{j+j'+1}{2}} q^{-i'j'} \qbin{j}{i'}_{q} \sum_{k\geq 0}(-1)^{i'-j+k}q^{\frac{(i'-j)(i+j'+1)}{2}+k(j'+1)} x^{-k}\frac{(q^{i'-j})_k}{(q)_k}.
\end{align*}
The~$q$-binomial coefficient involving $k$ is in a~good form where Lemma 4.5 can be applied. The~other $q$-binomial coefficient $\qbin{j}{i'}$ was already considered above. 

All in all, we have shown that for any homogeneous braid knot $K$, there is a~quiver form of $F_K(x,q)$, with the~size of the~quiver being at most $4c$. 
\end{proof}

\subsubsection{Right-handed trefoil} \label{Ex: Trefoil Quiver}
Let us take a~look at the~right-handed trefoil knot as our first example in this section. Take the~braid $\sigma_1^3$. The~reduced $F_{\mathbf{3}_1^r}(x,q)$ can be expressed as a~power series in $x^{-1}$ in the~following way:\footnote{If one wants to work with power series in $x$, one simply needs to replace $x^{-1}$ by $x$, thanks to Weyl symmetry.}
\begin{align*}
F_{\mathbf{3}_1^r}(x,q) &= \sum_{m\geq 0}x^{\frac{1}{2}}q^{-\frac{1}{2}-m}\check{R}(x)_{0,m}^{m,0}\check{R}(x)_{m,0}^{m,0}\check{R}(x)_{m,0}^{0,m}\\
&= x^{-1}q\sum_{m\geq 0}\sum_{k=0}^{m} (-1)^k x^{-m-k}q^{\frac{k(k+1)}{2}}\frac{(q)_m}{(q)_k(q)_{m-k}}\\
&= x^{-1}q\sum_{d_1,d_2\geq 0}(-1)^{d_1}x^{-2d_1-d_2}q^{\frac{d_1(d_1+1)}{2}}\frac{(q)_{d_1+d_2}}{(q)_{d_1}(q)_{d_2}}.
\end{align*}
Using Lemma 4.5 again, we can write
\begin{align*}
F_{\mathbf{3}_1^r}(x,q)&= x^{-1}q \sum_{\alpha_1,\beta_1,\alpha_2,\beta_2\geq 0}(-1)^{\alpha_1+\beta_1}x^{-2(\alpha_1+\beta_1)-(\alpha_2+\beta_2)}q^{\frac{(\alpha_1+\beta_1)(\alpha_1+\beta_1+1)}{2}}\\
&\qquad\qquad\times \frac{(-1)^{\alpha_1+\alpha_2}q^{\frac{\alpha_1(\alpha_1+1)}{2}+\frac{\alpha_2(\alpha_2+1)}{2}+\alpha_2(\alpha_1+\beta_1)}}{(q)_{\alpha_1}(q)_{\beta_1}(q)_{\alpha_2}(q)_{\beta_2}}\\
&= x^{-1}q \sum_{\Bf{d}}(-q^{\frac{1}{2}})^{\Bf{d}\cdot C\cdot \Bf{d}} \frac{x^{\Bf{n}\cdot\Bf{d}}q^{\Bf{l}^{\mathfrak{sl}_2}\cdot \Bf{d}}}{(q)_{\Bf{d}}},
\end{align*}
where
\[
C = 
\begin{pmatrix}
2 & 1 & 1 & 0 \\
1 & 1 & 1 & 0 \\
1 & 1 & 1 & 0 \\
0 & 0 & 0 & 0
\end{pmatrix}, \qquad\qquad \Bf{n} = (-2,-2,-1,-1), \qquad \qquad
\Bf{l}^{\mathfrak{sl}_2} = \left(1,\frac{1}{2},\frac{1}{2},0\right).
\]
Note that we could have used Lemma 4.5 in a~slightly different way, in which case we would have $\alpha_1(\alpha_2+\beta_2)$ instead of $\alpha_2(\alpha_1+\beta_1)$, and the~corresponding quiver matrix would be
\[
C' = 
\begin{pmatrix}
2 & 1 & 1 & 1 \\
1 & 1 & 0 & 0 \\
1 & 0 & 1 & 0 \\
1 & 0 & 0 & 0
\end{pmatrix}.
\]
In general, there are many ways to apply the~Lemma 4.5 which lead to different but equivalent quivers -- for a~thorough study of this phenomenon see~\cite{Jankowski:2021flt}. 

Looking at the~first few coefficients, we can easily upgrade this to the~$a$-deformed version. Up to a~prefactor, we have
\[
F_{\mathbf{3}_1^r}(x,a,q) = \sum_{\Bf{d}}(-q^{\frac{1}{2}})^{\Bf{d}\cdot C\cdot \Bf{d}} \frac{x^{\Bf{n}\cdot\Bf{d}}a^{\Bf{a}\cdot \Bf{d}}q^{\Bf{l}\cdot \Bf{d}}}{(q)_{\Bf{d}}},
\]
where
\[
\Bf{n} = (2,2,1,1),\qquad\qquad
\Bf{a} = (2,1,1,0),\qquad\qquad
\Bf{l} = \left(-3,-\frac{3}{2},-\frac{3}{2},0\right).
\]

\subsubsection{Figure-eight knot}
As our next example, let us take a~look at the~figure-eight knot, given as the~closure of the~braid $\sigma_1\sigma_2^{-1}\sigma_1\sigma_2^{-1}$. 
In terms of the~power series in $x^{-1}$ we have
\begin{align*}
F_{\mathbf{4}_1}(x,q) &= -\sum_{\substack{m\geq 0\\ k<0}}x q^{-1-m-k}\check{R}(x)_{0,m}^{m,0}\check{R}(x)_{m,0}^{0,m}\check{R}^{-1}(x)_{0,k}^{0,k}\check{R}^{-1}(x)_{m,k}^{m,k}\\
&= -x^{-1}\sum_{m,k,a,b\geq 0}x^{-m-2k-a-b}q^{-k^2-ak-bk}\frac{(q)_{m+k+b}}{(q)_{m}(q)_{k}(q)_{b}}\frac{(q)_{k+a}}{(q)_k(q)_a}\\
&\stackrel{\text{lemma 4.5}}{=} -x^{-1}\sum_{\alpha_1,\beta_1,\alpha_2,\beta_2,\alpha_3,\beta_3,\alpha_4,\beta_4\geq 0}x^{-(\alpha_1+\beta_1)-2(\alpha_2+\beta_2)-(\alpha_4+\beta_4)-(\alpha_3+\beta_3)}\\
&\qquad\qquad\times q^{-(\alpha_2+\beta_2)^2-(\alpha_4+\beta_4)(\alpha_2+\beta_2)-(\alpha_3+\beta_3)(\alpha_2+\beta_2)}\\
&\qquad\qquad\times \frac{(-1)^{\alpha_1+\alpha_2+\alpha_3}q^{\frac{\alpha_1(\alpha_1+1)}{2}+\frac{\alpha_2(\alpha_2+1)}{2}+\frac{\alpha_3(\alpha_3+1)}{2}+\alpha_2(\alpha_1+\beta_1)+\alpha_3(\alpha_1+\beta_1+\alpha_2+\beta_2)}}{(q)_{\alpha_1}(q)_{\beta_1}(q)_{\alpha_2}(q)_{\beta_2}(q)_{\alpha_3}(q)_{\beta_3}}\\
&\qquad\qquad\times \frac{(-1)^{\alpha_4}q^{(\alpha_2+\beta_2+1)\alpha_4}q^{\frac{\alpha_4(\alpha_4-1)}{2}}}{(q)_{\alpha_4}(q)_{\beta_4}}\\
&= -x^{-1}\sum_{\Bf{d}}(-q^{\frac{1}{2}})^{\Bf{d}\cdot C\cdot \Bf{d}} \frac{x^{\Bf{n}\cdot\Bf{d}}q^{\Bf{l}^{\mathfrak{sl}_2}\cdot \Bf{d}}}{(q)_{\Bf{d}}},
\end{align*}
where
\[
C = 
\begin{pmatrix}
1 & 0 & 1 & 0 & 1 & 0 & 0 & 0 \\
0 & 0 & 1 & 0 & 1 & 0 & 0 & 0 \\
1 & 1 & -1 & -2 & 0 & -1 & 0 & -1 \\
0 & 0 & -2 & -2 & 0 & -1 & 0 & -1 \\
1 & 1 & 0 & 0 & 1 & 0 & 0 & 0 \\
0 & 0 & -1 & -1 & 0 & 0 & 0 & 0 \\
0 & 0 & 0 & 0 & 0 & 0 & 1 & 0 \\
0 & 0 & -1 & -1 & 0 & 0 & 0 & 0
\end{pmatrix},
\qquad
\begin{array}{l}
     \Bf{n} = (-1,-1,-2,-2,-1,-1,-1,-1),  \\
     \\
     \Bf{l}^{\mathfrak{sl}_2} = \left(\frac{1}{2},0,\frac{1}{2},0,\frac{1}{2},0,\frac{1}{2},0\right). 
\end{array}
\]
Again, there are many different ways to apply Lemma 4.5 which produce slightly different but equivalent quivers. 

Looking at the~first few coefficients, we can easily upgrade this to the~$a$-deformed version. Up to a~prefactor, we have
\[
F_{\mathbf{4}_1}(x,a,q) = \sum_{\Bf{d}}(-q^{\frac{1}{2}})^{\Bf{d}\cdot C\cdot \Bf{d}} \frac{x^{\Bf{n}\cdot\Bf{d}} a^{\Bf{a}\cdot \Bf{d}}q^{\Bf{l}\cdot \Bf{d}}}{(q)_{\Bf{d}}},
\]
where
\begin{align*}
\Bf{n} &= (1,1,2,2,1,1,1,1),\qquad
\Bf{a} = (1,0,1,0,1,0,1,0),\qquad
\Bf{l} = \left(-\frac{3}{2},0,-\frac{3}{2},0,-\frac{3}{2},0,-\frac{3}{2},0\right).
\end{align*}

\subsubsection{\texorpdfstring{$\mathbf{6}_2$}{62} knot}
Using the~braid presentation $\sigma_1\sigma_2^{-1}\sigma_1\sigma_2^{-3}$, we obtain
\begin{align*}
F_{\mathbf{6}_2}(x,q) &= -\sum_{\substack{m\geq 0 \\ k, k_1, k_2 < 0}}x q^{-1-m-k} \check{R}(x)_{0,m}^{m,0}\check{R}(x)_{m,0}^{0,m}\check{R}^{-1}(x)_{0,k}^{0,k}\\
&\qquad\qquad \times \check{R}^{-1}(x)_{m,k}^{m+k-k_1,k_1}\check{R}^{-1}(x)_{m+k-k_1,k_1}^{m+k-k_2,k_2}\check{R}^{-1}(x)_{m+k-k_2,k_2}^{m,k}
\\
&= -q^{-1}x^{-2}\sum_{m,k,k_1,k_2,a,b,c,d\geq 0}(-1)^{k_1+k_2}x^{-m-2k-k_1-k_2-a-b-c-d}\\
&\qquad\qquad \times q^{\frac{k_1(k_1-1)}{2}+\frac{k_2(k_2-1)}{2}-k k_1-k k_2-k_1 k_2-a k_1-b k_2-c k - d k}\\
&\qquad\qquad \times \frac{(q)_{m+k_1+a}(q^{m-k+k_1+1})_{k_2+b}(q^{m-k+k_2+1})_{k+c}(q^{k+1})_{d}}{(q)_{m}(q)_{k}(q)_{k_1}(q)_{k_2}(q)_a(q)_b(q)_c(q)_d}\\
&\stackrel{\text{lemma 4.5}}{=} -q^{-1}x^{-2}\sum_{\Bf{d}}(-q^{\frac{1}{2}})^{\Bf{d}\cdot C\cdot \Bf{d}} \frac{x^{\Bf{n}\cdot\Bf{d}}q^{\Bf{l}^{\mathfrak{sl}_2}\cdot \Bf{d}}}{(q)_{\Bf{d}}},
\end{align*}
where
\[
C = 
\begin{pmatrix}
 1 & 0 & 1 & 0 & 1 & 0 & 1 & 0 & 1 & 0 & 1 & 0 & 1 & 0 & 0 & 0 \\
 0 & 0 & 1 & 0 & 1 & 0 & 1 & 0 & 1 & 0 & 1 & 0 & 1 & 0 & 0 & 0 \\
 1 & 1 & -1 & -1 & -1 & -1 & -1 & 0 & 0 & 0 & -1 & 0 & -1 & -1 & 0 & -1 \\
 0 & 0 & -1 & 0 & -1 & -1 & -2 & -1 & 0 & 0 & -1 & 0 & -1 & -1 & 0 & -1 \\
 1 & 1 & -1 & -1 & 2 & 1 & 0 & -1 & 0 & -1 & 1 & 0 & 0 & 0 & 0 & 0 \\
 0 & 0 & -1 & -1 & 1 & 1 & 0 & -1 & 0 & -1 & 1 & 0 & 0 & 0 & 0 & 0 \\
 1 & 1 & -1 & -2 & 0 & 0 & 2 & 1 & 0 & 0 & 0 & -1 & 1 & 0 & 0 & 0 \\
 0 & 0 & 0 & -1 & -1 & -1 & 1 & 1 & 0 & 0 & 0 & -1 & 1 & 0 & 0 & 0 \\
 1 & 1 & 0 & 0 & 0 & 0 & 0 & 0 & 1 & 0 & 0 & 0 & 0 & 0 & 0 & 0 \\
 0 & 0 & 0 & 0 & -1 & -1 & 0 & 0 & 0 & 0 & 0 & 0 & 0 & 0 & 0 & 0 \\
 1 & 1 & -1 & -1 & 1 & 1 & 0 & 0 & 0 & 0 & 1 & 0 & 0 & 0 & 0 & 0 \\
 0 & 0 & 0 & 0 & 0 & 0 & -1 & -1 & 0 & 0 & 0 & 0 & 0 & 0 & 0 & 0 \\
 1 & 1 & -1 & -1 & 0 & 0 & 1 & 1 & 0 & 0 & 0 & 0 & 1 & 0 & 0 & 0 \\
 0 & 0 & -1 & -1 & 0 & 0 & 0 & 0 & 0 & 0 & 0 & 0 & 0 & 0 & 0 & 0 \\
 0 & 0 & 0 & 0 & 0 & 0 & 0 & 0 & 0 & 0 & 0 & 0 & 0 & 0 & 1 & 0 \\
 0 & 0 & -1 & -1 & 0 & 0 & 0 & 0 & 0 & 0 & 0 & 0 & 0 & 0 & 0 & 0
\end{pmatrix},
\]
and
\begin{align*}
\Bf{n} &= (-1,-1,-2,-2,-1,-1,-1,-1,-1,-1,-1,-1,-1,-1,-1,-1),\\
\Bf{l}^{\mathfrak{sl}_2} &= \left(\frac{1}{2}, 0, \frac{1}{2}, 0, 0, -\frac{1}{2}, 0, -\frac{1}{2}, \frac{1}{2}, 0, \frac{1}{2}, 0, \frac{1}{2}, 0, \frac{1}{2}, 0\right).
\end{align*}
Upgrading this to the~$a$-deformed version, 
\begin{align*}
\Bf{a} &= (1,0,0,1,1,0,0,0,1,0,1,0,0,0,1,0),\\
\Bf{l} &= \left(-\frac{3}{2}, 0, \frac{1}{2}, -2, -2, -\frac{1}{2}, 0, -\frac{1}{2}, -\frac{3}{2}, 0, -\frac{3}{2}, 0, \frac{1}{2}, 0, -\frac{3}{2}, 0\right).
\end{align*}

\subsection{Quivers from \texorpdfstring{$R$}{R}-matrices: \texorpdfstring{$\G{sl}_N$}{slN} case} \label{sec: slN}

In order to extend~\cite{Park2, Park3} and the~discussion in Section \ref{sec: sl2} to symmetrically coloured $\G{sl}_N$, we need to find an~expression for the~$R$-matrix of the~$k$-th symmetric representation of $U_q(\G{sl}_N)$ similar to \eqref{eq: R matrix sl2}. In order to do this, we first briefly study $U_q(\G{sl}_N)$ and its symmetric representations. 

Recall that $\G{sl}_N$ is the~Lie algebra corresponding to the~$A_{N - 1}$ root system. Using the~usual dictionary order on $\m{C}^N$, the~simple roots of $A_{N - 1}$ are elements $e_i - e_{i + 1} \in \m{C}^N$ where $e_i$ is the~the~vector with $i$-th entry $1$ and all other entries $0$. The~Cartan matrix is
\[    A_{ij} = \begin{cases}
        2 & i = j,
        \\ -1 & |i - j| = 1,
        \\ 0 & |i - j| = 1,
    \end{cases}
\]
and both $\G{sl}_N$ and $U_q(\G{sl}_N)$ are constructed from this via the~Chevalley-Serre relations and their $q$-analogue. For each simple root $\alpha_i = e_i - e_{i + 1}$, let $\{X_i^+, X_i^-, K_i = q^{\frac{H_i}{2}}\}$ denote the~quantum $\G{sl}_2$ triple in $U_q(\G{sl}_N)$.

The~$k$-th symmetric representation has a~basis labelled\footnote{The~variables $a_0$ and $a_N$ are introduced for convenience.} by $\textbf{a} = (a_1, \cdots, a_{N - 1})$ with    $k = a_0 \geq a_1 \geq \cdots \geq a_{N-1} \geq a_N = 0$,
where the~generators act as
\[
	X_i^+ \ket{\textbf{a}}  = [a_i - a_{i+1}]_q \ket{\textbf{a} - e_i},
	\quad X_i^- \ket{\textbf{a}}  = [a_{i - 1} - a_i]_q \ket{\textbf{a} + e_i},
	\quad K_i \ket{\textbf{a}}  = q^{a_{i - 1} + a_{i + 1} - 2 a_i} \ket{\textbf{a}}.
\]
We leave out the~proof that this is actually a~representation as this is an~easy but tedious computation.
A general form for $U_q(\G{sl}_N)$ $R$ matrices is given in~\cite{Bur} and we can find the~corresponding specialisation to the~$k$-th symmetric representation using the~definitions of the~representation above.

The~$R$-matrix is an~infinite summation over $\textbf{r} = r_i^j$ with $1 \leq i \leq j \leq N - 1$ and each $r_i^j$ is a~non-negative integer. Define $\textbf{r}_{(i, j)}^{(k, l)}$ to be the~vector $(r_i^k, \cdots, r_i^l, r_{i + 1}^k, \cdots, r_j^l)$ and denote
	\[
		\textbf{r}^j  = \textbf{r}_{(1, j)}^{(j, j)} = (r_1^j, \cdots, r_j^j),
		\qquad\qquad \textbf{r}_j  = \textbf{r}_{(j, j)}^{(j, N - 1)} = (r_j^j, \cdots, r_j^{N - 1}).
	\]
	Finally, let $|\cdot |$ denote the~$l^1$ norm which, as our vectors are non-negative, is simply the~sum of the~entries and replace all appearances of $q^k$ by $x$. Then the~$R$-matrix is (up to a~constant prefactor):
	\begin{align*}
		R_{\G{sl}_N} \ket{\textbf{a}, \textbf{b}} & = \sum_{\textbf{r} > 0} \frac{(-1)^{|\textbf{r}|}q^{C_N} x^{-\frac{1}{2}(a_1 + b_1 + |\textbf{r}_1|)}(x q^{-b_1}; q^{-1})_{|\textbf{r}_1|}(q^{a_1 - a_2 + |\textbf{r}_2|}; q^{-1})_{|\textbf{r}^1|}}{(q)_{\textbf{r}}}
		\\ & \quad \quad \times \prod_{j = 2}^{N - 1} (q^{b_{j - 1} - b_j}; q^{-1})_{|\textbf{r}_j|}(q^{a_j - a_{j + 1} + |\textbf{r}_{j + 1}|}; q^{-1})_{|\textbf{r}^j|} \ket{\textbf{b}', \textbf{a}'},
	\end{align*}
	where
	\begin{align*}	
		C_N & = \frac{1}{2}\textbf{r} \cdot \textbf{r} + \textbf{a}\cdot M \cdot\textbf{b} +
		 \sum_{j = 1}^{N - 1} \frac{1}{4}|\textbf{r}^j|(a_{j + 1} + b_{j + 1} - 2) - \frac{1}{4} \left(\sum_{i = 2}^j r_i^j(a_{i - 1} + b_{i - 1})\right)
		\\ & \qquad + \sum_{i = 1}^j r_i^j\left(|\textbf{r}_{(i+1, j)}^{(j, N-1)}| + \frac{3}{4}(a_i - a_j) - \frac{1}{4}(b_i - b_j)\right),
		\\ M_{ij} & = \begin{cases}
			1 & i = j
			\\ -\frac{1}{2} & |i - j| = 1
			\\ 0 & \text{else.}
		\end{cases}, 
		\qquad a_i'  = a_i - |\textbf{r}_{(1, i)}^{(i, N - 1)}|,
		\qquad b_i'  = b_i + |\textbf{r}_{(1, i)}^{(i, N - 1)}|.
	\end{align*}
	While this formula is complicated, it is manageable for small values of $N$ and in certain cases can be studied with $N$ kept generic. It will produce a~series in $q$ and $x^{-1}$, as in order for the~summand to be nonzero, we require $a_1 > |\textbf{r}_1|$.
	The~key point is that all $q$ powers are quadratic, so for any positive braid knot this will give a~quiver form in an~identical manner to Theorem \ref{thm: Pos Braid Knot}. The~main difference is that all our variables and relations are $N - 1$ dimensional vectors and we have a~collection of extra free variables coming from the~summations over $\textbf{r}$'s. There is also an~analogue of Theorem \ref{thm: Homogenous Braid Knot} and an~extension of the~inverse state sum method to $\G{sl}_N$ but we do not discuss it here.
	
	Finally, in order to match the~above description up with \eqref{eq: R matrix sl2}, we define $R_{\textbf{a}, \textbf{b}}^{\textbf{b}', \textbf{a}'}$ by
\begin{equation}
		R_{\textbf{a}, \textbf{b}}^{\textbf{b}', \textbf{a}'} = \bra{\textbf{b}', \textbf{a}'} R \ket{\textbf{a}, \textbf{b}}.
\label{Rmatelement}
\end{equation}
	
	\subsubsection{The~Trefoil}

	The~trefoil knot is given by the~closure of the~braid $\sigma_1^3$. We colour the~top and bottom of the~leftmost strand by $\textbf{a} = \textbf{0}$ and close off the~right hand strand. This means that we need to take the~quantum trace over all possible labels $\textbf{b}$. Analysing possibilities, we actually have no other free variables, as labels must always decrease going left to right over a~positive crossing. Hence there are three $R$-matrices we need to consider. Going from bottom to top they are
	\[
		R_{\textbf{0}, \textbf{b}}^{\textbf{b}, \textbf{0}}, \qquad\qquad R_{\textbf{b}, \textbf{0}}^{\textbf{b}, \textbf{0}}, \qquad\qquad R_{\textbf{b}, \textbf{0}}^{\textbf{0}, \textbf{b}}.
	\]
	It turns out that for each of these cases, the~summation over $\textbf{r}$ collapses to a~single value. For $R_{\textbf{0}, \textbf{b}}^{\textbf{b}, \textbf{0}}$ and $R_{\textbf{b}, \textbf{0}}^{\textbf{0}, \textbf{b}}$, the~only nonzero terms corresponds to $\textbf{r} = 0$ and so
	\[
		R_{\textbf{0}, \textbf{b}}^{\textbf{b}, \textbf{0}} = x^{\frac{-b_1}{2}},
		\qquad\qquad
		R_{\textbf{b}, \textbf{0}}^{\textbf{0}, \textbf{b}}  = x^{\frac{-b_1}{2}}.
	\]
	Similarly, for $R_{\textbf{b}, \textbf{0}}^{\textbf{b}, \textbf{0}}$, we have a~collection of $q$-Pochhammers of the~form $(1; q^{-1})_{|\textbf{r}_j|}$ for $j > 1$. These are $0$ unless $|\textbf{r}_j| = 0$ so the~only $r_i^j$'s which can be non zero are the~$r_1^j$'s. Then the~relation between $\textbf{b}$ and $\textbf{b}'$ forces $r_1^j = b_j - b_{j+1}$. After fixing this, we find that most of the~$q$-Pochhammers are either $1$ or cancel, and thus
	\[
	R_{\textbf{b}, \textbf{0}}^{\textbf{b}, \textbf{0}} = (-1)^{b_1} q^{C_N} x^{-b_1}(x; q^{-1})_{b_1},
	\]
	where
	\begin{align*}
		C_N & = \sum_{j = 1}^{N - 1} \frac{1}{2} (b_j - b_{j + 1})^2 + \frac{1}{4}(b_j - b_{j + 1})(b_{j + 1} - 2) + \frac{3}{4}(b_j - b_{j + 1})(b_1 - b_j)
		\\ & = \sum_{j = 1}^{N - 1} \frac{1}{4}\big(3b_1(b_j - b_{j + 1}) - 2(b_j - b_{j + 1}) - (b_j^2 - b_{j + 1}^2)\big) = \frac{b_1^2 - b_1}{2}.
	\end{align*}
Finally, we need to deal with the~quantum trace over $\textbf{b}$. In the~general $U_q(\G{sl}_N)$ case, we find that the~trace factor will be $q^{-|\textbf{b}| - \frac{N k}{2}}$. We drop the~term $-\frac{N k}{2}$ in the~exponent (as it would correspond to a~simple prefactor), which leaves us with the~general formula for the~trefoil:
	\[
		F^{\mathfrak{sl}_N}_{3_1}(x,q) = \sum_{\textbf{b}} (-1)^{b_1} q^{\frac{1}{2}(b_1 - 1)b_1} q^{-|\textbf{b}|} x^{-2b_1} (x; q^{-1})_{b_1}.
	\]
We can switch from the~series in $x^{-1}$ to the~one in $x$ using Weyl symmetry $x^{-1} \to a x$:
	\[
		F^{\mathfrak{sl}_N}_{3_1}(x,q) = \sum_{\textbf{b}} (-1)^{b_1} q^{\frac{1}{2}(b_1 - 1)b_1} q^{-|\textbf{b}|} a^{2b_1} x^{2b_1} (a^{-1} x^{-1}; q^{-1})_{b_1}.
	\]
In order to transform this back into the~usual quiver form, we first perform the~summations over the~$b_i$ variables for $i > 1$. Recalling that $b_1 \geq b_2 \geq \cdots \geq b_{N - 1} \geq 0$, we find that
	\[
		\sum_{\textbf{b}} q^{-|\textbf{b}|} = \sum_{b_1 = 0}^{\infty} q^{b_1} a^{-b_1} \frac{(q^{b_1})_{N-2}}{(q)_{N-2}},
	\]
	and so
	\[
		F^{\mathfrak{sl}_N}_{3_1}(x,q) = \sum_{b_1 = 0}^{\infty} (-1)^{b_1} q^{\frac{1}{2}(b_1 + 1)b_1} a^{b_1} x^{2b_1} \frac{(q^{b_1})_{N-2}(a^{-1} x^{-1}; q^{-1})_{b_1}}{(q)_{N-2}}.
	\]
	Next we can expand the~$q$-Pochhammer $(a^{-1} x^{-1}; q^{-1})_{b_1}$ as
	\[
		(a^{-1} x^{-1}; q^{-1})_{b_1} = (a^{-1} x^{-1} q^{1 - b_1}; q)_{b_1} = \sum_{i = 0}^{b_1} (-1)^i q^{\frac{i(i - 1)}{2}} \frac{(q)_b}{(q)_i(q)_{b - i}} x^{-i} a^{-i} q^{i - b_1 i}.
	\]
	Substituting this in and letting $b_1 = i + j$, we get
	\[
		F^{\mathfrak{sl}_N}_{3_1}(x,q) = \sum_{i, j}^{\infty} (-1)^{j} q^{\frac{1}{2}(2i + j + j^2)} a^j x^{i + 2j} \frac{(q)_{i + j + N - 2}}{(q)_{N-2}(q)_{i}(q)_{j}}.
	\]
	To proceed, we make use of the~general identity
	\[
		\sum_{\alpha + \beta = d} (-1)^i \frac{q^{\frac{\alpha^2 + \alpha}{2}}}{(q)_{\alpha}(q)_{\beta}} = 1.
	\]
	This allows us to use Lemma 4.5 from~\cite{KRSS2} to get
	\begin{align*}
		F^{\mathfrak{sl}_N}_{3_1}(x,q) & = \sum_{\alpha_1, \alpha_2, \beta_1, \beta_2} \frac{(-1)^{\alpha_2 + \beta_2} q^{\frac{1}{2}(2(\alpha_1 + \beta_1) + (\alpha_2 + \beta_2) + (\alpha_2 + \beta_2)^2)} a^{\alpha_2 + \beta_2} x^{(\alpha_1 + \beta_1) + 2(\alpha_2 + \beta_2)}}{(q)_{\alpha_1}(q)_{\alpha_2}(q)_{\beta_1}(q)_{\beta_2}}
		\\ & \qquad \qquad \times (-q)^{\alpha_1 + \alpha_2} q^{\frac{1}{2}(\alpha_1^2 + \alpha_2^2)} q^{(N - 2)\alpha_1 + \alpha_2(N - 2 + \alpha_1 + \beta_1)} q^{-\frac{1}{2}(\alpha_1 + \alpha_2)}
		\\ & = \sum_{\Bf{d}=(d_1, d_2, d_3, d_4)} \frac{(-1)^{d_1 + d_4} q^{\frac{1}{2}\Bf{d}\cdot C\cdot\Bf{d} + \frac{1}{2}(-d_1 - 2d_1 + 2d_3 + d_4)} a^{d_1 + 2d_2 + d_4} x^{(d_1 + d_3) + 2(d_2 + d_4)}}{(q)_{d_1}(q)_{d_2}(q)_{d_3}(q)_{d_4}}
		\end{align*}
		with
		\[
		C  = \begin{pmatrix}
			1 & 1 & 0 & 0 \\
			1 & 2 & 1 & 1 \\
			0 & 1 & 0 & 0 \\
			0 & 1 & 0 & 1
		\end{pmatrix}.
	    \]
	Up to a~conventional choice (which can be fixed by $x \to q x$) and reordering variables, this is exactly the~$a$-deformed quiver from Section \ref{Ex: Trefoil Quiver}.

	In principle, this can be applied to other positive braid knots, though the~analysis will be more complicated for generic $N$. Fixing a~small value of $N$ though this will easily produce $F^{\G{sl}_3}_K$, $F^{\G{sl}_4}_K$, $F^{\G{sl}_5}_K,\ldots$ invariants and quiver forms for this class of knots.

\subsection{\texorpdfstring{$R$}{R}-matrices and braiding from brane configurations}

The~matrix elements \eqref{eq: R matrix sl2} and \eqref{Rmatelement} have a~natural physical interpretation in the~brane setup of Section \ref{sec:Brane constructions}. If a~knot (link) $K$ is represented as the~closure of a~braid $\beta$, each elementary transformation (crossing) represented by a~generator of the~braid group corresponds to an~elementary interface (domain wall) $\mathcal{D}_i$ in the~direction along which which the~braid is stretched. Following~\cite{CGR,GNSSS}, we parametrize this direction by $\sigma \in S^1$. When both the~rank and the~dimensions of representations colouring the~strands are finite-dimensional, the~interfaces can be described by matrix factorizations in topological Landau-Ginzburg models (see Figure~23 in~\cite{CGR} or Figure~9 in~\cite{GNSSS}).

Our setup in this paper is similar, except that strands carry infinite-dimensional representations, namely the~Verma modules of $U_q(\G{sl}_N)$. In the~brane configuration, this corresponds to replacing M2-branes -- that produce strands coloured by finite-dimensional representations -- with M5-branes. The~two are related: specializing $x = q^n$ corresponds to the~partial Higgsing implemented in the~brane geometry by a~version of the~Hanany-Witten effect that creates $n$ M2-branes stretched between M5 and M5$'$ branes (see loc. cit. and references therein). Since in this paper we are mostly interested in infinite-dimensional representations with arbitrary complex weights $\log x_i$, the~relevant brane configuration \eqref{Mdeformed} involves two sets of fivebranes: $N$ M5-branes that produce $SL(N, \mathbb{C})$ Chern-Simons theory on $S^3$, and additional M5$'$-branes supported on $L_K$ that produce the~strands of $K = L_K \cap S^3$ coloured by infinite-dimensional representations with highest weights $x_i \in \mathbb{C}^*$, $i=1, \ldots, \rho$.\footnote{In a~sense, the~most generic case is when $\rho = N$. In this case, the~two sets of fivebranes in \eqref{Mdeformed} can be reconnected in single smooth configuration supported on the~Lagrangian submanifold
$$
(S^3 \setminus \nu K) \cup_{T^2} (L_K \setminus \nu K)
$$
where, as above, $K = L_K \cap S^3$.}

In order to see how the~$R$-matrices and braiding are realized from the~perspective of 3d-3d correspondence, we need to focus on the~$\mathbb{R}^2 \times S^1$ part of the~spacetime in \eqref{Mdeformed}. This is the~three-dimensional space where 3d $\mathcal{N}=2$ theory $T[M_3]$ ``lives.'' Many aspects of this 3d $\mathcal{N}=2$ theory are understood quite well. It contains $T[S^3]$, which is a~relatively simple theory. The~additional degrees of freedom come a~codimension-two BPS defect in the~6d $(0,2)$ theory on M5-branes. According to \eqref{Mdeformed}, the~support of the~6d $(0,2)$ theory is $\mathbb{R}^2 \times S^1 \times S^3$, whereas the~support of a~codimension-two BPS defect is $\mathbb{R}^2 \times S^1 \times K$.\footnote{Note that $L_K$ is non-compact. If it was compact, 3d $\mathcal{N}=2$ theory would also contain a~non-trivial sector associated with the~modes of 6d $(0,2)$ theory compactified on $L_K$.} As before, we consider $K$ represented by a~closure of a~braid stretched, say, along the~great circle of $S^3$ (see Figure~23 in~\cite{CGR} or Figure~9 in~\cite{GNSSS}). Then the~two sets of the~fivebranes in \eqref{Mdeformed} share a~total of four dimensions: the~three dimensions of $\mathbb{R}^2 \times S^1$ and one dimension of $S^1 \subset S^3$, parametrized by $\sigma$.

\begin{figure}[ht]
	\centering
	\includegraphics[width=3.0in]{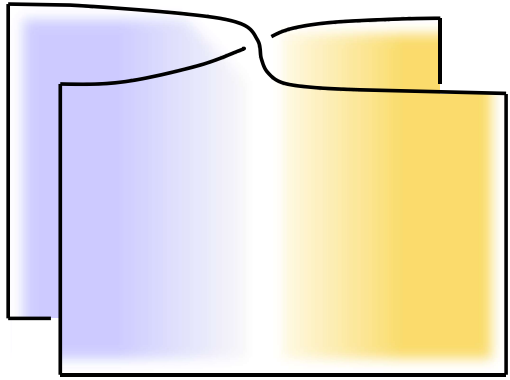}
	\caption{A half-BPS interface interpolating between two vacua of the~theory on a~stack of branes that represents the~$R$-matrix / braiding. Evaluating the~partition function of the~combined system in the~Omega-background along $\mathbb{R}^2_q$ gives the~matrix elements \eqref{eq: R matrix sl2} and \eqref{Rmatelement}.}
	\label{fig:Rinterface}
\end{figure}

The~degrees of freedom of this 4-dimensional theory are precisely the~degrees of freedom of the~codimension-two defect in 6d $(0,2)$ theory. The~3d $\mathcal{N}=2$ theory $T[M_K]$ is the~result of compactifying this 4d theory on the~great circle $S^1 \subset S^3$. This compactification is not entirely trivial, though, because the~strands of $K$ are braided as $\sigma \in S^1$ is traversed. One can imagine that the~4d theory is piecewise constant and undergoes sharp transitions where nontrivial braiding occurs (see Figure~\ref{fig:Rinterface}). These are precisely the~codimension-one interfaces representing the~$R$-matrices, and the~theory $T[M_K]$ is basically a~composition of these interfaces.

The~individual interfaces representing braiding and $R$-matrices preserve 4 real supersymmetries, just like the~3d $\mathcal{N}=2$ theory $T[M_K]$ itself. Along the~$\mathbb{R}^2$ directions of $\mathbb{R}^2 \times S^1$ there is Omega-background with the~equivariant parameter $q$. Therefore, the~matrix elements \eqref{eq: R matrix sl2} and \eqref{Rmatelement} have a~physical interpretation in terms of $\mathbb{R}^2_q \times S^1 \times \mathbb{R}$ partition function of the~4-dimensional defect in 6d $(0,2)$ theory with two different vacua at $+\infty$ and $-\infty$ along the~$\mathbb{R}$-direction. One can reduce this problem to the~study of interfaces in a~3d QFT by compactifying on the~$S^1$ or a~circle in $\mathbb{R}^2_q$, such that $\mathbb{R}^2_q / U(1) \cong \mathbb{R}_+$.

\subsection{Quivers from inverted Habiro series}

There is a~slightly different way to get quivers for the~trefoil and the~figure-eight knot. That is to use the~\emph{inverted Habiro expansion} studied in~\cite{Park3}. 
As briefly explained in Section \ref{subsubsec:52knot}, inverted Habiro expansion of $F_K$ is the~series 
\[
F_K(x,q) = -\sum_{m=1}^{\infty}\frac{a_{-m}(K)}{\prod_{j=0}^{m-1}(x+x^{-1}-q^j-q^{-j})},
\]
where $a_{-m}(K)$ denotes a~natural extension of Habiro's cyclotomic coefficients to negative direction. When the~knot is either the~trefoil or the~figure-eight knot, $a_m(\mathbf{4}_1) = 1$ and $a_m(\mathbf{3}_1^r) = (-1)^m q^{\frac{m(m+3)}{2}}$, so they can be extended to negative $m$ in a~straightforward manner. 
We will use these expressions to find the~corresponding quiver forms. 

\subsubsection{Right-handed trefoil}
As a~power series in $x$, 
\begin{align*}
F_{\mathbf{3}_1^r}(x,q) &= q\sum_{n\geq 0}\frac{(-1)^n q^{-\frac{n(n-1)}{2}}}{\prod_{j=0}^{n}(x+x^{-1}-q^j-q^{-j})} = \frac{q x}{1-x}\sum_{i,k\geq 0}(-1)^i q^{-\frac{i(i-1)}{2}}q^{-k i}\qbin{2i+k}{k}_q x^{i+k}\\
&= \frac{q x}{1-x}\sum_{i,k_1,k_2\geq 0} \frac{(-1)^{i+k_1} q^{-\frac{i(i-1)}{2}+\frac{k_1(k_1+1)}{2}+(k_1-k_2)i}}{(q)_{k_1}(q)_{k_2}} x^{i+k_1+k_2} \\
&= qx \sum_{i_1,i_2,j_1,j_2,k_1,k_2\geq 0}\frac{(-1)^{i_2+j_1+k_1}q^{-\frac{i_2(i_2-1)}{2}-i_1i_2+i_1+\frac{j_1(j_1+1)}{2}+\frac{k_1(k_1+1)}{2}+(i_1+i_2)(k_1-k_2)}}{(q)_{i_1}(q)_{i_2}(q)_{j_1}(q)_{j_2}(q)_{k_1}(q)_{k_2}}\\
&\qquad\qquad\qquad\qquad\times x^{i_1+i_2+j_1+j_2+k_1+k_2}. 
\end{align*}
This can be written as
\[
F_{\mathbf{3}_1^r}(x,q) = qx \sum_{\Bf{d}}(-q^{\frac{1}{2}})^{\Bf{d}\cdot C\cdot \Bf{d}} \frac{x^{\Bf{n}\cdot\Bf{d}}q^{\Bf{l}^{\mathfrak{sl}_2}\cdot \Bf{d}}}{(q)_{\Bf{d}}},
\]
where
\[
C = 
\begin{pmatrix}
0 & -1 & 0 & 0 & 1 & -1 \\ 
-1 & -1 & 0 & 0 & 1 & -1 \\
0 & 0 & 1 & 0 & 0 & 0 \\
0 & 0 & 0 & 0 & 0 & 0 \\
1 & 1 & 0 & 0 & 1 & 0 \\
-1 & -1 & 0 & 0 & 0 & 0
\end{pmatrix},
\qquad
\begin{array}{l}
     \Bf{n} = (1,1,1,1,1,1),  \\
     \\
     \Bf{l}^{\mathfrak{sl}_2} = \left(-1,\frac{1}{2},-\frac{3}{2},0,-\frac{3}{2},0\right).
\end{array}
\]
Upgrading this to the~$a$-deformed version, 
\[
\Bf{a} = (1,0,1,0,1,0), \qquad \Bf{l} = \left(-1,\frac{1}{2},-\frac{3}{2},0,-\frac{3}{2},0\right). 
\]

\subsubsection{Figure-eight knot}
As a~power series in $x$, 
\begin{align*}
F_{\mathbf{4}_1}(x,q) &= -\sum_{n\geq 0}\frac{1}{\prod_{j=0}^{n}(x+x^{-1}-q^j-q^{-j})} = -\frac{x}{1-x}\sum_{i,k\geq 0}q^{-k i}\qbin{2i+k}{k}_q x^{i+k}\\
&= -\frac{x}{1-x}\sum_{i,k_1,k_2\geq 0}  \frac{(-1)^{k_1} q^{\frac{k_1(k_1+1)}{2}+(k_1-k_2)i}}{(q)_{k_1}(q)_{k_2}} x^{i+k_1+k_2}\\
&= -x\sum_{i_1,i_2,j_1,j_2,k_1,k_2\geq 0}\frac{(-1)^{i_1+j_1+k_1}q^{\frac{i_1(i_1+1)}{2}+\frac{j_1(j_1+1)}{2}+\frac{k_1(k_1+1)}{2}+(i_1+i_2)(k_1-k_2)}}{(q)_{i_1}(q)_{i_2}(q)_{j_1}(q)_{j_2}(q)_{k_1}(q)_{k_2}}\\
& \qquad\qquad\qquad\qquad\times x^{i_1+i_2+j_1+j_2+k_1+k_2}\\
&= -x\sum_{\Bf{d}}(-q^{\frac{1}{2}})^{\Bf{d}\cdot C\cdot \Bf{d}} \frac{x^{\Bf{n}\cdot\Bf{d}}q^{\Bf{l}^{\mathfrak{sl}_2}\cdot \Bf{d}}}{(q)_{\Bf{d}}},
\end{align*}
where
\[
C = 
\begin{pmatrix}
1 & 0 & 0 & 0 & 1 & -1 \\ 
0 & 0 & 0 & 0 & 1 & -1 \\
0 & 0 & 1 & 0 & 0 & 0 \\
0 & 0 & 0 & 0 & 0 & 0 \\
1 & 1 & 0 & 0 & 1 & 0 \\
-1 & -1 & 0 & 0 & 0 & 0
\end{pmatrix},
\qquad
\begin{array}{l}
     \Bf{n} = (1,1,1,1,1,1),  \\
     \\
     \Bf{l}^{\mathfrak{sl}_2} = \left(\frac{1}{2},0,\frac{1}{2},0,\frac{1}{2},0\right).
\end{array}
\]
Rearranging the~nodes, we can write
\[
C = 
\begin{pmatrix}
0 & 0 & 0 & 0 & 0 & 0 \\ 
0 & 0 & -1 & -1 & 0 & 0 \\
0 & -1 & 0 & 0 & 1 & 0 \\
0 & -1 & 0 & 1 & 1 & 0 \\
0 & 0 & 1 & 1 & 1 & 0 \\
0 & 0 & 0 & 0 & 0 & 1
\end{pmatrix},
\qquad
\begin{array}{l}
     \Bf{n} = (1,1,1,1,1,1),  \\
     \\
     \Bf{l}^{\mathfrak{sl}_2} = \left(0,0,0,\frac{1}{2},\frac{1}{2},\frac{1}{2}\right),
\end{array}
\]
which is the~same as the~quiver \eqref{eq:fig8quiver} that was found empirically. We have just derived this quiver! Upgrading this to the~$a$-deformed version, 
\[
\Bf{a} = (0,0,0,1,1,1),\qquad
\Bf{l} = \left(0,0,0,-\frac{3}{2},-\frac{3}{2},-\frac{3}{2}\right). 
\]

\section{Quantized Coulomb branch of the~3d-5d system}\label{sec:Bpoly}

In the~physical realization of the~HOMFLY-PT polynomials, the~variable $a$ is identified with the~K\"ahler parameter of the~resolved conifold geometry $X$ or, equivalently, with the~Coulomb branch parameter of the~5d $\CN=2$ gauge theory ``engineered''~\cite{Katz:1996fh} by compactification on the~resolved conifold. In the~standard string terminology, it is a~{\it closed string} modulus because it describes the~geometry of the~background with no branes.

Similarly, in the~M-theory setup of~\cite{OV}, the~variable $x$ is identified with the~open string modulus or, more precisely, with the~modulus of the~fivebranes supported on a~Lagrangian submanifold in $X$. In five spacetime dimensions complementary to $X$, these fivebranes span a~three-dimensional subspace. In other words, they represent a~codimension-two defect (a.k.a. a~surface operator) in the~5d $\CN=2$ gauge theory. Following~\cite{MR2459305}, this half-BPS surface operator can be conveniently described by coupling a~3d $\CN=2$ theory -- in fact, precisely the~$T[S^3 \setminus K]$ theory --  to the~5d $\CN=2$ theory~\cite{DGH,Gukov:2014gja}. Then the~variable $x$ can be understood as the~Coulomb branch parameter of the~3d $\CN=2$ theory, much like $a$ is the~Coulomb branch parameter of the~5d $\CN=2$ theory.

Both variables $x$ and $a$ have their corresponding ``conjugates'', $y$ and $a^D$, respectively, such that turning on the~Omega-background in the~coupled 3d-5d system has the~effect of quantizing the~holomorphic symplectic phase space parametrized by $(a,a^D)$ in the~case of 5d theory and parametrized by $(x,y)$ in the~case of 3d theory.\footnote{Note, both ingredients can exist independently. For example, one can consider a~trivial 5d theory, in which case the~coupled 3d-5d system is nothing but the~3d standalone theory. And, similarly, one can consider a~trivial 3d theory, in which case there is no surface operator in 5d theory.} In particular, in the~presence of surface operators, the~Nekrasov partition function~\cite{Nekrasov:2002qd,Nakajima:2003uh} computes the~K-theoretic instanton-vortex partition function of the~coupled system, with the~asymptotic behaviour
\[
Z_{3d/5d} \simeq \exp \left( - \frac{1}{\hbar^2} \mathcal{F} (a) + \frac{1}{\hbar} \widetilde W (a,x) + \ldots \right).
\]
If we identify $\hbar = g_s$, it has the~familiar form of the~``closed + open'' topological string partition function, with the~small but important caveat that all variables are $\C^*$-valued, associated with the~K-theoretic lift of the~instanton / vortex counting. In particular, the~holomorphic symplectic forms relevant to 3d and 5d Coulomb branches are, respectively,
\[
d \log x \wedge d \log y
\,,\qquad\qquad
d \log a \wedge d \log a^D.
\]
Therefore, quantization with respect to these holomorphic symplectic forms on the~space $(\C^*)^4$ parametrized by $(x,y,a,a^D)$ replaces the~algebra of functions by the~algebra of operators that obey the~following $q$-commutation relations:
\be
\hat{y}\hat{x}  =q\hat{x}\hat{y}, \qquad\qquad \hat{b}\hat{a}=q  \hat{a}\hat{b},
\label{xyabcomm}
\ee
where we introduced a~new notation $b \equiv a^D$ to make the~parallel with $A$-polynomial more manifest. In 3d/5d coupled system, these operators can be understood as line operators. Their various combinations which represent Ward identities are certain $q$-difference operators that annihilate the~K-theoretic vortex-instanton partition function, see  e.g.~\cite{Mironov:2012xk,Aminov:2014wra,Awata:2017lqa,Jeong:2021bbh}.

\subsection{The~holomorphic Lagrangian subvariety}\label{subsec:holLag}
From the~physical discussion at the~start of this section, it is clear that classically the~Coulomb branch of the~3d-5d coupled system is a~holomorphic Lagrangian $\Gamma_K$ in $(\mathbb{C}^*)^4$ endowed with the~holomorphic symplectic form\footnote{One can see this also from \eqref{eq:abxyrelation}.}
\begin{equation}\label{eq:Omega}
\boxed{
\Omega := d\log x \wedge d\log y + d\log a \wedge d\log b.
}
\end{equation}
Naturally, the~projection of $\Gamma_K$ on $(\mathbb{C}^*)^3_{x,y,a}$ is the~zero locus of the~$a$-deformed $A$-polynomial. In this sense, $\Gamma_K$ can be thought of as the~holomorphic Lagrangian lift of the~$A$-polynomial. Note that such holomorphic Lagrangian is uniquely determined, as we can solve for $b$ as a~function of $a$ and $x$ by 
\[
b = \exp\qty(\int \frac{\partial \log y}{\partial \log a}d \log x).
\]
Away from the~discriminant locus, different branches of $y$ will lead to different branches of $b$, and they can be understood as the~expectation values of the~corresponding operators ($\hat{y}$ and $\hat{b}$) acting on $F_K$ for those branches that we discussed in Section \ref{sec:FK different branches}. 

A novel feature of this holomorphic Lagrangian is that instead of projecting it to $(\mathbb{C}^*)^3_{x,y,a}$, we can project it down to other hyperplanes such as $(\mathbb{C}^*)^3_{a,b,x}$, and study the~polynomial defining the~locus. This particular polynomial (the~one we get upon projection to $(\mathbb{C}^*)^3_{a,b,x}$) will be called the~\emph{$B$-polynomial} of the~knot $K$ and is the~subject of the~next subsection. To recap, the~zero sets of the~$A$- and $B$- polynomials are simply the~projections of $\Gamma_K$, so we have the~following diagram. 
\[
\begin{tikzcd}
 & \arrow[dl,swap,"\pi_b"] \Gamma_K = Z(AB_K) \arrow[rd,"\pi_y"]& \\
Z(A_K)\arrow[rd,"\pi_y"] & & \arrow[dl,swap,"\pi_b"] Z(B_K)\\
& (\mathbb{C}^*)^2_{x,a} & 
\end{tikzcd}
\]
We call the~ideal defining $\Gamma_K$ the~$AB$-ideal and denote it by $AB_K$ for an~obvious reason; it unifies $A$- and $B$-polynomials. 

Another notable feature of $\Gamma_K$ is that it enjoys Weyl symmetry. This is because $\Gamma_K$ describes the~semiclassical behaviour of $F_K$ which enjoys Weyl symmetry. Recall that on the~$A$-polynomial (or on $F_K$), the~Weyl symmetry acts on the~variables $x$, $y$ and $a$ as 
\[
x\mapsto a^{-1}x^{-1},\qquad y\mapsto y^{-1},\qquad a\mapsto a.
\]
The~action of the~Weyl symmetry on the~variable $b$ (dual to $a$) can be deduced as follows.
\begin{prop}
Under the~Weyl symmetry, $b$ transforms in the~following way: 
\[
b \mapsto y^{-1}b.
\]
\end{prop}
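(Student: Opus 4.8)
The plan is to derive the transformation of $b$ directly from its defining integral formula together with the known Weyl action on $x$, $y$, and $a$. Recall from the preceding discussion that $b$ is characterized (on any fixed branch, away from the discriminant locus) by
\[
\log b = \int \frac{\partial \log y}{\partial \log a}\, d\log x,
\]
equivalently by the two partial derivative relations extracted from the holomorphic symplectic form $\Omega = d\log x \wedge d\log y + d\log a \wedge d\log b$: thinking of the Lagrangian $\Gamma_K$ locally as the graph of $d\widetilde{\mathcal{W}}$ for a generating function $\widetilde{\mathcal{W}}(\log x, \log a)$, we have $\log y = \partial \widetilde{\mathcal{W}}/\partial \log x$ and $\log b = \partial \widetilde{\mathcal{W}}/\partial \log a$. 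So the first step is to make this generating-function description precise: $\Gamma_K$ being holomorphic Lagrangian means locally $\log y\, d\log x + \log b\, d\log a$ is closed, hence exact, giving such a $\widetilde{\mathcal{W}}$.

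Next I would compute how $\widetilde{\mathcal{W}}$ transforms under the Weyl involution $w:(x,y,a)\mapsto (a^{-1}x^{-1}, y^{-1}, a)$. Since the substitution on the base coordinates is $\log x \mapsto -\log x - \log a$, $\log a \mapsto \log a$, and the statement that $w$ preserves $\Gamma_K$ with $y\mapsto y^{-1}$ means
\[
\frac{\partial \widetilde{\mathcal{W}}}{\partial \log x}\Big(w(\log x,\log a)\Big)\cdot \frac{\partial(-\log x-\log a)}{\partial \log x} \;=\; -\,\frac{\partial \widetilde{\mathcal{W}}}{\partial \log x}(\log x,\log a),
\]
which is automatically consistent (both sides equal $-\log y$ up to the chain-rule sign). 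The real content is in the $\log a$-derivative: writing $\widetilde{\mathcal{W}}^w(\log x,\log a) := \widetilde{\mathcal{W}}(-\log x-\log a,\log a)$, the chain rule gives
\[
\frac{\partial \widetilde{\mathcal{W}}^w}{\partial \log a} = -\frac{\partial \widetilde{\mathcal{W}}}{\partial \log x}\Big(w\Big) + \frac{\partial \widetilde{\mathcal{W}}}{\partial \log a}\Big(w\Big) = -\log y^{(w)} + \log b^{(w)},
\]
where I use the Weyl-invariance of $\Gamma_K$ to identify the two terms on the right with (minus) the transformed $\log y$ and the transformed $\log b$. On the other hand $\widetilde{\mathcal{W}}^w$ must itself be a generating function for $\Gamma_K$ in the new coordinates, so $\partial \widetilde{\mathcal{W}}^w/\partial \log a = \log b$ evaluated on $\Gamma_K$. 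Comparing, and using $\log y^{(w)} = -\log y$, we get $\log b = \log b^{(w)} + \log y$, i.e. $b^{(w)} = y^{-1} b$, which is exactly the claim.

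The step I expect to be the main obstacle — or at least the one requiring care rather than routine calculation — is justifying the global consistency of this local generating-function argument: $\widetilde{\mathcal{W}}$ is only defined up to an additive constant (which is irrelevant) and only locally away from the discriminant locus, and one must check that the Weyl involution genuinely descends to $\Gamma_K$ as an automorphism covering the given base involution, rather than merely permuting local branches in a way that scrambles the identification of $y^{(w)}$ and $b^{(w)}$. This is where one invokes that $\Gamma_K$ is the semiclassical limit of $F_K$ and that $F_K$ has manifest Weyl symmetry (as recorded in Conjecture~\ref{conj2}(1) and in~\cite{EGGKPS}), so the involution is a genuine symmetry of the whole variety; alternatively, one can verify the formula $b\mapsto y^{-1}b$ directly on examples (trefoil, figure-eight) using the explicit $B$-polynomials computed later in this section, which pins down the sign and power unambiguously. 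A cleaner, coordinate-free alternative would be to observe that $w$ must be a symplectomorphism of $((\mathbb{C}^*)^4,\Omega)$ preserving $\Gamma_K$ and fixing $a$; the conditions $x\mapsto a^{-1}x^{-1}$, $y\mapsto y^{-1}$, $a\mapsto a$ together with $\Omega$-preservation force $d\log a\wedge d\log b^{(w)} = \Omega - d\log x^{(w)}\wedge d\log y^{(w)} = d\log a\wedge d\log b + d\log y\wedge d\log a$ (after substituting and simplifying $d\log x^{(w)}\wedge d\log y^{(w)}$), whence $d\log b^{(w)} = d\log b - d\log y$ modulo the kernel, giving $b^{(w)} = y^{-1}b$ up to a constant that examples fix to $1$.
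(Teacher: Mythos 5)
Your argument is correct and lands on the right formula, but it takes a different route from the paper. The paper's proof is a five-line change of variables: set $u_1 = ax$, $u_2 = x$, so that the Weyl map becomes the manifestly symmetric swap $u_1 \mapsto u_2^{-1}$, $u_2 \mapsto u_1^{-1}$; the variables dual to $u_1,u_2$ with respect to $\Omega$ are $v_1 = b$ and $v_2 = b^{-1}y$, and duality forces the same swap $v_1\mapsto v_2^{-1}$, $v_2\mapsto v_1^{-1}$, whence $b = v_1 \mapsto v_2^{-1} = y^{-1}b$. Your chain-rule computation on the generating function $\widetilde{\mathcal{W}}$ (and your closing symplectomorphism argument, which is closest in spirit to the paper) proves the same thing, and has the virtue of connecting directly to the relations $\log y = \partial\widetilde{\mathcal{W}}/\partial\log x$, $\log b = \partial\widetilde{\mathcal{W}}/\partial\log a$ used throughout the section. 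What the paper's version buys is that the symmetry between $u_1$ and $u_2$ makes the involutivity obvious and leaves no residual freedom to fix, whereas in your version the step ``$\widetilde{\mathcal{W}}^w$ must itself be a generating function for $\Gamma_K$'' only follows from the $\log x$-derivative up to an additive function of $\log a$ — exactly the ambiguity $b^{(w)} = y^{-1}b\, e^{h(\log a)}$ you flag at the end and resolve by involutivity/examples; the paper's ``the dual variables should transform the same way'' is the same naturality appeal, so neither proof is more rigorous than the other on this point. One small slip: in your first display the two sides are not literally equal (the left-hand side evaluates to $+\log y$ and the right to $-\log y$); what you actually need, and what your subsequent computation uses correctly, is $\partial\widetilde{\mathcal{W}}^w/\partial\log x = \log y$, so the conclusion is unaffected.
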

\begin{proof}
This can be most easily seen by making the~following change of variables:
\[
u_1 = a x,\qquad u_2 = x.
\]
Let $v_1, v_2$ be the~variables dual to $u_1, u_2$, respectively. That is, $v_1 = b$ and $v_2 = b^{-1}y$. 
Then the~Weyl symmetry acts by
\[
u_1 \mapsto u_2^{-1},\qquad u_2 \mapsto u_1^{-1}
\]
on $u_1$ and $u_2$. Since $v_1$ and $v_2$ are variables dual to $u_1$ and $u_2$, the~Weyl symmetry should act on them by
\[
\quad v_1 \mapsto v_2^{-1},\qquad v_2 \mapsto v_1^{-1}.
\]
Therefore, if we denote the~Weyl symmetry map by $W$, we have
\[
W(b) = W(v_1) = v_2^{-1} = y^{-1}b.
\]
\end{proof}
Consequently, the~holomorphic Lagrangian is preserved under the~action of the~Weyl symmetry $W$:
\[
W(\Gamma_K) = \Gamma_K.
\]

Just like the $A$-polynomial can be quantized into a~$q$-difference operator annihilating $F_K$, the~holomorphic Lagrangian itself can be quantized. The~quantum $AB$-ideal (quantization of the~$AB$-ideal) is a~left ideal of $q$-difference operators in $\hat{x}, \hat{y}, \hat{a}, \hat{b}$, which act on $F_K(x,a,q)$ by
\begin{align}\label{eq:xyabactiononFK}
\hat{x}F_K(x,a,q) &= x F_K(x, a, q), & \hat{a}F_K(x,a,q) &= a~F_K(x, a, q),\\
\hat{y}F_K(x,a,q) &= F_K(qx, a, q), &
\hat{b}F_K(x,a,q) &= F_K(x, qa, q). \nonumber
\end{align}
This quantum ideal annihilates $F_K$ regardless of the~choice of branch. We will see some explicit examples in Section \ref{subsec:AB ideals}. 

\begin{rmk}
While we primarily work with $F_K$, we could have chosen knot conormal instead of knot complement as the~Lagrangian filling. Then the~wave function we get would be the~coloured HOMFLY-PT generating function, written as a~function of $y,a,q$. The~holomorphic Lagrangian $\Gamma_K$, however, does not change, since it comes from a~theory at infinity, which specializes to the~curve count of knot contact homology that gives the~$A$-polynomial, and that is independent of the~choice of filling. 
\end{rmk}

\subsubsection{Asymptotic behavior of the holomorphic Lagrangian near the boundary}
Let $(\mathbb{C}^*)^4$ be the space parametrized by $x,y,a,b\in \mathbb{C}^*$ and equipped with the holomorphic symplectic form
\[
\Omega = d\log x \wedge d\log y + d\log a \wedge d\log b.
\]
Let $\Gamma_K\subset (\mathbb{C}^*)^4$ be the holomorphic Lagrangian associated to a knot $K$. 

In this subsubsection, we are interested in studying the asymptotic behavior of $\Gamma_K$ near the boundary of $\mathbb{C}^*$ where we can talk about various branches of $\Gamma_K$ (and therefore the corresponding wave functions $F_K$) without ambiguity. 

Asymptotically near the boundary, $\Gamma_K$ should look like a 2-dimensional linear subspace $V$ of $\mathbb{C}^4$ (parametrized by logarithmic variables) that is holomorphic Lagrangian. Suppose that this linear subspace is described by
\[
x^{k_i}y^{l_i}a^{m_i}b^{n_i} = 1,\quad i = 1,2.
\]
We are assuming that the two vectors $\mathbf{k}_i := (k_i,l_i,m_i,n_i)$ are independent. By eliminating some coordinates, we may choose to work with two vectors such that $n_1 = 0$ and $l_2 = 0$. Then the first equation determines the slope of the branch of $A_K$, and the second equation determines that of $B_K$. 

The Lagrangian condition of this linear subspace can be described in a simple way. 
\begin{prop}
The above linear subspace $V$ is Lagrangian iff
\[
\det \begin{psmallmatrix} k_1 & l_1 \\ k_2 & l_2 \end{psmallmatrix} + \det \begin{psmallmatrix} m_1 & n_1 \\ m_2 & n_2 \end{psmallmatrix} = 0.
\]
\end{prop}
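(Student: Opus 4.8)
The plan is to work in logarithmic coordinates and translate the "holomorphic Lagrangian" condition into a linear-algebra statement about the symplectic form. Write $\xi = \log x$, $\eta = \log y$, $\alpha = \log a$, $\beta = \log b$, so that the symplectic form becomes the constant-coefficient $2$-form $\Omega = d\xi\wedge d\eta + d\alpha\wedge d\beta$ on $\mathbb{C}^4$. A linear subspace $V$ cut out by $k_i\xi + l_i\eta + m_i\alpha + n_i\beta = 0$, $i=1,2$, is the annihilator (under the standard pairing) of the span of the covectors $\mathbf{k}_1 = (k_1,l_1,m_1,n_1)$ and $\mathbf{k}_2 = (k_2,l_2,m_2,n_2)$; since these are assumed independent, $V$ is $2$-dimensional, hence of middle dimension, so being Lagrangian is equivalent to being isotropic, i.e. $\Omega|_V \equiv 0$.

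First I would make the identification between covectors and vectors explicit using the symplectic form itself: $\Omega$, viewed as a nondegenerate bilinear form, gives an isomorphism $\flat\colon \mathbb{C}^4 \to (\mathbb{C}^4)^*$, and $V = \operatorname{span}(\mathbf{k}_1,\mathbf{k}_2)^{\perp_{\text{std}}}$. A standard fact is that $V$ is $\Omega$-isotropic if and only if the subspace $W = \flat^{-1}(\operatorname{span}(\mathbf{k}_1,\mathbf{k}_2)) = \operatorname{span}(\Omega^{-1}\mathbf{k}_1, \Omega^{-1}\mathbf{k}_2)$ is $\Omega$-isotropic; equivalently, $V$ is Lagrangian iff $\operatorname{span}(\mathbf{k}_1,\mathbf{k}_2)$ is isotropic for the \emph{dual} symplectic form $\Omega^*$ on $(\mathbb{C}^4)^*$. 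With the ordering of coordinates $(\xi,\eta,\alpha,\beta)$ the Gram matrix of $\Omega$ is the block-diagonal matrix $J = \operatorname{diag}\!\left(\begin{psmallmatrix}0&1\\-1&0\end{psmallmatrix},\begin{psmallmatrix}0&1\\-1&0\end{psmallmatrix}\right)$, which is its own negative inverse up to sign, so $\Omega^*$ has the same shape. Then the isotropy condition $\Omega^*(\mathbf{k}_1,\mathbf{k}_2) = 0$ expands directly to $(k_1 l_2 - k_2 l_1) + (m_1 n_2 - m_2 n_1) = 0$, which is exactly the claimed determinantal identity.

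The cleanest way to present this without invoking the abstract annihilator duality is simply to compute $\Omega$ restricted to $V$ against an explicit basis of $V$. Pick two vectors $v, w$ spanning $V$ (they exist and can be written down by solving the two linear equations), and evaluate $\Omega(v,w) = (v_\xi w_\eta - v_\eta w_\xi) + (v_\alpha w_\beta - v_\beta w_\alpha)$. Using the defining relations $\mathbf{k}_i \cdot v = \mathbf{k}_i \cdot w = 0$ one eliminates variables and finds that $\Omega(v,w)$ is proportional to $\det\begin{psmallmatrix}k_1&l_1\\k_2&l_2\end{psmallmatrix} + \det\begin{psmallmatrix}m_1&n_1\\m_2&n_2\end{psmallmatrix}$, with a nonzero proportionality constant (a $2\times2$ minor of the matrix $(\mathbf{k}_1\,\mathbf{k}_2)$ that is nonzero precisely because $\mathbf{k}_1,\mathbf{k}_2$ are independent — and here one can exploit the normalization $n_1 = 0$, $l_2 = 0$ to keep the bookkeeping short). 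Hence $\Omega|_V = 0$ iff the two determinants sum to zero.

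The main obstacle — really the only subtlety — is purely bookkeeping: choosing the basis of $V$ (or the complementary pair of "surviving" coordinates) so that the elimination does not accidentally divide by a quantity that could vanish, and checking that the leftover scalar factor multiplying the determinantal expression is genuinely nonzero under the standing independence hypothesis. Once a generic case is done, the degenerate cases (e.g. when the chosen pivot minor vanishes) follow by a permutation of coordinates that leaves $\Omega$ in the same block form up to sign, so the conclusion is unchanged. I would therefore organize the proof as: (1) reduce to isotropy by dimension count; (2) fix a convenient basis of $V$; (3) expand $\Omega$ on that basis and read off the identity; (4) remark that the normalization $n_1 = 0$, $l_2 = 0$ and independence of $\mathbf{k}_1,\mathbf{k}_2$ guarantee the computation is nondegenerate.
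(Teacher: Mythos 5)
Your proposal is correct and follows essentially the same route as the paper: the paper writes $\Omega$ as the block matrix $S$ and asserts that $V=\{v:\mathbf{k}_i^Tv=0\}$ is Lagrangian iff $\mathbf{k}_1^TS\,\mathbf{k}_2=0$, which is precisely the duality/annihilator step you spell out (using that $S^{-1}=-S$ so the dual form has the same block shape). Your additional explicit-basis computation is a fine alternative but not needed; the paper leaves that step as ``easy to see.''
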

\begin{proof}
The symplectic form $\Omega$ can be expressed as the following matrix:
\[
S:=\begin{psmallmatrix} 0 & 1 & 0 & 0 \\ -1 & 0 & 0 & 0 \\ 0 & 0 & 0 & 1 \\ 0 & 0 & -1 & 0 \end{psmallmatrix}.
\]
By definition, $V$ is Lagrangian iff $v_1^T S v_2 = 0$ for any choice of basis $\{v_1,v_2\} \subset V$. Moreover, a vector $v$ is in $V$ iff $\mathbf{k}_i^T v = 0$ for $i=1,2$. From this, it is easy to see that $V$ is Lagrangian iff $\mathbf{k}_1^T S \mathbf{k}_2 = 0$, and this is exactly the equation written above. 
\end{proof}
A direct application of this simple fact is the following. Suppose that $n_1=0$ and $l_2=0$ so that 
\[
x^{k_1}y^{l_1}a^{m_1} = 1,\quad x^{k_2}a^{m_2}b^{n_2} = 1. 
\]
Then the Lagrangian condition is simply $\frac{l_1}{m_1} = \frac{n_2}{k_2}$. That is, for any branch, the slope of the Newton polygon of $A_K$ as a polynomial in $y$ and $a$ equals that of $B_K$ as a polynomial in $b$ and $x$. 
Below, we check this in a few explicit examples. 

\begin{eg}[Trefoil, right-handed]
We write $A_K$ and $B_K$ in matrix form. For $A_K$ (resp. $B_K$), horizontal direction corresponds to $a$-degree (resp. $x$-degree) and the vertical direction corresponds to $y$-degree (resp. $b$-degree). The degrees get bigger going up and right. 
\[
A_{\mathbf{3}_1} = \begin{psmallmatrix}
 0 & 0 & -x^3 & x^4 \\
 -1 & 2 x^2+x & -x^4+x^3-2 x^2 & 0 \\
 0 & 1-x & 0 & 0 \\
\end{psmallmatrix}
\]
\[
B_{\mathbf{3}_1} = \begin{psmallmatrix}
 0 & 0 & a^2-a^3 & a^4-a^3 \\
 -1 & a & a-2 a^2 & 0 \\
 0 & 1 & 0 & 0 \\
\end{psmallmatrix}
\]
\end{eg}

\begin{eg}[Figure-eight]
\[
A_{\mathbf{4}_1} = \begin{psmallmatrix}
 0 & -x^2 & 2 x^3 & -x^4 & 0 \\
 1 & -3 x & 2 x^4+2 x^2 & -3 x^5 & x^6 \\
 0 & -2 x^2+3 x-1 & 0 & -x^6+3 x^5-2 x^4 & 0 \\
 0 & 0 & x^4-2 x^3+x^2 & 0 & 0 \\
\end{psmallmatrix}
\]
\[
B_{\mathbf{4}_1} = \begin{psmallmatrix}
 0 & -a^3+2 a^2-a & 2 a^4-4 a^3+2 a^2 & -a^5+2 a^4-a^3 & 0 \\
 1-a & 3 a^2-3 a & -2 a^3+4 a^2-2 a & 3 a^2-3 a^3 & a^4-a^3 \\
 0 & 2-a & -3 a & 2 a^2-a & 0 \\
 0 & 0 & 1 & 0 & 0 \\
\end{psmallmatrix}
\]
\end{eg}\

\begin{eg}[$\mathbf{5}_2$]
\begin{align*}
&A_{\mathbf{5}_2}=\\ 
&\begin{psmallmatrix}
 -1 & 3 x & -3 x^2 & x^3 & 0 & 0 & 0 & 0 \\
 0 & 3 x^2-4 x+2 & x^4-4 x^3+5 x^2-4 x & -4 x^5-x^4+2 x^3+2 x^2 & 6 x^6+2 x^5-3 x^4 & -4 x^7 & x^8 & 0 \\
 0 & 0 & -3 x^4+8 x^3-8 x^2+4 x-1 & -3 x^6+2 x^5-x^4+x^3+x & 8 x^7-x^6-4 x^5+x^4-4 x^3 & -8 x^8+x^7+x^6+6 x^5 & 4 x^9-4 x^7 & x^9-x^{10} \\
 0 & 0 & 0 & x^6-4 x^5+6 x^4-4 x^3+x^2 & 3 x^8-4 x^7-x^6+2 x^5 & -4 x^9+5 x^8+2 x^7-3 x^6 & 2 x^{10}-4 x^9+2 x^8 & 0 \\
 0 & 0 & 0 & 0 & 0 & -x^{10}+3 x^9-3 x^8+x^7 & 0 & 0 \\
\end{psmallmatrix}
\end{align*}
\begin{align*}
&B_{\mathbf{5}_2} =\\ 
&\begin{psmallmatrix}
 -a^3+3 a^2-3 a+1 & 3 a^4-9 a^3+9 a^2-3 a & -3 a^5+9 a^4-9 a^3+3 a^2 & a^6-3 a^5+3 a^4-a^3 & 0 & 0 & 0 & 0 \\
 0 & -3 a^2+6 a-3 & 7 a^3-15 a^2+9 a-1 & -5 a^4+14 a^3-13 a^2+4 a & a^5-8 a^4+13 a^3-6 a^2 & 4 a^5-8 a^4+4 a^3 & -a^6+2 a^5-a^4 & 0 \\
 0 & 0 & a^2-4 a+3 & -a^3+5 a^2-7 a+3 & -3 a^3+11 a^2-8 a & 2 a^4-10 a^3+8 a^2 & 4 a^4-4 a^3 & a^4-a^5 \\
 0 & 0 & 0 & a-1 & a-3 & 4 a-a^2 & -2 a^2 & 0 \\
 0 & 0 & 0 & 0 & 0 & 1 & 0 & 0 \\
\end{psmallmatrix}
\end{align*}
\end{eg}

In all the examples above, not just the slopes but the whole Newton polygons match up. We conjecture that this is true in general:
\begin{conj}
The Newton polygon of $A_K$ as a polynomial in $y$ and $a$ agrees with that of $B_K$ as a polynomial in $b$ and $x$. 
\end{conj}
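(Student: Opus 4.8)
The plan is to prove the conjecture by deriving both Newton polygons from a single source: the asymptotic behavior of $\Gamma_K$ near the boundary of $(\mathbb{C}^*)^4$, together with the holomorphic Lagrangian condition that was just established in the preceding Proposition. Concretely, I would first set up the asymptotic picture. Near a corner of $(\mathbb{C}^*)^4$ (say $x\to 0$ and $a\to 0$), $\Gamma_K$ degenerates to a finite union of $2$-dimensional linear subspaces $V^{(\alpha)}$ of $\mathbb{C}^4$ (in logarithmic coordinates), one for each branch $\alpha$, each of the form $x^{k_i}y^{l_i}a^{m_i}b^{n_i}=1$, $i=1,2$. As remarked in the excerpt, after eliminating coordinates we may normalize so that the first relation has $n_1=0$ — it then records the slope $l_1/m_1$ of the Newton polygon of $A_K$ viewed as a polynomial in $y$ and $a$ along the branch $\alpha$ — and the second has $l_2=0$ — it records the slope $n_2/k_2$ of the Newton polygon of $B_K$ viewed as a polynomial in $b$ and $x$. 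The Lagrangian condition $\det\begin{psmallmatrix} k_1 & l_1 \\ k_2 & l_2\end{psmallmatrix} + \det\begin{psmallmatrix} m_1 & n_1 \\ m_2 & n_2\end{psmallmatrix} = 0$ then forces $l_1/m_1 = n_2/k_2$, i.e. the two slopes coincide branch-by-branch. This is essentially the content of the Proposition; the conjecture is the global statement that not only the slopes but the full polygons agree.

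The key step is therefore to upgrade the branch-by-branch slope matching to an equality of Newton polygons. The edges of the Newton polygon of $A_K$ in the $(y,a)$-variables are, by the analysis in Section~\ref{sec:FK different branches} (Proposition~\ref{prop:classical_asymptotics} and its analogues, with the roles of $x,y$ suitably permuted), in surjective correspondence with the branches of $a$ near $a=0$ (or $a=\infty$), and the multiplicity (the lattice-height of an edge) equals the number of branches with that slope. I would argue the same for $B_K$ in the $(b,x)$-variables: its edges correspond to branches of $b$. Since $\Gamma_K$ is a Lagrangian lift that is an $n$-fold branched cover of $(\mathbb{C}^*)^2_{x,a}$, there is a canonical bijection between branches of $y$ (equivalently of $a$ inside $Z(A_K)$) and branches of $b$ inside $Z(B_K)$ — this is stated in the excerpt. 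Under this bijection the Lagrangian relation identifies the slope data; the remaining work is to check that it also identifies the multiplicity data, i.e. that the number of branches of $b$ with a given slope equals the number of branches of $a$ with the matching slope. Because the bijection of branches is induced by the single covering $\Gamma_K \to (\mathbb{C}^*)^2_{x,a}$ and the two projections $\pi_y,\pi_b$ factor through it, counting preimages on either side gives the same number, so the multiplicities match. The vertex positions of the two polygons are then pinned down: a Newton polygon is determined by its edge slopes together with edge lattice-lengths (up to an overall translation), and the normalization $(a,b,x,y)=(1,1,1,1)\in\Gamma_K$ together with the Weyl symmetry (Proposition relating $b\mapsto y^{-1}b$) fixes the translation consistently on both sides.

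I would organize the write-up as: (1) recall the asymptotic linearization of $\Gamma_K$ and the normalized form of each $V^{(\alpha)}$; (2) invoke the Lagrangian condition to get the per-branch slope identity $l_1/m_1 = n_2/k_2$; (3) recall/establish the dictionary ``edges of Newton polygon $\leftrightarrow$ branches, with height $=$ number of branches'' for both $A_K$ in $(y,a)$ and $B_K$ in $(b,x)$, citing the arguments of Section~\ref{sec:FK different branches}; (4) use the canonical bijection of branches coming from $\Gamma_K$ being a single cover of $(\mathbb{C}^*)^2_{x,a}$ to transport both slopes and multiplicities; (5) fix the overall translation using $(1,1,1,1)\in\Gamma_K$ and Weyl symmetry, concluding the polygons coincide. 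I expect the main obstacle to be step (3)–(4): making precise that $\Gamma_K$ really is a nice $n$-fold (branched) cover so that the branch bijection is well defined and multiplicity-preserving — equivalently, controlling the behavior at the discriminant locus where branches can collide or where $\Gamma_K$ may fail to be smooth. One may need to either assume $\Gamma_K$ is reduced and of the expected dimension (as the physical picture predicts and as holds in all worked examples), or restrict the statement to the generic situation where $A_K$ (and hence $B_K$) has no repeated factors. A secondary subtlety is handling degenerate edges — those broken into several non-degenerate sub-edges, as for $\mathbf{5}_2$ — where ``branch'' must be counted with the genuine multiplicities discussed in Section~\ref{sec:FK different branches}; the bijection argument still applies provided one counts branches of $b$ with the analogous multiplicities.
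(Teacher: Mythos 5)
First, a point of context: the paper does not prove this statement. It appears there precisely as a conjecture, supported by (i) the preceding Proposition, which proves only the weaker per-branch assertion that the slope of the Newton polygon of $A_K$ in $(y,a)$ equals that of $B_K$ in $(b,x)$ via the Lagrangian determinant condition $\det\begin{psmallmatrix} k_1 & l_1 \\ k_2 & l_2\end{psmallmatrix}+\det\begin{psmallmatrix} m_1 & n_1 \\ m_2 & n_2\end{psmallmatrix}=0$, and (ii) explicit verification for $\mathbf{3}_1$, $\mathbf{4}_1$ and $\mathbf{5}_2$. Your steps (1)--(2) reproduce exactly that slope argument; the real content of your proposal is the attempted upgrade in steps (3)--(5), and that is where there is a genuine gap.

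The gap is that the dictionary ``edges $\leftrightarrow$ branches, with multiplicity equal to lattice height'' only sees the left-edges (branches of $y(a)$ near $a=0$, resp.\ of $b(x)$ near $x=0$) and, via the opposite corner, the right-edges. It says nothing about horizontal edges at the top or bottom of the polygon, and these do occur: for $\mathbf{4}_1$ the top row of the $(a,y)$-matrix of $A_K$ has nonzero entries at $a$-degrees $1,2,3$, giving a horizontal top edge of lattice length $2$ (and likewise for $B_K$). Such an edge corresponds to no branch of $y(a)$ or $b(x)$, so the data you transport across the Lagrangian consists only of the left and right hulls of each polygon. Two convex polygons with the same left and right hulls can still differ: closing up the polygon only determines the \emph{difference} of the top and bottom horizontal widths (it equals the difference of the horizontal displacements of the two hulls), not the widths themselves, and neither the normalization $(1,1,1,1)\in\Gamma_K$ nor Weyl symmetry supplies the missing integer (indeed the $(a,y)$-polygon of $A_{\mathbf{4}_1}$ is not half-rotation symmetric, having top width $2$ and bottom width $0$). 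To close this one would have to run the same branch analysis at the corners $y\to 0,\infty$ and $b\to 0,\infty$ and match those branches across $\Gamma_K$ as well, which is a different pairing of variables from the one the Proposition treats. A secondary, smaller issue is that your multiplicity-preservation step needs $\pi_y$ and $\pi_b$ to be birational onto $Z(A_K)$ and $Z(B_K)$ so that $\deg_y A_K=\deg_b B_K$ equals the number of sheets of $\Gamma_K\to(\mathbb{C}^*)^2_{x,a}$; this is plausible but rests on the conjectural good behaviour of $\Gamma_K$ rather than on anything established in the paper. So the strategy is reasonable and correctly extends the paper's slope argument, but as written it does not prove the conjecture.
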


\subsection{\texorpdfstring{$B$-polynomial}{B-polynomial}}

\subsubsection{Quantum $B$-polynomial}
In~\cite{MM21}, it was observed that the~cyclotomic coefficients of the~coloured HOMFLY-PT polynomials satisfy not only $q$-difference equations in the~colour variable but also $q$-difference equations in the~variable $a=q^N$. 
The~$q$-holonomicity of the~cyclotomic coefficients implies that the~$\mathfrak{sl}_N$ coloured Jones polynomials are $q$-holonomic in the~variable $a=q^N$. 

Let us recall that the~operators $\hat{a},\hat{b}$ and $\hat{x},\hat{y}$ act on $\mathfrak{sl}_N$ coloured Jones polynomials by
\begin{align*}
    \hat{x}J^{\mathfrak{sl}_N}_r(K;q) & =q^{r}J^{\mathfrak{sl}_N}_r(K;q), & \hat{a}J^{\mathfrak{sl}_N}_r(K;q)= & q^{N}J^{\mathfrak{sl}_N}_r(K;q),\\
    \hat{y}J^{\mathfrak{sl}_N}_r(K;q) & =J^{\mathfrak{sl}_N}_{r+1}(K;q), & \hat{b}J^{\mathfrak{sl}_N}_r(K;q)= & J^{\mathfrak{sl}(N+1)}_r(K;q).\nonumber
\end{align*} 
We can see that $(\hat{a},\hat{b})$ interact with the~group rank $N$ in complete analogy with the~action of $(\hat{x},\hat{y})$ on the~representation $r$; the~commutation relations are as in \eqref{xyabcomm}.

As pointed out above, the~$\mathfrak{sl}_N$ coloured Jones polynomials satisfy a~recurrence relation in variable $N$. We will call the~corresponding $q$-difference operator the~\emph{quantum $B$-polynomial} and denote it by $\hat{B}_K(\hat{a},\hat{b},x,q)$. In other words, the~recurrence relation in $N$ is given by
\begin{equation*}
    \hat{B}_K(\hat{a},\hat{b},x,q)J^{\mathfrak{sl}_N}_r(K;q)=0.
\end{equation*}
Note, this is in analogy with the~quantum $A$-polynomials which annihilate $\mathfrak{sl}_N$ coloured Jones polynomials acting with $\hat{x}$ and $\hat{y}$. 
Likewise, the~quantum $B$-polynomial annihilates $F_K(x,a,q)$ associated to any branch: 
\[
\hat{B}_K(\hat{a},\hat{b},x,q)F_K^{(\alpha)}(x,a,q) = 0.
\]
The~action of $\hat{a}$ and $\hat{b}$ on $F_K$ were described in \eqref{eq:xyabactiononFK}.

\subsubsection{Brief comment on normalisations}
Before presenting some $B$-polynomials, we should briefly pause to comment on how the~different normalisations of $F_K$ and $J^{\mathfrak{sl}_N}_r$ discussed in Section~\ref{sec: conventions} affect the~$B$-polynomial. Observing that
\begin{align*}
    \hat{b}\, \frac{(xq; q)_{\infty}}{(xa; q)_{\infty}} & = (1 - xa)\frac{(xq; q)_{\infty}}{(xa; q)_{\infty}}\hat{b},
    \\ \hat{b} \, e^{\frac{-\log(x)\log(a)}{2\hbar}}x^{\frac{1}{2}}\frac{(a; q)_{\infty}(xq; q)_{\infty}}{(xa; q)_{\infty}(q; q)_{\infty}} & = \frac{1 - xa}{x^{\frac{1}{2}}(1 - a)}e^{\frac{-\log(x)\log(a)}{2\hbar}}x^{\frac{1}{2}}\frac{(a; q)_{\infty}(xq; q)_{\infty}}{(xa; q)_{\infty}(q; q)_{\infty}} \hat{b},
\end{align*}
we see that given a~recursion relation $\hat{B}(\hat{b}, \hat{a}, x, q)$ for a~reduced invariant, the~corresponding relations for the~unreduced and fully unreduced invariants are 
\begin{align*}
    \hat{B}^{unreduced}(\hat{a}, \hat{b}, x, q) & = \hat{B}\left( \hat{a}, \frac{1}{1 - x\hat{a}}\hat{b},x, q\right),
    \\ \hat{B}^{fully\,unreduced}(\hat{a}, \hat{b}, x, q) & = \hat{B}\left( \hat{a}, \frac{x^{\frac{1}{2}}(1 - \hat{a})}{1 - x\hat{a}}\hat{b}, x, q\right).
\end{align*}
When making these substitutions for the~$\hat{b}$ operator, we also usually rescale by left multiplication so that the~coefficients of $\hat{b}^n$ are polynomials in $\hat{a}, x, q$ with no common factors. 

Looking at the~unknot, the~reduced $B$-polynomial is easily seen to be equal to
\begin{equation*}
    B_{\mathbf{0}_1}(\hat{a}, \hat{b}, x, q) = \hat{b} - 1,
\end{equation*}
so the~corresponding unreduced and fully unreduced $B$-polynomials are given by
\begin{align}
    B^{unreduced}_{\mathbf{0}_1}(\hat{a}, \hat{b}, x, q) & = \hat{b} - (1 - x\hat{a}) \nonumber
    \\ B^{fully \,unreduced}_{\mathbf{0}_1}(\hat{a}, \hat{b}, x, q) & = x^{\frac{1}{2}}(1 - \hat{a})\hat{b} - (1 - x\hat{a}). \label{eq:quantum B 01 f unred}
\end{align}

Similarly, if we compute the~$B$-polynomial using quiver forms and want to include a~prefactor as in Equation \eqref{eq: Full Prefac FK}, then $\hat{b}$ acts on this prefactor as
\begin{equation*}
    \hat{b} \exp(\frac{p(\log x,\log a)}{\hbar}) = \exp(\frac{p(\log x, \hbar + \log a)}{\hbar})\hat{b}.
\end{equation*}
Defining
\[
    p'(\log(x), \log(a)) := p(\log x,\log a) - p(\log x, \hbar + \log a),
\]
we see that including the~prefactor modifies the~$B$ polynomial by 
\begin{equation} \label{eq: add prefactor}
    \hat{B}(\hat{b}, \hat{a}, x, q) \mapsto \hat{B}\left(\exp(\frac{p'(\log x,\log a)}{\hbar}) \hat{b}, \hat{a}, x, q\right).
\end{equation}
Overall, we see that with a~little care it is easy to adjust the~$B$-polynomial to various conventions and normalisations.

Quantum $B$-polynomials for simple knots in the~reduced normalisation are given in Table~\ref{tab:Bpolynomials}.
\begin{table}[ht]
    \centering
    \begin{tabular}{|c|c|}
        \hline
        $K$ & $\hat{B}_K(\hat{a},\hat{b},x,q)$\\ \hline\hline
         $0_1$ &  $1 - \hat{b}$\\ \hline 
         $3_1$ &  $q x^2 - x(1 + q - (1 + qx)\hat{a} + qx^2 \hat{a}^2)\hat{b} + (1 - \hat{a})(1 - qx\hat{a})\hat{b}^2$ \\ \hline 
         $4_1$ & 
         $\begin{aligned}
             & q^2x^2\hat{a}^2  + qx\hat{a}(1 + q -(1 + 3qx + q^2x^2)\hat{a} + q x^2(1 + q)\hat{a}^2)\hat{b}
             \\ & + (1 - \hat{a})(1 - q x \hat{a})(1 - 2qx(1 + qx)\hat{a} + q^3x^3\hat{a}^2)\hat{b}^2
             \\ & - x(1 - \hat{a})(1 - q\hat{a})(1 - qx \hat{a})(1 - q^2x\hat{a})\hat{b}^3
         \end{aligned}$ \\ \hline 
         $5_1$ & 
         $\begin{aligned}
             & - qx^4\big( 1 + q + q^2 - (1 + q)(1 + qx)\hat{a} + qx(1 + x + qx)\hat{a}^2 - qx^2(1 + qx)\hat{a}^3 + q^2 x^4\hat{a}^4\big)\hat{b}
             \\ & + x^2(1 - \hat{a})(1 - qx\hat{a})(1 + q + q^2 - q(1 + qx)\hat{a} + q^2x^2(1 + q)\hat{a}^2)\hat{b}^2
             \\ & -(1 - \hat{a})(1 - q\hat{a})(1 - q x \hat{a})(1- q^2x\hat{a})\hat{b}^3
             +q^3x^6 
         \end{aligned}$ \\ \hline
    \end{tabular}
    \caption{Quantum $B$-polynomials for some simple knots.}
    \label{tab:Bpolynomials}
\end{table}

\subsubsection{Classical $B$-polynomial}

Similarly to the~case of $A$-polynomial, the~$q\to 1$ limit of
the~quantum $B$-polynomial can be obtained directly from the~effective twisted superpotential:
\begin{equation}\label{eq:BfromW}
    \lim_{q\to 1}\hat{B}_K(\hat{a},\hat{b},x,q) = B_K(a,b,x)= 0 \quad \Leftrightarrow \quad \log b  =\frac{\partial\widetilde{\mathcal{W}}_{T[M_{K}]}^{\textrm{eff}}(x,a)}{\partial\log a}.
\end{equation}
Recall that $\widetilde{\mathcal{W}}_{T[M_{K}]}^{\textrm{eff}}(x,a)$ comes from integrating out the~dynamical fields in the~twisted superpotential (see \eqref{eq:Weff}), which can be read from \eqref{eq:Semiclassical limit F_K} or double-scaling limit \eqref{eq:W from P} with $N\to\infty,\;q^{N}=a$:
\begin{equation*}
F_{K}(x,a,q)\underset{\hbar\rightarrow0}{\longrightarrow}
\int\prod_{i}\frac{dz_{i}}{z_{i}}\exp\left(\frac{1}{\hbar}\widetilde{\mathcal{W}}_{T[M_{K}]}(z_{i},x,a)+\mathcal{O}(\hbar^{0})\right)
\stackrel[\hbar\rightarrow0]{r,N\rightarrow\infty}{\longleftarrow}J^{\mathfrak{sl}_N}_r(K;q).
\end{equation*}

Note that the~construction of $A$- and $B$-polynomials via the~effective twisted superpotential (\ref{eq:AfromW}, \ref{eq:BfromW}) immediately leads to the~constraint
\begin{equation}\label{eq:abxyrelation}
\frac{\partial\log y}{\partial\log a}=\frac{\partial^{2}\widetilde{\mathcal{W}}_{T[M_{K}]}^{\textrm{eff}}(x,a)}{\partial\log x\,\partial\log a}=\frac{\partial\log b}{\partial\log x},
\end{equation}
which is equivalent to the~holomorphic Lagrangian condition:
\[
\Omega = d\log x \wedge d\log y + d\log a \wedge d\log b = \frac{\partial \log y}{\partial \log a}d\log x \wedge d\log a + \frac{\partial \log b}{\partial \log x}d\log a\wedge d\log x = 0. \]

On the~other hand, it allows us to derive $A(x,y,a)$ from $B(a,b,x)$ up to a~function $f(x)$. Namely, we can solve $B(a,b,x)=0$ for $b(a,x)$, integrate over $\log a$
\begin{equation*}
    \widetilde{\mathcal{W}}_{T[M_{K}]}^{\textrm{eff}}(x,a)=\int \log b(a,x)\; d(\log a) + f(x),
\end{equation*}
and differentiate with respect to $\log x$:
\begin{equation*}
    A_K(x,y,a)= 0 \quad \Leftrightarrow \quad \log y=\frac{\partial}{\partial \log x} \left( \int \log b(a,x)\; d(\log a) + f(x) \right).
\end{equation*}
We can apply the~same reasoning to derive $B(a,b,x)$ from $A(x,y,a)$ up to a~function $f(a)$.

Let us present the~relations discussed above on the~example of the~unknot. In the~fully unreduced normalisation we have
\begin{equation*}
J^{\mathfrak{sl}_N,fully \,unreduced}_r(0_1;q)
\stackrel[\hbar\rightarrow0]{r,N\rightarrow\infty}{\longrightarrow}
\exp\left[\frac{1}{\hbar}\left(-\frac{\log x\log a}{2}+\textrm{Li}_{2}(x)-\textrm{Li}_{2}(xa)+\textrm{Li}_{2}(a)-\frac{\pi^{2}}{6}\right)\right],
\end{equation*}
so the~twisted superpotential is given by
\begin{equation*}
\mathcal{\widetilde{W}}_{T[M_{\mathbf{0}_1}]}(x,a)=-\frac{\log x\log a}{2}+\textrm{Li}_{2}(x)-\textrm{Li}_{2}(xa)+\textrm{Li}_{2}(a)-\frac{\pi^{2}}{6}.
\end{equation*}
This is consistent with~\cite{FGS} as we do not have dynamical fields and $\mathcal{\widetilde{W}}_{T[M_{\mathbf{0}_1}]}(x,a)=\mathcal{\widetilde{W}}^{\textrm{eff}}_{T[M_{\mathbf{0}_1}]}(x,a)$. Plugging it in equation \eqref{eq:BfromW}, we obtain
\begin{equation}\label{eq:logb of unknot}
\log b=-\frac{\log x}{2}+\log(1-xa)-\log(1-a).
\end{equation}
It leads to
\begin{equation}\label{eq:B of unknot}
B^{fully \,unreduced}_{\mathbf{0}_{1}}(a,b,x)=(1-a)b-x^{-1/2}(1-xa)=0,
\end{equation}
which is in line with the~classical limit of \eqref{eq:quantum B 01 f unred}. One can also check that $A_{\mathbf{0}_1}(x,y,a)=(1-x)y-a^{1/2}(1-ax)$ can be obtained from (\ref{eq:logb of unknot}-\ref{eq:B of unknot}) with $f(x)=\textrm{Li}_{2}(x)-\frac{\pi^{2}}{6}$.

Classical $B$-polynomials for simple knots in the~reduced normalisation are given in Table \ref{tab:ClassicalBpolynomials}.
\begin{table}[ht]
    \centering
    \begin{tabular}{|c|c|}
        \hline
        $K$ & $B_K(a,b,x)$\\ \hline\hline
         $0_1$ &  $1 - b$\\ \hline 
         $3_1$ &  $x^2 - x(2 - (1 + x)a + x^2 a^2)b + (1 - a)(1 - xa)b^2$ \\ \hline 
         $4_1$ & 
         $\begin{aligned}
             x^2a^2 & + ax(2 -(1 + 3x + x^2)a + 2x^2a^2)b
             \\ & + (1 - a)(1 - x a)(1 - 2x(1 + x)a + x^3a^2)b^2
             \\ & - x(1 - a)(1 - a)(1 - xa)(1 - xa)b^3
         \end{aligned}$ \\ \hline 
         $5_1$ & 
         $\begin{aligned}
             x^6 & - x^4\big(3 - 2(1 + x)a + x(1 + 2x)a^2 - x^2(1 + x)a^3 + x^4a^4\big)b
             \\ & + x^2(1 - a)(1 - xa)(3 - (1 + x)a + 2x^2a^2)b^2
             \\ & -(1 - a)(1 - a)(1 - x a)(1 - xa)b^3
         \end{aligned}$ \\ \hline
         $5_2$ & 
         $\begin{aligned}
             1 &-x^{-2} \left(2 a^2 x^3+a^2 x^2-4 a x^2-a x-a+3 x+1\right)b\\ 
             &-x^{-3}(a-1) (a x-1) \left(a^3 x^4-3 a^2 x^3-2 a^2 x^2+5 a x^2+a x+a-3 x-3\right)b^2\\
             &-x^{-4}(a-1)^2 (a x-1)^2 \left(a^2 x^3-2 a x^2-a x+x+3\right)b^3\\
             &+x^{-5}(a-1)^3 (a x-1)^3 b^4
         \end{aligned}$ \\ \hline
    \end{tabular}
    \caption{Classical $B$-polynomials for some simple knots.}
    \label{tab:ClassicalBpolynomials}
\end{table}
\begin{rmk}
If we set $x=1$, we see that $B_K(a,b,x=1)$ always has a~factor of $b-1$. This is analogous to the presence of the~factor $y-1$ in $A_K(x,y,a=1)$. This condition lets us fix the integration constants coming from equation \eqref{eq:abxyrelation}. It similarly allows us to reconstruct $\Gamma_K$ from either $A_K$ or $B_K$.
\end{rmk}

\begin{rmk}
Thanks to the~relation $\frac{\partial \log b}{\partial \log x} = \frac{\partial \log y}{\partial \log a}$, we can solve for $y$ and vice versa for any branch of $b$. It follows that there is a~canonical one-to-one correspondence between the~branches of $b$ and the~branches of $y$, as functions of $x$ and $a$. 
This correspondence will be used later in Section \ref{subsec: Computing FK on different branches} to determine $F_K$ for various branches, by solving the~recurrence relation with respect to quantum $A$- and $B$-polynomials at the~same time. 
\end{rmk}

\begin{rmk}
Practically, there are many ways to compute the~classical $B$-polynomial. One way is to start from a~quiver expression, eliminate variables from the~quiver $A$-polynomials to get the~ideal defining the~holomorphic Lagrangian $\Gamma_K$, and then eliminate variable $y$ to project it down to get the~$B$-polynomial.
Another way is to start from $F_K$ associated to various branches and compute the~expectation values of the~$\hat{b}$ operator. One gets $b^{(\alpha)}(x,a)$ in a~power series form. The~elementary symmetric functions of $b^{(\alpha)}$'s are Laurent polynomials in $x$ and $a$, and they are the~coefficients of the~$B$-polynomial. 
\end{rmk}

\subsubsection{Relations to Alexander and HOMFLY-PT polynomials}
Recall from~\cite{DE,EGGKPS} that in case of the~abelian branch
\begin{equation*}
    \lim_{q\to1}\frac{F_{K}(x,qa,q)}{F_{K}(x,a,q)}=b(x,a)=\exp\left(\frac{\partial\mathcal{\widetilde{W}}(x,a)}{\partial\log a}\right)=\exp\left(\frac{\partial U_{K}(x,a)}{\partial\log a}\right)
\end{equation*}
is an~$a$-deformation of the~inverse of the~Alexander polynomial, where $U_K(x,a)$ denotes the~Gromov-Witten disk potential for the~knot complement. 
In particular, when $a=1$, $b = \frac{1}{\Delta_K(x)}$.
This also means that
\begin{equation*}
B_K(a=1,\frac{1}{\Delta_K(x)},x) = 0.
\end{equation*}

In case of non-abelian branches, it turns out $b(x,a)$ has a~pole at $a=1$. 
As a~result, the~$a=1$ specialization of the~classical $B$-polynomial is simply
\begin{equation*}
\boxed{
B_K(a=1,b,x) = 1-\Delta_K(x)b
}
\end{equation*}
up to an~overall multiplication by a~monomial. 

The~main theorem of~\cite{DE} states that
\[
\langle \hat{b}\rangle = \exp(\int \frac{\partial \log y}{\partial \log a}d\log x) =  \exp(-\int \frac{\partial_{\log a}A_K}{\partial_{\log y}A_K} d\log x)
\]
becomes $\frac{1}{\Delta_K(x)}$ when $(a,y)=(1,1)$. 
Inspired by this, we can consider the~``$B$-polynomial analogue" of this theorem:
\[
\langle \hat{y}\rangle = \exp(\int \frac{\partial \log b}{\partial \log x}d\log a) =  \exp(-\int \frac{\partial_{\log x}B_K}{\partial_{\log b}B_K} d\log a),
\]
and when $(x,b)=(1,1)$, it should give us the~``$B$-polynomial analogue of $\frac{1}{\Delta_K(x)}$". 
\begin{table}
    \centering
    \begin{tabular}{c|c}
        Knot & $\langle \hat{y}\rangle\vert_{(x,b)=(1,1)}$ \\
        \hline
        $\mathbf{3}_1^l$ & $2a-a^2$\\
        $\mathbf{3}_1^r$ & $-a^{-2}+2a^{-1}$\\
        $\mathbf{4}_1$ & $a^{-1}-1+a$\\
        $\mathbf{5}_2$ & $a+a^2-a^3$\\
        $\mathbf{6}_1$ & $a^{-2}-a^{-1}+a$\\
        $\mathbf{7}_2$ & $a+a^3-a^4$\\
        $\mathbf{8}_1$ & $a^{-3}-a^{-2}+a$\\
    \end{tabular}
    \caption{Expectation value of $\hat{y}$ at $(x,b)=(1,1)$}
    \label{tab:yOperatorExpValue}
\end{table}
These can also be directly obtained by solving the~equation $A_K(x=1,y,a)=0$ for $y$. 
It turns out that these are just HOMFLY-PT polynomials $P(K;a,q=1)$ specialized to $q=1$. In fact, it is a~well-known property of (reduced) coloured HOMFLY-PT polynomials that $P_n(K;a,q=1) = P_1(K;a,q=1)^n$, so it is consistent with what we have just observed!

\subsubsection{Geometric properties of the~\texorpdfstring{$B$}{B}-polynomial} \label{sec: B polynomial geometry}

We discuss geometric aspects of the~$B$-polynomial ranging from the~known to the~more conjectural. Consider the~Legendrian conormal torus of a~knot in the~unit cotangent bundle of $S^{3}$ filled by the~conormal or complement Lagrangian in the~resolved conifold. Let $U_{K}(x,a)$ denote its disk potential. Consider a~braid representation of $K$ and take the~limit where $K$ collapses onto a~multiple of the~unknot. In the~resolved conifold the~corresponding multicover of the~toric brane at the~vertex arises as the~limit of the~complement and conormal Lagrangians, and the~$x$-cycle and $y$-cycle are restored when this Lagrangian is shifted along different legs. Holomorphic curves converge in the~corresponding limit to basic curves on the~corresponding toric brane and we find that for the~knot complement such curves are combinations of basic disks in homology classes $x$ and $xa$. It follows that holomorphic curves on the~knot complement have homology class $x^{k}(xa)^{l}$, were $k\ge  0$ and $l\ge 0$. In particular, the~disk potential can be expressed as
\[ 
U_{K}(x,a)=\sum_{k,l\ge 0} c_{k,l}x^{k+l}a^{l}.
\]      
We first consider the~geometric interpretation of $\partial_{\log x}\partial_{\log a}U_{K}$. The~action of the $\partial_{\log x}$-operator can be understood in the following way: fix a~cycle in $M_{K}$ filling the~longitude and add a~boundary marked point on the~disks at intersection points with this cycle. On the~other hand, the~action of $\partial_{\log a}$-operator means: fix a~fiber in the~resolved conifold dual to $\mathbb{CP}^1$ and add an~interior marked point at intersections with this fiber.

Consider now adding the~fiber near the~other vertex in the~toric diagram. Inserting a~toric brane $L$ there, it is easy to see that we can open up any intersection to a~small boundary on this toric brane. This means that $\partial_{\log a}\partial_{\log x} U_{K}(x,a)$ counts holomorphic annuli stretching between $M_{K}$ and $L$ with a~boundary marked point on both boundary components. To state this more precisely, let $\log \xi$ and $\log\eta$ be a~homology basis for the~torus at the~boundary of $L$ and write $V_{K}(x,\xi,a)$ for the~count of annuli stretching between $L$ and $M_{K}$. Then
\[ 
\partial_{\log x}\partial_{\log a} U_{K}(x,a) = \partial_{\log x}\partial_{\log \xi} V_{K}(x,\xi=a,a),
\]  
which shows that the~variable $\eta$ dual to $\xi$ is related to $b$ dual to $a$, but generally not equal to it, see Figure \ref{fig:diskannulus}.

\begin{figure}[htp]
	\centering
	\begin{tikzpicture}
	    \draw (-1, 1) -- (0, 0) -- (2, 0) -- (3, 1);
	    \draw (-1, -1) -- (0, 0);
	    \draw (2, 0) -- (3, -1);
	    \node at (0, 0) [circle,fill,inner sep=1.5pt]{};
	    
	    \draw (0.25, 0) -- (0.25, 0.5);
	    \node[scale=.7] at (0.4, 0.6) {$L$};
	    
	    \draw (2.25, 0.25) -- (2.25, 1);
	    \node[scale=.7] at (2.6, 1) {$M_K$};

	    \node at (0, -3) [circle,fill,inner sep=1.5pt]{};
	    \draw (0, -3) to [out=90, in=190] (2, -2.25);
	    \draw (0, -3) to [out=-90, in=170] (2, -3.75);

	    \draw[dashed] (2, -2.25) to [out=-30,in=30] (2, -3.75);
	    \draw (2, -2.25) to [out=-150,in=150] (2, -3.75);
	    \node[scale=.7] at (2.25, -2) {$M_K$};
	    
	    \node[scale=.7] at (4, -3) {$\partial_{\log x}\partial_{\log a} U_K$};
	    
	    \draw [->,
                    line join=round,
                    decorate, decoration={
                        zigzag,
                        segment length=4,
                        amplitude=.9,post=lineto,
                        post length=2pt
                    }]  (1, -4) -- (1,-5);
	    
	    \draw (0, -5.75) to [out=60, in=190] (2, -5.25);
	    \draw (0, -6.25) to [out=-60, in=170] (2, -6.75);
	    \draw (0, -5.75) to [out=-30,in=30] (0, -6.25);
	    \draw[dashed] (0, -5.75) to [out=-150,in=150] (0, -6.25);

	    \draw[dashed] (2, -5.25) to [out=-30,in=30] (2, -6.75);
	    \draw (2, -5.25) to [out=-150,in=150] (2, -6.75);
	    \node[scale=.7] at (2.25, -5) {$M_K$};
	    
	    \node[scale=.7] at (-0.1, -5.5) {$L$};
	    
	    \node[scale=.7] at (4, -6) {$\partial_{\log x}\partial_{\log\xi} V_K$};
	    
    \end{tikzpicture}
	\caption{Opening up a~disk intersection to annulus}
 	\label{fig:diskannulus}
\end{figure}
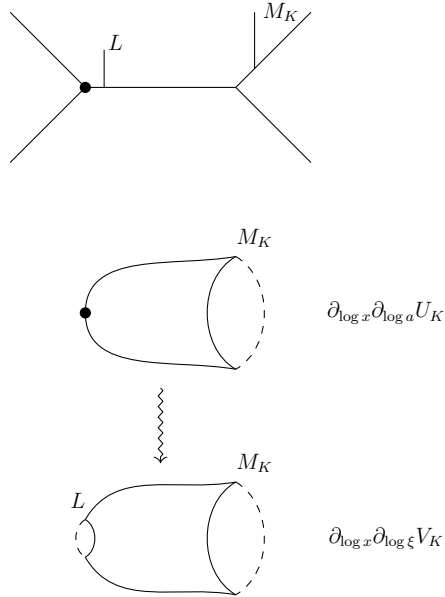


Note next that from the~knot theory perspective, the~boundary of the~toric brane $L$ is the~conormal Legendrian of a~braid axis of $K$. This means that the~annulus count $V_{K}(x,a)$ can be computed from (the~partial information about) the~dg-algebra of the~link $K\cup S$, where $S$ is the~braid axis. More precisely, using one-dimensional curves with a~positive puncture at the~degree $1$ Reeb chord of $L$, we find the~equation
\begin{equation}\label{eq:withtriangles} 
\Delta_{K} + \partial_{\log\eta} A_{L}\cdot \partial_{\log x}\partial_{\log \xi} V_{K}(x,a) = 0,
\end{equation} 
where $\Delta_{K}$ counts holomorphic triangles filled by disks with two positive punctures, see Figure~\ref{fig:triangleandannulus}.
\begin{figure}[htp]
	\centering
	\begin{tikzpicture}
	   
	    \node[scale=1.5] at (-0.5, 0) {$\partial$};
	    \draw (0.5, 2) to [out=-120,in=120] (0.5, -2);
	   
	    \draw (2.25, 1.5) to (0.75, 1.5);
	    \draw (2.25, 1.5) to [out=-135, in=90] (2.25, -1);
		\draw (2.25, -1) arc(0:-180:.75);
		\draw (0.75, 1.5) to [out=-45, in=90] (0.75, -1);
		
		\draw (1.5, -0.9) ellipse (0.2cm and 0.4cm);
	   
	    \draw (2.5, 2) to [out=-60,in=60] (2.5, -2);
	   
	    \node at (1.5, -2.5) {$\dim = 1$};
	   
	    \node[scale=1] at (3.5, 0) {$=$};
	   
	    \draw (5.5, 1.5) to (4.5, 1.5);
	    \draw (5.5, 1.5) to [out=-115, in=90] (5.75, -0.5);
		\draw (5.75, -1.1) arc(0:-180:.75);
		\draw (5.25, -1.1) arc(0:-180:.25);
		\draw (5.25, -1.1) -- (5.25, -0.85) -- (5.75, -0.85) -- (5.75, -1.1);
		\draw (4.25, -1.1) -- (4.25, -0.85) -- (4.75, -0.85) -- (4.75, -1.1);
		
		\draw (5.25, -0.5) arc(0:180:.25);
		\draw (5.25, -0.5) -- (5.25, -0.75) -- (5.75, -0.75) -- (5.75, -0.5);
		\draw (4.25, -0.5) -- (4.25, -0.75) -- (4.75, -0.75) -- (4.75, -0.5);
		
		\node at (4.75, -0.5) [circle,fill,inner sep=1pt]{};
		\node[scale=0.5] at (5, -0.5) {$M_K$};
		
		\draw (4.5, 1.5) to [out=-65, in=90] (4.25, -0.5);
		
		\draw[-{Latex[length=1mm,width=1mm]}] (4.9, -2.3) to [out=90,in=-90] (5.4, -1.8);
		\node at (5, -2.5) {$\dim = 0$};
		
		\node at (5.75, 1) {$L$};
		\node at (9, 1.25) {$L$};
		\node at (9.75, -1.25) {$L$};
		
		\node[scale=1.5] at (6.75, 0.25) {$\cup$};
		
		\node[scale=0.75] at (4, 2) {at $\infty$};
		\draw[-{Latex[length=1mm,width=1mm]}] (4, 1.8) to [out=-90,in=150] (4.5, 1);
		
	    \draw (8.75, 1.5) to (8.25, 1.5);
	    \draw (8.75, 1.5) to [out=-115, in=90] (9, 0);
	    \draw (9, 0) arc(0:-180:.5);
		\draw (8.25, 1.5) to [out=-65, in=90] (8, 0);
		
		\draw[dotted] (8.5, -0.55) -- (8.5, -0.825);
		
		\draw (8.5, -1.25) ellipse (1cm and 0.4cm);
		\draw (8.5, -1.25) ellipse (0.5cm and 0.2cm);
		
		\node at (8, -1.25) [circle,fill,inner sep=1pt]{};
		\node[scale=0.5] at (8.5, -1.25) {$M_K$};
		
		\node[scale=0.75] at (7.5, 2) {at $\infty$};
		\draw[-{Latex[length=1mm,width=1mm]}] (7.5, 1.8) to [out=-90,in=150] (8.2, 1);
	
	    \draw[-{Latex[length=1mm,width=1mm]}] (8.2, -2.3) to [out=90,in=-90] (8.6, -1.8);
	    \node at (8.5, -2.5) {$\dim = 0$};
	    
    \end{tikzpicture}
	\caption{The~boundary of the~1-dimensional moduli space of annuli stretching between $L$ and $M_K$ with positive puncture at one degree one Reeb chord of $L$}
	
	\label{fig:triangleandannulus}

\end{figure}
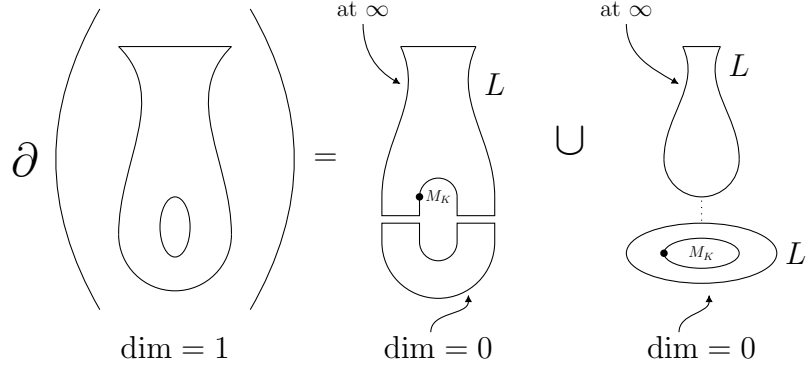

Note also that $L$ is simply the~conormal of an~unknot and therefore its augmentation polynomial $A_{L}$ is well-known. When $K$ is the~unknot, this calculation is a~calculation for the~Hopf link that was carried out in~\cite{EN}. In this case, because of the~simplicity of the~curves involved, $V_{U}(x,\xi,a)$ is independent of $a$ and the~$B$-polynomial of the~unknot can be computed directly from the~above equation. 

In general, by the~definition of $b$ and differentiation, the~$B$-polynomial satisfies a~similar equation:
\begin{equation}\label{eq:withouttriangles}
\partial_{\log x}B_{K} + \partial_{\log\eta} B_{K}\cdot \partial_{\log x}\partial_{\log \xi} U_{K}(x,a) = 0.
\end{equation}    
Geometrically, the~difference between \eqref{eq:withtriangles} and \eqref{eq:withouttriangles} is that in the~former there are no annuli at infinity -- all annuli come from disks with marked points. 

We next speculate what kind of theory would allow us to identify $\xi=a$ and $\eta=b$ and give a~direct geometric interpretation of $b$. Consider the~configuration of $M_{K}$ and $L$ in the~resolved conifold as above. 
As $a\to \infty$, any holomorphic disk on $L_{K}$ with $k$ marked points at the~point where $L$ is inserted contains in the~limit a~disk component where all the~marked points collide, with a~sphere with $k$ marked points on it attached. Consider now $S^{1}$-equivariant curves over the~split off $\mathbb{CP}^{1}$. Then any sphere in class $a^{m}$ has $m$-fold branch points at the~vertices. From the~point of view of the~brane $L$, the~split curve looks like a~disk with a~curve with some $x$-charge split off at the~center. It is natural to conjecture that the~splitting gives rise to a~new contact form at infinity (on $S^3\times S^2$ with a~defect determined by the~Legendrian conormal of $K$) and that for this contact manifold the~open string theory for $L$ is related to the~string theory of $L_{K}$ simply by $\xi=a$ and $\eta=b$, and the~$B$-polynomial would arise simply as the~augmentation polynomial of the~dg-algebra of the~Legendrian torus at infinity.

In the~previous section we observed two properties of the~polynomials $B_{K}(a,b,x=1)$: first, the~coefficient of the~top power of $b$ is divisible by $(1-a)$ and second, the~polynomial contains the~factor $(1-b)$. Assuming that the~theory discussed above exists, these properties of the~$B$-polynomial follow from usual arguments. The~$x=1$ limit corresponds to no shifting, no initial annuli and therefore no disk potential, and $b=1$ is a~solution for which the~$\eta$-cycle in $L$ contracts without correction (the~flux through the~cycle is zero). 

To see that the~coefficient of the~top $b$-degree term must contain a~factor $a-1$, we first recall how a~similar property of the~$A$-polynomial is derived. First, the~coefficient of the~top power of $y$ is divisible by $(1-x)$ and second, at $a=1$ the~polynomial contains the~factor $(1-x)(1-y)$. To see this, we use the~geometric interpretation of the~equation
\[ 
A_{K}(x(y),y,a)=0
\]  
as the~count of ends of the~moduli space of generalized holomorphic disks with boundary on $L_{K}$ and one positive puncture. Here the~boundary corresponds to disks on the~Legendrian torus at infinity, where the~solution corresponds to inserting rigid disks at intersections with bounding chains at infinity, whereas the~polynomial $A_{K}(x,y,a)$ counts augmented disks in the~$\R$-invariant region. Consider a~change of variables of the~degree zero Reeb chords such that the~boundary of all augmentation disks have homology class a~non-negative multiple of $\log y$. After a~similar change of variables for the~degree one Reeb chords, we arrange that the~minimum degree coefficient of augmented disks at infinity also equals zero. Consider now following a~degree zero disk when it moves into the~filling. It cannot split off any disk since any disk without positive puncture has boundary which is a~positive multiple of $\log y$ by positivity of area of holomorphic disks. Also, if the~degree zero disk picks up a~rigid disk, its total homology class turns positive. It follows that the~coefficient of the~constant term in the~augmentation polynomial must have a~solution corresponding to a~cycle that shrinks without splitting, which means that it contains a~factor $(1-x)$ or $(1-ax)$, depending on the~choice of capping disk for $x$.

Consider now $L$ in the~resolved conifold with defect, sitting at the~vertex. Expressing the~theory in terms of $\xi$ or $\eta$ corresponds to shifting $L$ along different legs in the~toric diagram. In the~shift where we expand in $\eta=b$, the~above argument for $A_{K}$ gives the~observed property for $B_{K}$.

\subsection{\texorpdfstring{The~Quantum $AB$-ideal}{AB ideals}}\label{subsec:AB ideals}

\subsubsection{The~quantum AB-ideal}
    Given a~knot $K$, consider the~set of recursion relations
    \begin{equation*}
        \widehat{AB}_K = \{\hat{T} \in \m{C}[\hat{x}^{\pm 1},  \hat{a}^{\pm 1}, q^{\pm 1}, \hat{y}^{\pm 1}, \hat{b}^{\pm 1}] \mid \hat{T} F_k(x, a, q) = 0\}.
    \end{equation*}
    It is clear that $\widehat{AB}_K$ is a~left ideal and that
    \[
        \widehat{AB}_K \cap \m{C}[\hat{x}^{\pm 1},  \hat{a}^{\pm 1}, q^{\pm 1}, \hat{y}^{\pm 1}]  = \langle \hat{A}_K \rangle,
        \qquad \widehat{AB}_K \cap \m{C}[\hat{x}^{\pm 1},  \hat{a}^{\pm 1}, q^{\pm 1}, \hat{b}^{\pm 1}]  = \langle \hat{B}_K \rangle.
    \]
    Hence $\widehat{AB}_K$ is a~generalisation of both the~$\hat{A}$ and $\hat{B}$-polynomials and indeed it is exactly the~\emph{quantum $AB$-ideal} we mentioned earlier. There is a~natural classical limit of this picture via the~ring homomorphism which sends $q \to 1$ and maps the~quantum $AB$-ideal to the~classical $AB$-ideal.
    A question we can immediately ask is how much information is lost when moving to the~classical ideal. The~most we can hope for is captured in the~following conjecture.
    
    \begin{conj} \label{conj: Quantization AB Ideal}
        Given a~set of polynomials $T_1, \cdots, T_n$ which generate $AB_K$, there exist quantizations $\hat{T}_1, \cdots, \hat{T}_n$ which generate $\widehat{AB}_K$.
    \end{conj}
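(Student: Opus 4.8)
\textbf{Proof proposal for Conjecture \ref{conj: Quantization AB Ideal}.}

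The plan is to reduce the statement to two inputs: (i) the existence of \emph{some} quantization of each generator of the classical $AB$-ideal, and (ii) a noncommutative generation argument that upgrades a generating set for the classical ideal to a generating set for the quantum ideal, using the fact that the associated graded of $\widehat{AB}_K$ under the $\hbar$-filtration surjects onto the classical $AB$-ideal. First I would make precise the ambient algebras. Let $\mathcal{D}_q = \mathbb{C}(q)[\hat x^{\pm1},\hat y^{\pm1},\hat a^{\pm1},\hat b^{\pm1}]$ with the relations $\hat y\hat x = q\hat x\hat y$, $\hat b\hat a = q\hat a\hat b$, all other pairs commuting; this is an iterated $q$-difference (quantum torus / Ore) algebra, hence Noetherian, so $\widehat{AB}_K$ is finitely generated as a left ideal. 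Set $q = e^\hbar$ and filter $\mathcal{D}_q$ by powers of $\hbar$; the associated graded is the commutative Laurent polynomial ring $\mathbb{C}[x^{\pm1},y^{\pm1},a^{\pm1},b^{\pm1}]$, and the symbol of a $q$-difference operator is its $q\to1$ limit. By Conjecture \ref{conj2}/Conjecture \ref{conj:refholLag} and the discussion in Section \ref{subsec:holLag}, $F_K$ is annihilated by operators whose symbols cut out $\Gamma_K = \mathrm{Spec}(AB_K)$, so the symbol ideal $\sigma(\widehat{AB}_K)$ contains $AB_K$. Conversely $\sigma(\widehat{AB}_K)\subseteq I(\Gamma_K) = \sqrt{AB_K}$ essentially by definition (a classical solution of the symbol must be compatible with the semiclassical wave function on the corresponding branch, which by Conjecture \ref{conj:FK_other_branches} realizes every branch of $\Gamma_K$); if one assumes $AB_K$ is radical, this gives $\sigma(\widehat{AB}_K) = AB_K$.

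The main step is then the following lifting lemma: if $T_1,\dots,T_n$ generate $AB_K$ and each $T_i = \sigma(\hat T_i)$ for some $\hat T_i \in \widehat{AB}_K$, then $\hat T_1,\dots,\hat T_n$ generate $\widehat{AB}_K$. I would prove this by $\hbar$-adic descending induction on the order of vanishing. Take any $\hat S\in\widehat{AB}_K$. Its symbol $\sigma(\hat S)\in AB_K$, so $\sigma(\hat S) = \sum_i g_i T_i$ for commutative Laurent polynomials $g_i$; lift each $g_i$ to $\hat g_i\in\mathcal{D}_q$ with $\sigma(\hat g_i) = g_i$. Then $\hat S' := \hat S - \sum_i \hat g_i \hat T_i$ lies in $\widehat{AB}_K$ (left ideal) and has strictly higher $\hbar$-vanishing order than $\hat S$, i.e. $\hat S' \in \hbar\,\mathcal{D}_q$. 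Dividing by $\hbar$ and noting that $\hbar$ is a non-zero-divisor and that annihilating $F_K$ is preserved (since $\hbar$ acts by a scalar on each graded piece — more carefully, $\widehat{AB}_K$ is saturated: $\hbar\hat T\in\widehat{AB}_K \Rightarrow \hat T\in\widehat{AB}_K$, because $F_K$ is a nonzero formal series), we get $\hat S'/\hbar\in\widehat{AB}_K$ and may repeat. The only subtlety is convergence of this infinite process; one handles it by working order-by-order in $\hbar$ to produce, for each $N$, an approximation $\hat S \equiv \sum_i \hat g_i^{(N)}\hat T_i \pmod{\hbar^N}$, and then invoking Noetherianity of $\mathcal{D}_q$ (or of the finite sub-bimodule generated by the $\hat T_i$ over the relevant localization) to conclude that the limit is a genuine finite $\mathcal{D}_q$-combination — this is the standard ``formal completion plus Noetherian'' argument and is where I would be most careful.

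The remaining ingredient is input (i): that each classical generator $T_i$ of $AB_K$ is the symbol of \emph{some} element of $\widehat{AB}_K$. For the distinguished generators $A_K$ and $B_K$ this is exactly the content of the quantizability statements already in the paper (the quantum $A$-polynomial $\hat A_K$ exists and annihilates $F_K$ for every branch; likewise $\hat B_K$ by the construction in Section \ref{sec:Bpoly} via $q$-holonomicity of cyclotomic coefficients \cite{MM21}); for a general generator one appeals to the universal quantization framework of \cite{GS2}, which produces a $q$-difference operator with prescribed classical limit whenever the classical curve meets the relevant $K$-theoretic/quantizability conditions, combined with the fact — guaranteed by Conjectures \ref{conj2} and \ref{conj:refholLag} — that $\Gamma_K$ is the semiclassical shadow of the honest $D$-module annihilating $F_K$. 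I expect the genuine obstacle to be this point together with the radicality of $AB_K$: one must either assume $AB_K$ is radical (so that $\sigma(\widehat{AB}_K) = AB_K$ exactly, not just up to nilpotents) or argue that $\widehat{AB}_K$ is ``maximally large'' in the sense that no operator with symbol in $\sqrt{AB_K}\setminus AB_K$ can annihilate $F_K$ — which is plausible given the branch-completeness of $\{F_K^{(\alpha)}\}$ spanning the full $\deg_y A_K \cdot \deg_b B_K$-dimensional solution space, but would require a dimension count I would not attempt to fully carry out here. In short: the algebra (Noetherian lifting) is routine; the geometry (radicality of $AB_K$ and surjectivity of the symbol map onto it) is the crux, and I would isolate it as a clearly-stated hypothesis if a clean proof is not available.
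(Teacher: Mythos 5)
First, a point of order: the paper does not prove this statement. It is stated as a conjecture, and the authors say explicitly that ``it is not clear how to prove Conjecture \ref{conj: Quantization AB Ideal}'', that their only method of producing elements of $\widehat{AB}_K$ is ``essentially trial and error'', and that the sub-ideals they exhibit (e.g.\ $\langle \hat B_{\mathbf{3}_1},\hat D_{\mathbf{3}_1}\rangle$) may not be all of $\widehat{AB}_K$. So you are proposing a strategy for an open problem, and the question is whether your outline closes the gaps the authors left open. It does not, for three concrete reasons. The central one is that your input (i) --- that every classical generator $T_i$ of $AB_K$ is the symbol of \emph{some} element of $\widehat{AB}_K$, i.e.\ $AB_K\subseteq\sigma(\widehat{AB}_K)$ --- is not an auxiliary lemma but is essentially the conjecture itself: it asserts the existence of quantizations lying in the annihilator, which is exactly what must be proved. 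Appealing to Conjectures 2 and 3 of the paper and to the quantization framework of \cite{GS2} is circular here: those statements are themselves conjectural, and \cite{GS2} produces an operator with a prescribed classical limit for a single curve subject to K-theoretic conditions; it neither applies to an arbitrary generator of a non-principal ideal nor guarantees that the resulting operator annihilates $F_K$.

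The two remaining gaps are in the parts you call ``routine''. Your lifting lemma is a Nakayama-type statement ($N+\hbar M=M\Rightarrow N=M$), which fails over rings in which $\hbar$ does not lie in the Jacobson radical; the paper's ambient ring is $\mathbb{C}[q^{\pm1}][\hat x^{\pm1},\hat y^{\pm1},\hat a^{\pm1},\hat b^{\pm1}]$ and $q-1$ is not in its Jacobson radical. A minimal counterexample to the mechanism: $M=\mathbb{C}[\hbar]$, $N=(1-\hbar)\mathbb{C}[\hbar]$ satisfies $N+\hbar^N M=M$ for all $N$, yet $N\neq M$; the order-by-order approximants $1\equiv(1+\hbar+\dots+\hbar^{N-1})(1-\hbar)\pmod{\hbar^N}$ exist but do not converge in the uncompleted ring. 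Passing to the $(q-1)$-adic completion repairs the induction, but you must then descend generation back to $\mathbb{C}[q^{\pm1}]$ (or $\mathbb{C}(q)$), which is a nontrivial faithful-flatness issue you do not address and which at best yields generation after localizing at $q=1$. Finally, the identification $\sigma(\widehat{AB}_K)=AB_K$ is doubtful even granting radicality: the paper defines $\widehat{AB}_K$ as the annihilator of a \emph{single} series $F_K$, and the annihilator of one solution is generically strictly larger than the common annihilator of all branches, so its symbol ideal can strictly contain $I(\Gamma_K)$ --- this is precisely the subtlety behind the AJ conjecture for the quantum $A$-polynomial. Your proposed dimension count over all branches is the right instinct, but it requires redefining $\widehat{AB}_K$ as the joint annihilator of all $F_K^{(\alpha)}$ and then proving branch-completeness, neither of which is done. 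In short, the algebraic scaffolding is plausible but not routine as stated, and the geometric crux you isolate at the end is the whole theorem.
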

    
    The~main obstacle in studying the~$AB$-ideal compared to either the~$A$- or $B$-polynomials is that it is much more difficult to compute. In most cases all we can say is that we have found a~sub-ideal which we suspect is the~whole ideal. With that being said, we are able to show that in general this ideal is richer than simply the~$A$- and $B$-polynomials combined.
    
    We call $AB_K$ \emph{simple} if $AB_K = \langle A_K, B_K \rangle$, and for the~collection of knots we studied $AB_K$ is simple only for the~unknot. This is a~slightly surprising result as recently a~similar phenomenon was studied in~\cite{MM21}, where authors worked with the~equivalent of the~$A$- and $B$-polynomials coming from coefficients of the~cyclotomic expansion of the~coloured Jones polynomial. In that case they claim that their version of the~$AB$-ideal is always simple. Further work needs to be done to understand the~origin of the~difference between these results.

\subsubsection{Unknot}
    For the~unknot we work with the~unreduced normalisation
    \[
        F^{unreduced}_{\mathbf{0}_1}(x,a,q) = \frac{(xq;q)_{\infty}}{(xa;q)_{\infty}}
    \]
    as the~ideal is clearly uninteresting when working with the~reduced normalisation\footnote{As $F_{\mathbf{0}_1}(x, a, q) = 1$ in the~reduced normalisation.}. In this case we can easily describe the~action of the~$\hat{y}$ and $\hat{b}$ operators:
    \begin{align*}
        \hat{y}F^{unreduced}_{\mathbf{0}_1}(x, a, q) & = \frac{1 - a x}{1 - q x}F^{unreduced}_{\mathbf{0}_1}(x, a, q),
        \\ \hat{b}F^{unreduced}_{\mathbf{0}_1}(x, a, q) & = (1 - ax)F^{unreduced}_{\mathbf{0}_1}(x, a, q).
    \end{align*}
    and this instantly gives us three linear elements which clearly generate the~ideal:
    \[
        \hat{b} - (1 - \hat{a}\hat{x}), \qquad \qquad (1 - q\hat{x})\hat{y} - (1 - \hat{a}\hat{x}), \qquad \qquad (1 - q\hat{x})\hat{y} - \hat{b}.
    \]
    In this case the~third element is the~difference of the~first two, so $AB_{\mathbf{0}_1} = \langle A_{\mathbf{0}_1}, B_{\mathbf{0}_1}\rangle$ and we conclude that the~$AB$-ideal of the~unknot is simple.
\subsubsection{Trefoil} \label{sec: trefoil AB}
    For knots more complicated than the~unknot, it is not easy to simply compute $I_K$ from observation. Instead, we use the~quiver forms introduced earlier. Given a~knot $K$, let $Q$ be an~associated quiver with identifications $x_i \mapsto x^{n_i}a^{a_i}q^{l_i}$. Each node $i$ gives rise to a~classical quiver $A$-polynomial as in equation \eqref{eq:Classical Quiver A Polynomial}. Combining these polynomials together, we obtain the~classical quiver ideal $I_{Q} \subset \m{C}[\Bf{x}^{\pm 1}, \Bf{y}]$. The~identifications
    \begin{equation*}
        y = \prod_i y_i^{n_i}, \qquad\qquad b = \prod_i y_i^{a_i}
    \end{equation*}
    embed $\m{C}[\Bf{x}^{\pm 1}, y, b]$ as a~subring inside $\m{C}[\Bf{x}, \Bf{y}]$ and there is a~map
    \begin{align*}
        j_K: \m{C}[\Bf{x}^{\pm 1}, y, b] & \to \m{C}[x^{\pm 1}, a^{\pm 1}, y, b],
        \\ x_i &\mapsto x^{n_i}a^{a_i}. \nonumber
    \end{align*}
    Hence we can map $I_{Q_K}$ to an~ideal in $\m{C}[x^{\pm 1}, a^{\pm 1}, y, b]$ by intersecting with $\m{C}[\Bf{x}^{\pm 1}, y, b]$ and applying $j_K$. This is the~classical $AB$-ideal\footnote{In principle, the~intersection can leave behind extra components. In this case the~$AB$-ideal is the~unique component whose projections to $\m{C}_{x, y, a}$ and $\m{C}_{a, b, x}$ are the~$A$- and $B$-ideals.}:
    \begin{equation*}
        AB_K = j_K\big(I_{Q_K} \cap \m{C}[\Bf{x}, y, b]\big).
    \end{equation*}
    Given a~small quiver, this computation can be performed using Gr\"obner basis, so in principle we can compute classical $AB$-ideals. However, it is still highly nontrivial  to re-quantize $AB_K$ to construct the~quantum ideal. Currently the~only approach is essentially trial and error. Additionally, it is not clear how to prove Conjecture \ref{conj: Quantization AB Ideal}, so the~resultant ideal may not be the~entire quantum ideal.
    
    For a~more concrete example, let us work with the~trefoil. Recall from Section \ref{sec:Examples} that it has a~quiver form with
	\[
	    C  = \begin{pmatrix}
			0 & -1 & 0 & -1 \\
			-1 & -1 & 0 & -1 \\
			0 & 0 & 1 & 0 \\
			-1 & -1 & 0 & 0 \\
		\end{pmatrix},
		\qquad
		\begin{array}{l}
		     \Bf{n}  = (2, 2, 1, 1),  \\
		     \Bf{a}  = (1, 0, 1, 0), \\
		     \Bf{l}  = \frac{1}{2}(0, 3, -1, 1).
		\end{array}
	\]
	From this setup we can write down the~four classical quiver $A$-polynomials:
	\begin{align*}
		A_1(\Bf{x}, \Bf{y}) &= 1 - y_1 + x_1y_2^{-1}y_4^{-1}, & & A_2(\Bf{x}, \Bf{y}) = 1 - y_2 + x_2y_1^{-1}y_2^{-1}y_4^{-1},
		\\ A_3(\Bf{x}, \Bf{y}) &= 1 - y_3 + x_3y_3, & & A_4(\Bf{x}, \Bf{y}) = 1 - y_4 + x_4y_1^{-1}y_2^{-1},
	\end{align*}
	and see that our $y$ and $b$ operators are given by
	\[
		y = y_1^2y_2^2y_3y_4, \qquad \qquad b = y_1y_3.
	\]
	We can use Mathematica to compute the~intersection and then manually remove the~spurious irreducible components. Passing to the~ideal, we shift $b$ and $y$ to incorporate the~the~prefactor of $x^{-1} e^{\frac{\log(x)\log(a)}{\log(q)}}$ and we are left with
	\begin{equation*}
		\langle B_{\mathbf{3}_1}, 1 - x^{-1}(1 - a)(1 - a x^2)b - y\rangle.
	\end{equation*}
    For simplicity, we call the~other polynomial $D_{\mathbf{3}_1}$.
    \begin{prop}
        We have a~strict containment of ideals:
        \begin{gather*}
            \langle A_{\mathbf{3}_1}, B_{\mathbf{3}_1}\rangle \subsetneq \langle B_{\mathbf{3}_1}, D_{\mathbf{3}_1}\rangle.
        \end{gather*}
    \end{prop}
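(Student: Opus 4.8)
The plan is to treat the two inclusions separately: first $\langle A_{\mathbf{3}_1}, B_{\mathbf{3}_1}\rangle \subseteq \langle B_{\mathbf{3}_1}, D_{\mathbf{3}_1}\rangle$, and then strictness.

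For the inclusion, I would exploit that $D_{\mathbf{3}_1} = 1 - x^{-1}(1-a)(1-ax^2)b - y$ is affine-linear in $y$ with unit leading coefficient, so reduction modulo $D_{\mathbf{3}_1}$ amounts to the substitution $y \mapsto 1 - x^{-1}(1-a)(1-ax^2)b$. Carrying this out on $A_{\mathbf{3}_1}(x,y,a)$, which has $y$-degree $2$, yields a Laurent polynomial $r(x,a,b)$ of $b$-degree at most $2$, and the claim is that $r$ is a Laurent-monomial multiple of $B_{\mathbf{3}_1}(a,b,x)$; verifying this is a short explicit computation with the two matrices displayed above. Conceptually the same conclusion follows because $A_{\mathbf{3}_1}$ vanishes on $\Gamma_{\mathbf{3}_1}$ (the projection of $\Gamma_{\mathbf{3}_1}$ onto $(\mathbb{C}^*)^3_{x,y,a}$ is exactly the zero locus of $A_{\mathbf{3}_1}$), and the excerpt already records $AB_{\mathbf{3}_1} = \langle B_{\mathbf{3}_1}, D_{\mathbf{3}_1}\rangle$; hence $A_{\mathbf{3}_1}\in\langle B_{\mathbf{3}_1},D_{\mathbf{3}_1}\rangle$.

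For strictness I would exhibit a point of $(\mathbb{C}^*)^4$ at which $A_{\mathbf{3}_1}$ and $B_{\mathbf{3}_1}$ both vanish but $D_{\mathbf{3}_1}$ does not; since every element of $\langle A_{\mathbf{3}_1},B_{\mathbf{3}_1}\rangle$ must vanish at such a point, this forces $D_{\mathbf{3}_1}\notin\langle A_{\mathbf{3}_1},B_{\mathbf{3}_1}\rangle$. Fix a generic $(x_0,a_0)\in(\mathbb{C}^*)^2$, away from $a_0=1$, from $1-a_0x_0^2=0$, and from the $y$- and $b$-discriminant loci of $A_{\mathbf{3}_1}$ and $B_{\mathbf{3}_1}$. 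Then $A_{\mathbf{3}_1}(x_0,\cdot,a_0)$ has two distinct roots $y_1\neq y_2$ and $B_{\mathbf{3}_1}(a_0,\cdot,x_0)$ has two distinct roots $b_1\neq b_2$, while $\Gamma_{\mathbf{3}_1}\to(\mathbb{C}^*)^2_{x,a}$ is a degree-$2$ cover whose fibre over $(x_0,a_0)$ — using the canonical one-to-one correspondence between branches of $y$ and branches of $b$ on $\Gamma_{\mathbf{3}_1}$ — can be written, after relabelling, as $\{(x_0,y_1,a_0,b_1),(x_0,y_2,a_0,b_2)\}$. Since $D_{\mathbf{3}_1}$ lies in the ideal of $\Gamma_{\mathbf{3}_1}$, we have $y_i = 1 - x_0^{-1}(1-a_0)(1-a_0x_0^2)b_i$ for $i=1,2$. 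Taking the mismatched point $P := (x_0,y_1,a_0,b_2)$, we get $A_{\mathbf{3}_1}(x_0,y_1,a_0)=0$ and $B_{\mathbf{3}_1}(a_0,b_2,x_0)=0$, whereas subtracting the vanishing of $D_{\mathbf{3}_1}$ at $(x_0,y_1,a_0,b_1)$ gives
\[
D_{\mathbf{3}_1}(P) = -x_0^{-1}(1-a_0)(1-a_0x_0^2)(b_2-b_1) \neq 0,
\]
which completes the argument.

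The main obstacle is making rigorous the assertion that $V(A_{\mathbf{3}_1})\cap V(B_{\mathbf{3}_1})$ is strictly larger than $\Gamma_{\mathbf{3}_1}$: since $A_{\mathbf{3}_1}$ constrains only $(x,y,a)$ and $B_{\mathbf{3}_1}$ only $(a,b,x)$, their common locus contains all $\deg_y A_{\mathbf{3}_1}\cdot\deg_b B_{\mathbf{3}_1}=4$ pairings of a $y$-branch with a $b$-branch over a generic $(x,a)$, while $\Gamma_{\mathbf{3}_1}$ retains only the $2$ diagonal ones. The two ingredients needed here are the generic squarefreeness of $A_{\mathbf{3}_1}$ in $y$ and of $B_{\mathbf{3}_1}$ in $b$ (a discriminant check, already visible from the two genuinely distinct trefoil branches computed in Section~\ref{sec:FK different branches}) and the canonical branch correspondence on $\Gamma_{\mathbf{3}_1}$ (available in the excerpt). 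Everything else — the reduction modulo $D_{\mathbf{3}_1}$ and the divisibility by $B_{\mathbf{3}_1}$ — is routine.
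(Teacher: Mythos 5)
Your argument is correct and, for the containment, essentially reproduces what the paper does: since $D_{\mathbf{3}_1}$ is monic and linear in $y$, dividing $A_{\mathbf{3}_1}$ by it is the same as producing the explicit identity \eqref{eq:A poly in ideal}, and the remainder is indeed a multiple of $B_{\mathbf{3}_1}$. One small correction: the multiplier is $\frac{(1-a)(1-ax^2)^2}{a^2}$, a genuine polynomial (times the unit $a^{-2}$), not a Laurent monomial as you assert; this does not affect the logic, since all you need is that the remainder lies in $\langle B_{\mathbf{3}_1}\rangle$, but you would notice it the moment you carried out the short computation. Where your write-up genuinely adds value is the strictness step, which the paper dismisses as ``an easy commutative algebra exercise'' with no details: your evaluation argument --- exhibiting a mismatched point $(x_0,y_1,a_0,b_2)$ lying on $V(A_{\mathbf{3}_1})\cap V(B_{\mathbf{3}_1})$ but not on $\Gamma_{\mathbf{3}_1}$, and computing $D_{\mathbf{3}_1}$ there as $-x_0^{-1}(1-a_0)(1-a_0x_0^2)(b_2-b_1)\neq 0$ --- is a clean and complete way to do that exercise. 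The only hypotheses you need to discharge are the generic separability of $A_{\mathbf{3}_1}$ in $y$ and of $B_{\mathbf{3}_1}$ in $b$ (immediate from the displayed degree-$2$ polynomials having nonvanishing discriminant for generic $(x_0,a_0)$) and the fact that $D_{\mathbf{3}_1}$, being a generator of the ideal of $\Gamma_{\mathbf{3}_1}$, vanishes at the matched point $(x_0,y_1,a_0,b_1)$; you identify both correctly. So the proposal is sound and somewhat more self-contained than the paper's own treatment.
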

    \begin{proof}
        Containment follows from the~equality
        \begin{align} 
	        A_{\mathbf{3}_1} & = \frac{(1 - a)(1 - ax^2)^2}{a^2}B_{\mathbf{3}_1}- \frac{1}{a^2}\Big((1 - a~- 2ax^2 + 2a^2x^2 + a^2x^3 - a^3x^4)\nonumber
		    \\ & \qquad\qquad\qquad\qquad\qquad\qquad\qquad - x^{-1}(1 - a)(1 - ax)(1 - ax^2)b + (1 - ax)y\Big)D_{\mathbf{3}_1}. \label{eq:A poly in ideal}
	    \end{align}
	    Proving that this containment is strict follows from a~easy commutative algebra exercise.
    \end{proof}

	There are two obvious question to ask here. First, can we quantize $D_{\mathbf{3}_1}$, and second, can we promote Equation \eqref{eq:A poly in ideal} to a~quantum version which relates $\hat{A}_{\mathbf{3}_1}, \hat{B}_{\mathbf{3}_1}$, and $\hat{D}_{\mathbf{3}_1}$. Through trial and error, we find that the~answer to both questions is positive. One possible pair of quantizations is
	\begin{align*}
	    \hat{D}_{\mathbf{3}_1} & = 1 - x^{-1}(1 - q^{-1}\hat{a})(1 - q \hat{a} \hat{x}^2)\hat{b} - \hat{y}
	    \\ 
	    \hat{A}_{\mathbf{3}_1} & =  \frac{q(1 - q^{-1}\hat{a})(1 - q\hat{a}\hat{x}^2)(1 - q^2\hat{a}\hat{x}^2)(1 - q^3\hat{a}\hat{x}^2)}{\hat{a}^2}\hat{B}_{\mathbf{3}_1}
		\\ & \quad - \frac{1}{\hat{a}^2}  \Big(q^2(1 - q^{-1}\hat{a} - (q + q^2)\hat{a}\hat{x}^2 + (1 + q)\hat{a}^2\hat{x}^2 + q^2\hat{a}^2\hat{x}^3 - q^2\hat{a}^3\hat{x}^4)(1 - q^3\hat{a}\hat{x}^2) \nonumber
		\\ & \quad \quad - qx^{-1}(1 - q^{-1}\hat{a})(1 - q\hat{a}\hat{x})(1 - q\hat{a}\hat{x}^2)(1 - q^3\hat{a}\hat{x}^2)\hat{b}  + q^2(1 - q\hat{a}\hat{x})(1 - q\hat{a}\hat{x}^2)\hat{y}\Big)\hat{D}_{\mathbf{3}_1}. \nonumber
	\end{align*}
	Note that there is an~interesting uniqueness question here as there are many ways to quantize both Equation \eqref{eq:A poly in ideal} and $D_{\mathbf{3}_1}$. In particular, as both $\hat{B}_{\mathbf{3}_1}$ and $\hat{D}_{\mathbf{3}_1}$ annihilate $F_{\mathbf{3}_1}$, the~multiplication of Equation \eqref{eq:A poly in ideal} by any $q$ factors will produce an~element of the~quantum ideal whose classical limit is the~classical $A$-polynomial. This gives rise to a~large family of elements whose classical limits all correspond to $A_{\mathbf{3}_1}$. In this particular case, we can define~$\hat{A}_{\mathbf{3}_1}$ as the~minimal element of the~family which lies in $\m{C}[\hat{y}^{\pm 1}, \hat{x}^{\pm 1}, a^{\pm 1}, q^{\pm 1}]$, but in general there will not be a~uniquely defined quantization. For a~simple example of this phenomenon, consider the~element
	\[
		x^{-1}(1 - a)(1 - x)b^2 - a^{-1}y - (1 - x)b + a^{-1}x^{-1}by,
	\]
	which lies in the~classical ideal. This is quantized by the~family
	\[
		a^{-1}x^{-2}(1 - a)(1 - q^i + a q^{1 + i}x - aq^2x^2)\hat{b}^2 - q^{i + 1}a^{-1}\hat{y} - a^{-1}x^{-1}(1 - q^i + a q^{i + 1}x- a q^{2 + i}x^2)\hat{b} +  a^{-1}x^{-1}\hat{b}\hat{y}
	\]
	for any integer $i$. Additionally, note that while we have only shown
	\[
		\widehat{AB}_{\mathbf{3}_1} \supset \langle \hat{B}_{\mathbf{3}_1}, \hat{D}_{\mathbf{3}_1}\rangle,
	\]
	this does imply that $\widehat{AB}_{\mathbf{3}_1}$ is larger than $\langle \hat{A}_{\mathbf{3}_1}, \hat{B}_{\mathbf{3}_1}\rangle$, which means that the~$AB$-ideal for the~trefoil is not simple.

\subsubsection{Other Torus Knots}
    For knots whose associated quivers are larger, passing through the~quantum quiver $A$-polynomials runs into to computational problems. It is still possible though to find interesting elements of the~$AB$-ideal and show that the~ideal is larger than the~one generated by the~$A$- and $B$-polynomials. As we show here, in certain cases we can upgrade known recursion relations to elements of this ideal. In~\cite{Hik}, Hikami shows the~existence of a~non-homogeneous recursion relation for the~coloured Jones polynomial of the~torus knot $T_{2, 2p + 1}$:
    \begin{equation*}
        J_{r}(T_{2, 2p + 1}; q) = q^{p r}\frac{1 - q^{2r + 1} }{1 - q^{r + 1}} - q^{(2p + 1)r + p + 1}\frac{1 - q^{r}}{1 - q^{r + 1}}J_{r - 1}(T_{2, 2p + 1}; q).
    \end{equation*}
    Promoting $q^r = x$ and replacing the~Jones polynomial by $F_p = F_{T_{2, 2p+1}}$, this becomes
    \begin{equation}
        (1 - qx)F_p(x, q) = x^p(1 - q x^2) - q^{p + 1}x^{2p + 1}(1 - x)F_p(q^{-1}x, q).
    \label{eq:Torus knots F_K}
    \end{equation}
    Next, recall that for the~$a$-deformed $F_K$ we have the~following pair of relations:
    \[
        F_K(x, q^2, q)  = F_K(x, q),
        \qquad\qquad F_K(x, q, q)  = 1.
    \]
    Incorporating this into \eqref{eq:Torus knots F_K}, we find that
    \begin{equation*}
        (1 - qx)F_p(x, q^2, q) = x^p(1 - q x^2) F_p(x, q, q) - q^{p + 1}x^{2p + 1}(1 - x)F_p(q^{-1}x, q^2, q).
    \end{equation*}
    This looks exactly like the~$N = 2,\; a =q^2$ limit of a~more general recursion relation. Through a~little trial and error, we can restore the~$a$ dependence to get
    \begin{equation*}
        (1 - q^{-1} a x)F_p(x, a, q) = x^p(1 - q^{-1} a x^2)F_p(x, q^{-1} a, q) - q^{- p - 1} a^{p + 1} x^{2p + 1}(1 - x)F_p(q^{-1}x, a, q),
    \end{equation*}
    so we have a~general recursion relation for $F_p$:
    \begin{equation*}
        (1 - q^{-1} \hat{a}\hat{x}) - x^p(1 - q^{-1} \hat{a}\hat{x}^{2})\hat{b}^{-1} + q^{- p - 1} a^{p + 1}\hat{x}^{2p + 1}(1 - x)\hat{y}^{-1}.
    \end{equation*}
    It is easy to show that the~classical limit of this polynomial does not lie in $\langle A_p, B_p \rangle$ for any torus knot $T_{2, 2p + 1}$ so that $AB$-ideal is not simple in all these cases.

\subsection{Refinement}

Almost everything we have discussed so far can be generalized to the refined (i.e.\ $t$-deformed) setting. 
Solving the recursion given by the quantum super-$A$-polynomial $\widehat{A}_K(\hat{x},\hat{y},a,q,t)$, we obtain wave functions $F_K^{(\alpha)}(x,a,q,t)$, one for each branch $\alpha$ of the super-$A$-polynomial, which are $t$-deformations of the wave functions $F_K^{(\alpha)}(x,a,q)$ we have discussed in previous sections. 

Just like the unrefined wave functions, these wave functions admit quiver expressions. In fact, in all the examples we have considered, the quiver for the refined wave function is the same as the quiver for the unrefined one; refinement only affects the knots-quivers change of variables which now involves the variable $t$. 

Once we have a quiver expression for a wave function $F_K^{(\alpha)}(x,a,q,t)$, the left-ideal of $q$-difference operators in $\hat{x},\hat{y},\hat{a},\hat{b},\hat{t},\hat{u}$ (where $u$ is a variable conjugate to $t$) that annihilates the wave function can be obtained from the quantum quiver $A$-polynomials for the quiver expression, through non-commutative elimination theory. This quantum ideal is the quantisation of the ideal defining the complex $3$-dimensional holomorphic Lagrangian subvariety $\Gamma_K$ in Conjecture \ref{conj:refholLag}. 

The classical holomorphic Lagrangian subvariety $\Gamma_K \subset (\mathbb{C}^*)^6$ can be described and computed more explicitly, by applying elimination theory (e.g.\ Gr\"obner basis) to the ideal defined by the quiver $A$-polynomials. 

Empirically, we find that the Weyl symmetry of the (unrefined) holomorphic Lagrangian subvariety described in Section \ref{subsec:holLag} can be lifted to the Weyl symmetry of the refined holomorphic Lagrangian, which is stated in part (1) of Conjecture \ref{conj:refholLag}. When projected down to $(\mathbb{C}^*)^4$ parametrised by variables $x,y,a,t$, this version of Weyl symmetry was already noted in equation (8.5) of \cite{GNSSS}. 

There are some interesting limits regarding the new variable $u$ conjugate to $t$. 
For example, the expectation value of the $\hat{u}$-operator on $F_K^{(\text{ab})}(x,a,q,t)$, when $a=1, t=-1$ is
\[
\langle \hat{u}\rangle \big\vert_{a=1,t=-1} = \frac{1}{\Delta_K(x)}.
\]
One way to see this is by setting $t=-q^N$ in the equation (108) of \cite{EGGKPS}. 
Another way to see this is from $\mathfrak{sl}_1$ pairs. Recall that the Alexander polynomial arises from a count of annuli and that to get the expression we are interested in $\exp(\frac{\partial U_K}{2\partial \log a})$ at $a=1$. 
We should think of a deformed (stretched) situation where all annuli are generalised annuli consisting of disks with a 4-chain intersection. Assume now that there is some quiver description where the annuli we count are the basic disks with 4-chain intersection. Here a quiver node with `charges'
$a^r q^s t^l x^k$
would contribute with 
$\frac{1}{2}r (-1)^l x^k$
annuli. 
Here $r$ comes from the number of 4-chain intersection, and $(-1)^l$ comes from the spin structure orientation sign. The interpretation of $t$ is the number of twists in the trivialization along the boundary, and $x^k$ is just the homology class of the boundary. 
Assume now that nodes come in $\mathfrak{sl}_{1}$ pairs corresponding to factors $(1+tq^{-2}a^2)$ in the super polynomial. 
We would then have two nodes $a^r t^l x^k$  and $a^{r+2}t^{l+1}x^k$ (suppressing $q$ powers which play no role). 
Taking derivative with respect to $\log a^2$ and setting $a=1$, $t=-1$ or taking derivative with respect to $\log t$ and setting $a=1$, $t=-1$ gives the same total contribution to annuli:
$(-1)^{l+1}x^k$. After an overall change of framing, also the contributions to the $\log a^2$ and $\log t$ derivatives from the surviving $\mathfrak{sl}_1$ node agree.

\subsection{Applications to computing \texorpdfstring{$F_K$}{FK} on different branches} \label{subsec: Computing FK on different branches}

In this section we show how to use the~quantum $B$-polynomial to further analyse $F_K$ on different branches as introduced in Section \ref{sec:FK different branches}.

\subsubsection{General argument}

In Section \ref{sec:FK different branches}, sometimes we computed
$F_K$ recursively, using the~quantum $A$-polynomial. In general, this approach gives solutions up to multiplication by a~function of $a$ and $q$.
While it is not possible to completely eliminate this uncertainty, using the~quantum $B$-polynomial we can fully determine the~$a$-dependence. Then in the~event that we know $F_K$ for any specialization $a = q^N$, we can determine the~$q$-dependence as well. This method relies on the~observation that Proposition \ref{prop:classical_asymptotics} and Conjecture \ref{conj:quantum_initial_condition} apply equally well to the~$B$-polynomial, as well as the~following conjecture. 
\begin{conj}\label{conj:AB_Solution_Match}
    Let $e_A$ be a~left edge of the~Newton polygon of the~$A$-polynomial with slope $\frac{n_x}{n_y}$. Then there exists a~left edge $e_B$ of the~Newton polygon of the~$B$-polynomial with slope $\frac{n_a}{n_b}$ and a Puiseux series in $x, a, q$ satisfying
    \begin{align}
        \hat{A}_KF_K^{(e_A, e_B)} = 0\quad \text{and} \quad \hat{B}_KF_K^{(e_A, e_B)} = 0.
    \end{align}
    This series has the~form
    \[
        F_K^{(e_A, e_B)}(x, a, q) = 
            \exp(\frac{p(\log x,\log a)}{\hbar}) \left(1 + \sum_{i, j \geq 1} f_{i, j}(q) x^{\frac{i}{n_y}} a^{\frac{j}{n_b}}\right),
    \]
    where $p$ is a~polynomial of degree at most 2 determined by $e_A$ and $e_B$, and each coefficient $f_{i, j}$ is itself a~Laurent series in $q$ with integer coefficients.
\end{conj}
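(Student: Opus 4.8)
```latex
\begin{proof}[Proof proposal]
The plan is to establish the conjecture by combining the two one-variable results (Proposition~\ref{prop:classical_asymptotics} and Conjecture~\ref{conj:quantum_initial_condition}) for $\hat A_K$ and for $\hat B_K$ separately, and then showing that the two initial-value problems can be solved \emph{simultaneously} because of the compatibility encoded in the holomorphic Lagrangian $\Gamma_K$ and the relation $\partial_{\log x}\log b = \partial_{\log y}\log a$ from \eqref{eq:abxyrelation}. First I would fix a left edge $e_A$ of the Newton polygon of $A_K$ (in the $y,x$-variables) with slope $\frac{n_x}{n_y}$ and invoke the $A$-side version of Conjecture~\ref{conj:quantum_initial_condition}: there is a solution of $\hat A_K F = 0$ of the form $\exp(\tfrac1\hbar(-\tfrac12\tfrac{n_x}{n_y}(\log x)^2 + \cdots))(1 + \sum_{i\ge1} f_i(a,q)x^{i/n_y})$, uniquely determined up to an overall factor depending only on $a$ and $q$. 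The classical branch $y^{(\alpha)}(x,a)$ that this solution realizes, via $\langle\hat y\rangle = y^{(\alpha)}$, determines by \eqref{eq:abxyrelation} a corresponding branch $b^{(\alpha)}(x,a) = \exp(\int \partial_{\log a}\log y^{(\alpha)}\, d\log x)$, which by the last Remark before Section~\ref{subsec: Computing FK on different branches} is canonically a branch of the $B$-polynomial; its asymptotics near $a=0$ (or $a=\infty$) pick out a left edge $e_B$ of the Newton polygon of $B_K$ in the $b,a$-variables, with slope matching via the Lagrangian condition of the Proposition in Section~\ref{subsec:holLag}. This fixes the pair $(e_A, e_B)$ and the quadratic prefactor polynomial $p(\log x,\log a)$.

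The second step is to run the $\hat B_K$-recursion on the same series. Dividing out the prefactor $\exp(p(\log x,\log a)/\hbar)$, write $\tilde F = 1 + \sum_{i,j\ge1} f_{i,j}(q)\, x^{i/n_y} a^{j/n_b}$. The operator $\hat A_K$ determines all $f_{i,j}$ with $i\ge1$ recursively in $i$ for each fixed power of $a$, \emph{as soon as} the coefficients of the pure-$x$-independent part (i.e.\ the $a$-tower at $i=0$) are known; but $\tilde F$ has constant term $1$ at $i=0$, so in the $n_y=1$ (or non-degenerate) case the $A$-recursion already pins down everything up to a function of $a$ and $q$. Then $\hat B_K$, acting via $\hat b\colon a\mapsto qa$, gives a second recursion that now determines the $a$-dependence of that leftover function: its initial condition is exactly the one supplied by the $B$-side of Conjecture~\ref{conj:quantum_initial_condition} for the edge $e_B$. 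The key point to verify is \emph{consistency}: the two recursions must agree on their overlap. This is where I would use that $\hat A_K$ and $\hat B_K$ both annihilate $F_K(x,a,q)$ (established in Section~\ref{sec:Bpoly}) and, crucially, that they are ``compatible'' in the sense that the left ideal $\widehat{AB}_K$ they generate is holonomic, so its solution space in Puiseux series is finite-dimensional and is spanned by the $F_K^{(\alpha)}$ attached to the branches; uniqueness of the solution with prescribed leading behavior along $(e_A,e_B)$ then forces agreement.

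For integrality of the coefficients $f_{i,j}(q)$, I would argue as in the abelian-branch case of~\cite{EGGKPS} and the $n_y=1$ remark following Conjecture~\ref{conj:quantum_initial_condition}: the recursion coefficients coming from $\hat A_K$ (and $\hat B_K$) are, after the normalizing prefactor has been removed, rational functions in $q$ whose denominators are products of cyclotomic-type factors $(1-q^k)$; one shows inductively that each $f_{i,j}$ lies in $\mathbb{Z}[q^{-1},q]]$. Alternatively, and more conceptually, once the series is known to admit a quiver form (as asserted and illustrated throughout Sections~\ref{sec:FK invariants and quivers}--\ref{sec:Quiver from R-matrices}), integrality of $f_{i,j}(q)$ follows from integrality of the motivic DT invariants of the symmetric quiver~\cite{Efi}. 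I would phrase the proof so that integrality is reduced to whichever of these two inputs is cleanest for the edge at hand.

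The main obstacle I anticipate is the \emph{degenerate-edge case} and, relatedly, the claim that the simultaneous initial-value problem has a solution at all when $n_y>1$ or $n_b>1$: there the $A$-recursion alone does not determine $\tilde F$ up to a function of $(a,q)$ — there are genuine multiplicities and the series is a Puiseux series with fractional exponents $i/n_y$, so one must show the $\hat B_K$-recursion cuts down the ambiguity to exactly a function of $q$ (not $a$). Controlling this requires understanding how the non-degenerate decomposition of $e_A$ into $m$ sub-edges interacts with that of $e_B$, i.e.\ a refined version of the ``$n_y$ branches per edge'' dictionary at the level of the two-variable Newton polytope of $\Gamma_K$; the matching of full Newton polygons conjectured in Section~\ref{subsec:holLag} is precisely what would make this work, but proving it in general (rather than case-by-case) is the hard part, and in the write-up I would either assume that conjecture or restrict to the non-degenerate situation where $n_y = n_b = 1$ and the argument above is complete.
\end{proof}
```
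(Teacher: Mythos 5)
The statement you are proving is stated in the paper as a \emph{conjecture}, and the paper offers no proof of it: the only support given is the worked example of $\mathbf{4}_1$ in Section~\ref{subsec: Computing FK on different branches}, where the $\hat A_K$- and $\hat B_K$-recursions are solved separately, the resulting branches are paired by matching their exponential prefactors, and the residual function of $(a,q)$ (respectively $(x,q)$) is then fixed by hand. Your proposal correctly identifies the intended mechanism — the Lagrangian condition $\partial_{\log a}\log y=\partial_{\log x}\log b$ forcing the slope of $e_B$ from that of $e_A$, and the $B$-recursion eliminating the $a$-dependent ambiguity left by the $A$-recursion — and this matches the paper's informal reasoning. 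But as a proof it does not close the gap, for reasons beyond the degenerate case you flag.

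The crux is your consistency step: you assert that the two recursions agree on their overlap because $\widehat{AB}_K$ is holonomic with Puiseux solution space spanned by the branch solutions, and that uniqueness of the solution with prescribed leading behaviour along $(e_A,e_B)$ forces agreement. Nothing in the paper establishes this; Section~\ref{subsec:AB ideals} explicitly says that in most cases only a sub-ideal of $\widehat{AB}_K$ is known and that even quantizing the classical generators is done by trial and error, so the finite-dimensionality and spanning claims are themselves open. Moreover the inputs you invoke are conjectural in exactly the regime where they are needed: Conjecture~\ref{conj:quantum_initial_condition} is a theorem only when $n_y=1$; the canonical bijection between branches of $y$ and branches of $b$ is established classically but not at the level of quantum initial conditions (which is what pairs $e_A$ with $e_B$ in the statement); and the integrality of $f_{i,j}(q)$ via the quiver form imports Conjecture~2 of the introduction, while the ``inductive'' alternative is not actually carried out — the recursion coefficients a priori only give $f_{i,j}\in\mathbb{Q}(q)$, and the passage to $\mathbb{Z}[q^{-1},q]]$ is precisely the hard integrality statement. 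So your write-up is a faithful reconstruction of the paper's heuristic, honestly annotated with its weak points, but it should be presented as a reduction of Conjecture~\ref{conj:AB_Solution_Match} to the other conjectures it depends on, not as a proof.
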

Note that we could have started off with left edges of the~Newton polygon of the~$B$-polynomial. Since, after dealing with degeneracies, we get a~bijection between the~two sets of left edges, the~end result would be the~same. To illustrate this further, we will focus on the~example of the~$\mathbf4_1$ knot. 

\subsubsection{\texorpdfstring{$4_1$}{41} knot}
The~Newton polygon for the~$A$-polynomial was shown in Figure \ref{fig:Apoly_Newton_polygon_figure_eight}, whereas the~Newton polygon for the~$B$-polynomial is presented in Figure \ref{fig:Bpoly_Newton_polygon_figure_eight}. Looking at them, we see that both have three left edges when counted with multiplicity. 
\begin{figure}[ht]
    \centering
    \includegraphics[scale=0.5]{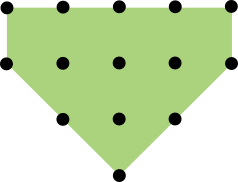}
    \caption{The~Newton polygon for $B_{\mathbf{4}_1}$}
    \label{fig:Bpoly_Newton_polygon_figure_eight}
\end{figure}

The~prefactors in Conjecture \ref{conj:quantum_initial_condition} arise from the~slope of the~edge on the~Newton polygon and through the~requirement that the~$F_K$ is computed in the~form $1 + O(x)$ or $1 + O(a)$.  For the~$A$ polynomial, the~prefactors are\footnote{With $r = \frac{\log(x)}{\hbar}$ and $N = \frac{\log(a)}{\hbar}$.}
\[
    q^{r^2 + rN + r},\ q^{rN - r},\ q^{-r^2 - rN + R},
\]
whereas for the~$B$-polynomial they are given by
\[
    q^{-rN},\ e^{N \pi i}q^{\frac{N^2}{2} + rN - \frac{N}{2}}.
\]
Note that one of the~edges for the~$B$-polynomial is degenerate, so the~second prefactor appears with multiplicity $2$. Next, we compute the~$F_K$ on various branches recursively. Starting with the~$A$-polynomial, we find the~three branches to be
\begin{align*}
    F_{\mathbf{4}_1}^{A, (ab)} & = q^{rN - r}\left(1 - \frac{3(a-q)}{1-q}x - \frac{(a - q)(1 - 2a + 6q - 6aq - 2q^2 + aq^2)}{(1 - q)(1 - q^2)} x^2 + O(x^3)\right), \nonumber
    \\ F_{\mathbf{4}_1}^{A,(\frac{1}{2})} & = q^{-r^2 - rN + r}\left(1 + \frac{a(2-q)}{q(1-q)}x + \left(\frac{(1 + 3q - 2q^2 - 2q^3 + q^4)a^2}{q^3(1 - q)(1 - q^2)} - \frac{a}{q}\right) x^2 + O(x^3)\right),
    \\ F_{\mathbf{4}_1}^{A, (-\frac{1}{2})} & = q^{r^2 + rN + r}\left(1 + \frac{q(1 - 2q)}{1-q}x + \left(\frac{q^2(1 - 2q - 2q^2 + 3q^3 + q^4)}{(1 - q)(1 - q^2)} - aq\right) x^2 + O(x^3)\right). \nonumber
\end{align*}
Similarly, the~branches of $B$-polynomial are given by
\begin{align*}
    F_{\mathbf{4}_1}^{B, (\infty)} & = q^{-rN}\Bigg(1 - \frac{(1 - x)(1 - x + qx)}{q(1 - q)}a
    \\ &  + \frac{(1 - x)(q^2 - q (2 - q^2) x + (1 - q^2)(1 + 2 q - q^2) x^2 - (1 - q)(1 - q^2)x^3}{q^3(1 - q)(1 - q^2)} a^2 + O(a^3) \Bigg), \nonumber
    \\ F_{\mathbf{4}_1}^{B, (-1)} & = e^{N \pi i}q^{\frac{N^2}{2} + rN - \frac{N}{2}} \Bigg(1 + c_1(x,q) a
    \\ &   + \left(\frac{x \left(-3 + (1 - q)^2x - q^2(2 - q)x^2\right)}{(1 - q)(1 - q)^2} - \frac{1 + 3qx + q^2(1 - q)x^2}{1 - q^2}c_1\right) a^2 + O(a^3)\Bigg).
\end{align*}

Note that $c_1$ is an~arbitrary function of $x, q$ and appears due to the~degeneracy of the~$-1$ slope edge. It remains to match up the~$F_K$'s coming from the~$A$- and $B$-polynomials. In this case, this can be done entirely from analysis of prefactors. As the~$A$-recursion is defined up to factors of $a, q$ and the~$B$-recursion up to factors of $x, q$, we can pair up the~branches and work out the~overall prefactors. We obtain
\begin{alignat*}{2}
    F_{\mathbf{4}_1}^{A, (ab)} & \sim F_{\mathbf{4}_1}^{B, (-1)} \quad && \to \quad e^{N \pi i}q^{\frac{N^2}{2} + rN - r - \frac{N}{2}},
    \\ F_{\mathbf{4}_1}^{A, (\frac{1}{2})} & \sim F_{\mathbf{4}_1}^{B, (\infty)} \quad && \to \quad q^{-r^2 - rN + r},
    \\ F_{\mathbf{4}_1}^{A, (-\frac{1}{2})} & \sim F_{\mathbf{4}_1}^{B, (-1)} \quad && \to \quad e^{N \pi i}q^{\frac{N^2}{2} + r^2 + rN + r  - \frac{N}{2}}.
\end{alignat*}
Once we have matched the~series up, we can determine the~combined series as we expect (ignoring prefactors)
\[
        F^{\text{combined}}(x, a, q) = F^{A}(x, a, q)F^{B}(0, a, q) = F^{A}(x, 0, q)F^{B}(x, a, q).
\]
It turns out that in several cases computing the~combined $F_K$ is even easier. In particular, after correcting the~prefactor $F_{\mathbf{4}_1}^{A, (ab)}$ is annihilated by the~$B$-polynomial and $F_{\mathbf{4}_1}^{B, (\infty)}$ is annihilated by the~$A$ polynomial. This leaves the~final pairing of $F_{\mathbf{4}_1}^{A, (-\frac{1}{2})}$ and $F_{\mathbf{4}_1}^{B, (-1)}$, which is more difficult as we have the~unknown constant $c_1$. Through some trial and error we find that the~right value for $c_1$ is $1 + O(x)$ and
\[
    F_{\mathbf{4}_1}^{B, (\infty)}(0, a, q) = \left.F_{\mathbf{4}_1}^{B, (-1)}(c_1, 0, a, q)\right|_{c_1 = \frac{-1}{q(1 - q)}} = \frac{1}{(aq^{-2}, q^{-1})_{\infty}} = (q^{-1}a; q)_{\infty}.
\]
Hence the~three branches of $F_{\mathbf{4}_1}$ are  given by (up to an~overall function of $q$)
\begin{align*}
    F_{\mathbf{4}_1}^{(ab)} & = e^{N \pi i}q^{\frac{N^2}{2} + rN - r - \frac{N}{2}}
    \\ & \quad \quad \quad \times \left(1 - \frac{3(a-q)}{1-q}x - \frac{(a - q)(1 - 2a + 6q - 6aq - 2q^2 + aq^2)}{(1 - q)(1 - q^2)} x^2 + O(x^3)\right), \nonumber
    \\ F_{\mathbf{4}_1}^{(\frac{1}{2})} & = q^{-R^2 - RN + R}(q^{-1}a; q)_{\infty}
    \\ & \quad \quad \quad \times \left(1 + \frac{a(2-q)}{q(1-q)}x + \left(\frac{(1 + 3q - 2q^2 - 2q^3 + q^4)a^2}{q^3(1 - q)(1 - q^2)} - \frac{a}{q}\right) x^2 + O(x^3)\right), \nonumber
    \\ F_{\mathbf{4}_1}^{(-\frac{1}{2})} & = e^{N \pi i}q^{\frac{N^2}{2} + r^2 + rN + r  - \frac{N}{2}}(q^{-1}a; q)_{\infty}
    \\ & \quad \quad \quad \times \left(1 + \frac{q(1 - 2q)}{1-q}x + \left(\frac{q^2(1 - 2q - 2q^2 + 3q^3 + q^4)}{(1 - q)(1 - q^2)} - aq\right) x^2 + O(x^3)\right). \nonumber
\end{align*}
Observe that both non abelian branches which correspond to the~slopes $\frac{1}{2}$ and $-\frac{1}{2}$ have poles in the~$q \to 1$ limit and disappear when $a = q$. On the~other hand, the~abelian branch is well-defined as $q \to 1$ and in the~$a = q$ limit it is equal to $1$.

\subsection{Closed sector recursions}

When we consider the~generating functions of open topological strings, we usually mode out the~whole closed sector. However, using $B$-polynomials and $AB$-ideals we can include it in the~recursions.


\subsubsection{Closed partition function}

The~closed sector partition function reads
\begin{equation*}
\phi(a,q)=\exp\left[\sum_{d}\frac{1}{d}\frac{a^{d}}{(1-q^{d})^{2}}\right].
\end{equation*}

Since
\begin{equation*}
\frac{\phi(a,q)}{\phi(aq,q)}=\psi(a,q)=\sum_{d}\frac{a^{d}}{(q)_{d}},
\qquad\qquad
\left(1-\hat{a}-\hat{b}\right)\psi(a,q)=0,
\end{equation*}
we can write
\begin{equation*}
\left(1-\hat{a}\right)\frac{\phi(a,q)}{\phi(aq,q)}-\frac{\phi(aq,q)}{\phi(aq^{2},q)}=0.
\end{equation*}
Therefore we have
\begin{equation*}
\left(1-\hat{a}\right)\phi(a,q)\hat{b}^{2}\phi(a,q)-\left(\hat{b}\phi(a,q)\right)^{2}=0.
\end{equation*}

Let us discuss the~geometry underlying these formulas. Consider first the~toric brane $L$ in~$\C^3$. The~open string partition function 
\[
\psi(x,q)=\exp(\sum_d \frac{1}{d}\frac{x^d}{(1-q^d)})
\]
can be interpreted, after SFT-stretching around $L$, as the~Gromov-Witten curve count. The~coefficient of $x^d$ counts the~contribution from connected curves that are asymptotic to the~multiplicity $d$ Reeb orbit over the~unique index zero geodesic in $L$. Consider now shifting $x$ to $qx$. This can be realized geometrically by shifting the~brane. We can then compute the~new partition function $\psi(xq,q)$ by stretching around both the~original $L_0$ and the~shifted $L_1$ branes. Denoting the~$d$-fold Reeb orbits of $L_j$ as $\gamma_j^{(\pm d)}$, there are basic cylinders with two positive punctures inside the~neighbourhood of $L_0$ and a~basic cylinder stretching between $L_0$ and $L_1$. Then the~corresponding $d$-fold covers contribute respectively as
\[
\frac{1}{d}\gamma_0^{(d)}\gamma_0^{(-d)}\quad\text{ and }\quad \frac{q^d}{d}\gamma_0^{(-d)}\gamma_1^{(d)}
\]
to the~count of connected curves, where $dg_s=d\log(q)$ is the~area of the~cylinder stretching between the~Lagrangians. (Here the~two positive puncture cylinder has area $0$ since it lies in the~negative end. Its actual non-zero area is visible only after rescaling and taking any non-zero area to infinity.) Gluing these curves we find that
\[
\psi(xq,q)=\exp(\sum_d \frac{d^2}{d^3} x^dq^d ) = \exp(\sum_d \frac{1}{d} x^dq^d ),
\]
as expected. Here the~$d^2$-factor comes from gluing along $d$-fold Reeb orbits twice. We point out that this gives a~curve counting proof of the~recursion relation
\[
(1-\hat x-\hat y)\psi(x,q)=0.
\]

We will next give a~similar curve counting proof of the~closed string recursion. Consider local $\mathbb{CP}^1$. We first note that we can compute the~closed string partition function by inserting a~toric brane over the~equator and applying SFT stretching. Connected curves glue over the~area zero cylinders in the~negative end and we find that
\[
\phi(a^2) = \exp(\sum_d \frac{d^2}{d^3} \left(\frac{a^d}{(1-q^d)}\right)^2)
= \exp(\sum_d \frac{1}{d} 
\frac{a^{2d}}{(1-q^d)^2}).
\]
In order to find the~effect of $a\to aq$, we argue as in the~open case (or simply give area $g_s$ to the~cylinder in the~negative end). The~result is
\[
\phi(qa^2) = \exp(\sum_d \frac{d^4}{d^5} q^d\left(\frac{a^d}{(1-q^d)}\right)^2),
\]
where the~holomorphic building consists of two outside curves, two cylinders in the~negative ends and one cylinder stretching between Lagrangians. The~curves are glued over multiple $d$ Reeb orbits in four places. We conclude that  
\[
\phi(qa^2)\phi(a^2)=\exp(\sum_d\frac{1}{d}(1-q^d)\left(\frac{a^d}{(1-q^d)}\right)^2)=\psi(a^2),
\]
where the~factor of $(1-q^d)$ comes from the~area difference between a~zero area cylinder in the~negative end and a~cylinder which is a~$d$-fold cover of the~of the~basic area $g_s$ cylinder that increases the~area of the~$\mathbb{CP}^1$.

\subsubsection{Combination with the~unknot}

In the~next step we combine the~closed sector with the~open one for the~unknot (in the~unreduced normalisation). Multiplying $\phi$ by $F^{unreduced}_{\mathbf{0}_{1}}$, we obtain
\begin{equation*}
\Phi(x,a,q)=\phi(a,q)F^{unreduced}_{\mathbf{0}_{1}}(x,a,q)=\exp\left[\sum_{d}\frac{1}{d}\frac{a^{d}}{(1-q^{d})^{2}}\right](xq;q)_{\frac{\log a}{\hbar}-1}.
\end{equation*}
Since
\begin{equation*}
\hat{b}F^{unreduced}_{\mathbf{0}_{1}}(x,a,q)=(1-x\hat{a})F^{unreduced}_{\mathbf{0}_{1}}(x,a,q),
\end{equation*}
we have 
\begin{equation*}
\frac{\Phi(x,a,q)}{\Phi(x,aq,q)}=\frac{\phi(a,q)}{\phi(aq,q)}\frac{F^{unreduced}_{\mathbf{0}_{1}}(x,a,q)}{F^{unreduced}_{\mathbf{0}_{1}}(x,aq,q)}=\frac{\psi(a,q)}{(1-xa)}.
\end{equation*}

Therefore 
\begin{equation*}
\left(1-\hat{a}-\hat{b}\right)(1-xa)\frac{\Phi(x,a,q)}{\Phi(x,aq,q)}=0,
\end{equation*}
which leads to
\begin{equation*}
(1-a)(1-xa)\Phi(x,a,q)\hat{b}^{2}\Phi(x,a,q)-(1-xaq)\left(\hat{b}\Phi(x,a,q)\right)^{2}=0.
\end{equation*}

\subsubsection{Non-linear recursions}\label{sec:Non-linear}

    One may wonder what happens if we consider $\Phi_K = \phi F_K$ for more complicated knots $K$. It turns out that similar relations exist for all knots $K$ and can be computed directly from the~$B$-polynomial.

	Observe that the~quadratic relation for $\phi$ actually follows from a~$B$-polynomial whose coefficients involve the~infinite $q$-Pochhammers. Namely, $\phi$ satisfies the~linear relation
	\[
		\frac{\phi(q a, q)}{(a;q)_{\infty}} - \phi(a, q) = 0,
	\]
	which means it is annihilated by $B$-polynomial $\frac{1}{(\hat{a};q)_{\infty}}\hat{b} - 1 = \hat{b}' - 1$, where $\hat{b}' = \frac{1}{(\hat{a};q)_{\infty}}\hat{b}$.

	\begin{prop}
		Given a~knot $K$ with $B$-polynomial
		\[
			B_K(\hat{b}, \hat{a}, x, q) = \sum_{i = 0}^n c_i(\hat{a}, x, q) \hat{b}^n
		\]
		annihilating $F_K$, $\Phi_K$ is annihilated by $B_K(\hat{b}', \hat{a}, x, q)$, where $\hat{b}'$ is defined above.
	\end{prop}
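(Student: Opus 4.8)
The plan is to exploit the single defining relation of the closed partition function, namely $\phi(a,q)/\phi(qa,q)=\psi(a,q)=\sum_{d}a^{d}/(q)_{d}=1/(a;q)_{\infty}$ (the last equality by the $q$-binomial theorem), which rephrases as $\phi(qa,q)=(a;q)_{\infty}\,\phi(a,q)$ and hence as $(\hat b'-1)\phi=0$ with $\hat b'=\tfrac{1}{(\hat a;q)_{\infty}}\hat b$. The first step is to record that $\hat b'$ is a genuine operator on power series in $a$: since $(a;q)_{\infty}$ has constant term $1$, its inverse $1/(a;q)_{\infty}=\sum_{d}a^{d}/(q)_{d}$ is a well-defined (formal, or convergent for $|q|<1$) power series. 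One then checks that $\hat b'$ satisfies the same $q$-commutation relation as $\hat b$, namely $\hat b'\hat a=q\,\hat a\hat b'$, which is the one-line computation $\hat b'(af)=qa\cdot\hat b' f$. In particular the substitution $\hat b\mapsto \hat b'$ applied to the normal-ordered expression $B_K=\sum_{i=0}^{n}c_i(\hat a,x,q)\hat b^{i}$ is unambiguous: any reordering of $\hat a$'s and $\hat b$'s in $B_K$ yields the same operator after the substitution, because $\hat b'$ and $\hat a$ obey the identical relation.

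The second step is the key intertwining identity: for any function $g=g(x,a)$,
\[
\hat b'\bigl(\phi(a,q)\,g(x,a)\bigr)=\phi(a,q)\,\bigl(\hat b\,g\bigr)(x,a).
\]
Indeed $\hat b$ shifts $a\mapsto qa$, so $\hat b(\phi g)=\phi(qa,q)\,g(x,qa)=(a;q)_{\infty}\,\phi(a,q)\,g(x,qa)$, and multiplication by $1/(a;q)_{\infty}$ cancels the prefactor. Applying this repeatedly, with $g$ successively equal to $\hat b^{i}F_K=F_K(x,q^{i}a,q)$, gives by induction on $i$ that $(\hat b')^{i}\Phi_K=\phi\cdot\hat b^{i}F_K$ for all $i\ge 0$, the base case being $\Phi_K=\phi F_K$.

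The third step concludes the argument: each $c_i(\hat a,x,q)$ is a multiplication operator, and multiplication operators commute, so $c_i(\hat a,x,q)\bigl(\phi\cdot h\bigr)=\phi\cdot c_i(\hat a,x,q)h$. Hence
\[
B_K(\hat b',\hat a,x,q)\,\Phi_K=\sum_{i=0}^{n}c_i(\hat a,x,q)(\hat b')^{i}\Phi_K=\phi\cdot\sum_{i=0}^{n}c_i(\hat a,x,q)\hat b^{i}F_K=\phi\cdot B_K(\hat b,\hat a,x,q)\,F_K=0,
\]
using the hypothesis that $B_K$ annihilates $F_K$. I do not expect a genuine obstacle here; the only points that need care are bookkeeping ones — the precise operator ordering in $\hat b'=\tfrac{1}{(\hat a;q)_{\infty}}\hat b$ (note that $\hat b\,\tfrac{1}{(\hat a;q)_{\infty}}=\tfrac{1-\hat a}{(\hat a;q)_{\infty}}\hat b\neq\hat b'$, so the two orderings genuinely differ), and the identification $\psi(a,q)=1/(a;q)_{\infty}$. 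One may also remark that the same computation shows that any left $q$-difference operator annihilating $F_K$ which is polynomial in $\hat b$ with coefficients depending only on $a,x,q$ is carried to an annihilator of $\Phi_K$ under $\hat b\mapsto\hat b'$; taking $K$ to be the unknot with $F_K=1$ recovers the quadratic relation for $\phi$ itself, and the unreduced unknot recovers the combined relation stated above.
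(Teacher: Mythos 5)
Your proof is correct and follows essentially the same route as the paper, whose entire argument is the remark that the proposition "is immediate from the observation that $\hat{b}'\Phi_K = \phi\,\hat{b}F_K$" — precisely your intertwining identity, which you then iterate and combine with the commutation of the multiplication operators $c_i(\hat{a},x,q)$ past $\phi$. The extra care you take with the operator ordering in $\hat{b}'$ and the identification $\psi(a,q)=1/(a;q)_\infty$ is sound and fills in exactly what the paper leaves implicit.
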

	The~proof of this is immediate from the~observation that $\hat{b}'\Phi_K = \phi \hat{b}F_k$. The~remaining question is how to pass from this to the~non-linear relations which do not involve infinite $q$-Pochhammers. To make notation a~little easier, let $\Phi_{K, n} = \hat{b}^n \Phi_K \hat{b}^{-n} = \Phi_K(q^n a, x, q)$ and observe that
	\[
	    (\hat{b}')^n = \left(\prod_{i = 1}^{n} \frac{(a;q)_{i-1}}{(a;q)_{\infty}}\right)\hat{b}^n = \frac{\prod_{j = 1}^{n} (a;q)_{j-1}}{(a;q)^n_{\infty}}\hat{b}^n.
	\]
	Then the~proposition above is equivalent to the~statement
	\[
		S[0] := \sum_{i = 0}^n \Bigg(\Big(c_i(a, x, q)\left(\prod_{j = 1}^{i} (a;q)_{j-1}\right)\Phi_{K, i}\Big) \frac{1}{(a;q)_{\infty}^i}\Bigg)  = 0.
	\]
	From this we can define $S[j]$ by
	\[
		S[k] := \frac{1}{((a;q)_{k})^n} \hat{b}^k S[0] = \sum_{i = 0}^n \Bigg(\left(\frac{c_i(q^k a, x, q)\left(\prod_{j = 1}^{i} (q^ka;q)_{j-1}\right)}{((a;q)_{k})^{n - i}} \Phi_{K, j + i}\right) \frac{1}{(a;q)_{\infty}^i}\Bigg) = 0.
	\]
	Now consider the~set of $n + 1$ equations $S[0], \cdots, S[n]$ as linear equations in the~$n + 1$ variables $1, \frac{1}{(a;q)_{\infty}}, \cdots, \frac{1}{(a;q)_{\infty}^n}$. Elementary linear algebra tells us that we can eliminate the~the~variables $\frac{1}{(a;q)_{\infty}^i}$ from the~set of equations $S[0], \cdots, S[n]$ which will leave us with an~equation purely in terms of $\Phi_{K, i}, a, x, q$. On top of this, the~final equation will be a~homogeneous polynomial of order $n$ in the~$\Phi_{K, i}'s$ -- exactly the~nonlinear equation we are looking for.

	The~basic closed sector case corresponds to $F_K = 1,\ B_K = 1 - b$, and if we plug this in, we directly recover
	\[
		(1 - a)\phi(a, q) \phi(q^2 a, q) - \phi(q a, q)^2 = 0.
	\]
	If we set $K = \mathbf{0}_{1}$ to be the~unknot, we similarly recover the~earlier formula
	\[
		(1 - a)(1 - xa)\Phi_{\mathbf{0}_{1}, 0}\Phi_{\mathbf{0}_{1}, 2} - (1 - q x a)\Phi_{\mathbf{0}_{1}, 1}^2 = 0.
	\]
	Finally, setting $K = \mathbf{3}_1$ to be the~trefoil, we compute
	\begin{align*}
		0 & = (a-1) \left(a q^2-1\right)^2 (a q-1)^2 \left(a q^3 x-1\right) \left(a^2 q^3 x^2-a q^2 x-a q+q+1\right)\Phi_{\mathbf{3}_1, 0}\Phi_{\mathbf{3}_1, 2}\Phi_{\mathbf{3}_1, 4} 
		\\ &  \quad + (a-1) (a q-1)^3 \left(a q^2 x-1\right) \left(a^2 q^5 x^2-a q^3 x-a q^2+q+1\right)\Phi_{\mathbf{3}_1, 0}\Phi_{\mathbf{3}_1, 3}^2
		\\ &  \quad + (a q-1)^2 \left(a q^2-1\right)^2 \left(a q^3 x-1\right) \left(a^2 q x^2-a q x-a+q+1\right)\Phi_{\mathbf{3}_1, 1}^2 \Phi_{\mathbf{3}_1, 4}
		\\ &  \quad - (a q-1) \Big(a^4 q^6 x^3-a^3 q^6 x^3-a^3 q^5 x^2+a^2 q^5 x^2+a^4 q^4 x^3-2 a^3 q^4 x^2+a^2 q^4 x^2-a^3 q^3 x^3
		\\ & \quad \quad \quad + a^2 q^3 x^2-a^3 q^2 x^2+a^2 q^4 x-2 a^3 q^3 x+3 a^2 q^3 x+3 a^2 q^2 x+a^2 q^2+a^2 q x^2
		\\ & \quad \quad \quad + a^2 q x+a^2 q-2 a q^3 x-2 a q^2 x-2 a q^2-2 a q x-2 a q-2 a+2 q+2\Big)\Phi_{\mathbf{3}_1, 1} \Phi_{\mathbf{3}_1, 2} \Phi_{\mathbf{3}_1, 3}
		\\ &  \quad - (a-1) \Phi_{\mathbf{3}_1, 2}^3 (a q x-1) \left(a^2 q^3 x^2-a q^2 x-a q+q+1\right).
	\end{align*}

Combining above discussion with results from Section \ref{subsec:AB ideals}, we can define open-closed $\widehat{AB}$-ideals by a~redefinition of operator $\hat{b}$:
\begin{equation*}
    \hat{b}\to \hat{b}'=\hat{b}(\hat{a};q)_{\infty}^{-1}.
\end{equation*}
For the~unknot in the~unreduced normalisation, it gives
\begin{align*}
\hat{A}_{\mathbf{0}_1}(\hat{x},\hat{y},a,q) & =\left(1-\hat{x}q\right)\hat{y}-\left(1-\hat{x}a\right) & &\to & \hat{A}^{open-closed}_{\mathbf{0}_1}(\hat{x},\hat{y},a,q) & =\left(1-\hat{x}q\right)\hat{y}-\left(1-\hat{x}a\right),
\\
\hat{B}_{\mathbf{0}_1}(\hat{a},\hat{b},x,q) & =\hat{b}-\left(1-x\hat{a}\right) & & \to & \hat{B}_{\mathbf{0}_1}^{open-closed}(\hat{a},\hat{b},x,q) & =\hat{b}(\hat{a};q)_{\infty}^{-1}-\left(1-x\hat{a}\right), \nonumber
\end{align*}
and we have $\widehat{AB}_{\mathbf{0}_1}^{open-closed} = \langle \hat{A}_{\mathbf{0}_1}^{open-closed}, \hat{B}_{\mathbf{0}_1}^{open-closed} \rangle$.

\section{Closed 3-manifolds and log-CFT structures} \label{sec-3-logsft}


Authors of~\cite{GM} proposed the~following surgery formula connecting $F_K$ with $\widehat Z$ invariant for 3-manifold obtained by Dehn surgery in which we glue the~complement of $K$ and $-1/r$ solid torus:
\be
\label{eq:surgery}
\widehat Z \big( S^3_{-1/r} (K) \big)
\; = \; \sum_{n=1}^{\infty}  q^{\frac{(nr-1)^2}{4r}} (q^n-1) f_n (q),
\ee
where
\[
\sum_n f_n (q) \cdot x^{\frac{n}{2}}
\; = \; - x (x^{\frac{1}{2}} - x^{- \frac{1}{2}}) F_K (q^{-1} x,q),
\]
which comes from matching the~conventions used in this work and in~\cite{GM}.

If we write $F_K (x,q) \; = \; \sum_{n \ge 0} \tilde{f}_n (q) \cdot x^n,$ then \eqref{eq:surgery} leads to
\[
\widehat Z \big( S^3_{-1/r} (K) \big)
\; = \; - \sum_{n \ge 0}
\tilde{f}_n (q) \Big[ q - q^{2r+2nr} - q^{2+2n} + q^{3+2n+2r+2nr} \Big]
q^{r n^2 + r n - 2n + \frac{r}{4} + \frac{1}{4r} - \frac{3}{2}}.
\]
For each $F_K$ that can be expressed in a~quiver form\footnote{Note that for convenience in this section we moved $(-1)^{C_{ii}d_i^2}=(-1)^{t_id_i}$ into $x_i$.}
\be
\label{eq:FK quiver form}
F_K(x,q)=\sum_{\boldsymbol{d}\geq0}q^{\frac{1}{2}\boldsymbol{d}\cdot\boldsymbol{C}\cdot\boldsymbol{d}}\frac{\boldsymbol{x^{d}}}{(q)_{\boldsymbol{d}}}
\ee
with $x_i\propto x$, we have $n=\sum_i d_i$ and, therefore, $\widehat Z \big( S^3_{-1/r} (K) \big)$ is a~linear combination of ``characters''
\[
-\sum_{\boldsymbol{d}\geq 0} \frac{1}{(q)_{\boldsymbol{d}}}\,q^{r (\sum_i d_i)^2 + \frac{1}{2} \boldsymbol{d} \cdot C \cdot \boldsymbol{d}+ (\text{terms linear in } \boldsymbol{d})}
=
-\sum_{\boldsymbol{d}\geq 0} \frac{1}{(q)_{\boldsymbol{d}}}\,q^{ \frac{1}{2} \boldsymbol{d} \cdot C^{(2r)} \cdot \boldsymbol{d}+ (\text{terms linear in } \boldsymbol{d})}.
\]
Effectively, this means that the~modification of matrix $C$ is the~same as framing by $2r$ (see~\cite{KRSS2}). For example, for the~right-hand trefoil $\widehat Z \big( S^3_{-1/r} (K) \big)$ is a~linear combination of Nahm sums with the~matrix
\be
\label{Crtrefoil}
C^{(2r)}=
\begin{pmatrix}
	2r & 2r+1 & 2r & 2r \\
	2r+1 & 2r & 2r+1 & 2r \\
	2r & 2r+1 & 2r+1 & 2r \\
	2r & 2r & 2r & 2r+1
\end{pmatrix},
\ee
whereas for the~figure-eight we have the~matrix
\[
C^{(2r)}=
\begin{pmatrix}
2r & 2r-1 & 2r & 2r & 2r-1 & 2r \\
2r-1 & 2r & 2r & 2r & 2r & 2r+1 \\
2r & 2r & 2r & 2r & 2r & 2r \\
2r & 2r & 2r & 2r+1 & 2r & 2r \\
2r-1 & 2r & 2r & 2r & 2r+1 & 2r+1 \\
2r & 2r+1 & 2r & 2r & 2r+1 & 2r+1
\end{pmatrix}.
\]
We expect the~corresponding element of the~Bloch group and the~value of $c_{\text{eff}}$ to be qualitatively different for $r=1$ compared to $r>1$.

\subsection{Anomalies}

Recalling the~analysis from Section~\ref{sec:KQ for knot complements}, we can derive the~field content and interactions of the~3d $\CN=2$ theory $T[Q]$ and dual theory $T[M_K]$. 
The~matrix $C$ is the~matrix of Chern-Simons coefficients for $U(1)^m$ gauge theory. Indeed, we write
\[
q = e^{\hbar} \,, \qquad y_i = q^{d_i} = e^{\hbar d_i}
\]
and, as usual, take the~double-scaling limit $\hbar \to 0$ while keeping $\hbar d_i$ fixed. In this limit, using $d_i = \frac{1}{\hbar} \log y_i$, from the~quadratic term $\frac{1}{2} \boldsymbol{d}\cdot C \cdot \boldsymbol{d}$ we get
\be
\widetilde{\mathcal{W}}_K = \frac{1}{2} \sum_{ij} C_{ij} \log y_i \log y_j + \ldots
\label{WCS}
\ee
Note, the~powers of $q$ linear in 
$\boldsymbol{d}$ do not contribute to the~twisted superpotential. We denote it by $\widetilde{\mathcal{W}}_K$ to stress that we assume the knots-quivers identification $x_ i = (-1)^{t_i} q^{l_i} a^{a_i} x^{n_i}$, which implies that the moduli space of supersymmetric vacua of theories $T[Q]$ and $T[M_K]$ is the same, by construction.

Similarly, $x_i^{d_i}$ contributes $\log x_i \log y_i$ into $\widetilde{\mathcal{W}}_K$, and using\footnote{$\text{Li}_2 (x) = \sum_{n=1}^{\infty} \frac{x^n}{n^2}$, so that $\frac{d}{d \log x} \text{Li}_2 (x) = - \log (1 - x)$ and $\text{Li}_2 (1) = \frac{\pi^2}{6}$.}
\[
(x;q)_n \; = \; \prod_{i=0}^{n-1} (1 - x q^i) \; \sim \;
e^{\frac{1}{\hbar} \big( \text{Li}_2 (x) - \text{Li}_2 (x q^n) \big)}
\]
we conclude that the~$q$-Pochhammer symbols in the~denominator contribute to the~twisted superpotential a~term:
\be
\widetilde{\mathcal{W}}_K = \sum_i 
\big( \text{Li}_2 (y_i) - \text{Li}_2 (1) \big) + \ldots
\label{WLi}
\ee

For example, using these (by now standard) rules, for the~right-handed trefoil we get
\begin{multline}
\widetilde{\mathcal{W}}_{\mathbf{3}^r_1} = \log y_1 \log y_2 + \log y_2 \log y_3 + \frac{1}{2} (\log y_3)^2 + \frac{1}{2} (\log y_4)^2
+ \\
+ \log y_1 \log x
+ \log y_2 \log ax
+ (\log y_3 + \log y_4) \log (-ax)
+ \sum_{i=1}^4
\big( \text{Li}_2 (y_i) - \text{Li}_2 (1) \big).
\label{W31}
\end{multline}
The~field content and interactions of the~corresponding 3d $\CN=2$ theory can be conveniently summarized in a~quiver diagram illustrated in Figure~\ref{fig:TM3quiver}, where, as usual, a~circle denotes a~gauge node and a~square represents a~global (flavour) symmetry.

\begin{figure}[ht]
	\centering
	\includegraphics[width=4.0in]{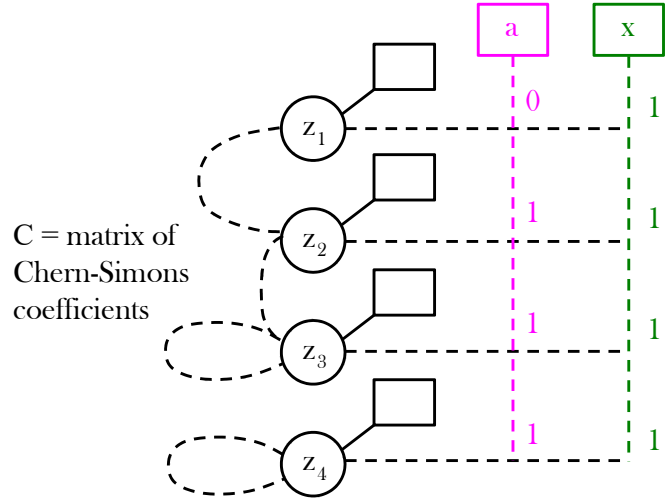}
	\caption{The~field content and interactions of 3d $\CN=2$ theory corresponding to the trefoil knot can be conveniently encoded in a~quiver form. Solid lines represent charged matter fields, whereas dashed lines represent Chern-Simons couplings.}
	\label{fig:TM3quiver}
\end{figure}

The~space of vacua in this theory has three ``branches'' (in terminology of the~$a=1$ specialization):
\be\label{branches}
\text{spurious} : x-1, \qquad\qquad
\text{abelian}  :  y-1, \qquad\qquad
\text{non-abelian} : y x^3 + 1,
\ee
where the~first, ``spurious'', branch comes with multiplicity 2.
When $a \ne 1$, one of the~spurious branches recombines with the~abelian and non-abelian branches into a~single irreducible component. The~other spurious branch becomes $ax-1=0$.

From the~quiver of 3d $\CN=2$ theory or from the~corresponding twisted superpotential we can easily read off the~anomaly coefficients, i.e. the~effective Chern-Simons couplings.
The~symmetry $U(1)_a$ is a~symmetry of 5d sector, and to account for the~anomalies of this symmetry, we need to incorporate into considerations the~``closed string'' 5d sector that so far has been ignored.
On the~other hand, the~symmetry $U(1)_x$ is a~global symmetry of the~``open string'' sector associated with the~Lagrangian brane, i.e. with a~3d surface operator in 5d bulk theory. All fields charged under this symmetry are part of the~3d $\CN=2$ theory $T[M_K]$, and therefore we focus on the~anomaly of $U(1)_x$ symmetry first.
Since the~symmetry $U(1)_x$ is preserved by the~2d $(0,2)$ boundary conditions that are used either in cutting and gluing or in computation of $\widehat Z$-invariants, we expect that the~effective Chern-Simons coupling for the~$U(1)_x$ should vanish:
\be
k_{xx} = 0.
\label{kxx}
\ee
Below we check that this anomaly vanishes on all branches.

In general, suppose that for a~knot $K$ the~corresponding $F_K$ can be written in the~quiver form \eqref{eq:FK quiver form}.
This means that 3d $\CN=2$ theory has $U(1)^m$ gauge symmetry with one chiral multiplet per each $U(1)$ factor and lots of Chern-Simons couplings for the~dynamical gauge fields, as well as for the~global symmetries $U(1)_a$ and $U(1)_x$. We are especially interested in the~latter, which can be computed as follows (see~\cite{GGP2} for details). First, from the~data of the~matrix $C$ we construct the~``dual'' matrix $G$ by writing a~quadratic form\footnote{This anomaly polynomial summarizes Chern-Simons couplings, including the~effective Chern-Simons coupling $- \frac{1}{2}$ for each dynamical $U(1)$ gauge factor that comes from integrating out a~charged chiral multiplet~\cite{GGP2}.}
\be
\sum_{i,j=1}^m C_{ij} u_i u_j - \frac{1}{2} \sum_{i=1}^m u_i^2 + 2 \sum_{i=1}^m u_i \kappa_i
\label{quadrform}
\ee
and completing the~squares for variables $u_i$. In other words, extremizing this quadratic form with respect to $u_i$ gives $m$ linear equations, from which we determine $u_i$ and substitute the~solution back into \eqref{quadrform}. This gives a~quadratic (Gaussian) expression for the~global symmetries,
\be
\sum_{ij} G_{ij} \kappa_i \kappa_j,
\label{Gauss}
\ee
where $\kappa_i$ should be understood as the~fugacity $\log x_i$. In other words,
\bea
\kappa_i & = & n_i \log x + a_i \log a + \pi i t_i  \notag \\
& = & n_i \kappa + a_i \alpha + \pi i t_i,
\label{mat}
\eea
where there is no summation over label $i$ and for convenience we introduced $(\kappa, \alpha)$ in place of $(\log x, \log a)$. Substituting \eqref{mat} into \eqref{Gauss}, we get the~desired quadratic polynomial in $\kappa$ and $\alpha$:
\[
k_{xx} \kappa^2 + k_{ax} \alpha \kappa + k_{aa} \alpha^2 + \ldots,
\]
whose coefficients are effective Chern-Simons terms for $U(1)_a$ and $U(1)_x$ global symmetries. In particular, according to \eqref{kxx}, we expect that $k_{xx} = 0$.

For example, for the~abelian and non-abelian branches of the~trefoil we get\footnote{Note the~symmetry between these two matrices. This is probably an~accident and does not hold for general knots; in particular, we have many more branches for general knots. On the~other hand, anomaly vanishing (or matching) discussed here are expected to hold for all knots.}
\[
\mathbf{3}_1: \quad
G_{\text{ab}} \; = \;
\begin{pmatrix}
	10 & 4 & -8 & 0 \\
	4 & 2 & -4 & 0 \\
	-8 & -4 & 6 & 0 \\
	0 & 0 & 0 & -2
\end{pmatrix}
,\quad  \quad
G_{\text{nab}} \; = \;
\begin{pmatrix}
	2 & 0 & 0 & 0 \\
	0 & -6 & 4 & 8 \\
	0 & 4 & -2 & -4 \\
	0 & 8 & -4 & -10
\end{pmatrix},
\]
which, together with the~data of the~vectors $\boldsymbol{a}$, $\boldsymbol{t}$ and $\boldsymbol{n}$, lead to the~anomalies
\[
0 \cdot \kappa^2 - 12 \alpha \kappa - 2 \alpha^2 - 16 \pi i \kappa - 4 \pi^2
\]
for the~abelian branch, and
\[
0 \cdot \kappa^2 + 8 \alpha \kappa + 0 \cdot \alpha^2 - 16 \pi i \kappa + 12 \pi i \alpha + 20 \pi^2
\]
for the~non-abelian branch. In both cases \eqref{kxx} holds, as expected.\footnote{Note, the~coefficients of terms linear in $\kappa$ also match on the~two branches. This, however, has no significance since these terms describe mod 2 anomaly, which automatically vanishes in both cases. In other words, we do have a~matching of this mod 2 discrete anomaly on both branches, but it is less impressive. Also, note that all anomalies listed here vanish in the~$\mathfrak{sl}_2$ specialization, i.e. when $a=1$ or $\alpha = 0$.}

Note, the~term $k_{xx}$ on which we focused here is a~very concrete combination of the~data $C$, $\boldsymbol{a}$, $\boldsymbol{t}$ and $\boldsymbol{n}$:
\[
k_{xx} = \sum_{ij} G_{ij} n_i n_j.
\]
Moreover, from the~physical point of view, extremization with respect to $u_i$ variables in \eqref{quadrform} comes from integrating over dynamical $U(1)^m$ gauge degrees of freedom. Therefore, the~dual matrix $G$ is basically the~inverse matrix to $C - \frac{1}{2} {\bf 1}$, so the~anomaly vanishing condition \eqref{kxx} can be stated as
\[
\boldsymbol{n} \cdot \frac{1}{C - \tfrac{1}{2} {\bf 1}} \cdot \boldsymbol{n} \; = \; 0.
\]
Note, this anomaly vanishing condition imposes a~constraint on $\boldsymbol{n}$ when only $C$ is known {\it a~priori}. It says that $\boldsymbol{n}$ is a~null vector of the~inverse matrix to $C - \frac{1}{2} {\bf 1}$ and should hold for any branch of the~theory and any choice of $C$.
Similarly, expressing $\widetilde{\mathcal{W}}_K$ in terms of $C$, $\boldsymbol{a}$, $\boldsymbol{t}$, $\boldsymbol{n}$ as above, and identifying $y = \exp \frac{d \widetilde{\mathcal{W}}_K}{d \log x}$ on different branches gives a~set of ``invariants,'' i.e. combinations of $C$, $\boldsymbol{a}$, $\boldsymbol{t}$, $\boldsymbol{n}$ that have the~same value on different branches of the~$A$-polynomial curve.

\subsection{Runaway vacua}

In \eqref{branches} we glossed over one important phenomenon: while the~assignment of branches makes sense only for $a=1$ specialization, actually it is not true that $a=1$ specialization of the~superpotential \eqref{W31} has all three branches as its critical points. In the~limit $a \to 1$ the~entire branch runs off to infinity.

Indeed, the~critical points of \eqref{W31} are solutions to the~following algebraic equations:
\begin{align}
xy_2 & =  1 - y_1,  & ay_3 (y_1 - 1) & =  1- y_3, \nonumber \\
ax y_1 y_3 & =  1 - y_2, &
-a x y_4 & =  1 - y_4. \label{BAExa}
\end{align}
The~first equation is linear in $y_2$; it has a~unique and simple solution for $y_2$ in terms of $y_1$. Similarly, the~third equation is linear in $y_3$ and also gives $y_3$ as a~function of $y_1$. The~fourth equation also has a~unique solution. Substituting the~resulting expressions for $y_2$, $y_3$ and $y_4$ into the~second equation we get a~quadratic equation for $y_1$, when $a \ne 1$ is generic.

However, when $a=1$, it becomes a~linear equation. It is easy to see that the~second solution of this equation runs off to infinity (in $\C^*$ where all our variables are valued) when $a \to 1$:
\be
y_1 \; \simeq \;
\frac{1-x}{x^2-x+1} (a-1) + O\left((a-1)^2\right).
\label{z1approx}
\ee
At the~same time, $y_3$ also goes off to infinity, so that the~product $y_1 y_3$ remains finite
\be
y_1 y_3 \; \simeq \; \frac{x-1}{x^2}+ \ldots
\label{z1z3}
\ee
As a~result, the~whole non-abelian branch is completely missing in the~theory \eqref{W31} when $a=1$. In order to keep this branch as part of the~vacua, we need to keep $a$ close to unity, but $a \ne 1$.

Note, in the~superpotential \eqref{W31} some of the~terms that involve $y_1$ and $y_3$ can be manifestly grouped into terms that depend only on the~product $y_1 y_3$ and, therefore, are non-singular in the~limit $a \to 1$. The~remaining terms that can be potentially singular are the~following:
\be
\frac{1}{2} (\log y_3)^2 
+ \log y_3 \log (-a)
+ \text{Li}_2 (y_1)
+ \text{Li}_2 (y_3).
\label{Wsingular}
\ee
Writing $z = y_3$, $y_1 = \frac{A}{z}$ and using the~dilogarithm reflection property
\[
\text{Li}_2 \left( \frac{1}{z} \right)
\; = \; - \text{Li}_2 \left( z \right)
- \frac{\pi^2}{6} - \frac{1}{2} \log^2 (-z),
\]
we can write the~potentially singular terms as
\begin{multline*}
\frac{1}{2} \log^2 (-z) + \log z \log a~- \frac{1}{2} \log^2 (-1) + \text{Li}_2 \left( \frac{A}{z} \right)
+ \text{Li}_2 (z) = \\
= - \frac{\pi^2}{6} + \frac{\pi^2}{2} + \log z \log a
+ \text{Li}_2 \left( \frac{A}{z} \right)
- \text{Li}_2 \left( \frac{1}{z} \right).
\end{multline*}
Since $z \sim (1-a)^{-1}$, the~term $\log z \log a$ goes to zero in the~limit $a \to 1$. Therefore, we only need to estimate $\text{Li}_2 \left( \frac{A}{z} \right)
- \text{Li}_2 \left( \frac{1}{z} \right)$ as $z \to \infty$ or, equivalently, $\text{Li}_2 \left( Aw \right)
- \text{Li}_2 \left( w \right)$ as $w \to 0$. Differentiating with respect to $w$, we see that
\[
\lim_{w \to 0} \, \frac{1}{w} \log \frac{1-w}{1-Aw} \; = \; A-1.
\]
Therefore, it follows that $\lim\limits_{z \to \infty} \Big[ \text{Li}_2 \left( \frac{A}{z} \right)
- \text{Li}_2 \left( \frac{1}{z} \right) \Big] \to \text{const}$ and it is then easy to check (say, numerically) that the~constant is zero.
We conclude that all potentially singular terms in the~superpotential \eqref{Wsingular} cancel each other up to a~finite contribution $\frac{\pi^2}{3}$.

\subsection{Closed 3-manifolds}

For a~surgery on a~knot $K$, we have the~relation
\be
\widetilde{\mathcal{W}} \left( S^3_{-p/r} (K) \right) = \frac{r}{p} (\log y)^2 + \log y \log x + \widetilde{\mathcal{W}}_K (x),
\label{Wsurgery}
\ee
where $y$ is a~new variable that came from $q^d$ in the~surgery formula for $\widehat Z$-invariants.
This is consistent with the~surgery formula for the~twisted superpotential in~\cite{GGP2,GMP}.

We wish to apply a~surgery formula to a~general expression of the~form
\be
F_K (x,a,q) \; = \; \sum_{n \ge 0} f_n (q) \cdot x^n
\; = \; \sum_{\boldsymbol{d}} q^{\frac{1}{2} \boldsymbol{d} C \boldsymbol{d} + \boldsymbol{d} \cdot \boldsymbol{B} + A}
\, \prod_{i} \frac{x_i^{d_i}}{(q)_{d_i}}
\label{Fnew}
\ee
with
$x_ i = (-1)^{t_i} q^{l_i} a^{a_i} x^{n_i}$.
Specializing to $a=q^2$ and then replacing $x \to q^{-1}x$, we get
\[
F_K (q^{-1} x,q) \; = \; \sum_{n \ge 0} f_n (q) \cdot (q^{-1} x)^n
= \sum_{\boldsymbol{d}} q^{\frac{1}{2} \boldsymbol{d} C \boldsymbol{d} + A }
\, \prod_{i} \frac{ (-1)^{t_i d_i} q^{(2a_i + l_i + B_i - n_i)d_i} x^{n_i d_i} }{(q)_{d_i}},
\]
where, of course, $n \; = \; \sum_i n_i d_i$.

Now we are ready to plug this expression in the~above surgery formula with $n = \sum_i n_i d_i$:
\begin{multline}
\widehat Z \big( S^3_{-1/r} (K) \big)
= \sum_{\boldsymbol{d}}
q^{\frac{1}{2} \boldsymbol{d} C \boldsymbol{d} + A + r n^2 + r n - 2n + \frac{r}{4} + \frac{1}{4r} - \frac{3}{2}}
\Big[ q - q^{2r+2nr} - q^{2+2n} + q^{3+2n+2r+2nr} \Big]
\\
\times \prod_{i} \frac{ (-1)^{t_i d_i} q^{(2a_i + l_i + B_i)d_i} }{(q)_{d_i}}.
\label{smallZhat}
\end{multline}

For example, for the~right-handed trefoil we have
\[
C \; = \;
\begin{pmatrix}
	0 & 1 & 0 & 0 \\
	1 & 0 & 1 & 0 \\
	0 & 1 & 1 & 0 \\
	0 & 0 & 0 & 1
\end{pmatrix},
\qquad
\begin{array}{rcl}
	A & = & 0, \\
	\boldsymbol{B} & = & (0,0,0,0), \\
	\boldsymbol{t} & = & (0,0,1,1), \\
	\boldsymbol{l} & = & (1,0,-\frac{1}{2},-\frac{1}{2}), \\
	\boldsymbol{a} & = & (0,1,1,1), \\
	\boldsymbol{n} & = & (1,1,1,1),
\end{array}
\]
and the~above surgery formula gives
\bea
\widehat Z \big( S^3_{-1} (\mathbf{3}_1^r) \big)  & = & 1-q-q^5+q^{10}-q^{11}+q^{18}+q^{30}-q^{41}+q^{43}-q^{56}+\ldots, \notag \\
\widehat Z \big( S^3_{-1/2} (\mathbf{3}_1^r) \big) & = & q^{1/8} \left( 1-q-q^{11}+q^{16}-q^{23}+q^{30}+q^{60}-q^{71}+q^{85}+\ldots \right) , \notag \\
\widehat Z \big( S^3_{-1/3} (\mathbf{3}_1^r) \big)  & = & q^{1/3} \left( 1-q-q^{17}+q^{22}-q^{35}+q^{42}+q^{90}-q^{101}+q^{127}+\ldots \right),  \notag \\
  & \vdots &  \notag
\eea
that agree with the~earlier calculations up to overall powers of $q$.

\subsection{Nahm sums and MTC\texorpdfstring{$[M_3]$}{[M3]}}

Using the~familiar rules \eqref{WCS}--\eqref{WLi}, we can quickly read off the~twisted superpotential from the~explicit expression \eqref{smallZhat}. For the~trefoil we get
\be
\begin{split}
\widetilde{\mathcal{W}} \left( S^3_{-1/r} (\mathbf{3}_1^r) \right) &= \log y_1 \log y_2 + \log y_2 \log y_3 + \frac{1}{2} (\log y_3)^2 + \frac{1}{2} (\log y_4)^2 \\
+ & r \left(\sum \log y_i\right)^2
+ (\log y_3 + \log y_4) \log (-1)
+ \sum_{i=1}^4
\big( \text{Li}_2 (y_i) - \text{Li}_2 (1) \big)
\label{Wsmall}
\end{split}
\ee
Note, the~quiver of the~corresponding 3d $\CN=2$ theory $T[S^3_{-1/r} (\mathbf{3}_1^r)]$ is very similar to the~one shown in Figure~\ref{fig:TM3quiver}. Namely, it is a~$U(1)^4$ gauge theory with lots of Chern-Simons couplings and one charged chiral per each $U(1)$ gauge factor.

Also note that \eqref{Wsmall} here is in agreement with the~general surgery formula \eqref{Wsurgery}.
Indeed, $\log y$ appears only quadratically in \eqref{Wsurgery}. Extremizing with respect to $\log y$, we generate a~quadratic term for $\log x$, which, in turn, appears in $\widetilde{\mathcal{W}}_{\mathbf{3}_1}$ only linearly,\footnote{Specializing $a \to q^2 \to 1$ in \eqref{W31} gives
$$
\log y_1 \log x
+ \log y_2 \log ax
+ (\log y_3 + \log y_4) \log (-ax) \to \log x (\sum \log y_i) + (\log y_3 + \log y_4) \log (-1)
$$} multiplying $\sum_i \log y_i$. Then integrating out $\log x$ gives $r (\sum \log y_i)^2 = r(\hbar m)^2$.

Now let us consider a~topological Landau-Ginzburg (LG) model~\cite{Vafa,MP} with the~(twisted) superpotential \eqref{Wsmall}. Its chiral ring and all partition / correlation functions are controlled by the~critical points,
\[
y_i^{(\lambda)}:\qquad
\exp \left( \frac{\partial \widetilde{\mathcal{W}}}{\partial \log y_i} \right) = 1 \qquad \forall i.
\]
These are the~celebrated Bethe~ansatz equations (BAEs) that in the~context of gauge/Bethe~and 3d-3d correspondence take the~familiar exponentiated form (as opposed to the~``2d version'' $\frac{\partial \widetilde{\mathcal{W}}}{\partial y_i} = 0$). For a~general expression of the~form \eqref{smallZhat} these BAEs take the~form
\be
1-y_i = (-1)^{t_i} \prod_{j} y_j^{C^{(2r)}_{ij}} \qquad (\forall i~\text{fixed}),
\label{BAEz}
\ee
where we used $\frac{d}{d \log x} \text{Li}_2 (x) = - \log (1-x)$ and $C^{(2r)}$ is the~matrix introduced earlier that combines all terms in $\widetilde{\mathcal{W}} \left( S^3_{-1/r} (K) \right)$ quadratic in $\log y_i$. Note that these equations generalize the~so-called Nahm equations~\cite{Nahm}:
\[
1-y_i = \prod_{j} y_j^{C_{ij}} \qquad (\forall i~\text{fixed}),
\]
and reduce to them when all $t_i = 0$. Already in the~simple example of the~trefoil knot, we have $\boldsymbol{t} = (0,0,1,1)$, so we have to consider a~more general version \eqref{BAEz}. In the~trefoil case, we get the~explicit form of the~BAEs using \eqref{Crtrefoil}:
\begin{align*}
1 - y_3^{-1} & =  1- y_1,   &   y_1 y_3 (1 - y_4^{-1}) & =  1 - y_2,  \\
y_2 (1 - y_4^{-1}) & =  1 - y_1,  &
y_1^{2r} y_2^{2r+1} y_3^{2r} y_4^{2r} & =  1 - y_1 .
\end{align*}
These equations, however, have no solutions $y_i \in \C^*$.
We have already seen this phenomenon in \eqref{BAExa}. One way to go around it is to work with $a \approx 1$ and then take the~limit $a \to 1$ at the~end of the~calculation.

The~relevant equations are
\[
y = y_1 y_2 y_3 y_4
, \qquad
x = y^{2r}
, \qquad
(1 - y)(1 + x^3 y)=0,
\]
and
\begin{align*}
xy_2 & =  1 - y_1 , & - ax y_2 y_3 & =  1- y_3 , \\
ax y_1 y_3 & =  1 - y_2,  &
-a x y_4 & =  1 - y_4 .
\end{align*}
From the~previous discussion we know that $y_2$, $y_3$, and $y_4$ are uniquely determined in terms of~$y_1$ which, in turn, obeys a~quadratic equation. The~non-abelian branch of the~$A$-polynomial corresponds to a~choice of solution \eqref{z1approx} that escapes ``to infinity'' in the~limit $a \to 1$:
\begin{align*}
y_1 & \simeq  \frac{1-x}{1-x+x^2} (a-1) + \ldots &  y_3 & \simeq  \frac{(x-1-x^2)}{x^2 (a-1)} , \\
y_2 & \simeq \frac{1}{x},  &
y_4 & = \frac{1}{1-ax} .
\end{align*}
Substituting these into \eqref{Wsmall}, we find
\begin{multline*}
\widetilde{\mathcal{W}} \left( S^3_{-1/r} (\mathbf{3}_1^r) \right) = \text{Li}_2\left(\frac{1}{1-x}\right)+\text{Li}_2\left(\frac{1}{x}\right)+r \log ^2\left(-\frac{1}{x^3}\right)+\log \left(\frac{x-1}{x^2}\right) \log \left(\frac{1}{x}\right) \\
+\frac{1}{2} \log \left(\frac{1}{1-x}\right) \left(\log \left(\frac{1}{1-x}\right)+2 \pi i \right)-\frac{\pi ^2}{3},
\end{multline*}
where we took advantage of \eqref{z1z3} and the~analysis below it. This twisted superpotential can be brought to the~form
\begin{align*}
\widetilde{\mathcal{W}} \left( S^3_{-1/r} (\mathbf{3}_1^r) \right) &=
r \log \left( \frac{-1}{x^3} \right)^2 + \log (x)^2 + \frac{1}{2} \log (-x)^2 - \pi i \log (-x) - \frac{\pi^2}{3} \\
&= r \log (y)^2 + \log (x)^2 + \frac{1}{2} \log (-x)^2 - \pi i \log (-x) - \frac{\pi^2}{3}
\end{align*}
from which classical Chern-Simons values can be seen at critical points of $\widetilde{\mathcal{W}}$.

This identification is a~part of much richer structure. Namely, it was proposed in~\cite{GPV} that to each 3d $\CN=2$ theory one can associate a~braided tensor category of line operators, which in many cases is a~modular tensor category.\footnote{See~\cite{FG,DGNPY,CGK,CCFGH} for further discussion and applications.}
For 3d $\CN=2$ theories $T[M_3]$, it is denoted $\text{MTC} [M_3]$.
The~Grothendieck ring of this category is the~Jacobi ring of the~Landau-Ginzburg model with the~superpotential $\widetilde{\mathcal{W}}$. In particular, critical points of $\widetilde{\mathcal{W}}$ correspond to simple objects of $\text{MTC} [M_3]$, and various data -- such as matrix elements of $S$ and $T$ matrices, conformal dimensions, effective central charge $c_{\text{eff}}$,  etc. -- are determined by this effective Landau-Ginzburg model:
\be
T_{\lambda \mu} = \delta_{\lambda \mu} e^{\widetilde{\mathcal{W}} (\lambda)}, \label{MTCM3} \qquad\qquad
S_{0 \lambda}  =  e^{- U} \det \text{Hess} \, \widetilde{\mathcal{W}} \Big|_{\lambda} .
\ee
As a~result, the~twisted partition function on a~genus-$g$ surface $\Sigma_g$ can be written in the~standard form~\cite{Vafa,MP}:
\[
Z (S^1 \times \Sigma_g)
\; = \; \sum_{\lambda} (S_{0 \lambda})^{2-2g}
\; =
\sum_{\text{crit. pts. of } \widetilde{\mathcal{W}}}
\left( e^{- U} \det \text{Hess} \, \widetilde{\mathcal{W}} \right)^{g-1},
\]
with $S_{0 \lambda}$ given by \eqref{MTCM3}.

The~effective central charge $c_{\text{eff}}$ can be read off from the~asymptotic growth of the~coefficients~$a_n$ in the~expansion of $\widehat Z$-invariants, which are identified with the~characters of logarithmic vertex operator algebras (log-VOAs):
\[
\chi_b (q) \; = \; \widehat Z_b (q) \; = \; q^{\Delta_b} \sum_n a_n q^n,
\]
where it is important that we include in $\widehat Z_a (q)$ the~$(q)_{\infty}$ denominator associated with the~center of mass chiral multiplet of the~$T[M_3]$ theory~\cite{CCFGH}. Specifically, as $n \to \infty$,
\be
a_n \sim \exp \, 2\pi \sqrt{\frac{1}{6} c_{\text{eff}} \, n}
\label{ceffan}
\ee
For example, for small surgeries on the~right-handed trefoil knot we have
$$
\widehat{Z}_0 \big( S^3_{-1/r} (\mathbf{3}_1^r) \big) \; = \;
\frac{1}{\eta(q)}
\left(
\widetilde \Psi_{36r+6}^{(6r-5)}
- \widetilde \Psi_{36r+6}^{(6r+7)}
- \widetilde \Psi_{36r+6}^{(30r-1)}
+ \widetilde \Psi_{36r+6}^{(30r+11)}
\right)
$$
and, therefore, from \eqref{MTCM3} and \eqref{ceffan} we read off
$$
c_{\text{eff}} = c - 24 h_{\text{min}} = 1
\qquad \text{and} \qquad
c = 1 - \frac{(36r+5)^2}{6r+1}.
$$

Since in this case we actually know the~precise log-VOA whose characters can be identified with $\widehat Z$-invariants, we can check that these indeed match with the~VOA central charges for all values of $r \in \Z_+$. 

The~situation is a~little more interesting in cases where the~precise log-VOA has not been identified yet. In those cases, we can use \eqref{MTCM3} and \eqref{ceffan} to obtain its modular data, central charges and conformal dimensions $h_{\lambda}$ related to $T_{\lambda \lambda} = e^{2\pi i (h_{\lambda} - \frac{c}{24})}$.
The~simplest class of such examples is a~family of small surgeries on the~figure-eight knot $S^3_{-1/r} (\mathbf {4}_1)$. For example, for $r=2$ we have
\begin{multline}
\widehat Z (S^3_{-1/2} ({\bf 4_1}))
\; = \; \frac{q^{-\frac{1}{2}}}{\eta (q)}
\big( 1-q+2 q^3-2 q^6+q^9+3 q^{10}+q^{11}-q^{14}-3 q^{15}
+ \ldots \\
\ldots -15040 q^{500} + \ldots \big).
\label{Z41half}
\end{multline}
Note, without the~Dedekind $\eta$-function in the~denominator, this would be a~$q$-series with oscillation coefficients of a~growing amplitude. However, with $\eta (q)$ in the~denominator, all coefficients $a_n$ have the~same sign, as expected for a~character of a~(logarithmic) VOA. Moreover, their rate of growth \eqref{ceffan}, illustrated in Figure~\ref{fig:half41}, is consistent with
\[
c_{\text{eff}} \approx 1
\]
which suggests that $c_{\text{eff}} = 1$, as in the~case of small surgeries on the~trefoil.

\begin{figure}[ht]
	\centering
	\includegraphics[width=3.5in]{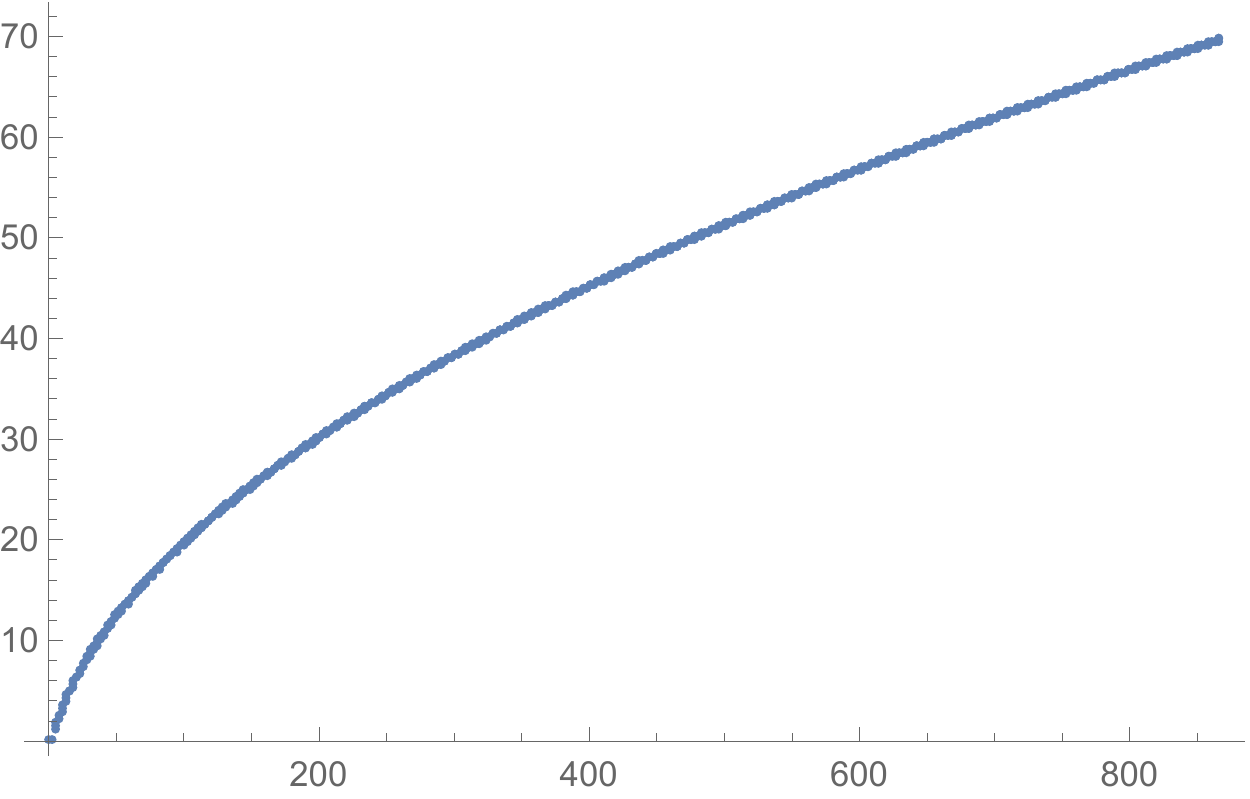}
	\caption{Plot of $\log a_n$ as a~function of $n$ for $- \frac{1}{2}$ surgery on the~figure-eight knot.}
	\label{fig:half41}
\end{figure}

\section{Future directions} \label{sec-future}

In this section, we provide a~summary of interesting open problems that emerged during our research:

\begin{itemize}
    \item Is there a~way to unify the~wave functions $F_K^{(\alpha)}(x,a,q)$ associated to different branches? Recall that the~wave functions $F_K^{(\alpha)}(x,a,q)$ are associated to branches on the~boundary $(x,a)=(0,0)$ (or via Weyl symmetry, $(x,a)=(\infty,\infty)$). In order to unify them, we probably need to understand better what happens in the~middle of the~holomorphic Lagrangian. 
    
    An interesting observation in this direction is that for $\mathfrak{sl}_2$ the~non-abelian branch $F_K$ invariants seem to appear as a~certain limit of 3d index from \cite{DGG}. 
    \begin{conj}
    Let $K$ be a~knot and let $\alpha$ denote a~non-degenerate, non-abelian branch of the~$A$-polynomial of slope $-\frac{1}{p}$. That is, $F_K^{\alpha}$ has a~prefactor of the~form $e^{\frac{p(\log x)^2}{2\hbar}}$.
    Let's write (unreduced) $F_K^{\alpha}(x,q)$ as
    \[
    F_K^{\alpha}(x,q) = e^{\frac{\frac{p}{2}(\log x)^2 + \epsilon \log(-1)\log x}{\hbar}}x^d\sum_{j=0}^{\infty} f_j^{\alpha}(q)x^j
    \]
    for some $\epsilon \in \{0,1\}$ and $d\in \frac{1}{2}\mathbb{Z}$. 
    Then $f_j^{\alpha}(q)$ can be obtained from $\mathcal{I}_K(m,e)$ by
    \[
    f_j^{\alpha}(q) = \lim_{m\rightarrow \infty} (-1)^{(\epsilon+1)m} q^{-d(j+1) m}\mathcal{I}_K(m,p\,m+j).
    \]
    \end{conj}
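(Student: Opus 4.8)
The plan is to compare the large-color asymptotics of the 3d index $\mathcal{I}_K(m,e)$ of Dimofte--Gaiotto--Gukov \cite{DGG} with the coefficients of the non-abelian branch wave function $F_K^{\alpha}(x,q)$, exploiting that both are governed by the \emph{same} quantum $A$-polynomial $\hat{A}_K(\hat x,\hat y,q)$ but with different initial conditions. First I would recall that $\mathcal{I}_K(m,e)$ satisfies, in both its magnetic variable $m$ and electric variable $e$, a recursion encoded by (a version of) $\hat{A}_K$; concretely, the index is a particular solution to the same $q$-difference equation that annihilates $F_K$, with the specialization $x = q^{m}$ (more precisely $\hat x$ acting as multiplication by $q^m$ on the $m$-graded piece). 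The key idea is that the diagonal-type limit $e = p\,m + j$ with $m \to \infty$ selects exactly the branch of the $A$-polynomial with slope $-\tfrac1p$: the combination $q^{e - pm} = q^{j}$ stays finite while $q^m = x \to 0$ (or $\infty$), which is precisely the scaling that, by Proposition \ref{prop:classical_asymptotics} and Conjecture \ref{conj:quantum_initial_condition}, isolates the left-edge of the Newton polygon of slope $-\tfrac1p$ and the prefactor $e^{\frac{p}{2\hbar}(\log x)^2 + \cdots}$.

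The key steps, in order, are: (1) Write the index as a state-integral / sum and extract its leading behavior as $m\to\infty$ with $e = pm+j$ fixed, showing the prefactor $(-1)^{(\epsilon+1)m}q^{dm(j+1)}$ is exactly the Gaussian/linear-in-$m$ contribution coming from the quadratic piece $\tfrac{p}{2}(\log x)^2 + \epsilon \log(-1)\log x$ of the twisted superpotential evaluated on the non-abelian critical point (cf.\ the runaway-vacua discussion and \eqref{BAEz}); in particular $d$ is read off from the monomial $x^d$ appearing in the non-abelian $F_K^\alpha$. (2) Having matched the prefactors, observe that the remaining $m$-independent data on both sides is annihilated by $\hat{A}_K(\hat x, \hat y, q)$, so it suffices to check that the \emph{initial conditions} agree: the non-degeneracy hypothesis on $\alpha$ guarantees (by the remark after Conjecture \ref{conj:quantum_initial_condition}) that the solution to $\hat A_K F = 0$ with the given leading behavior is \emph{unique}, so matching finitely many initial coefficients forces $f_j^{\alpha}(q) = \lim_m (-1)^{(\epsilon+1)m}q^{-d(j+1)m}\mathcal{I}_K(m, pm+j)$ for all $j$. (3) Verify that the limit on the right-hand side in fact exists and is a well-defined element of $\mathbb{Z}[q^{\pm 1}]]$ (equivalently, that the $q$-degree of $\mathcal I_K(m,pm+j)$ stabilizes up to the overall monomial $q^{d(j+1)m}$), which is where the slope condition $-\tfrac1p$ and non-degeneracy of the edge (height $n_y = 1$ in the notation of Section \ref{sec:FK different branches}) are essential.

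I expect the main obstacle to be step (1): rigorously controlling the $m\to\infty$ asymptotics of the 3d index and pinning down the exact sign $(-1)^{(\epsilon+1)m}$ and the exact power $q^{-d(j+1)m}$, rather than just their leading order. The index is typically written as a contour integral of products of tetrahedron partition functions (quantum dilogarithms), and the saddle that dominates in this limit is the non-abelian flat connection; one must show that no other saddle competes and that the subleading Gaussian fluctuation contributes only an overall $m$-independent (and $q$-dependent) constant that is absorbed into the normalization ``up to an overall factor of $q$'' already allowed in Conjecture \ref{conj:quantum_initial_condition}. A secondary subtlety is that $\mathcal{I}_K$ as defined in \cite{DGG} may differ from the ``right'' normalization by a flip $m \mapsto -m$ or $q\mapsto q^{-1}$ and by framing factors; reconciling these conventions (and the balanced-vs-positive expansion conventions of Section \ref{sec: conventions}) is routine but must be done carefully to get the precise statement. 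Once these asymptotics are under control, steps (2) and (3) are essentially bookkeeping via the uniqueness of solutions to the $\hat A_K$-recursion on a non-degenerate branch.
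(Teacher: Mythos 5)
The first thing to note is that the paper does not prove this statement: it appears in the ``Future directions'' section as an open conjecture, supported only by an empirical check for the figure-eight knot with $p=2$ (the three displayed limits of $\mathcal{I}_{\mathbf{4}_1}(m,2m+j)$ for $j=0,1,2$). There is therefore no proof in the paper to compare against, and your proposal has to be judged as an independent attempt to settle an open problem.

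As such an attempt, your outline identifies the right ingredients but does not constitute a proof, and the gaps are not merely bookkeeping. The load-bearing step is your step (1): you must show that $\mathcal{I}_K(m,pm+j)$, after stripping the monomial $(-1)^{(\epsilon+1)m}q^{d(j+1)m}$, converges coefficientwise as $m\to\infty$ and that the limit is governed by the non-abelian saddle of slope $-\tfrac1p$ with no competing saddle. You acknowledge this as the main obstacle, but nothing in the argument resolves it; the large-$(m,e)$ asymptotics of the 3d index of \cite{DGG} along the diagonal is precisely the open analytic content of the conjecture. Step (2) then presupposes that the two-variable index satisfies the same $\hat{A}_K$-recursion as $F_K$ in a form that survives the termwise limit: the index recursion shifts both $m$ and $e$, so passing to the limit along $e=pm+j$ requires uniform-in-$m$ control of the coefficients, which is not supplied; and even the existence of that recursion for $\mathcal{I}_K$ is only motivated in the paper by the resurgence/Stokes-matrix results of \cite{GGM1,GGM2}, not established. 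Step (3), the existence of the limit in $\mathbb{Z}[q^{-1},q]]$, is likewise asserted rather than shown. In short, your plan correctly isolates the intuition (the diagonal scaling selects the slope-$-\tfrac1p$ left-edge, and uniqueness of the solution on a non-degenerate branch would then finish the job), but each of its three steps currently rests on an unproven assertion, which is consistent with the authors leaving the statement as a conjecture.
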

    For instance, for the~figure-eight knot, with $p=2$, we empirically checked that
    \begin{align*}
    \lim_{m\rightarrow \infty}(-1)^m q^{-\frac{1}{2}m} \mathcal{I}_{\mathbf{4}_1}(m,2m) &= 1, \\
    \lim_{m\rightarrow \infty}(-1)^m q^{-\frac{2}{2}m} \mathcal{I}_{\mathbf{4}_1}(m,2m+1) &= -q-q^2-q^3-q^4-q^5-q^6-\cdots\\ 
    &= \frac{1-2q}{1-q}-1,\\
    \lim_{m\rightarrow \infty}(-1)^m q^{-\frac{3}{2}m} \mathcal{I}_{\mathbf{4}_1}(m,2m+2) &= -q-q^2+q^5+q^6+2q^7+\cdots\\ 
    &= \frac{1-3q-q^2+4q^3}{(1-q)(1-q^2)} - \frac{1-2q}{1-q},
    \end{align*}
    so they match up perfectly. 
    If true, this conjecture can be seen as a~way to unify non-abelian branch $F_K$ invariants in $\mathfrak{sl}_2$ case. 
    
    In recent works of \cite{GGM2,GGM1}, authors studied resurgence in complex Chern-Simons theory and identified the~entries of the~Stokes matrix between non-abelian flat connections with the~3d index from \cite{DGG}. Moreover, the~abelian branch $F_K$ is related to the~first row of the~Stokes matrix. 
    This is a strong encouragement to study resurgence in the~$a$-deformed setting, in particular looking for a~generalization of 3d index that unifies $F_K$ from all the~branches. 
    
    \item How are quivers associated to different branches related to each other? So far we only understand how they are related in a~rather indirect way; e.g. they give rise to the~same holomorphic Lagrangian after eliminating variables from the~quiver $A$-polynomials. It would be nice to understand how to get from a~quiver from one branch to a~quiver from another branch. 
    Probably, understanding of the~previous question (i.e. unification of $F_K$'s for various branches) will lead to an~answer to this question and vice versa. 
    
     \item It would be interesting to understand the~categorification of $F_K$ invariants in the~context of quivers. Recent results for knot conormals~\cite{EKL3} suggest that the~presence of quiver nodes with $x_i \sim y^{-n}, \ n>1$ imply a~different structure of the~refined generating function. Another subtlety lies in the~fact that the~$t$-deformation inherited from the~superpolynomials and super-$A$-polynomials is not consistent with the~one guided by the~number of loops. Further complication comes from the~fact that different forms of knot complement quivers would lead to different $t$-deformations. 
    
    \item It would be desirable to find explicit formulas for $F_K$ invariants in quiver forms for knots and branches not covered in Section \ref{sec:Examples} and Appendix \ref{sec:Quivers appendix}.  The case of non-abelian branch of slope $\infty$ for knot $5_2$, partly discussed in Appendix \ref{sec:Quivers appendix}, seems to be especially interesting.
\end{itemize}

\section*{Acknowledgements}

P.K. was supported by the~Polish Ministry of Education and Science through its programme Mobility Plus (decision number~1667/MOB/V/2017/0) and by NWO vidi grant (number~016.Vidi.189.182). S.P. was partially supported by junior fellowship at Institut Mittag-Leffler and by Kwanjeong Educational Foundation. The work of M.S. was supported by the Portuguese Funda\c c\~ao para a Ci\^encia e a Tecnologia (FCT) Exploratory Grant IF/0998/2015, and by the Ministry of Education, Science and Technological Development of the Republic of Serbia through Mathematical Institute SANU.  The~work of P.S. was supported by the~TEAM programme of
the~Foundation for Polish Science co-financed by the~European Union under the~European
Regional Development Fund (POIR.04.04.00-00-5C55/17-00).

\newpage

\appendix

\section{Quivers for complements of various knots}\label{sec:Quivers appendix}

In this appendix we present formulas for $F_K$ invariants for all knots with 5 or 6 crossings, as well as (3,4) torus knot, obtained using the~method presented in Section \ref{sec:41}. For simplicity, we will keep $q$-Pochhammers $(x;q^{-1})_{(...)}$ in the~concise form. In case the~quiver form of $F_K$ is needed, they can be easily expanded in two ways, as discussed in Section \ref{sec:41}.

\subsubsection*{\texorpdfstring{$\mathbf{5}_1$}{51} {\bf knot}}

The~$F_K$ invariant for the~abelian branch of $5_1$ is given by

\begin{eqnarray*}
F_{\mathbf{5}_1}(x,a,q)&=&\sum_{\tilde{d}_1,\tilde{d}_2,\tilde{d}_3,\tilde{d}_4}(-1)^{\tilde{d}_2+\tilde{d}_4}a^{\tilde{d}_2+\tilde{d}_4}q^{\frac{1}{2} (2\tilde{d}_1-\tilde{d}_2+4\tilde{d}_3+\tilde{d}_4)} \\
&&\times\,\, x^{\tilde{d}_1+\tilde{d}_2+3\tilde{d}_3+3\tilde{d}_4}q^{\frac{1}{2}\sum_{i,j} \tilde{C}_{ij}\tilde{d}_i\tilde{d}_j} \times \frac{(x;q^{-1})_{\tilde{d}_1+\cdots+\tilde{d}_4}}{\prod_{i=1}^4(q)_{\tilde{d}_i}},
\end{eqnarray*}

where
$$\tilde{C}=\left(\begin{array}{cccc}
 0 & 0 & -1 & -1\\
 0 & 1 & 0 & 0\\
 -1 & 0 & -2 & -2\\
 -1 & 0 & -2 & -1
\end{array}\right).
$$
Expanding the~$q$-Pochhammer $(x;q^{-1})_{\tilde{d}_1+\cdots+\tilde{d}_4}$ one can check that this formula is consistent with Equation \eqref{eq:51quiver}.

The~quiver form of the~$F_K$ for the~mirror image $m(\mathbf{5}_1)=\mathbf{5}_1^r$ and its non-abelian branch with slope $-\frac{1}{5}$ (corresponding to framing $f=5$) is given by
\begin{eqnarray*}
F_{\mathbf{5}_1^r}^{(-\frac{1}{5})}(x,a,q)&=&\sum_{\tilde{d}_1,\cdots,\tilde{d}_4}(-1)^{\tilde{d}_1+\tilde{d}_2+\tilde{d}_4}a^{\tilde{d}_1+\tilde{d}_2+\tilde{d}_4}\\
&&\times\,\, q^{\frac{1}{2}(\tilde{d}_1-\tilde{d}_2+2\tilde{d}_3-3\tilde{d}_4)} x^{\tilde{d}_1+\tilde{d}_2}q^{-\frac{1}{2}\sum_{i,j} \tilde{C}_{ij}\tilde{d}_i\tilde{d}_j}\\
&&\times\,\, q^{(3\tilde{d}_1+3\tilde{d}_2+4\tilde{d}_3+4\tilde{d}_4)\sum_i\tilde{d}_i}q^{-\frac{5}{2}(\sum_{i}\tilde{d}_i)^2}q^{-\sum_{i<j}\tilde{d}_i\tilde{d}_j}\frac{(x;q^{-1})_{\tilde{d}_1+\cdots+\tilde{d}_4}}{\prod_{i=1}^4(q)_{\tilde{d}_i}},
\end{eqnarray*}
where 
$$\tilde{C}=\left(\begin{array}{cccc}
0 & 1 & 1 & 3 \\
1 & 2 & 2 & 3 \\
1 & 2 & 3 & 4 \\
3 & 3 & 4 & 4 \\
\end{array}\right).
$$

\subsubsection*{\texorpdfstring{$\mathbf{5}_2$}{52} {\bf knot}}
The~$F_K$ invariant for $\mathbf{5}_2$ corresponding to the~abelian branch reads:
\begin{eqnarray*}
F_{5_2}(x,a,q)&=&\sum_{\tilde{d}_1,\ldots,\tilde{d}_6}(-1)^{\tilde{d}_2+\tilde{d}_3+\tilde{d}_6}a^{\tilde{d}_2+\tilde{d}_4+\tilde{d}_5+2\tilde{d}_6}q^{\frac{1}{2} (2\tilde{d}_1-\tilde{d}_2+\tilde{d}_3-2\tilde{d}_5-3\tilde{d}_6)}  \\
&&\times\,\, x^{\tilde{d}_1+\tilde{d}_2+2\tilde{d}_4+\tilde{d}_6}q^{\frac{1}{2}\sum_{i,j} \tilde{C}_{ij}\tilde{d}_i\tilde{d}_j} \frac{(x;q^{-1})_{\tilde{d}_1+\cdots+\tilde{d}_6}}{\prod_{i=1}^6(q)_{\tilde{d}_i}},
\end{eqnarray*}
where 
$$\tilde{C}=\left(\begin{array}{cccccc}0&0&0&-1&0&0\\
0&1&0&0&1&1\\
0&0&1&0&1&1\\
-1&0&0&0&1&1\\
0&1&1&1&2&2\\
0&1&1&1&2&3
\end{array}\right).$$

For the~mirror image $m(\mathbf{5}_2)=\mathbf{5}_2^r$ we need to consider it in framing $f=5$ and we get $F_K$ for the~non-abelian branch corresponding to slope $-\frac{1}{5}$:
\begin{eqnarray*}
F_{\mathbf{5}_2^r}^{(-\frac{1}{5})}(x,a,q)&=&\sum_{\tilde{d}_1,\cdots,\tilde{d}_{6}}(-1)^{\tilde{d}_1+\tilde{d}_2+\tilde{d}_5+\tilde{d}_6}a^{2\tilde{d}_1+2\tilde{d}_2+\tilde{d}_3+2\tilde{d}_4+\tilde{d}_5+\tilde{d}_6} \\
&&\times\,\, q^{\frac{1}{2}(-5\tilde{d}_1-3\tilde{d}_2-2\tilde{d}_3-4\tilde{d}_4-3\tilde{d}_5-\tilde{d}_6)} x^{2\tilde{d}_1+3\tilde{d}_2+\tilde{d}_3+2\tilde{d}_4+\tilde{d}_6}q^{-\frac{1}{2}\sum_{i,j} \tilde{C}_{ij}\tilde{d}_i\tilde{d}_j}\\
&&\times\,\, q^{(2\tilde{d}_1+\tilde{d}_2+3\tilde{d}_3+2\tilde{d}_4+4\tilde{d}_5+3\tilde{d}_6)\sum_i\tilde{d}_i}q^{-\frac{5}{2}(\sum_{i}\tilde{d}_i)^2}q^{-\sum_{i<j}\tilde{d}_i\tilde{d}_j}\frac{(x;q^{-1})_{\tilde{d}_1+\cdots+\tilde{d}_{6}}}{\prod_{i=1}^{6}(q)_{\tilde{d}_i}},
\end{eqnarray*}
where
\begin{equation*}
\tilde{C}=\left(\begin{array}{cccccc}
2 & 1 & 2 & 1 & 2 & 1 \\
1 & 0 & 1 & 0 & 2 & 0 \\
2 & 1 & 3 & 1 & 3 & 2 \\
1 & 0 & 1 & 1 & 2 & 1 \\
2 & 2 & 3 & 2 & 4 & 3 \\
1 & 0 & 2 & 1 & 3 & 2 \end{array}\right).
\end{equation*}

For knots for which $f_0 \ne 1$, some of the~coefficients of $\tilde{d}_i$ in the~exponent of $x$ are equal to~$0$. For example, let us take $5_2$ for infinite slope branch, and try to find quivers.
We have 
$$f_0^{\mathfrak{sl}_N}=\frac{1}{(q)_{N-2}} \sum_{i=0}^{N-2} \left[ \begin{array}{c} N-2\\ i \end{array} \right].$$

As we know, we can rewrite this in quiver form:

$$f_0^{\mathfrak{sl}_N}=\sum_{i=0}^{N-2} \frac{1}{(q)_i (q)_{N-2-i}}=\sum_{i=0}^{\infty} \frac{(q^{N-1-i};q)_{\infty}}{(q)_i (q)_{\infty}}=\sum_{i=0}^{\infty}\sum_{j=0}^{\infty}\sum_{k=0}^{\infty} \frac{(-1)^j q^{Nj} q^{-j-i j} q^{\frac{1}{2}(j^2-j)}q^k}{(q)_i (q)_j (q)_k},$$
and so
$$f_0(a,q)=\sum_{\tilde{d}_1,\tilde{d}_2,\tilde{d}_3} \frac{(-1)^{\tilde{d}_2} a^{\tilde{d}_2} q^{-\tilde{d}_2-\tilde{d}_1 \tilde{d}_2} q^{\frac{1}{2}(\tilde{d}_2^2-\tilde{d}_2)}q^{\tilde{d}_3}}{(q)_{\tilde{d}_1} (q)_{\tilde{d}_2} (q)_{\tilde{d}_3}}.$$

This fixes the~first $3\times3$ block of the~quiver matrix and the~first 3 entries of the~rows of~$\Bf{a,l}$.
For further ones, let us denote 
$$a_N:=f_0^{\mathfrak{sl}_N}(=\sum_{i,j,k} \frac{(-1)^j q^{Nj} q^{-j-i j} q^{\frac{1}{2}(j^2-j)}q^k}{(q)_i (q)_j (q)_k}),\qquad\qquad
b_N:=f_1^{\mathfrak{sl}_N}.$$
Then we have the~recursion
$$(1-q^N)a_{N+2}=2a_{N+1}-a_N, \qquad (1-q) b_N=(1-q^{N-1})(a_N+a_{N+1}).$$

This suggests that we should have (at least) 4 nodes whose $\tilde{d}_i$'s have coefficient 1 in the~exponent of $x$ in $F_K$. The~following formula (that is relevant for the~off-diagonal entries in the~quiver for the~expressions in $f_1(a,q)$ and higher ones) helps in simplifying the~expressions:

\begin{lem}\label{lemma}
For integers $\alpha,\beta,\gamma\ge 0$ we have:
$$
\sum_{i,j,k} q^{\alpha i+ \beta j+\gamma k}\frac{(-1)^j q^{Nj} q^{-j-i j} q^{\frac{1}{2}(j^2-j)}q^k}{(q)_i (q)_j (q)_k}=(q)_{\gamma} \sum_{i=0}^{N-2+\beta}\frac{q^{\alpha i}}{(q)_i(q)_{N-2+\beta-i}}.
$$
In addition, if we denote
$$
P_{\alpha,\beta}(N)=\sum_{i=0}^{N-2+\beta}\frac{q^{\alpha i}}{(q)_i(q)_{N-2+\beta-i}},
$$
then it can be obtained recursively by
$$P_{0,\beta}(N)=a_{N+\beta},$$
and
$$P_{\alpha,\beta}(N)=a_{N+\beta}-\sum_{\delta=0}^{\alpha-1} q^{\delta} P_{\delta,\beta-1},\quad \alpha>0.$$

In particular:
\begin{equation}\label{gl}
\sum_{i,j,k} q^{ \beta j+\gamma k}\frac{(-1)^j q^{Nj} q^{-j-i j} q^{\frac{1}{2}(j^2-j)}q^k}{(q)_i (q)_j (q)_k}=(q)_{\gamma} a_{N-2+\beta}.
\end{equation}

\end{lem}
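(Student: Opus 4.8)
\textbf{Proof proposal for Lemma \ref{lemma}.}

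The plan is to prove the three assertions of the lemma in sequence, since each one feeds into the next. First I would establish the basic summation identity
\[
\sum_{i,j,k} q^{\alpha i+\beta j+\gamma k}\,\frac{(-1)^j q^{Nj} q^{-j-ij} q^{\frac{1}{2}(j^2-j)}q^k}{(q)_i (q)_j (q)_k}
=(q)_\gamma \sum_{i=0}^{N-2+\beta}\frac{q^{\alpha i}}{(q)_i (q)_{N-2+\beta-i}}.
\]
The summation over $k$ is immediate: $\sum_k \frac{q^{(\gamma+1)k}}{(q)_k}=\frac{1}{(q^{\gamma+1};q)_\infty}$, and I would pull out the factor $(q)_\gamma$ by writing $\frac{1}{(q^{\gamma+1};q)_\infty}=\frac{(q)_\gamma}{(q)_\infty}\cdot\frac{1}{(q)_\gamma}$ — more precisely, note $\sum_k \frac{q^{(\gamma+1)k}}{(q)_k}=\frac{1}{(q;q)_\infty}\cdot(q;q)_\gamma$ after the standard $q$-exponential evaluation, so the whole $k$-sum contributes $(q)_\gamma/(q)_\infty$. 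For the $j$-sum I would recognize $\sum_j \frac{(-1)^j q^{\binom{j}{2}} (q^N q^{-i})^j}{(q)_j}=(q^{N-i};q)_\infty$ by the $q$-binomial theorem (Euler's identity $\sum_j \frac{(-1)^j q^{\binom j2} z^j}{(q)_j}=(z;q)_\infty$). This leaves $\frac{(q)_\gamma}{(q)_\infty}\sum_i \frac{q^{\alpha i}(q^{N-i};q)_\infty}{(q)_i}$, and using $(q^{N-i};q)_\infty=(q)_\infty/(q)_{N-1-i}$ for $i\le N-1$ (and that $(q^{N-i};q)_\infty$ still makes sense as a power series for $i\ge N$, but the relevant reindexing truncates the sum at $i=N-2+\beta$ once one accounts for $\beta$ — I need to be a little careful here about how $\beta$ enters), the claimed right-hand side emerges after shifting the index. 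The role of $\beta$ is exactly to shift $N\mapsto N+\beta$ inside the finite sum, which I would track by substituting $q^{Nj}q^{\beta j}=q^{(N+\beta)j}$ before running the $j$-sum.

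Second, with the identity in hand, the definition $P_{\alpha,\beta}(N)=\sum_{i=0}^{N-2+\beta}\frac{q^{\alpha i}}{(q)_i(q)_{N-2+\beta-i}}$ makes the base case $P_{0,\beta}(N)=\sum_{i}\frac{1}{(q)_i(q)_{N-2+\beta-i}}=a_{N+\beta}$ a tautology, since $a_M=f_0^{\mathfrak{sl}_M}=\frac{1}{(q)_{M-2}}\sum_i\qbin{M-2}{i}=\sum_i\frac{1}{(q)_i(q)_{M-2-i}}$, so with $M=N+\beta$ this is exactly $P_{0,\beta}(N)$. For the recursion with $\alpha>0$ I would use the elementary telescoping $q^{\alpha i}=1-(1-q^{\alpha i})=1-(1-q^i)\sum_{\delta=0}^{\alpha-1}q^{\delta i}\cdot\frac{1}{?}$ — more cleanly, $1-q^{\alpha i}=(1-q^i)(1+q^i+\cdots+q^{(\alpha-1)i})$, hence $q^{\alpha i}=1-(1-q^i)\sum_{\delta=0}^{\alpha-1}q^{\delta i}$. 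Plugging this into $P_{\alpha,\beta}(N)$ and using $(1-q^i)/(q)_i=1/(q)_{i-1}$ to reindex $i\mapsto i-1$ (which lowers $\beta$ by one in the complementary factor) gives exactly $P_{\alpha,\beta}(N)=a_{N+\beta}-\sum_{\delta=0}^{\alpha-1}q^\delta P_{\delta,\beta-1}(N)$, where the extra $q^\delta$ comes from the shift $q^{\delta i}=q^\delta q^{\delta(i-1)}$. Third, equation \eqref{gl} is the special case $\alpha=0$ of the main identity combined with $P_{0,\beta}(N)=a_{N+\beta}$, so it requires no separate argument beyond noting $a_{N-2+\beta}$ versus $a_{N+\beta}$ — here I must double-check the index bookkeeping, since \eqref{gl} as written has $a_{N-2+\beta}$ while the body of the lemma would suggest $a_{N+\beta}$; I suspect this is a convention shift in how $a_N$ is indexed and I would reconcile it by spelling out that $P_{0,\beta}(N)$ is a sum with top index $N-2+\beta$, matching $\frac{1}{(q)_{(N-2+\beta)}}\sum\qbin{N-2+\beta}{i}$, i.e. $a$ evaluated at rank $N+\beta$, and fix notation accordingly.

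The main obstacle I anticipate is \emph{not} any single deep step but rather the careful handling of convergence and truncation: the sums over $i$ and $j$ are genuinely infinite, the factors $(q^{N-i};q)_\infty$ and $(q)_{N-2+\beta-i}$ must be interpreted correctly when the subscript goes negative (where $1/(q)_{<0}$ is understood to vanish, making the sum effectively finite), and the reindexing that introduces the shift by $\beta$ has to be done so that no boundary terms are dropped. In other words, the nontrivial content is making precise the claim that, after the $j$- and $k$-summations, the surviving $i$-sum automatically truncates at $i=N-2+\beta$; I would handle this by working throughout in the ring of formal power series in $q$ and invoking $\frac{1}{(q;q)_n}=0$ for $n<0$ as a formal convention, so that $(q^{N-i};q)_\infty/(q;q)_\infty = 1/(q;q)_{N-1-i}$ vanishes for $i>N-1$. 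Once that bookkeeping is pinned down, the three parts follow from Euler's $q$-binomial identities and the telescoping of $q^{\alpha i}$ as described, with no further surprises.
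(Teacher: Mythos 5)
Your proof is correct. The paper itself states this lemma without proof, so there is nothing to compare against directly; but the derivation of $f_0^{\mathfrak{sl}_N}$ displayed immediately before the lemma is exactly the $\alpha=\beta=\gamma=0$ instance of your first step run in the opposite direction (rewriting $1/(q)_{N-2-i}$ as $(q^{N-1-i};q)_\infty/(q)_\infty$ and expanding both factors by Euler's identities), so your argument is evidently the intended one. All three ingredients you use are sound: the two Euler $q$-binomial evaluations $\sum_k q^{(\gamma+1)k}/(q)_k=(q)_\gamma/(q)_\infty$ and $\sum_j(-1)^jq^{\binom{j}{2}}z^j/(q)_j=(z;q)_\infty$ with $z=q^{N+\beta-1-i}$ (do write the exponent as $N+\beta-1-i$ rather than $N-i$ in the intermediate display — the $q^{-j}$ and $q^{\beta j}$ factors both feed into $z$), the convention $1/(q)_n=0$ for $n<0$ which truncates the $i$-sum at $N-2+\beta$, and the telescoping $q^{\alpha i}=1-(1-q^i)\sum_{\delta=0}^{\alpha-1}q^{\delta i}$ combined with $(1-q^i)/(q)_i=1/(q)_{i-1}$, which yields the recursion with the correct $q^\delta$ prefactor after reindexing. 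Your observation about \eqref{gl} is also a genuine catch: with the paper's definition $a_M=f_0^{\mathfrak{sl}_M}=\sum_{i=0}^{M-2}1/\bigl((q)_i(q)_{M-2-i}\bigr)$, the $\alpha=0$ specialization of the main identity gives $(q)_\gamma\,a_{N+\beta}$, consistent with the stated $P_{0,\beta}(N)=a_{N+\beta}$, so the subscript $N-2+\beta$ in \eqref{gl} is an indexing slip (or an unannounced shift of convention) rather than something your proof needs to accommodate.
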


In consequence, the~first guess for the~first $7\times7$ block can be:

$$C=\left(
\begin{array}{ccccccc}
0&-1&0&0&0&0&0\\
-1&1&0&0&1&0&1\\
0&0&0&0&0&0&0\\
0&0&0&0&0&0&0\\
0&1&0&0&0&0&0\\
0&0&0&0&0&1&0\\
0&1&0&0&0&0&1\\
\end{array}
\right),
\qquad
\begin{array}{l}
\Bf{a}=(0,1,0,0,0,1,1),\\
\Bf{l}=0,-1,1,0,0,-1,-1),\\
\Bf{n}=(0,0,0,1,1,1,1).
\end{array}
$$
By using Lemma \ref{lemma} (especially (\ref{gl})), one can check that we get the~correct answer for $f_1^{\mathfrak{sl}_N}$. 
However, one can see that the~$f_2^{\mathfrak{sl}_N}$ is not quite correct, but it is close. Therefore either we should either change $C$ (choose different off-diagonal entries), or add more nodes/generators. They can also be added with different flavours: some of them proportional to $x$, and maybe some of them proportional to $x^2$.

\subsubsection*{\texorpdfstring{$\mathbf{6}_1$}{61} {\bf knot}}


For the~knot $\mathbf{6}_1$, the~ $F_K$ invariant corresponding to the~non-abelian branch with slope $-\frac{1}{2}$ is given by
\begin{eqnarray*}
F_{\mathbf{6}_1}^{(-\frac{1}{2})}(x,a,q)&=&\sum_{\tilde{d}_1,\ldots,\tilde{d}_8}(-1)^{\tilde{d}_3+\tilde{d}_4+\tilde{d}_6+\tilde{d}_7}a^{\tilde{d}_1+2\tilde{d}_2+\tilde{d}_3+\tilde{d}_4+3\tilde{d}_5+2\tilde{d}_6+2\tilde{d}_7+\tilde{d}_8} \\
&&\times\,\,
q^{\frac{1}{2}(-2\tilde{d}_1-4\tilde{d}_2-3\tilde{d}_3-\tilde{d}_4-6\tilde{d}_5-5\tilde{d}_6-3\tilde{d}_7-2\tilde{d}_8)}x^{\tilde{d}_1+\tilde{d}_2+\tilde{d}_4+2\tilde{d}_5+\tilde{d}_6+2\tilde{d}_7}  \\
&&\times\,\,
q^{\frac{1}{2}\sum_{i,j} \tilde{C}_{ij}\tilde{d}_i\tilde{d}_j+(\tilde{d}_1+\tilde{d}_2+2\tilde{d}_3+\tilde{d}_4+\tilde{d}_6+2\tilde{d}_8)\sum_{i}\tilde{d}_i-(\sum_i \tilde{d}_i)^2}\frac{(x;q^{-1})_{\tilde{d}_1+\cdots+\tilde{d}_8}}{\prod_{i=1}^8(q)_{\tilde{d}_i}},\end{eqnarray*}
where
\begin{equation*}
\tilde{C}=\left(\begin{array}{ccccccccc}
0 & 0 & -1 & 0  & 0 & -1 & 0 & -1\\
0 & 2 & 0 & 1  & 2 & 0 & 1 & -1\\
-1 & 0 & -1 & 0  & 1 & -1 & 0 & -2\\
0 & 1 & 0 & 1 & 2 & 1 & 1 & -1\\
0 & 2 & 1 & 2  & 4 & 2 & 3 & 1\\
-1 & 0 & -1 & 1  & 2 & 1 & 2 & 0\\
0 & 1 & 0 & 1  & 3 & 2 & 3 & 1\\
-1 & -1 & -2 & -1 & 1 & 0 & 1 & 0
\end{array}\right).         
\end{equation*}
%

For the~mirror image $m(\mathbf6_1)=\mathbf6_1^r$ and for non-abelian branch of slope $-\frac{1}{4}$ corresponding to framing $f=4$, the~$F_K$ invariant is given by
\begin{eqnarray*}
F_{\mathbf6_1^r}^{(-\frac{1}{4})}(x,a,q)&=&\sum_{\tilde{d}_1,\cdots,\tilde{d}_{8}}(-1)^{\tilde{d}_3+\tilde{d}_4+\tilde{d}_6+\tilde{d}_7}a^{2\tilde{d}_1+\tilde{d}_2+2\tilde{d}_3+2\tilde{d}_4+3\tilde{d}_5+\tilde{d}_6+\tilde{d}_7+2\tilde{d}_8} \\
&&\times\,\, q^{\frac{1}{2}(-4\tilde{d}_1-2\tilde{d}_2-3\tilde{d}_3-5\tilde{d}_4-6\tilde{d}_5-\tilde{d}_6-3\tilde{d}_7-4\tilde{d}_8)} x^{3\tilde{d}_1+\tilde{d}_2+2\tilde{d}_3+\tilde{d}_4+3\tilde{d}_5+\tilde{d}_6+2\tilde{d}_8}\\
&&\times q^{-\frac{1}{2}\sum_{i,j} \tilde{C}_{ij}\tilde{d}_i\tilde{d}_j}\\
&&\times\,\, q^{(2\tilde{d}_2+\tilde{d}_3+2\tilde{d}_4+2\tilde{d}_6+3\tilde{d}_7+\tilde{d}_8)\sum_i\tilde{d}_i}q^{-2(\sum_{i}\tilde{d}_i)^2}q^{-\sum_{i<j}\tilde{d}_i\tilde{d}_j}\frac{(x;q^{-1})_{\tilde{d}_1+\cdots+\tilde{d}_{8}}}{\prod_{i=1}^{8}(q)_{\tilde{d}_i}},
\end{eqnarray*}
where
\begin{equation*}
\tilde{C}=\left(\begin{array}{cccccccc}
0 & 0 & -1 & 0 & -1  & -1 & 0 & -1\\
0 & 2 & 0 & 1 & -1  & 0 & 1 & -1\\
-1 & 0 & -1 & 0 & -2 & -1 & 0 & -2\\
0 & 1 & 0 & 1 & -1  & 1 & 1 & -1\\
-1 & -1 & -2 & -1 & -2 & -1 & 0 & -2\\
-1 & 0 & -1 & 1 & -1  & 1 & 2 & 0\\
0 & 1 & 0 & 1 & 0 & 2 & 3 & 1\\
-1 & -1 & -2 & -1 & -2  & 0 & 1 & 0
\end{array}\right).       
\end{equation*}

\subsubsection*{\texorpdfstring{$\mathbf6_2$}{62} {\bf knot}}

For the~knot $\mathbf6_2$ in framing $f=2$, the~$F_K$ invariant corresponding to a~non-abelian branch with slope $-\frac{1}{2}$, is given by
\begin{eqnarray*}
F_{\mathbf6_2}^{(-\frac{1}{2})}(x,a,q)&=&\sum_{\tilde{d}_1,\ldots,\tilde{d}_{10}}(-1)^{\tilde{d}_1+\tilde{d}_4+\tilde{d}_5+\tilde{d}_8+\tilde{d}_9}a^{\tilde{d}_1+\tilde{d}_2+\tilde{d}_4+\tilde{d}_5+\tilde{d}_6+\tilde{d}_9+2\tilde{d}_7+2\tilde{d}_8+2\tilde{d}_{10}}\\
&&\times\,\,
q^{\frac{1}{2}(-3\tilde{d}_1-2\tilde{d}_2+2\tilde{d}_3-\tilde{d}_4-\tilde{d}_5-4\tilde{d}_7-3\tilde{d}_8+\tilde{d}_9-2\tilde{d}_{10})}x^{\tilde{d}_2+\tilde{d}_3+\tilde{d}_4+\tilde{d}_5+2d+6+\tilde{d}_7+3\tilde{d}_8+4\tilde{d}_9+3\tilde{d}_{10}} \\
&&\times\,\,
q^{\frac{1}{2}\sum_{i,j} \tilde{C}_{ij}\tilde{d}_i\tilde{d}_j+(2\tilde{d}_1+\tilde{d}_2+\tilde{d}_3+\tilde{d}_4+\tilde{d}_5+\tilde{d}_7-\tilde{d}_8-\tilde{d}_9-\tilde{d}_{10})\sum_{i}\tilde{d}_i-(\sum_i \tilde{d}_i)^2}\frac{(x;q^{-1})_{\tilde{d}_1+\cdots+\tilde{d}_{10}}}{\prod_{i=1}^{10}(q)_{\tilde{d}_i}},\end{eqnarray*}
where
\[
\tilde{C}=\left(
\begin{array}{cccccccccc}
 -1\ & -1\ & 0\ & 0\ & 0\ & 1\ & 0\ & 1\ & 2\ & 2 \\
-1\ & 0\ & 1\ & 0\ & 0\ & 1\ & 0\ & 1\ & 2\ & 2 \\
 0\ & 1\ & 0\ & 0\ & 0\ & 1\ & 0\ & 2\ & 1\ & 1 \\
0\ & 0\ & 0\ & 1\ & 1\ & 1\ & 1\ & 2\ & 2\ & 2 \\
 0\ & 0\ & 0\ & 1\ & 1\ & 1\ & 1\ & 2\ & 2\ & 2 \\
 1\ & 1\ & 1\ & 1\ & 1\ & 2\ & 1\ & 2\ & 2\ & 2 \\
 0\ & 0\ & 0\ & 1\ & 1\ & 1\ & 2\ & 2\ & 3\ & 3 \\
  1\ & 1\ & 2\ & 2\ & 2\ & 2\ & 2\ & 3\ & 3\ & 3 \\
  2\ & 2\ & 1\ & 2\ & 2\ & 2\ & 3\ & 3\ & 3\ & 3 \\
  2\ & 2\ & 1\ & 2\ & 2\ & 2\ & 3\ & 3\ & 3\ & 4 \\
\end{array}
\right).
\]

For the~mirror image $m(\mathbf6_2)=\mathbf6_2^r$ in framing $f=4$, we get the~following $F_K$ invariant corresponding to non-abelian branch of slope $-\frac{1}{4}$:
\begin{eqnarray*}
F_{\mathbf6_2^r}^{(-\frac{1}{4})}(x,a,q)&=&\sum_{\tilde{d}_1,\cdots,\tilde{d}_{10}}(-1)^{\tilde{d}_2+\tilde{d}_5+\tilde{d}_6+\tilde{d}_9+\tilde{d}_{10}}a^{2\tilde{d}_1+\tilde{d}_2+\tilde{d}_3+2\tilde{d}_4+\tilde{d}_5+\tilde{d}_6+\tilde{d}_7+\tilde{d}_{10}} \\
&&\times\,\, q^{\frac{1}{2}(-2\tilde{d}_1+\tilde{d}_2-4\tilde{d}_4-\tilde{d}_5-\tilde{d}_6-2\tilde{d}_7+2\tilde{d}_8+\tilde{d}_9-3\tilde{d}_{10})} x^{2\tilde{d}_1+\tilde{d}_2+\tilde{d}_3+2\tilde{d}_4+\tilde{d}_5+\tilde{d}_6+\tilde{d}_7}\\
&&\times\,\, q^{-\frac{1}{2}\sum_{i,j} \tilde{C}_{ij}\tilde{d}_i\tilde{d}_j}q^{(\tilde{d}_1+2\tilde{d}_2+2\tilde{d}_3+\tilde{d}_4+2\tilde{d}_5+2\tilde{d}_6+2\tilde{d}_7+3\tilde{d}_8+3\tilde{d}_9+3\tilde{d}_{10})\sum_i\tilde{d}_i}q^{-2(\sum_{i}\tilde{d}_i)^2}\\
&&\times\,\,q^{-\sum_{i<j}\tilde{d}_i\tilde{d}_j}\frac{(x;q^{-1})_{\tilde{d}_1+\cdots+\tilde{d}_{10}}}{\prod_{i=1}^{10}(q)_{\tilde{d}_i}},
\end{eqnarray*}
where
\[
\tilde{C}=\left(
\begin{array}{cccccccccc}
 -2\ & -2\ & -1\ & -1\ & -1\ & -1\ & 0\ & -1\ & 1\ & 1 \\
 -2\ & -1\ & -1\ & 0\ & 0\ & 0\ & 1\ & 0\ & 1\ & 2 \\
 -1\ & -1\ & 0\ & 1\ & 0\ & 0\ & 1\ & 0\ & 1\ & 2 \\
 -1\ & 0\ & 1\ & 0\ & 0\ & 0\ & 1\ & 0\ & 2\ & 1 \\
 -1\ & 0\ & 0\ & 0\ & 1\ & 1\ & 1\ & 1\ & 2\ & 2 \\
 -1\ & 0\ & 0\ & 0\ & 1\ & 1\ & 1\ & 1\ & 2\ & 2 \\
 0\ & 1\ & 1\ & 1\ & 1\ & 1\ & 2\ & 1\ & 2\ & 2 \\
 -1\ & 0\ & 0\ & 0\ & 1\ & 1\ & 1\ & 2\ & 2\ & 3 \\
 1\ & 1\ & 1\ & 2\ & 2\ & 2\ & 2\ & 2\ & 3\ & 3 \\
 1\ & 2\ & 2\ & 1\ & 2\ & 2\ & 2\ & 3\ & 3\ & 3 
 \end{array}
\right).
\]

\subsubsection*{\texorpdfstring{$\mathbf6_3$}{63} {\bf knot}}


We note that knot $\mathbf6_3$ is amphichiral. 
For $\mathbf6_3$ in framing $f=3$, we obtain the~following $F_K$ invariant corresponding to a~non-abelian branch for slope $-\frac{1}{3}$:
\begin{eqnarray*}
F_{\mathbf6_3}^{(-\frac{1}{3})}(x,a,q)&=&\sum_{\tilde{d}_1,\ldots,\tilde{d}_{12}}(-1)^{\tilde{d}_1+\tilde{d}_3+\tilde{d}_4+\tilde{d}_5+\tilde{d}_8+\tilde{d}_{10}+\tilde{d}_{11}}a^{\tilde{d}_1+2\tilde{d}_2+\tilde{d}_3+\tilde{d}_4+2\tilde{d}_5+\tilde{d}_6+\tilde{d}_7+2\tilde{d}_9+\tilde{d}_{10}+\tilde{d}_{11}} \\
&&\times\,\,
q^{\frac{1}{2}(-\tilde{d}_1-4\tilde{d}_2-\tilde{d}_3-3\tilde{d}_4-3\tilde{d}_5-2\tilde{d}_7+\tilde{d}_8-2\tilde{d}_9+\tilde{d}_{10}-\tilde{d}_{11}+2\tilde{d}_{12})}\\
&&\times\,\,
x^{
2\tilde{d}_1+\tilde{d}_2+\tilde{d}_3+2\tilde{d}_5+2\tilde{d}_6+\tilde{d}_7+2\tilde{d}_9+2\tilde{d}_{10}+\tilde{d}_{11}+\tilde{d}_{12}} \\
&&\times\,\,
q^{\frac{1}{2}\sum_{i,j} \tilde{C}_{ij}\tilde{d}_i\tilde{d}_j+(\tilde{d}_1+2\tilde{d}_2+2\tilde{d}_3+3\tilde{d}_4+\tilde{d}_5+\tilde{d}_6+2\tilde{d}_7+3\tilde{d}_8+\tilde{d}_9+\tilde{d}_{10}+2\tilde{d}_{11}+2\tilde{d}_{12})\sum_{i}\tilde{d}_i-\frac{3}{2}(\sum_i \tilde{d}_i)^2}\\
&&\times\,\,
\frac{(x;q^{-1})_{\tilde{d}_1+\cdots+\tilde{d}_{12}}}{\prod_{i=1}^{12}(q)_{\tilde{d}_i}},\end{eqnarray*}
where
\[
\tilde{C}=\left(
\begin{array}{cccccccccccc}
 0\ & 0\ & 0 & -1\ & 0\ & 0\ & -1\ & -1\ & 0\ & 0\ & -1\ & -1 \\
 0\ & 1\ & 0 & -1\ & 1\ & 0\ & -1\ & -2\ & 1\ & 1\ & 0\ & -1 \\
 0\ & 0\ & 0 & -1\ & 1\ & 0\ & 0\ & -2\ & 1\ & 1\ & 0\ & 0 \\
 -1\ & -1\ & -1\ & -2\ & 0\ & -1\ & -2\ & -3\ & -1\ & 0\ & -2\ & -2 \\
 0\ & 1\ & 1\ & 0 \ & 2\ & 1\ & 0\ & -1\ & 2\ & 1\ & 1\ & -1 \\
 0\ & 0\ & 0\ & -1\ & 1\ & 1\ & 0\ & -1\ & 2\ & 1\ & 1\ & 0 \\
 -1\ & -1\ & 0\ & -2\ & 0\ & 0\ & -1\ & -2\ & 0\ & 0\ & -1\ & -2 \\
 -1\ & -2\ & -2\ & -3 \ & -1\ & -1\ & -2\ & -2\ & 0\ & -1\ & -1\ & -2 \\
 0\ & 1 & 1\ & -1\ & 2\ & 2 & 0\ & 0\ & 3\ & 2\ & 1\ & 0 \\
 0\ & 1 & 1\ & 0\ & 1\ & 1 & 0\ & -1\ & 2\ & 2\ & 1\ & 0 \\
 -1\ & 0 & 0\ & -2\ & 1\ & 1 & -1\ & -1\ & 1\ & 1\ & 0\ & -1 \\
 -1\ & -1 & 0\ & -2\ & -1\ & 0 & -2\ & -2\ & 0\ & 0\ & -1\ & -1 \\
\end{array}
\right).
\]

\subsubsection*{\texorpdfstring{$\mathbf8_{19}$}{819} {\bf knot}}


For $\mathbf8_{19}$, which is $(3,4)$-torus knot, the~corresponding $F_K$ invariant for abelian branch is given by
\begin{eqnarray*} F_{\mathbf8_{19}}(x,a,q)&=&\sum_{\tilde{d}_1,\cdots,\tilde{d}_{10}}(-1)^{\tilde{d}_5+\tilde{d}_6+\tilde{d}_7+\tilde{d}_8+\tilde{d}_9}a^{\tilde{d}_5+\tilde{d}_6+\tilde{d}_7+\tilde{d}_8+\tilde{d}_9+2\tilde{d}_{10}}\\
&&\times\,\, q^{\frac{1}{2}(2\tilde{d}_1+2\tilde{d}_2+4\tilde{d}_3+6\tilde{d}_4-\tilde{d}_5-\tilde{d}_6+\tilde{d}_7+\tilde{d}_8+3\tilde{d}_9-2\tilde{d}_{10})} x^{\tilde{d}_1+2\tilde{d}_2+3\tilde{d}_3+5\tilde{d}_4+\tilde{d}_5+2\tilde{d}_6+3\tilde{d}_7+4\tilde{d}_8+5\tilde{d}_9+4\tilde{d}_{10}}
\\
&&\times\,\,q^{\frac{1}{2}\sum_{i,j} \tilde{C}_{ij}\tilde{d}_i\tilde{d}_j}q^{(-\tilde{d}_1-2\tilde{d}_2-3\tilde{d}_3-5\tilde{d}_4-\tilde{d}_5-2\tilde{d}_6-3\tilde{d}_7-4\tilde{d}_8-5\tilde{d}_9-4\tilde{d}_{10})\sum_i \tilde{d}_i}\\
&&\times\,\, \frac{(x;q^{-1})_{\tilde{d}_1+\cdots+\tilde{d}_{10}}}{\prod_{i=1}^{10}(q)_{\tilde{d}_i}},
\end{eqnarray*}
where
\[
\tilde{C}=\left(\begin{array}{ccccccccccc}
2&3&3&5&2&3&3&5&5&5\\
3&4&4&5&3&4&4&5&5&5\\
3&4&4&5&4&4&4&6&5&6\\
5&5&5&6&6&5&6&6&6&6\\
2&3&4&6&3&3&4&5&6&5\\
3&4&4&5&3&5&4&6&5&6\\
3&4&4&6&4&4&5&6&6&6\\
5&5&6&6&5&6&6&7&7&7\\
5&5&5&6&6&5&6&7&7&7\\
5&5&6&6&5&6&6&7&7&8
\end{array}\right).
\]

For the~mirror image $m(\mathbf8_{19})=\mathbf8_{19}^r$, we obtain $F_K$ invariant corresponding to the~non-abelian branch of slope  $-\frac{1}{6}$: 

\begin{eqnarray*}
F_{\mathbf8_{19}^r}^{f=8}(x,a,q)&=&
\sum_{\tilde{d}_1,\cdots,\tilde{d}_{10}}(-1)^{\tilde{d}_5+\tilde{d}_6+\tilde{d}_7+\tilde{d}_8+\tilde{d}_9}a^{2\tilde{d}_1+2\tilde{d}_2+2\tilde{d}_3+2\tilde{d}_4+2\tilde{d}_5+\tilde{d}_6+\tilde{d}_7+\tilde{d}_8+\tilde{d}_9+\tilde{d}_{10}}\\
&&\times\,\, q^{\frac{1}{2}(-2\tilde{d}_1-4\tilde{d}_2-4\tilde{d}_3-6\tilde{d}_4-8\tilde{d}_5-\tilde{d}_6-\tilde{d}_7-3\tilde{d}_8-3\tilde{d}_9-5\tilde{d}_{10})} x^{3\tilde{d}_1+2\tilde{d}_2+2\tilde{d}_3+\tilde{d}_4+\tilde{d}_5+2\tilde{d}_6+\tilde{d}_7+\tilde{d}_8}\\
&&\times\,\,q^{-\frac{1}{2}\sum_{i,j} \tilde{C}_{ij}\tilde{d}_i\tilde{d}_j} q^{(-2\tilde{d}_1-\tilde{d}_2-\tilde{d}_3-\tilde{d}_6+\tilde{d}_9+\tilde{d}_{10})\sum_i\tilde{d}_i}\\
&&\times\,\,q^{-\sum_{i<j}\tilde{d}_i\tilde{d}_j}\frac{(x;q^{-1})_{\tilde{d}_1+\cdots+\tilde{d}_{10}}}{\prod_{i=1}^{10}(q)_{\tilde{d}_i}},
\end{eqnarray*}
where
\[
\tilde{C}=\left(\begin{array}{cccccccccc}
0&1&2&3&5&1&2&3&4&5\\
1&2&3&3&5&2&3&3&5&5\\
2&3&4&4&5&3&4&4&5&5\\
3&3&4&4&5&4&4&4&6&5\\
5&5&5&5&6&6&5&6&6&6\\
1&2&3&4&6&3&3&4&5&6\\
2&3&4&4&5&3&5&4&6&5\\
3&3&4&4&6&4&4&5&6&6\\
4&5&5&6&6&5&6&6&7&7\\
5&5&5&5&6&6&5&6&7&7
\end{array}\right).
\]

\newpage
\bibliography{ref}
\bibliographystyle{alpha}

\end{document}